\newtheorem{definition}{Definition}[section]
\newtheorem{theorem}{Theorem}[section]
\newtheorem*{conjecture*}{Conjecture}
\newtheorem{corollary}{Corollary}[section]
\newtheorem*{theorem*}{Theorem}
\newtheorem*{corollary*}{Corollary}
\newtheorem{proposition}{Proposition}[subsection]
\newtheorem{lemma}{Lemma}[subsection]
\newtheorem{remark}{Remark}[section]
\newtheoremstyle{named}{}{}{\itshape}{}{\bfseries}{.}{.5em}{\thmnote{#3 }#1}
\theoremstyle{named}
 \newcommand{\bea}{\begin{eqnarray}}
\newcommand{\eea}{\end{eqnarray}}
\newcommand{\beaa}{\begin{eqnarray*}}
\newcommand{\eeaa}{\end{eqnarray*}}
\newcommand{\bsplit}{\begin{split}}
\newcommand{\nn}{\nonumber}
\newcommand{\les}{\lesssim}
\newcommand{\ges}{\gtrsim}
\newcommand{\lot}{\mbox{l.o.t.}}
\newcommand{\dual}{{\,^\star \mkern-2mu}}
\newcommand{\tr}{\mbox{tr}}
\newcommand{\nabb}{\nab\mkern-13mu /\,}
\newcommand{\nab}{\nabla}
\renewcommand{\div}{\mbox{div }}
\newcommand{\curl}{\mbox{curl }}
\newcommand{\divv}{\mbox{div}\mkern-19mu /\,\,\,\,}
\newcommand{\lapp}{\mbox{$\bigtriangleup  \mkern-13mu / \,$}}
\newcommand{\curll}{\mbox{curl}\mkern-19mu /\,\,\,\,}
\newcommand{\pr}{\partial}
\newcommand{\DD}{{\mathcal D}}
\newcommand{\DDs}{ \, \DD \hspace{-2.4pt}\dual    \mkern-16mu /}
\newcommand{\DDd}{ \, \DD \hspace{-2.4pt}    \mkern-8mu /}
\newcommand{\Db}{\dot{\D}}
\renewcommand{\a}{\alpha}
\renewcommand{\b}{\beta}
\newcommand{\de}{\delta}
\newcommand{\ep}{\epsilon}
\newcommand{\la}{\lambda}
\newcommand{\La}{\Lambda}
\newcommand{\Si}{\Sigma}
\newcommand{\om}{\omega}
\renewcommand{\th}{\theta}
\newcommand{\ka}{\kappa}
\newcommand{\ze}{\zeta}
\newcommand{\Up}{\Upsilon}
\renewcommand{\aa}{\protect\underline{\a}}
\newcommand{\bb}{\protect\underline{\b}}
\newcommand{\omb}{{\underline{\om}}}
\newcommand{\Lb}{{\underline{L}}}
\newcommand{\chib}{\underline{\chi}}
\newcommand{\xib}{\underline{\xi}}
\newcommand{\etab}{\underline{\eta}}
\newcommand{\kab}{\underline{\kappa}}
\newcommand{\chih}{\widehat{\chi}}
\newcommand{\chibh}{\widehat{\chib}}
\newcommand{\bF}{\,^{(F)} \hspace{-2.2pt}\b}
\newcommand{\bbF}{\,^{(F)} \hspace{-2.2pt}\bb}
\newcommand{\rhoF}{\,^{(F)} \hspace{-2.2pt}\rho}
\newcommand{\sigmaF}{\,^{(F)} \hspace{-2.2pt}\sigma}
\renewcommand{\AA}{{\mathcal A}}
\newcommand{\BB}{{\mathcal B}}
\newcommand{\EE}{{\mathcal E}}
\newcommand{\HH}{{\mathcal H}}
\newcommand{\II}{{\mathcal I}}
\newcommand{\LL}{{\mathcal L}}
\newcommand{\MM}{{\mathcal M}}
\newcommand{\PP}{{\mathcal P}}
\newcommand{\QQ}{{\mathcal Q}}
\newcommand{\C}{{\bf C}}
\newcommand{\D}{{\bf D}}
\newcommand{\F}{{\bf F}}
\newcommand{\M}{{\bf M}}
\newcommand{\R}{{\bf R}}
\newcommand{\W}{{\bf W}}
\newcommand{\Lbb}{{\bf L}}
\newcommand{\g}{{\bf g}}
\newcommand{\pf}{\frak{p}}
\newcommand{\qf}{\frak{q}}
\newcommand{\ff}{\frak{f}}
\newcommand{\ffb}{\protect\underline{\ff}}
\newcommand{\piT}{{\,^{(T)} \pi }}
\newcommand{\piR}{\,^{(R)}\pi}
\newcommand{\piX}{\,^{(X)}\pi}
\newcommand{\piY}{\,^{(Y)}\pi}
\newcommand{\piZ}{  \,^{(Z)}\pi}
\newcommand{\piN}{  \,^{(N)}\pi}
\newcommand{\fb}{\underline{f}}
\newcommand{\gS}{ g \mkern-8.5mu/\,}
\newcommand{\hot}{\widehat{\otimes}}
\renewcommand{\c}{\cdot}
\newcommand{\Mor}{\mbox{Mor}}
\newcommand{\Morr}{\mbox{Morr}}
\newcommand{\MMdot}{\dot{\MM}}
\newcommand{\Ed}{\dot{E}}
\newcommand{\ec}{\check{e}}
\newcommand{\Rbrev}{\breve{R}}
        \newcommand{\Tbrev}{\breve{T}}
\newcommand{\Omegab}{\underline{\Omega}}
\title{Boundedness and decay for the Teukolsky system of spin $\pm2$ on Reissner-Nordstr{\"o}m spacetime: the case $|Q| \ll M$}
\author{Elena Giorgi\footnote{Gravity Initiative, Princeton University, egiorgi@princeton.edu}}
\begin{document}

\maketitle

\begin{abstract}
We prove boundedness and decay statements for solutions to the spin $\pm2$ generalized Teukolsky system on a  Reissner--Nordstr{\"o}m background with very small charge. The first equation of the system is the generalization of the Teukolsky equation in Schwarzschild for the extreme component of the curvature $\a$. The second equation, coupled with the first one, is a new equation for a new gauge-invariant quantity involving the electromagnetic curvature components. The proof is based on the use of derived quantities, introduced in previous works on linear stability of Schwarzschild \cite{DHR}.  These derived quantities are shown to verify a generalized coupled Regge--Wheeler system.

These equations are the ones verified by the extreme null curvature and electromagnetic components of a gravitational and electromagnetic perturbation of the Reissner--Nordstr{\"o}m spacetime. Consequently, as in the Schwarzschild case, these bounds provide the first step in proving the full linear stability of the Reissner--Nordstr{\"o}m metric for small charge to coupled gravitational and electromagnetic perturbations. 
\end{abstract}

\tableofcontents

\section{Introduction}\label{intro}
The problem of stability of   the Kerr family  in the context of the Einstein vacuum equations occupies a center stage  in mathematical General Relativity, see for example the introductions of 
  \cite{DHR}   and   \cite{stabilitySchwarzschild}  for  the formulation of the problem  and discussions of the main difficulties.

An essential  step  in the program of settling the stability of Kerr conjecture  is to understand the behavior of solutions to the   so-called \textit{Teukolsky  equations}. These are   wave  equations  verified  by  the extreme null   components of the curvature tensor  which decouple,   to second order, from all other curvature components.

 In linear  theory,   the Teukolsky equation, combined with cleverly chosen gauge conditions,  allows one to prove  the weakest version of stability, i.e the lack of exponentially growing modes for all curvature components. Extensive literature by the physics community covers these results (see \cite{Teukolsky}, \cite{Bardeen}, \cite{Chandra}, \cite{Whiting}).     This weak version of stability is however not sufficient to prove boundedness and decay of the solution to the Teukolsky equation;  one  needs instead    to derive    sufficiently strong decay estimates to hope to apply them in  the nonlinear  framework. 
 
  The first important  breakthrough  in this direction   was made by \cite{DHR}   in the context   of the linear stability of the Schwarzschild metric.  In their work, Dafermos, Holzegel and Rodnianski    derive the first   quantitative  decay estimates  for the Teukolsky equations in Schwarzschild and use them  to prove the first quantitative   stability result of the  full linearized  gravitational system around   a fixed Schwarzschild solution.   Different results and proofs of the linear stability of the Schwarzschild spacetime have followed, using the original   Regge--Wheeler  approach  of metric perturbations (see \cite{mu-tao}), and using wave gauge (see \cite{pei-ken}, \cite{pei-ken-2}) and generalized wave gauge (see \cite{Johnson1}, \cite{Johnson2}).    The nonlinear stability of the Schwarzschild  metric, for restricted polarized  perturbations, was recently proved in  \cite{stabilitySchwarzschild}.  
  
  The   results   on boundedness and  decay for  the Teukolsky equations were extended  to  slowly rotating    Kerr in  \cite{TeukolskyDHR} and in \cite{ma2}.    The  analysis of the full linearized equations  near slowly rotating Kerr recently appeared in \cite{Kerr-lin1} and \cite{Kerr-lin2}.  In the positive cosmological setting, the stability of Kerr-de Sitter with small angular momentum was obtained in \cite{Hintz-Vasy}. 

The approach of  \cite{DHR}  and  \cite{TeukolskyDHR}  to derive   boundedness and quantitative decay for the Teukolsky equations relies on the following ingredients:
\begin{enumerate}
\item       A    map which takes a solution to the Teukolsky equation to  a solution of   a wave equation which is simpler   to analyze. In the case of Schwarzschild,  this equation is known as the so called \textit{Regge--Wheeler  equation}.   The first such transformation was discovered by  Chandrasekhar (see \cite{Chandra})  in the context of mode decompositions. 
The physical version of this transformation first appears in \cite{DHR}.
\item  A vector field-type method    to  get   quantitative decay for the    new wave equation.
\item A method by which we  can derive estimates  for  solutions to the Teukolsky equation from 
those of solutions to the transformed  Regge--Wheeler  equation.
\end{enumerate}

Analogous problems appear in the mathematical study of stability of charged black holes, which are solutions of the coupled Einstein--Maxwell system in general relativity. The problem of stability of charged black holes has as a final goal the proof of non-linear stability of the Kerr-Newman family $(\MM, g_{M, Q, a})$ as solutions to the Einstein--Maxwell equation
\begin{equation} \label{Einsteineq}
Ric(g)_{\mu\nu}=T(F)_{\mu\nu}:=2 F_{\mu \lambda} {F_\nu}^{\lambda} - \frac 1 2 g_{\mu\nu} F^{\alpha\beta} F_{\alpha\beta}
\end{equation}
where $F$ is a $2$-form satisfying Maxwell's equations 
\begin{equation} \label{Max}
D_{[\alpha} F_{\beta\gamma]}=0, \qquad D^\alpha F_{\alpha\beta}=0.
\end{equation}
The Einstein--Maxwell system describes coupled gravitational and electromagnetic fields. The presence of a right hand side in the Einstein equation \eqref{Einsteineq} and the Maxwell's equations add new difficulties to the analysis of the problem, presenting coupling between the gravitational and the electromagnetic perturbations of solutions.
 In the positive cosmological setting, the stability of Kerr-Newman-de Sitter with small angular momentum was proved in \cite{Hintz-M}.

An intermediate step towards the proof of non-linear stability of charged black holes is the linear stability of the simplest non-trivial solution of the Einstein--Maxwell equations, the Reissner--Nordstr{\"o}m spacetime.

  The Reissner--Nordstr{\"o}m family
  of spacetimes $(\mathcal{M},\g_{M, Q})$ describe an electrically charged, nonrotating, spherically symmetric black hole and can be   expressed  in local coordinates as
  \begin{equation}
  \label{RNintro}
 \g_{M, Q}=-\left(1-\frac{2M}{r}+\frac{Q^2}{r^2}\right)dt^2 +\left(1-\frac{2M}{r}+\frac{Q^2}{r^2}\right)^{-1}dr^2 +r^2(d\theta^2+\sin^2\theta d\phi^2),
  \end{equation}
where $M$ and $Q$ are two real parameters, which can be interpreted as the mass and the charge of the black hole, for $|Q|<M$. This solution corresponds to the Kerr--Newman solution for $a=0$ (like the Schwarzschild solution corresponds to Kerr for $a=0$). Observe that Schwarzschild also corresponds to the Reissner--Nordstr{\"o}m solution for $Q=0$.

To extend the results on the linear of Schwarzschild to Reissner--Nordstr{\"o}m, a key step is to understand the analogous Teukolsky equation. The gauge-independent quantities involved, and the structure of the equations that they verify in physical space, were not fully understood up to this point. Indeed, we  need  to first find equations in physical space similar to the Teukolsky equations, for which one can prove quantitative decay.

The aim of this paper is to generalize the method of \cite{DHR} to the case of a charged black hole with a perturbed electromagnetic and gravitational field and very small charge. More precisely, we derive the relevant Teukolsky system for linear perturbations of the Einstein--Maxwell system, which did not seem to exist, at least in the form used in this paper, previously in the literature. 

We rely on the following ingredients:
\begin{enumerate}
\item Computations in physical space which show the Teukolsky-type equations verified by the extreme null curvature components in Reissner--Nordstr{\"o}m spacetime. We obtain a system of two coupled Teukolsky-type equations.
\item       A    map which takes solutions to the Teukolsky-type equation to  solutions to a Regge--Wheeler-type  equation.  We obtain a system of two coupled Regge--Wheeler-type  equations. As in the rotating case, in the charged case the Chandrasekhar transformation leaves lower order terms in the main equations. In addition, in the charged case, we also see coupling terms appear. 
\item  A vector field method    to  get   quantitative decay for the system.  The analysis is highly affected by the fact that we are dealing with a system, as opposed to a single equation. 
\item A method by which we  can derive estimates  for  solutions to the Teukolsky-type system from 
those of solutions to the transformed  Regge--Wheeler-type system.
\end{enumerate}

We give a rough statement of the main result in Section \ref{first-result}.

The spin $\pm2$ system arises in the study of linearized Einstein--Maxwell equations on a fixed Reissner--Nordstr\"om solution and it is composed by the equations satisfied by certain gauge-invariant combinations of the null-decomposed linearized Weyl curvature tensor and linearized Maxwell $2$-form. Establishing decay estimates for solutions to this system can thus be seen as a first step in establishing the linear stability of the Reissner--Nordstr\"om family. 
Indeed, the results of this paper, combined with \cite{Giorgi5}, are used as a key step to prove the decay of the linearized metric of the perturbation for $|Q| \ll M$ in \cite{Giorgi6}.

\subsection{The Teukolsky-type equations}
Just as in the vacuum case, the original approach to linear stability of the Reissner--Nordstr{\"o}m spacetime are the metric pertubations, leading to a generalization of the Regge--Wheeler and Zerilli equations. See for example \cite{Chandra-RN}, where the two pairs of one-dimensional wave equations which govern the odd and the even-parity perturbations of the Reissner--Nordstr{\"o}m black hole are derived directly from a treatment of its metric perturbations. See also references in \cite{Chandra}.

Another widely studied approach in the physics community is the Newman--Penrose formalism. See for example \cite{Chandra-RN2} and \cite{Bicak}.

These results rely on the derivation of the equations in separated forms, with a decomposition in modes. This is enough to prove that there are no exponentially growing modes, but does not prove boundedness or decay of the solution. To prove boundedness and decay, a physical space analysis has to be done.

In this paper, we use the formalism of null frames and derive the Teukolsky equations for Reissner--Nordstr{\"o}m spacetime in physical space. Suppose that $(\MM, \g, \F)$ is a solution to the Einstein--Maxwell equations such that the manifold $\MM$ can be foliated by $2$-spheres. Define then the symmetric traceless $2$-covariant $S$-tensors $\a$ and $\ff$ defined relative to a null frame $\{ e_3, e_4, e_A \}_{A=1,2}$ as 
\beaa
\a_{AB}&=& \W(e_4, e_A, e_4, e_B), \qquad \ff_{AB}= \DDs_2 \bF_{AB} + \rhoF \chih_{AB}
\eeaa
where $\W$ is the Weyl curvature, $\DDs_2\bF_{AB}=- \D_{(A}\bF_{B)}+ \frac 1 2 \g_{AB} \divv \bF$ with $\D$ the Levi-Civita connection of $\g$, $\bF=\F(e_A, e_4)$, $\rhoF=\frac 1 2 \F(e_3, e_4)$ and $\chih$ is the traceless part of the $S$-tensor $\chi_{AB}=\g(\D_A e_4, e_B)$.

One may wonder what is the heuristic reason to define such a $\ff$ as above. 
The particular combination of electromagnetic and Ricci coefficients which constitutes the definition of $\ff$ appears in the right hand side of the Bianchi identity for $\a$\footnote{See equation \eqref{nabb-3-a}.}, and represents the additional part with respect to the vacuum Einstein equation in Schwarzschild. Since from the linear stability of Schwarzschild \cite{DHR} we know that the equation for $\a$ plays a central role, the presence of such quantity on the right hand side of the Bianchi identity for $\a$ motivates the definition of $\ff$.

In the gravitational perturbation of Schwarzschild, the Bianchi and Ricci identities imply that the extreme curvature component $\a$, which is a gauge-invariant quantity, verifies the Teukolsky equation. A fundamental good property of this equation is that it decouples, at the linear level, from any other curvature components. In Schwarzschild, it can be schematically written as:
\beaa
\Box_{\g_{M}} \a+c_1(r)\nabb_4\a+c_2(r)\nabb_3\a+\tilde{V}_1(r)\,\a=0
\eeaa
where $\Box_{\g_M}$ is the d'Alembertian of the Schwarzschild metric, $\nabb_4, \nabb_3$ are the projection on the sphere of $\D_{e_4}, \D_{e_3}$ respectively, and $c_i, \tilde{V}_1$ are smooth functions of an area radius function $r$.

In the case of the Einstein--Maxwell equations, the Bianchi and Maxwell's equations imply that $\a$ satisfy a Teukolsky-type equation in Reissner-Nordstr{\"o}m spacetime (derived in Appendix \ref{computations-appendix}), which can be schematically written as:
\bea\label{schematica}
\Box_{\g_{M, Q}} \a+c_1(r)\nabb_4\a+c_2(r)\nabb_3\a+\tilde{V}_1(r)\,\a=Q \cdot c_3(r)\nabb_4\ff
\eea
where $\Box_{\g_{M, Q}}$ is the d'Alembertian the Reissner-Nordstr{\"o}m metric,  $c_i, \tilde{V}_1$ are smooth functions of $r$, $Q$ is the charge of the Reissner-Nordstr{\"o}m metric $ \g_{M,Q}$,  and $\ff$ is a new gauge-invariant quantity which depends on the electromagnetic tensor.  Observe that the Teukolsky-type equation in Reissner--Nordstr\"om for $\a$ is coupled with the new quantity $\ff$. This makes the previous analysis of the Teukolsky equation in Schwarzschild not applicable to this case.

It is remarkable that the quantity $\ff$ verifies itself a Teukolsky-type equation in Reissner--Nordstr{\"o}m spacetime (derived in Appendix \ref{computations-appendix}), which can be schematically written as:
\bea\label{schematicff}
\Box_{\g_{M, Q}} \ff+d_1(r)\nabb_4\ff+d_2(r)\nabb_3\ff+\tilde{V}_2(r)\,\ff=- Q \cdot  d_3(r) \nabb_3\a
\eea
where $d_i, \tilde{V}_2$ are smooth functions of $r$. Observe that equation \eqref{schematicff} for $\ff$ is coupled back to the curvature component $\a$. 

 The two equations above form what we shall call \textit{the generalized Teukolsky system} for spin $+2$ in Reissner--Nordstr{\"o}m spacetime, which governs the coupled gravitational and electromagnetic perturbation of Reissner-Nordstr{\"o}m. Completely analogous equations are verified by the spin $-2$ components $\aa$ and $\underline{\ff}$, which is obtained after swapping $\nabb_3$ and $\nabb_4$ derivatives. The interaction of these two quantities cannot be decoupled, as opposed to the analogous system in the metric perturbations, see \cite{Moncrief1}, \cite{Moncrief2}.

The main result of this paper concerns the quantities $(\a, \ff)$ verifying such a system.

\subsubsection{Previous works on boundedness and decay for Teukolsky-type equations}

We briefly review here some previous works on quantitative decay for Teukolsky-type equations in different backgrounds. Recall that the $s=0$ Teukolsky equation reduces to the scalar linear wave equation, while the $s=\pm1$ Teukolsky equation is verified by the one-form extreme component of the electromagnetic tensor governed by the Maxwell's equations.

\begin{itemize}
\item {\bf{The case $s=0$ in vacuum:}} Numerous advances on boundedness and decay for solutions to the Cauchy problem for the scalar wave equation on Kerr have been obtained in recent years. From an early result for boundedness of the solutions to the scalar wave equation in Schwarzschild \cite{waveSch}, to more robust techniques introduced in \cite{redshift} and \cite{rp}, there are now complete results on boundedness and decay in Schwarzschild (see \cite{Blue}), in Kerr  for very small angular momentum (see \cite{smalla}, \cite{inventio}, \cite{tataru}, \cite{lectures}, \cite{and}, \cite{Yakov}), and in the full subextremal range of Kerr parameters $|a| < M$ (see \cite{kerr}). 
\item {\bf{The case $s=0$ in electrovacuum:}} The main techniques introduced in the case of scalar wave equations in vacuum spacetimes can be easily generalized to electrovacuum solutions.  In Reissner--Nordstr{\"o}m spacetime, more general results about the wave equation on spherically symmetric, stationary spacetimes can be applied (\cite{stefanos1}, \cite{stefanos2}, \cite{mos}). Some properties of the scalar wave equation in the Kerr family have been extended to the Kerr-Newman family (see \cite{civin}). 
\item {\bf{The case $s=\pm 1$ in vacuum:}} The boundedness and polynomial decay for the Maxwell equations in Schwarzschild, verifying the spin $\pm1$ Teukolsky equation, has been proved in \cite{BlueMax} and  \cite{Federico}. Similarly, results in Kerr spacetime with $|a| \ll M$ are obtained in \cite{andkerr} and \cite{ma1}. 
\item {\bf{The case $s=\pm 1$ in electrovacuum:}} For the Maxwell equations in Reissner-Nordstr{\"o}m spacetime, more general results about spherically symmetric spacetimes can be applied (see \cite{sterbenz}). 
\item {\bf{The case $s=\pm 2$ in vacuum:}} The spin $\pm2$ Teukolsky equation in Schwarzschild has been studied in \cite{DHR} in the context of linearized gravity. Boundedness and decay statements for the Teukolsky equations in Kerr spacetimes with $|a|\ll M$ are proved in \cite{ma2} and \cite{TeukolskyDHR}. 
\end{itemize}

\subsection{The main result and first comments on the proof}\label{first-result}
A rough version of our main result is the following.

\begin{theorem}(Rough version)\label{main-theorem-1-rough} Smooth solutions to the generalized Teukolsky system of spin $\pm2$ on Reissner-Nordstr{\"o}m exterior spacetimes $(\MM, \g_{M,Q})$ with $|Q|\ll M$ arising from smooth initial data which is prescribed on a Cauchy hypersurface $\Sigma_0$ and finite when measured in a higher order and weighted Sobolev norm satisfy energy boundedness, integrated local energy decay and a hierarchy of $r$-weighted energy estimates. 
\end{theorem}

We also obtain higher order versions of the above theorem concerning $\nabb_T$ and $\nabb$ derivatives, where $T$ is the asymptotically timelike Killing field of $(\MM, g_{M,Q})$ and $\nabb$ denotes derivatives in directions tangent to spheres. We also remark that this hierarchy of estimates is such that, using a pigeonhole principle argument as introduced in \cite{rp}, one obtains for $\de>0$ the pointwise decay estimates, for a time function $\tau >0$ 
\beaa
|r^{\frac{5+\de}{2}} \a | \leq C \tau^{-1+\frac{\de}{2}}, \qquad |r^{\frac{5+\de}{2}} \ff | \leq C \tau^{-1+\frac{\de}{2}} \\
|r \aa | \leq C \tau^{-1+\frac{\de}{2}}, \qquad |r^2 \underline{\ff} | \leq C \tau^{-1+\frac{\de}{2}}
\eeaa
where $C$ is some constant depending on an appropriate Sobolev norm of the data. 

The precise statement will be given as Theorem \ref{main-theorem-1} in Section \ref{section-main-theorem}.

\subsubsection{The Chandrasekhar transformation and the Regge-Wheeler system}

A direct analysis of equations of the form \eqref{schematica} or \eqref{schematicff} is prevented by the presence of the first order terms. 
We rely instead on the introduction of two derived quantities $\qf$ and $\qf^\F$, obtained from $\a$ and $\ff$ respectively, which verify a system of what we shall call \textit{generalized Regge--Wheeler equations}, coupled together.
Similar derived quantities were introduced in \cite{DHR} in the Schwarzschild case, and in \cite{TeukolskyDHR} in the Kerr case.

As observed in the case of the Regge--Wheeler-type equation obtained in Kerr in \cite{TeukolskyDHR}, the complete decoupling of the equation is not necessary in the derivation of the estimates. A new important feature appearing in the Einstein--Maxwell equations, which is not present in the vacuum case, are the estimates involving coupling terms of curvature and electromagnetic tensor, which are independent quantities. In addition to those, the coupling of the Regge--Wheeler equations involve lower order terms, as in Kerr (\cite{TeukolskyDHR}, \cite{ma2}).  In order to take into account this whole structure in the estimates, the two equations have to be considered as one system, together with transport estimates for the lower order terms.

We define the derived quantities $\qf$ and $\qf^\F$ as (defined in \eqref{quantities} and \eqref{quantities-2})
\bea\label{relation-sqf}
\begin{split}
\qf&=\frac{1}{\kab}\nabb_3\left(\frac{r}{\kab}\nabb_3(r^3 \kab^2 \a)\right) \\
\qf^\F &= \frac{1}{\kab}\nabb_3(r^3 \kab \ \ff)
\end{split}
\eea
where $\kab:=\tr\chib$ is the trace of the null second fundamental form. They correspond to physical space versions of transformations first considered by Chandrasekhar \cite{Chandra}. Similar physical versions of the Chandrasekhar transformation were first introduced in \cite{DHR} (see Section \ref{relation-DHR} for a comparison of the derived quantities). 
This transformation has the remarkable property of turning the Teukolsky-type equations \eqref{schematica} and \eqref{schematicff} into a system of generalized Regge--Wheeler equations.

A lenghty computation, carried out in Appendix \ref{derivation-Regge}, reveals that $\qf$ and $\qf^\F$ 
verify a system of the following schematic form:
 \begin{equation}\label{systemRW}
\begin{cases}
\Box_{\g_{M, Q}}\mathfrak{q}- V_1(r)\mathfrak{q} = Q \cdot \left(r^{-1} \lapp\mathfrak{q}^\F + \lot(\qf^\F, \a, \ff)\right), \\
\Box_{\g_{M, Q}}\mathfrak{q}^\F- V_2(r) \mathfrak{q}^\F = Q \cdot \left(-r^{-3} \mathfrak{q}+ \lot(\ff)\right)
\end{cases}
\end{equation}
where $V_i$ are positive on the exterior and $\lot$ denotes lower order with respect to derivatives. An analogous system holds in the spin $-2$ case. 
See Propositions \ref{wave-qfF} and \ref{wave-qf}  for the exact form.

 In the case of $Q=0$, as in Schwarzschild spacetime, the system reduces to the first equation, with trivial right hand side. We therefore recover the Regge--Wheeler equations obtained in the context of linear stability of Schwarzschild in \cite{DHR}. 

We emphasize that the particular structure of the coupling terms on the right hand side allows the estimates to be derived as in this paper. In particular, the difference in sign in the highest terms on the right hand side of the system allows the cancellation for the most troubling terms at the photon sphere. See Remark \ref{signs-different}.

We outline here the procedure for the proof of boundedness and decay for the Teukolsky system in the case of small charge. 
\subsubsection{Estimates for the two equations}
We write system \eqref{systemRW} in the following concise form:
\beaa
\begin{cases}
\Big(\square_\g-V_1\Big)\qf= \M_1[\qf, \qf^\F]:=Q\c \C_1[\qf^\F]+Q  \c \Lbb_1[\qf^\F]+Q^2 \c \Lbb_1[\qf], \\
\Big(\square_\g-V_2\Big)\qf^{\F}=\M_2[\qf, \qf^\F]:=Q\c  \C_2[\qf] +Q^2 \c \Lbb_2[\qf^\F]
\end{cases}
\eeaa      
The terms $\C$s and $\Lbb$s are respectively the coupling and the lower order terms. In particular:
\begin{itemize}
 \item The terms $\C_1$ and $\C_2$ represent the coupling between the Weyl curvature and the Ricci curvature. In the wave equation for $\qf$ the coupling term $\C_1=\C_1[\qf^\F]$ is an expression in terms of $\qf^\F$, while in the wave equation for $\qf^\F$ the coupling term $\C_2=\C_2[\qf]$ is an expression in terms of $\qf$. 
 \item The terms $\Lbb_1$ and $\Lbb_2$ collect the lower order terms: in particular $\Lbb_1[\qf]$ are lower order terms with respect to $\qf$ (i.e. they contain $\a$ or one derivative of $\a$), while $\Lbb_1[\qf^\F]$ and $\Lbb_2[\qf^\F]$ are lower terms with respect to $\qf^\F$ (i.e. they contain $\ff$). 
 The index $1$ or $2$ denotes if they appear in the first or in the second equation.
 \end{itemize}
 
 We first derive separated estimates for the two equations of the form 
 \beaa
\Big(\Box_\g-V_i \Big) \Psi_i&=& \M_i
\eeaa
for $i=1,2$, $\Psi_1=\qf$, $\Psi_2=\qf^\F$, keeping the right hand side as it is in the computations.  We apply to both equations separately the standard procedures used to derive energy and Morawetz estimates. Using the Morawetz vector field as multiplier, the redshift estimates and the Dafermos--Rodnianski $r^p$-method, we derive Morawetz estimates and $r^p$-weighted estimates, as in \eqref{final-estimate-Psi}, as well as higher derivative estimates, as in \eqref{final-estimate-Psi-higher}, for the two separated equations. This is done in Section \ref{separate-estimates}.
 
 Notice that these separated estimates will contain on the right hand side terms involving $\M_1$ and $\M_2$ that at this stage are not controlled.  In particular $\M_1$ and $\M_2$ contain both the coupling terms $\C$ and the lower order terms $\Lbb$.

\subsubsection{Estimates for the coupling terms}

In deriving estimates for the coupling terms, the main obstacle we encounter is to obtain, due to trapping at the photon sphere, the $T$-energy estimate. We observe that the structure of the right hand side in the two equations of the system is not symmetric. In particular, the coupling term $\C_1[\qf^\F]$ in the first equation involves up to two derivatives of $\qf^\F$, while the coupling term $\C_2[\qf]$ in the second equation contains $0$th-order derivative of $\qf$. Regularity considerations mean that one must try and obtain the $T$-energy estimate for $\qf$ and the $T$-energy estimate for $\nabb_T \qf^\F$, $\nabb \qf^\F$ at the same time. 

 In order to take into account the difference in the presence of derivatives, we consider the $0$th-order Morawetz and $r^p$-weighted estimate for the first equation and the $1$st-order estimate for the second equation, and we add them together.  This operation will create a combined estimate, where the Morawetz bulks on the left hand side of each equations ought to absorb the coupling term on the right hand side of the other equation. This means that we should control, at the top order, spacetime integrals of $C_1 \lapp \qf^\F \nabb_T \qf + C_2 \nabb_T {\nabb^2}_T \qf^\F +C_3 \nabb \qf \nabb_T \nabb \qf^\F$ by the initial energies of $\qf$ and $\nabb_T \qf^\F, \nabb \qf^\F$. Here $C_i$ are freely chosen constants coming from choosing $C_i T$ as multiplier. 
 
 Since the Morawetz bulk energies are degenerate at the photon sphere for the angular and the $T$ derivative, obtaining such bounds for any constants would be impossible. However, we show that, after integrating by parts and using the wave equation for $\qf^\F$, a clever choice of constants (normalized relative to the photon sphere) actually allows these troublesome terms to be expressed in terms of either bulk integrals concerning derivatives controlled without degeneracy at the photon sphere, or pure boundary terms which can be absorbed into the energy. Such a choice of constant and the special structure of the coupling terms $\C_1[\qf^\F]$ and $\C_2[\qf]$ imply a cancellation of problematic terms in the trapping region. This is done in Section \ref{coupling-subsection}.

\subsubsection{Estimates for the lower order terms}

The lower order terms $\Lbb_1[\qf]$, $\Lbb_1[\qf^\F]$ and $\Lbb_2[\qf^\F]$ contain expressions in $\a$ and $\ff$, which are not contained in the Morawetz estimates, and for this reason we need transport estimates for $\a$ and $\ff$.  This is achieved by viewing the relations \eqref{relation-sqf} between the derived quantities and the original quantities as transport equations along the null cones generated by the vectorfields $e_3$ and $e_4$. Using these estimates, we invert the transformation theory so as to upgrade the estimates on the generalized Regge--Wheeler system to the spin $\pm2$ system. 
 This is done in Section \ref{section-lower-order}.

Summing the separated estimates and absorbing the coupling terms and the lower order terms on the right hand side we obtain a combined estimate for the system as in the Main Theorem.

\subsubsection{The full subextremal range $|Q|<M$}

During the preparation of this paper, boundedness and decay for the gauge-invariant quantities governing the gravitational and electromagnetic perturbations of Reissner-Nordstr\"om have been obtained by the author in the full subextremal range $|Q|<M$ (see \cite{Giorgi7}). The extension to the full subextremal range has been possible by considering a different system of Regge-Wheeler equations as the one here used.

The system analyzed in \cite{Giorgi7} is composed of the Regge-Wheeler-type equation for $\qf^\F$ here derived (i.e. the second equation in \eqref{systemRW}) and of another Regge-Wheeler equation obtained from a gauge-invariant quantity $\pf$ of spin 1 introduced in \cite{Giorgi5}. The gauge-invariant quantity $\qf$ can be expressed in terms of $\qf^\F$ and the new quantity $\pf$, and therefore the dependence on $\qf$ can be altogether eliminated. Through this elimination, the system of equations becomes diagonalizable and presents right hand sides which are symmetric. This is in contrast with the system \eqref{systemRW} here analyzed, which has non symmetric right hand sides, and for which the analysis can be obtained for very small $Q$ only. Such symmetry is used in \cite{Giorgi7} to define a combined energy-momentum tensor of the system which allows for a cancellation of the highest order terms, without recurring to smallness of the charge.

\subsection{Outline of the paper}

The paper is organized as follows. 

In Section \ref{VEeqDNGsec}, we introduce the general framework of null frames, and we write the Einstein--Maxwell equations in such null frames. 
In Section \ref{RN-generale}, we introduce the Reissner--Nordstr{\"o}m metric $(\MM, \g_{M,Q})$ and recall its main properties. 
In Section \ref{linearized-equations}, we derive the linearization of the Einstein-Maxwell equations around the Reissner--Nordstr{\"o}m solution. 

In Section \ref{section-gauge-invariant}, we describe the gauge-invariant quantities $\a$ and $\ff$ which verify the generalized Teukolsky system, presented in Section \ref{spin2}, together with the generalized Regge--Wheeler system verified by $\qf$ and $\qf^\F$. 

In Section \ref{transformation-theory}, we present the Chandrasekhar transformation, relating the generalized Teukolsky system and the generalized Regge--Wheeler system.
In Section \ref{statement}, we define the main weighted energies and bulks used in the estimates and state the theorem.
In Section \ref{separate-estimates}, the estimates for the separated equations are carried out, and in Section \ref{lot-absorbing} the estimates for the coupling and the lower order terms are derived.
Finally, in Section \ref{proof-of-theorem-this-is-the-end}, we summarize the previous steps into the proof of the Main Theorem.

In Appendix \ref{computations-appendix}, we collect the computations in the derivation of the generalized Teukolsky system, and in Appendix \ref{derivation-Regge}, we show the derivation of the generalized Regge--Wheeler system through the Chandrasekhar transformation. 

\bigskip

{\bf{Acknowledgements}}  The author is grateful to Sergiu Klainerman and Mu-Tao Wang for comments and suggestions.

\section{The Einstein--Maxwell equations in null frames}
\label{VEeqDNGsec}
In this section, we review the general form of the Einstein-Maxwell equations \eqref{Einsteineq} and \eqref{Max} written with respect to a local null frame of a Lorentzian manifold. In this section, we will derive the main equations in their full generality.

\subsection{Preliminaries} \label{sec:genmfld}
Let $(\MM, \g)$ be a $3+1$-dimensional Lorentzian manifold, and let $\D$ be the covariant derivative associated to $\g$.

\subsubsection{Local null frames}\label{local-null-frames}
Suppose that the Lorentzian manifold $(\MM, \g)$ can be foliated by spacelike $2$-surfaces $(S,\slashed{g}) $, where $\slashed{g}$ is the pullback of the metric $\g$ to $S$.  To each point of $\MM$, we can associate a null frame $\mathscr{N}=\left\{e_A, e_3, e_4\right\}$, with $\{ e_A \}_{A=1,2}$ being tangent vectors to $(S,\slashed{g}) $, such that the following relations hold
\bea\label{relations-null-frame}
\begin{split}
\g\left(e_3,e_3\right) &= 0, \qquad \g\left(e_4,
e_4 \right) = 0, \qquad \g\left(e_3,e_4\right) = -2\\ 
\g\left(e_3,
e_A \right) &= 0 \ \ \ , \ \ \  \g\left(e_4, e_A\right) = 0 \ \ \ , \ \ \ \g \left(e_A, e_B \right) = \slashed{g}_{AB} \, .
\end{split}
\eea

\subsubsection{S-tensor algebra}\label{tensor-algebra}
In Section \ref{sec:rccc}, we will express the Ricci coefficients, curvature and electromagnetic components with respect to a null frame $\mathscr{N}$ associated to a foliation of surfaces $S$. These objects are therefore $S$-tangent tensors. We will use the standard notations for operations on $S$-tangent tensor. See for example Section 3.1.3. of \cite{DHR}. 

We recall the definition of the projected covariant derivatives and the angular operator on $S$-tensors. We denote $\nabb_3=\nabb_{e_3}$ and $\nabb_4=\nabb_{e_4}$ the projection to $S$ of the spacetime covariant derivatives $\D_{e_3}$ and $\D_{e_4}$ respectively.

We define the following angular operator on $S$-tensors (see \cite{Ch-Kl}). Let $\xi$ be an arbitrary one-form and $\th$ an arbitrary symmetric traceless $2$-tensor on $S$. 
\begin{itemize}
\item $\nabb$ denotes the covariant derivative associated to the metric $\slashed{g}$ on $S$.
\item $\DDd_1$ takes $\xi$ into the pair of functions $(\divv \xi, \curll \xi)$, where $$\divv \xi=\slashed{g}^{AB} \nabb_A \xi_B, \qquad \curll\xi=\slashed{\ep}^{AB}\nabb_A \xi_B$$
\item $\DDs_1$ is the formal $L^2$-adjoint of $\DDd_1$, and takes any pair of functions $(\rho, \sigma)$ into the one-form $-\nabb_A \rho+\slashed{\ep}_{AB} \nabb^B \sigma$.
\item  $\DDd_2$ takes $\th$ into the one-form $(\divv \th)_C=\slashed{g}^{AB}\nabb_A \th_{BC}$.
\item $\DDs_2$ is the formal $L^2$-adjoint of $\DDd_2$, and takes $\xi$ into the symmetric traceless two tensor $$(\DDs_2\xi)_{AB}=-\frac 12 \left( \nabb_B\xi_A+\nabb_A\xi_B-(\divv \xi)\slashed{g}_{AB}\right)$$
\end{itemize}

We recall the relations between the angular operators and the laplacian $\lapp$ on $S$:
\bea\label{angular-operators}
\begin{split}
\DDd_1 \DDs_1&=-\lapp_0, \qquad \DDs_1 \DDd_1= -\lapp_1+K, \\
\DDd_2 \DDs_2&= -\frac 1 2 \lapp_1-\frac 1 2 K, \qquad \DDs_2 \DDd_2= -\frac 1 2 \lapp_2+K
\end{split}
\eea
where $\lapp_0$ and $\lapp_1$ are the laplacian on scalars and on $1$-form respectively, and $K$ is the Gauss curvature of the surface $S$.

\subsection{Ricci coefficients, curvature and electromagnetic components} \label{sec:rccc}
We now define the Ricci coefficients, curvature and electromagnetic components associated to the metric $\g$ with respect to the null frame $\mathscr{N}=\left\{e_A, e_3, e_4\right\}$,  where the indices $A,B$ take values $1,2$. 

\subsubsection{Ricci coefficients}\label{ricci-coefficients}

 We define the Ricci coefficients associated to the metric $\g$ with respect to the null frame $\mathscr{N}$ in the following way (see \cite{Ch-Kl}):
   \bea\label{def1}
   \begin{split}
   \chi_{AB}:&=\g(\D_A  e_4, e_B), \qquad  \chib_{AB}:=\g(\D_A  e_3, e_B)\\
   \eta_A:&=\frac 1 2 \g(\D_3 e_4, e_A), \qquad \etab_A:=\frac 1 2 \g(\D_4 e_3, e_A),\\
    \xi_A:&=\frac 1 2 \g(\D_4 e_4, e_A), \qquad \xib_A:=\frac 1 2 \g(\D_3 e_3, e_A)\\
     \om:&=\frac 1 4 \g(\D_4 e_4, e_3), \qquad \omb:=\frac 1 4\g(\D_3 e_3, e_4)\\
   \ze_A:&=\frac 1 2 \g( \D_A e_4, e_3), \qquad 
   \end{split}
   \eea
   It is natural to decompose the $2$-tensor $\chi_{AB}$ into its tracefree part $\chih_{AB}$, a symmetric traceless 2-tensor on $S$, and its trace $\ka:=\tr\chi$. In particular we write $ \chi_{AB}=\frac 1 2 \ka \ \slashed{g}_{AB}+\chih_{AB}$, with $\slashed{g}^{AB} \chih_{AB}=0$ and $\ka=\slashed{g}^{AB} \chi_{AB}$. Similarly for $\chib_{AB}$.

It follows from \eqref{def1} that we have the following relations for the commutators of the null frame:
\bea\label{commutators}
\begin{split}
\big[e_3, e_A\big]&= \nabb_3 e_A+(\eta_A-\ze_A) e_3+\xib_A e_4-\chib_{AB} e^B, \\
\big[e_4, e_A\big]&= \nabb_4 e_A+(\etab_A+\ze_A) e_4+\xi_A e_3-\chi_{AB} e^B, \\
\big[e_3, e_4\big]&= -2\om e_3+2\omb  e_4+2(\eta^A -\etab^A) e_A
\end{split}
\eea

\subsubsection{Curvature components}
Let $\W$ denote the Weyl curvature of $\g$ and let $\dual \W$ denote the Hodge dual on $(\MM, \g)$ of $\W$. We define the null curvature components in the following way (see \cite{Ch-Kl}):  
   \bea\label{def3}
\begin{split}
\a_{AB}:&=\W(e_A, e_4, e_B, e_4), \qquad \aa_{AB}:= \W(e_A, e_3, e_B, e_3) \\
\b_A:&=\frac 1 2 \W(e_A, e_4, e_3, e_4), \qquad  \bb_{A} :=\frac 1 2 \W(e_A, e_3, e_3, e_4) \\
\rho:&=\frac 14 \W(e_3, e_4, e_3, e_4) \qquad \sigma:=\frac 1 4  \dual \W (e_3, e_4, e_3, e_4)
\end{split}
\eea
The remaining components of the Weyl tensor are given by 
\beaa
\W_{AB34}&=&2\sigma \ep_{AB}, \qquad \W_{ABC3}=\ep_{AB} \dual \bb_C, \qquad \W_{ABC4}=\ep_{AB} \dual \b_C, \\
\W_{A3B4}&=&-\rho \de_{AB}+\sigma \ep_{AB}, \qquad \W_{ABCD}=-\ep_{AB}\ep_{CD} \rho
\eeaa

\subsubsection{Electromagnetic components}
Let $\F$ be a $2$-form in $(\MM, \g)$, and let $\dual \F$ denote the Hodge dual on $(\MM, \g)$ of $\F$. We define the null electromagnetic components in the following way (see \cite{Zipser} and \cite{Federico}):
\bea\label{decomposition-F}
\begin{split}
\bF_{A}:&=\F(e_A, e_4), \qquad \bbF_{A}:= \F(e_A, e_3) \\
 \rhoF:=& \frac 1 2 \F(e_3, e_4), \qquad \sigmaF:=\frac 1 2 \dual \F(e_3, e_4) 
\end{split}
\eea
The only remaining component of $\F$ is given by $\F_{AB}=-\ep_{AB}\sigmaF$.

\begin{remark} The notation in both \cite{Zipser} and \cite{Federico} differs to ours. In those previous works, the extreme components of $\F$ are denoted by $\alpha(F)$, as opposed to $\beta(F)$. By using our notation, we want to stress the fact that $\bF$ and $\bbF$ are not gauge-invariant quantities in the gravitational and electromagnetic perturbations of Reissner-Nordstr{\"o}m, leading to major differences in the treatment of the estimates. See Section  \ref{section-gauge-invariant}.
\end{remark}

\subsection{The Einstein-Maxwell equations} \label{nseq}
If $(\MM, \g)$ satisfies the Einstein-Maxwell equations

\bea \label{Einstein-Maxwell-eq}
\R_{\mu\nu}&=&2 \F_{\mu \lambda} {\F_\nu}^{\lambda}- \frac 1 2 \g_{\mu\nu} \F^{\alpha\beta} \F_{\alpha\beta}, \label{Einstein-1}\\
\D_{[\alpha} \F_{\beta\gamma]}&=&0, \qquad \D^\alpha \F_{\alpha\beta}=0. \label{Maxwell}
\eea
the Ricci coefficients, curvature and electromagnetic components defined in \eqref{def1}, \eqref{def3} and \eqref{decomposition-F} satisfy a system of equations, which is presented in this section.

\subsubsection{The null structure equations} \label{sec:nse}

The first equations for $\chi$ and $\chib$ are given by
\beaa
\nabb_3 \chib_{AB}+\chib_{A}^C \chib_{CB}+2\omb \chib_{AB}  &=& 2\nabb_B \xib_A+2\eta_B \xib_A+2\etab_A \xib_B-4\ze_B\xib_A+ \R_{A33B}, \\
 \nabb_4 \chi_{AB}+\chi_{A}^C \chi_{CB}+2\om \chi_{AB}  &=& 2\nabb_B \xi_A+2\eta_B \xi_A+2\eta_A \xi_B+4\ze_B\xi_A+\R_{A44B}
\eeaa
and separating in the symmetric traceless part and in the trace part, we obtain
\bea\label{nabb-3-chibh-general}
\begin{split}
\nabb_3 \chibh+\kab \ \chibh+2\omb \chibh&=&-2\DDs_2\xib -\aa+2(\eta+\etab-2\ze)\hot \xib,  \\
\nabb_4 \chih+\ka \ \chih+2\om \chih&=&-2\DDs_2\xi -\a +2(\eta+\etab+2\ze)\hot \xi
\end{split}
\eea
and
\bea\label{nabb-3-kab-general}
\begin{split}
\nabb_3\kab+\frac 1 2 \kab^2+2\omb \kab=2\divv\xib-(\chibh, \chibh )+2(\eta+\etab-2\ze)\c \xib-2 (\bbF,  \bbF), \\
\nabb_4\ka+\frac 1 2 \ka^2+2\om \ka=2\divv\xi-(\chih, \chih )+2(\eta+\etab+2\ze)\c \xi-2 (\bF,  \bF)
\end{split}
\eea

The second equations for $\chi$ and $\chib$ are given by 
\beaa
\nabb_4\chib_{AB}  &=&  2\nabb_B \etab_A +2\om \chib_{AB} - \chi_B^C \chib_{AC}  +2(\xi_B \xib_A+ \etab_B \etab_A) + \R_{A34B}, \\
\nabb_3\chi_{AB}  &=&  2\nabb_B \eta_A +2\omb \chi_{AB} - \chib_B^C \chi_{AC}  +2(\xib_B \xi_A+ \eta_B \eta_A) + \R_{A43B}, \\
\eeaa
and separating in the symmetric traceless part and in the trace part, we obtain
\bea\label{nabb-3-chih-general}
\begin{split}
\nabb_3\chih+\frac 12  \kab \,\chih-2 \omb \chih   &=&  -2 \slashed{\mathcal{D}}_2^\star \eta-\frac 1 2 \ka \chibh  + (\eta\hot \eta)+ (\xib\hot \xi) -(\bF \hot \bbF) \\
\nabb_4\chibh+\frac 12  \ka \,\chibh -2 \om \chibh &=&  -2 \slashed{\mathcal{D}}_2^\star \etab -\frac 1 2 \kab \chih  + (\etab\hot \etab)+ (\xib\hot \xi) -(\bF \hot \bbF),
\end{split}
\eea
and
\bea\label{nabb-3-ka-general}
\begin{split}
\nabb_3 \ka +\frac 12  \ka\, \kab-2\omb\,\ka &=&2 \divv \eta  -(\chih,  \chibh)+2(\xi,\xib) +2(\eta,  \eta)+2\rho \\
\nabb_4 \kab +\frac 12  \ka\,\kab-2\om\,\kab  &=&2 \divv \etab -(\chih,  \chibh)+2(\xi,\xib) +2(\etab,  \etab)+2\rho 
\end{split}
\eea
while the antisymmetric part is given by 
\bea\label{curl-eta-general}
\slashed{\curl}\eta &=& -\frac 1 2\chi \wedge \chib  +\sigma,  \qquad \slashed{\curl}\etab = \frac 1 2\chi \wedge \chib   -\sigma
\eea
The equations for $\ze$ are given by 
\beaa
\nabb_3 \ze&=&-2 \nabb \omb -\chib \c (\ze+\eta)+2\omb(\ze-\eta)+\chi \c \xib+2\om \xib - \frac 1 2 \R_{A334}, \\
-\nabb_4 \ze&=&-2 \nabb \om -\chi \c (-\ze+\etab)+2\om(-\ze-\etab)+\chib \c \xi+2\omb \xi - \frac 1 2 \R_{A443}
\eeaa
and therefore reducing to
\bea\label{nabb-3-ze-general}
\begin{split}
\nabb_3 \ze&=&-2 \nabb \omb -\chib \c (\ze+\eta)+2\omb(\ze-\eta)+\chi \c \xib+2\om \xib - \bb +\sigmaF{\ep_{A}}^C\bbF_C -\rhoF\bbF, \\
\nabb_4 \ze&=&2 \nabb \om +\chi \c (-\ze+\etab)+2\om(\ze+\etab)-\chib \c \xi-2\omb \xi -\b -\sigmaF \ep \c \bF -\rhoF\bF 
\end{split}
\eea
The equations for $\xi$ and $\xib$ are given by 
\beaa
\nabb_4\xib- \nabb_3 \etab &=&4\om\xib+ -\chib\c(\eta-\etab) -\frac 1 2 \R_{A334}, \\
\nabb_3\xi- \nabb_4 \eta &=&4\omb\xi+ \chi\c(\eta-\etab) -\frac 1 2 \R_{A443}
\eeaa
and therefore reducing to
\bea\label{nabb-4-xib-general}
\begin{split}
\nabb_4\xib- \nabb_3 \etab &=& -\chib\c(\eta-\etab)+4\om\xib -\bb +\sigmaF \ep \c \bbF -\rhoF\bbF, \\
\nabb_3\xi-  \nabb_4 \eta &=& \chi\c(\eta-\etab)+4\omb\xi +\b +\sigmaF \ep \c \bF +\rhoF\bF \label{nabb-3-xi-general}
\end{split}
\eea
The equation for $\om$ and $\omb$ is given by
\beaa
\nabb_4 \omb+\nabb_3\om&=& 4\om\omb+\xi\c\xib +\ze\c(\eta-\etab) -\eta\c\etab+\frac 1 4 \R_{3434}
\eeaa
and therefore reducing to
\bea\label{nabb-4-omb-nabb-3-om}
\nabb_4 \omb+\nabb_3\om&=& 4\om\omb +\ze\c(\eta-\etab)+\xi\c\xib -\eta\c\etab+\rho + \rhoF^2 -\sigmaF^2
\eea
The spacetime equations that generate Codazzi equations are
\beaa
\nabb_C \chib_{AB} + \ze_B \chib_{AC} &=& \nabb_B \chib_{AC} + \ze_C \chib_{AB} + \R_{A3CB}, \\
\nabb_C \chi_{AB} - \ze_B \chi_{AC} &=& \nabb_B \chi_{AC} - \ze_C \chi_{AB} + \R_{A4CB}
\eeaa
and, taking the trace in $C,A$ we obtain
\bea\label{codazzi-general}
\begin{split}
\divv \chibh_B&=&(\chibh\c \ze)_B-\frac 1 2 \kab \ze_B+\frac 1 2 (\nabb_B \kab) +\bb_B+\sigmaF{\ep_{B}}^C\bbF_C -\rhoF\bbF_B, \\
\divv \chih_B&=&-(\chih\c \ze)_B+\frac 1 2 \ka \ze_B+\frac 1 2 (\nabb_B \ka) -\b_B+\sigmaF{\ep_{B}}^C\bF_C +\rhoF\bF_B
\end{split}
\eea
The spacetime equation that generates Gauss equation is 
\beaa
\gS^{AC} \gS^{BD} \R_{ABCD} = 2K + \frac 1 2 \ka\kab - \chih \c \chibh
\eeaa
therefore reducing to
\bea\label{Gauss-general}
K=- \frac 1 4 \ka\kab + \frac 1 2 (\chih, \chibh)-\rho+\rhoF^2 -\sigmaF^2  
\eea

\subsubsection{The Maxwell equations}\label{Maxwell-general}
We derive the null decompositions of Maxwell equations \eqref{Maxwell}. 

The equation $\D_{[\alpha} \F_{\beta\gamma]}=0$ gives three independent equations. The first one is given by
\bea\label{Maxwell-1}
\begin{split}
\nabb_3 \bF_A-\nabb_4 \bbF_A&= -\left(\frac 1 2 \kab-2\omb\right) \bF_A+\left(\frac 1 2 \ka -2\om\right) \bbF_A+ 2\nabb_A\rhoF+2 (\eta_A+\etab_A)\rhoF\\
&-2(\eta^B-\etab^B)\ep_{AB} \sigmaF+(\chih \c \bbF)_A-(\chibh \c \bF)_A
\end{split}
\eea
The second and third equation are given by
\bea\label{nabb-3-sigmaF-general}
\begin{split}
\nabb_3 \sigmaF+\kab \ \sigmaF&=& -\slashed{\curl}\bbF-(\ze-\eta)\wedge \bbF+\xib \wedge \bF, \\
 \nabb_4 \sigmaF+\ka \ \sigmaF&=& \slashed{\curl}\bF+(\ze+\etab)\wedge \bF+\xi \wedge \bbF
 \end{split}
\eea
The equation $\D^\mu \F_{\mu\nu}=\slashed{\g}^{BC} \D_B \F_{C\nu}-\frac 1 2 \D_4 \F_{3\nu}-\frac 1 2 \D_3\F_{4\nu}=0$ gives three more independent equations. The first one is given by
\bea\label{Maxwell-2}
\begin{split}
\nabb_3 \bF_A+\nabb_4 \bbF_A&=-\left(\frac 1 2 \kab -2\omb \right)\bF_A-\left(\frac 1 2 \ka -2\om \right)\bbF_A+2(\eta_A-\etab_A)\rhoF\\
&-2\ep_{AC}\nabb^C \sigmaF-2(\eta^B+\etab^B)\ep_{AB} \sigmaF+ (\chibh \c \bF)_A+ (\chih \c \bbF)_A
\end{split}
\eea
Summing and subtracting \eqref{Maxwell-1} and \eqref{Maxwell-2} we obtain respectively
\bea
\nabb_3 \bF_A+\left(\frac 1 2 \kab-2\omb\right) \bF_A&=& -\DDs_1(\rhoF, \sigmaF) +2\eta_A\rhoF-2\eta^B\ep_{AB} \sigmaF+(\chih \c \bbF)_A, \label{nabb-3-bF-general}\\
\nabb_4 \bbF_A+\left(\frac 1 2 \ka -2\om \right)\bbF_A&=&\DDs_1(\rhoF, -\sigmaF)-2\etab_A\rhoF-2\etab^B\ep_{AB} \sigmaF+ (\chibh \c \bF)_A
\eea
The last two equations are given by 
\bea\label{nabb-4-rhoF-general}
\begin{split}
\nabb_4 \rhoF+ \ka \rhoF&=& \divv\bF+(\ze+\etab) \c \bF-\xi \c \bbF, \\
\nabb_3 \rhoF+ \kab \rhoF&=& -\divv\bbF+(\ze-\eta) \c \bbF-\xib \c \bF
\end{split}
\eea

\subsubsection{The Bianchi equations} \label{bieq}
The Bianchi identities for the Weyl curvature are given by 
\beaa
 \D^\a \W_{\a\b\gamma\delta}&=&\frac 1 2 (\D_\gamma \R_{\b\delta}-\D_{\delta}\R_{\b\gamma})=:J_{\b\gamma\delta} \\
 \D_{[\sigma }\W_{\gamma\delta] \a\b}&=&\g_{\delta \b}J_{\a\gamma\sigma}+\g_{\gamma \a}J_{\b\delta\sigma}+\g_{\sigma \b}J_{\a\delta\gamma}+\g_{\delta \a}J_{\b\sigma\gamma}+\g_{\gamma \b}J_{\a\sigma\delta}+\g_{\sigma \a}J_{\b\gamma\delta}:= \tilde{J}_{\sigma\gamma\delta\a\b}
\eeaa 
The Bianchi identities for $\a$ and $\aa$ are given by
\beaa
\nabb_3\a_{AB}+\frac 1 2 \kab\,\a_{AB}-4\omb \a_{AB}&=&-2 (\DDs_2\, \b)_{AB} -3(\chih_{AB} \rho+\dual\, \chih_{AB} \sigma)+((\ze+4\eta)\hat{\otimes} \b )_{AB}+\\
&&+\frac 1 2 (\tilde{J}_{3A4B4}+\tilde{J}_{3B4A4}+J_{434}\slashed{\g}_{AB})\\
\nabb_4\aa_{AB}+\frac 1 2 \ka\,\aa_{AB}-4\om \aa_{AB}&=&2 (\DDs_2\, \bb)_{AB} -3(\chibh_{AB} \rho+\dual\, \chibh_{AB} \sigma)-((-\ze+4\etab)\hat{\otimes} \bb )_{AB}+\\
&&+\frac 1 2 (\tilde{J}_{4A3B3}+\tilde{J}_{4B3A3}+J_{343}\slashed{\g}_{AB})
\eeaa
Using that $\tilde{J}_{3A4B4}=-\slashed{\g}_{A B}J_{4 34}+2J_{B A4}$, it is reduced to
\bea\label{Bianchi-non-linear1}
\begin{split}
\nabb_3\a_{AB}+\frac 1 2 \kab\,\a_{AB}-4\omb\a_{AB}&=&-2 (\DDs_2\, \b)_{AB} -3(\chih_{AB} \rho+\dual\, \chih_{AB} \sigma)+((\ze+4\eta)\hat{\otimes} \b )_{AB}+\\
&&+J_{B A4}+J_{AB4}-\frac 1 2 \slashed{g}_{A B}J_{4 34}, \\
\nabb_4\aa_{AB}+\frac 1 2 \ka\,\aa_{AB}-4\om \aa_{AB}&=&2 (\DDs_2\, \bb)_{AB} -3(\chibh_{AB} \rho+\dual\, \chibh_{AB} \sigma)-((-\ze+4\etab)\hat{\otimes} \bb )_{AB}+\\
&&+J_{B A3}+J_{AB3}-\frac 1 2 \slashed{g}_{A B}J_{3 43}\end{split}
\eea 
The Bianchi identities for $\b$ and $\bb$ are given by 
\bea\label{Bianchi-b-1}
\begin{split}
\nabb_4 \b_A+ 2\ka  \b_A+2\om \b_A &=&\divv\a_A +((2\ze+\etab)\c \a)_A+3(\xi_A \rho+\dual \xi_A \sigma) -J_{4A4}, \label{nabb-4-b-general}\\
\nabb_3 \bb_A+ 2\kab  \bb_A+2\omb \bb_A &=&-\divv\aa_A +((2\ze-\eta)\c \aa)_A-3(\xib_A \rho+\dual \xib_A \sigma) +J_{3A3}
\end{split}
\eea
and 
\bea\label{Bianchi-b-2}
\begin{split}
\nabb_3 \b_A+\kab \b_A-2\omb\, \b_A &=&\DDs_1(-\rho, \sigma)_A+2(\chih \c\bb)_A +\xib \c \a+3(\eta_A \rho +\dual\eta_A\,   \sigma) + J_{3A4}, \\
\nabb_4 \bb_A+\ka \bb_A-2\om\, \bb_A &=&\DDs_1(\rho, \sigma)_A+2(\chibh \c\b)_A-\xi \c \aa -3(\etab_A \rho -\dual\etab_A\,   \sigma) - J_{4A3} 
\end{split}
\eea
The Bianchi identity for $\rho$ is given by 
\bea\label{nabb-4-rho-general}
\begin{split}
\nabb_4 \rho+\frac 3 2 \ka \rho&=&\divv\b+(2\etab+\ze)\c\b  -\frac 1 2 (\chibh \c \a )-2\xi \c \bb -\frac 1 2 J_{434}, \\
\nabb_3 \rho+\frac 3 2 \kab \rho&=&-\divv\bb-(2\eta-\ze)\c\bb  +\frac 1 2 (\chih \c \aa )+2\xib \c \b -\frac 1 2 J_{343}
\end{split}
\eea
The Bianchi identity for $\sigma$ is given by 
\beaa
\nabb_4 \sigma+\frac 3 2 \ka \sigma&=&-\slashed{\curl}\b-(2\etab+\ze)\wedge\b  +\frac 1 2 \chibh \wedge \a  -\frac 1 2 \dual J_{434}, \\
\nabb_3 \sigma+\frac 3 2 \kab \sigma&=&-\slashed{\curl}\bb-(2\eta-\ze)\wedge\b  -\frac 1 2 \chih \wedge \aa  +\frac 1 2 \dual J_{343}
\eeaa
and writing $\dual J_{434}=\frac 1 2 J_{4\mu\nu}{\ep^{\mu\nu}}_{34}=- J_{4AB}\ep_{AB}=(J_{AB4}-J_{BA4})\ep^{AB}$, we obtain
\bea\label{nabb-4-sigma-general}
\begin{split}
\nabb_4 \sigma+\frac 3 2 \ka \sigma&=&-\slashed{\curl}\b-(2\etab+\ze)\wedge\b  +\frac 1 2 \chibh \wedge \a  -\frac 1 2 (J_{AB4}-J_{BA4})\ep^{AB}, \\
\nabb_3 \sigma+\frac 3 2 \kab \sigma&=&-\slashed{\curl}\bb-(2\eta-\ze)\wedge\bb  -\frac 1 2 \chih \wedge \aa  +\frac 1 2 (J_{AB3}-J_{BA3})\ep^{AB}
\end{split}
\eea

\section{The Reissner-Nordstr{\"o}m spacetime}\label{RN-generale}

In this section, we introduce the Reissner-Nordstr{\"o}m exterior metric, as well as relevant background structure. We collect here standard coordinate transformations relevant to the study of Reissner-Nordstr{\"o}m spacetime. See for example \cite{Exact}. 

 We first fix in Section \ref{diff-structure} an ambient manifold-with-boundary $\MM$ on which we define the Reissner-Nordstr{\"o}m exterior metric $ \g_{M, Q}$ with parameters $M$ and $Q$ verifying $|Q| < M$. We shall then pass to more convenient sets of coordinates, like the double null coordinates, the outgoing and ingoing Eddington-Finkelstein coordinates. Finally we will show how these sets of coordinates relate to the standard form of the metric as given in \eqref{RNintro}. 

We will follow closely Section 4 of \cite{DHR}, where the main features of the Schwarzschild metric and differential structure are easily extended to the Reissner-Nordstr{\"o}m solution.

\subsection{Differential structure and metric}\label{diff-structure}

We define in this section the underlying differential structure and metric in terms of the Kruskal coordinates. 

\subsubsection{Kruskal coordinate system}

Define the manifold with boundary
\begin{align} \label{SchwSchmfld}
\mathcal{M} := \mathcal{D} \times S^2 := \left(-\infty,0\right] \times \left(0,\infty\right) \times S^2
\end{align}
with coordinates $\left(U,V,\theta^1,\theta^2\right)$.
We will refer to these coordinates as \emph{Kruskal coordinates}.
 The boundary 
\[
\mathcal{H}^+:=\{0\}  \times \left(0,\infty\right) \times S^2
\]
will be referred to as the \emph{horizon}. 
We denote by $S^2_{U,V}$ the $2$-sphere $\left\{U,V\right\} \times S^2 \subset \mathcal{M}$ in $\mathcal{M}$.

\subsubsection{The Reissner-Nordstr{\"o}m metric}

We define the Reissner-Nordstr{\"o}m metric on $\mathcal{M}$ as follows.
Fix two parameters $M>0$ and $Q$, verifying $|Q|<M$. Let the function $r : \mathcal{M} \rightarrow \left[M+\sqrt{M^2-Q^2},\infty\right)$ be 
given implicitly as a function of the coordinates $U$ and $V$ by
\beaa
-UV = \frac{4r_{+}^4}{(r_{+}-r_{-})^2} \Big|\frac{r-r_{+}}{r_{+}}\Big| \Big|\frac{r_{-}}{r-r_{-}}\Big|^{\left(\frac{r_{-}}{r_{+}}\right)^2} \exp\Big(\frac{r_{+}-r_{-}}{r_{+}^2} r\Big) \, ,
\eeaa
where 
\bea\label{definiion-rpm}
r_{\pm}=M\pm \sqrt{M^2-Q^2}
\eea
 We will also denote 
\bea\label{def-rH}
r_{\mathcal{H}}=r_+=M+\sqrt{M^2-Q^2}
\eea
Define also
\beaa
\Up_K \left(U,V\right) &=& \frac{r_{-}r_{+}}{4r(U,V)^2} \Big( \frac{r(U,V)-r_{-}}{r_{-}}\Big)^{1+\left(\frac{r_{-}}{r_{+}}\right)^2}\exp\Big(-\frac{r_{+}-r_{-}}{r_{+}^2} r(U,V)\Big) \\
 \gamma_{AB} &=& \textrm{standard metric on $S^2$} \, .
\eeaa
Then the Reissner-Nordstr{\"o}m metric $\g_{M, Q}$ with parameters $M$ and $Q$ is defined to be the metric:
\begin{align} \label{sskruskal}
\g_{M, Q} = -4 \Up_K \left(U,V\right) d{U} d{V} +   r^2 \left(U,V\right) \gamma_{AB} d{\theta}^A d{\theta}^B.
\end{align}
Note that the horizon $\mathcal{H}^+=\partial\mathcal{M}$ 
is a null hypersurface with respect to $\g_{M, Q}$. We will use the standard
spherical coordinates $(\theta^1,\theta^2)=(\theta, \phi)$, in which case
the metric $\gamma$ takes the explicit form
\begin{equation}
\label{gammaexplicit}
\gamma= d\theta^2+\sin^2\theta d\phi^2.
\end{equation}
The above metric \eqref{sskruskal} can be extended to define the maximally-extended Reissner-Nordstr{\"o}m solution on the ambient manifold $(-\infty, \infty) \times (\infty, \infty) \times S^2$. In this paper, we will only consider the manifold-with-boundary $\MM$, corresponding to the exterior of the spacetime.

 The Reissner-Nordstr{\"o}m family
  of spacetimes $(\mathcal{M},\g_{M, Q})$ is the unique electrovacuum spherically symmetric spacetime. It is a static and asymptotically flat spacetime. The parameter $Q$ may be interpreted as the charge of the source. This metric clearly reduces to Schwarzschild spacetime when $Q=0$, therefore $M$ can be interpreted as the mass of the source.

Using definition \eqref{sskruskal}, the metric $\g_{M, Q}$ is manifestly smooth in the whole domain. We will now describe different sets of coordinates for which smoothness breaks down, but which are nevertheless useful for computations.

\subsubsection{Double null coordinates $u$, $v$}
\label{EFdnulldef}

We define another double null coordinate system that covers  the interior of $\mathcal{M}$, modulo the degeneration of the 
angular coordinates. This coordinate system, 
$\left(u,v,\theta^1, \theta^2\right)$, is called  \emph{double null coordinates} and are defined via the relations
\begin{align} \label{UuVv}
U = -\frac{2r_{+}^2}{r_{+}-r_{-}}\exp\left(-\frac{r_{+}-r_{-}}{4r_{+}^2} u\right) \ \ \ \textrm{and} \ \ \ \ V = \frac{2r_{+}^2}{r_{+}-r_{-}}\exp \left(\frac{r_{+}-r_{-}}{4r_{+}^2} v\right) \, .
\end{align}
Using (\ref{UuVv}), we obtain the Reissner-Nordstr{\"o}m metric on the interior of $\MM$ in $\left(u,v,\theta^1, \theta^2\right)$-coordinates:
\begin{align} \label{ssef}
\g_{M, Q} =  - 4 \Up \left(u,v\right) \, d{u} \, d{v} +   r^2 \left(u,v\right) \gamma_{AB} d{\theta}^A d{\theta}^B  
\end{align}
with
\begin{align}
\label{officialOmegadef}
\Up:= 1-\frac{2M}{r}+\frac{Q^2}{r^2}
\end{align}
and the function $r: \left(-\infty,\infty\right) \times \left(-\infty,\infty\right) \rightarrow \left(M+\sqrt{M^2-Q^2},\infty\right)$ defined implicitly via the relations between $(U,V)$ and $(u,v)$. In $\left(u,v,\theta^1, \theta^2\right)$-coordinates, the horizon $\mathcal{H}^+$ can still be formally parametrised by $\left(\infty, v,\theta^2,\theta^2\right)$ with $v \in \mathbb{R}$, $\left(\theta^1,\theta^2\right) \in S^2$.

Note that $u, v$ are  regular optical functions. Their corresponding   null geodesic generators are
\bea\label{definition-L-Lb}
\Lb:=-g^{ab}\pr_a v  \pr_b=\frac{1}{\Up} \pr_u,         \qquad  L:=-g^{ab}\pr_a u  \pr_b=\frac{1}{\Up} \pr_v,
\eea
They verify
\beaa
g(L, L)=g(\Lb,\Lb)=0, \quad  g(L, \Lb)=-2\Up^{-1},\qquad D_L L=D_\Lb \Lb=0.
\eeaa

\subsubsection{Standard coordinates $t$, $r$}
Recall the form of the metric \eqref{ssef} in double null coordinates. Setting $$t=u+v$$ we may rewrite the above metric in coordinates $(t, r, \th, \phi)$ in the usual form \eqref{RNintro}:
\bea\label{RNintro-1}
 \g_{M, Q}=-\Up(r) dt^2 +\Up(r)^{-1}dr^2 +r^2(d\theta^2+\sin^2\theta d\phi^2),
\eea
which covers the interior of $\MM$. Observe that  
\beaa
\Up(r):=1-\frac{2M}{r}+\frac{Q^2}{r^2}=\frac{(r-r_{-})(r-r_{+})}{r^2}
\eeaa
where $r_{-}$ and $r_+$ are defined in \eqref{definiion-rpm}.

The photon sphere of Reissner-Nordstrom corresponds to the hypersurface in which null geodesics are trapped. It is given by  
\bea\label{Up'-photonsphere}
\Up'(r)-\frac{2}{r} \Up(r)=\frac{1}{r^3}(r^2-3M r+2Q^2)=0.
\eea
 In particular, the photon sphere is realized at the hypersurface given by $\{ r=r_P \}$ where $r_P$ is given by 
\bea\label{def-rP}
r_P=\frac{3M+\sqrt{9M^2-8Q^2}}{2}
\eea

The null vectors $L$ and $\Lb$ defined in \eqref{definition-L-Lb}, in $(t, r)$ coordinates can be written as
\beaa
\Lb=\Up^{-1}\pr_t -\pr_r,         \qquad  L=\Up^{-1}\pr_t +\pr_r,
\eeaa

\subsubsection{ Ingoing  Eddington-Finkelstein coordinates $v$, $r$}  
We define another coordinate system that covers the interior of $\MM$. This coordinate system, $(v, r, \th^1, \th^2)$ is called \emph{ingoing  Eddington-Finkelstein coordinates} and makes use of the above defined functions $v$ and $r$.  The Reissner-Nordstr{\"o}m metric on the interior of $\MM$ in $\left(v,r, \th, \phi\right)$-coordinates is given by 
\bea
\label{Schw:EF-coordinates}
\g_{M,Q}=-\Up(r) dv^2 + 2 dv dr + r^2(d\theta^2+\sin^2\theta d\phi^2). 
\eea

\subsubsection{Outgoing Eddington-Finkelstein coordinates $u$, $r$}
We define another coordinate system that covers the interior of $\MM$. This coordinate system, $(u, r, \th^1, \th^2)$ is called \emph{outgoing  Eddington-Finkelstein coordinates} and makes use of the above defined functions $u$ and $r$.  The Reissner-Nordstr{\"o}m metric on the interior of $\MM$ in $\left(u,r, \th, \phi\right)$-coordinates is given by 
\bea
\label{Schw:EF-coordinates-out}
\g_{M,Q}=-\Up(r) du^2 - 2 du  dr + r^2(d\theta^2+\sin^2\theta d\phi^2). 
\eea

Observe that the Reissner-Nordstr{\"o}m metric is foliated by spheres $S$, as appears from any coordinate system written above. The spheres $S$ can be obtained as intersections of levels of coordinate functions.

\subsection{Null frames: Ricci coefficients and curvature components }\label{null-frames-ricci}

We define in this section three normalized null frames associated to Reissner-Nordstr{\"o}m. 
We  can use the null geodesic generators $L, \Lb$ defined in \eqref{definition-L-Lb}  to define the following          canonical null pairs.
\begin{enumerate}

\item The          null frame $(e_3^*, e_4^*) $        for which $e_3$ is geodesic  (which is regular towards the future along the event horizon) is given by
\bea
\label{eq:regular-nullpair}
e_3^*=\Lb,\,\, \qquad e_4^*=\Up L.
\eea
All  Ricci  coefficients  vanish  except,
\beaa
\ka=\frac{2\Up}{r}=\frac 2 r \left(1-\frac{2M}{r}+\frac{Q^2}{r^2} \right),\,\,\quad  \kab=-\frac{2}{r},\,\,\quad  \om=-\frac{M}{r^2} + \frac{Q^2}{r^3},\,\, \quad \omb=0.
\eeaa
\item The null frame $(e_3, e_4)$  for which $e_4$ is geodesic is given by     
\bea\label{outgoing-null-pair}
e_3=\Up \Lb,\,\, \qquad  e_4 =L
\eea
All  Ricci  coefficients  vanish except,
\beaa
\ka=\frac{2}{r},\qquad \kab=-\frac{2\Up}{r}=-\frac 2 r \left(1-\frac{2M}{r}+\frac{Q^2}{r^2} \right), \qquad \om=0, \qquad \omb =\frac{M}{r^2} -\frac{Q^2}{r^3}
\eeaa
\item The symmetric null frame $(e_3^s, e_4^s)$ is given by 
\bea\label{symmetric-null-pair}
e_4^s =\Omega  L , \qquad e_3^s=\Omega\Lb .
\eea
where $\Omega=\sqrt{\Up}$. 
In this case,
\beaa
\ka=-\kab= \frac{2\Omega}{r}, \qquad \om=-\omb =-\frac{M}{2\Omega r^2} +\frac{Q^2}{2\Omega r^3}  
\eeaa
\end{enumerate}

Observe that all null frames above defined, \eqref{eq:regular-nullpair}, \eqref{outgoing-null-pair} and \eqref{symmetric-null-pair}, verify conditions \eqref{relations-null-frame} of local null frames. 
They verify the relations:
\bea\label{relations-between-frames}
e_3=\Up e_3^*, \qquad e_4=\Up^{-1}e_4^*
\eea
We denote $\mathscr{N}=\{ e_3, e_4, e_A \}$ the null frame with $e_4$ geodesic, and $\mathscr{N}^*=\{ e_3^*, e_4^*, e_A \}$ the null frame which is regular towards the horizon.  In particular, $\mathscr{N}^*$ extends regularly to a non-vanishing null frame on $\mathcal{H}^+$.

The curvature and electromagnetic components which are non-vanishing do not depend on the particular null frame. They are given by 
\beaa
\rhoF=\frac{Q}{r^2}, \qquad \rho  =  -\frac{2M}{r^3}+\frac{2Q^2}{r^4} 
\eeaa
 We also have that
\begin{align} \label{GCdef}
K = \frac{1}{r^2}
\end{align}
for the Gauss curvature of the round $S^2$-spheres.

\subsubsection{Foliation $\Sigma_\tau$}\label{foliation}
For all values $t \in \mathbb{R}$, the hypersurfaces $\widetilde{\Sigma}_\tau=\{ t= \tau \}$  are spacelike. For polynomial decay following the method of \cite{rp} and \cite{mos}, we will require hypersurfaces $\Sigma_\tau$ which connect the event horizon and null infinity. We define such a foliation in the following way. 

 Recall the definitions of $r_{\mathcal{H}}$ and $r_P$ given by \eqref{def-rH} and \eqref{def-rP}.
We divide the exterior of Reissner-Nordstr{\"o}m spacetime  $\MM$  in   the following  regions:
             \begin{enumerate}
             \item The  red shift region   $\MM_{red}:=\{ r_{\mathcal{H}} \leq r \leq \frac{11}{10}r_{\mathcal{H}} \}$            
             \item The trapping region  $\MM_{trap} :=\{     \frac{5}{6} r_P\le r \le  \frac{7}{6}r_P\}$
            
             \item  The far-away region $\MM_{far}:=\{  r  \ge R_0 \}$ with  $R_0$  a fixed  number  $R_0\gg 2r_P$.             
             \end{enumerate}

   For fixed $R$ we denote by $\MM_{\le R}$ and  $\MM_{\ge R} $ the   regions  defined by $r\le R$ and $r\ge R$.               
             
                We foliate  $\MM$    by  hypersurfaces  $\Si_\tau $ which are:
             \begin{enumerate}
             \item  Incoming null    in $\MM_{red}$,        
              with $e_3^*$ as null incoming   generator (which is regular up to horizon). We denote  this portion 
              $\Si_{red}$. This is realized by a portion of $\{ v=\text{const}\}$ in the ingoing Eddington-Finkelstein coordinates.
             \item   Strictly spacelike  in $\MM_{trap} $. We denote  this  portion   by $\Si_{trap}$. This is realized by a portion of $\widetilde{\Sigma}_\tau$. 
                          \item     Outgoing null    in  $\MM_{far}$
             with $e_4$ as null  outgoing generator.   We denote this portion by $\Si_{far}$. This is realized by a portion of $\{ u= \text{const}\}$ in the outgoing Eddington-Finkelstein coordinates. 
                      \end{enumerate}

 We denote $\MM(\tau_1, \tau_2)\subset \MM$  the spacetime region   in  the past of $\Si(\tau_2)$ and in the future of $\Si(\tau_1)$. We also denote 
 \beaa
 \mathcal{H}^+(\tau_1, \tau_2)=\MM(\tau_1, \tau_2) \cap \mathcal{H}^+, \qquad  \mathcal{I}^+(\tau_1, \tau_2)=\MM(\tau_1, \tau_2) \cap \mathcal{I}^+
 \eeaa

\begin{figure}[h]
\centering
\def\svgwidth{0.35\textwidth} 
\includegraphics{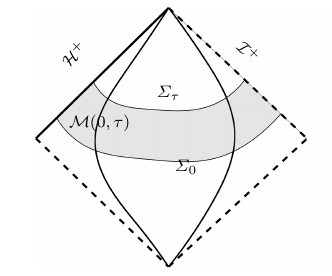}
\caption{Foliation $\Sigma_\tau$ in the Penrose diagram of Reissner-Nordstr{\"o}m spacetime}
\end{figure}

\subsection{Killing fields of the Reissner-Nordstr{\"o}m metric}

We discuss now the Killing fields associated to the metric $\g_{M, Q}$. Notice that the Reissner-Nordstr{\"o}m metric possesses the same symmetries as the ones possessed by Schwarzschild spacetime.

We define the vectorfield $T$ to be the timelike Killing vector field $\pr_t$ of the $(t, r)$ coordinates in \eqref{RNintro-1}, which in double null coordinates is given by 
\beaa
T=\frac 1 2 (\pr_u+\pr_v)
\eeaa
 The vector field extends to a smooth Killing field on the horizon $\mathcal{H}^+$, which is moreover null and tangential to the null generator of $\mathcal{H}^+$. 

In terms of the null frames defined in Section \ref{null-frames-ricci}, the Killing vector field $T$ can be written as 
\bea\label{definition-T}
T=\frac 12 (\Up e_3^*+e_4^*)=\frac 1 2 (e_3+\Up e_4)
\eea
Notice that at on the horizon, $T$ corresponds up to a factor with the null vector of $\mathscr{N}^*$ frame, $T=\frac 1 2 e_4^*$. 

We can also define a basis of angular momentum operator $\Omega_i$, $i=1,2,3$. Fixing standard spherical coordinates on $S^2$, we have 
\beaa
\Omega_1=\partial_\phi, \qquad \Omega_2=-\sin \phi \partial_\th-\cot \th \cos \phi \partial_\phi, \qquad \Omega_3=\cos \phi \partial_\th-\cot \th \sin \phi \partial_\phi
\eeaa
 The Lie algebra of Killing vector fields of $\g_{M,Q}$ is then generated by $T$ and $\Omega_i$, for $i=1,2,3$.

\subsection{Reissner-Nordstr{\"o}m background operators and commutation identities} \label{sec:commutation}
In this section, we specialize the operators discussed in Section \ref{tensor-algebra} to the Reissner-Nordstr{\"o}m metric.

Adapting the commutation formulae \eqref{commutators} to the Reissner-Nordstr{\"o}m metric, we obtain the following commutation formulae. For projected covariant derivatives for $\xi = \xi_{A_1...A_n}$ any $n$-covariant $S^2_{u,s}$-tensor in Reissner-Nordstr{\"o}m metric $\left(\mathcal{M},\g_{M,Q}\right)$ we have
\begin{align}\label{commutation-formulas}
\nabb_3 \slashed{\nabla}_B \xi_{A_1...A_n} - \slashed{\nabla}_B \slashed{\nabla}_3 \xi_{A_1...A_n} &= - \frac{1}{2} \kab  \slashed{\nabla}_B \xi_{A_1...A_n} \, , \nonumber \\
\slashed{\nabla}_4 \slashed{\nabla}_B \xi_{A_1...A_n} - \slashed{\nabla}_B \slashed{\nabla}_4 \xi_{A_1...A_n} &= - \frac{1}{2} \ka  \slashed{\nabla}_B \xi_{A_1...A_n} \, , \\
\slashed{\nabla}_3 \slashed{\nabla}_4 \xi_{A_1...A_n} - \slashed{\nabla}_4 \slashed{\nabla}_3 \xi_{A_1...A_n} &=-2\om \slashed{\nabla}_3 \xi_{A_1...A_n}+2\omb \slashed{\nabla}_4 \xi_{A_1...A_n} \, . \nonumber
\end{align}
In particular, we have
\begin{align}\label{commutator-rnabla}
\left[\slashed{\nabla}_4, r \slashed{\nabla}_A \right] \xi = 0 \ \ \ , \ \ \ \left[\slashed{\nabla}_3, r \slashed{\nabla}_A \right] \xi = 0  \, .
\end{align}
We summarize here the commutation formulae for the angular operators defined in Section \ref{tensor-algebra}. Let $\rho, \sigma$ be scalar functions, $\xi$ be a $1$-tensor and $\th$ be a symmetric traceless $2$-tensor in Reissner-Nordstr{\"o}m manifold. Then:
\bea
\left[ \nabb_4, \DDd_1\right]\xi &=&-\frac 1 2 \ka \DDd_1\xi, \qquad \left[ \nabb_3, \DDd_1\right]\xi =-\frac 1 2 \kab \DDd_1\xi\label{commutator-nabb-4-divv} \\
\left[ \nabb_4, \DDs_1\right](\rho,\sigma)&=&-\frac 1 2 \ka\DDs_1(\rho,\sigma), \qquad \left[ \nabb_3, \DDs_1\right](\rho,\sigma)=-\frac 1 2 \kab\DDs_1(\rho,\sigma)\label{commutator-nabb-4-DDs-1}, \\
\left[ \nabb_4, \DDd_2\right]\th &=&-\frac 1 2 \ka \DDd_2\th, \qquad \left[ \nabb_3, \DDd_2\right]\th =-\frac 1 2 \kab \DDd_2\th\label{commutator-nabb-4-DDd_2} \\
\left[ \nabb_4, \DDs_2\right]\xi&=&-\frac 1 2 \ka\DDs_2\xi, \qquad
\left[ \nabb_3, \DDs_2\right]\xi=-\frac 1 2 \kab\DDs_2\xi\label{commutator-nabb-4-DDs}
\eea

\section{The linearized Einstein-Maxwell equations}\label{linearized-equations}
We collect here the equations for linearized gravitational and electromagnetic perturbation of Reissner-Nordstr{\"o}m metric. 
Recall that in Reissner-Nordstr{\"o}m metric the following Ricci coefficients, curvature and electromagnetic components vanish:
\beaa
&\chih, \quad \chibh, \quad \eta, \quad \etab, \quad \ze, \quad \xi, \quad \xib \\
&\a, \quad \b, \quad \sigma, \quad \bb, \quad \aa \\
&\bF, \quad  \sigmaF, \quad \bbF
\eeaa
In particular, in writing the linearization of the equations of Section \ref{VEeqDNGsec}, we will neglet the quadratic terms, i.e. product of terms above which vanish in Reissner-Nordstr{\"o}m background. 

We remark that these equations are linear once one understands the quantities of order $\ep$ to be the unknowns, and the quantities of order 1 to be the known Reissner-Nordstr\"om values upon a gauge choice. 

\subsection{Linearised null structure equations}
The linearization of equations \eqref{nabb-3-chibh-general}-\eqref{Gauss-general} are the following:
\bea
\nabb_3 \chibh+\left( \kab +2\omb\right) \chibh&=&-2\DDs_2\xib -\aa, \label{nabb-3-chibh} \\
\nabb_4 \chih+\left(\ka+2\om\right) \ \chih&=&-2\DDs_2\xi -\a , \label{nabb-4-chih}\\
\nabb_3\chih+\left(\frac 12  \kab -2 \omb\right) \chih   &=&  -2 \slashed{\mathcal{D}}_2^\star \eta-\frac 1 2 \ka \chibh   \label{nabb-3-chih}\\
\nabb_4\chibh+\left(\frac 12  \ka-2\om\right) \,\chibh &=&  -2 \slashed{\mathcal{D}}_2^\star \etab -\frac 1 2 \kab \chih \label{nabb-4-chibh}, 
\eea
\bea
\nabb_3 \ze+ \left(\frac 1 2 \kab-2\omb \right)\ze&=&2 \DDs_1( \omb, 0)-\left(\frac 1 2 \kab+2\omb \right)\eta +\left(\frac 1 2 \ka +2\om \right)\xib - \bb  -\rhoF\bbF, \label{nabb-3-ze} \\
\nabb_4 \ze+ \left(\frac 1 2 \ka-2\omb \right)\ze&=&-2 \DDs_1( \omb, 0)+\left(\frac 1 2 \ka+2\om \right)\etab -\left(\frac 1 2 \kab +2\omb \right)\xi - \b  -\rhoF\bF, \label{nabb-4-ze} \\
\nabb_4\xib-\nabb_3 \etab&=&-\frac 1 2 \kab\left(\eta-\etab \right)+4\om  \xib -\bb-\rhoF \bbF , \label{nabb-4-xib}\\
\nabb_3\xi-\nabb_4 \eta&=&\frac 1 2 \ka\left(\eta-\etab \right)+4\omb  \xi +\b+\rhoF \bF , \label{nabb-3-xi}
\eea
\bea
\nabb_3 \kab +\frac 1 2 \kab^2+2\omb \ \kab&=&2\divv \xib, \label{nabb-3-kab}\\
\nabb_4 \ka +\frac 1 2 \ka^2+2\om \ \ka&=&2\divv \xi, \label{nabb-4-ka}\\
\nabb_3 \ka+\frac 1 2 \ka\kab-2\omb \ka&=& 2\divv \eta +2\rho, \label{nabb-3-ka}\\
\nabb_4 \kab+\frac 1 2 \ka\kab-2\om \kab&=& 2\divv \etab +2\rho, \label{nabb-4-kab}
\eea
\bea
\divv \chibh&=&-\frac 1 2 \kab \ze-\frac 1 2 \DDs_1( \kab, 0) +\bb-\rhoF\bbF, \label{Codazzi-chib} \\
\divv \chih&=&\frac 1 2 \ka \ze-\frac 1 2 \DDs_1(\ka, 0) -\b +\rhoF\bF \label{Codazzi-chi}
\eea
\bea
\nabb_4\omb+\nabb_3\om&=& 4\om\omb+\rho+\rhoF^2,\label{nabb-4-omb} \\
\curll \eta&=&\sigma, \label{curl-eta}\\
\curll \etab&=& -\sigma \label{curl-etab}
\eea
\bea
K&=&-\frac 1 4 \ka\kab-\rho+\rhoF^2 \label{Gauss}
\eea

\subsection{Linearised Maxwell equations}
The linearisation of equations \eqref{Maxwell-1}-\eqref{nabb-4-rhoF-general} are the following:
\bea
\nabb_3 \bF+\left(\frac 1 2 \kab-2\omb\right) \bF&=& -\DDs_1(\rhoF, \sigmaF) +2\rhoF \eta, \label{nabb-3-bF}\\
\nabb_4 \bbF+\left(\frac 1 2 \ka-2\om\right) \bbF&=&\DDs_1(\rhoF, -\sigmaF)-2\rhoF \etab \label{nabb-4-bbF}
\eea
\bea
\nabb_3\rhoF+ \kab \rhoF&=& -\divv\bbF \label{nabb-3-rhoF}\\
\nabb_4\rhoF+ \ka \rhoF&=&\divv\bF\label{nabb-4-rhoF}
\eea
\bea
\nabb_3\sigmaF+\kab \ \sigmaF&=& -\slashed{\curl}\bbF\label{nabb-3-sigmaF}\\
\nabb_4\sigmaF+\ka \ \sigmaF&=& \slashed{\curl}\bF\label{nabb-4-sigmaF}
\eea

\subsection{Linearised Bianchi identities}
The linearisation of the equations \eqref{Bianchi-non-linear1}-\eqref{nabb-4-sigma-general} are the following:
\bea
 \nabb_3\a+\left(\frac 1 2 \kab-4\omb \right)\a&=&-2 \DDs_2\, \b-3\rho\chih -2\rhoF \ \left(\DDs_2\bF +\rhoF\chih \right), \label{nabb-3-a}\\
 \nabb_4\aa+\left(\frac 1 2 \ka-4\om\right) \aa&=&2 \DDs_2\, \bb-3\rho\chibh +2 \rhoF \ \left(\DDs_2\bbF -\rhoF\chibh \right) , \label{nabb-4-aa}
 \eea
 \bea
 \nabb_3 \b+\left(\kab -2\omb\right) \b &=&\DDs_1(-\rho, \sigma) +3\rho  \eta +\rhoF\left(\DDs_1(-\rhoF, \sigmaF)  - \ka\bbF-\frac 1 2 \kab \bF    \right) , \label{nabb-3-b}\\
 \nabb_4 \bb+\left(\ka-2\om\right) \bb &=&\DDs_1(\rho, \sigma) -3 \rho \etab +\rhoF\left(\DDs_1(\rhoF, \sigmaF)  - \kab\bF-\frac 1 2 \ka \bbF    \right) , \label{nabb-4-bb}
 \eea
 \bea
 \nabb_3 \bb+ \left( 2\kab  +2\omb\right) \bb &=&-\divv\aa-3\rho \xib  +\rhoF\left(\nabb_3\bbF+2\omb \bbF+2\rhoF \ \xib\right), \label{nabb-3-bb}\\
 \nabb_4 \b+ \left(2\ka+2\om\right) \b &=&\divv\a +3\rho\xi +\rhoF\ \left(\nabb_4\bF+2\om \bF-2\rhoF \xi\right) \label{nabb-4-b}
 \eea
\bea
\nabb_3\rho+\frac 3 2 \kab \rho&=&-\kab\rhoF^2-\divv\bb-\rhoF \ \divv\bbF \label{nabb-3-rho}, \\
\nabb_4\rho+\frac 3 2 \ka \rho&=&-\ka\rhoF^2 +\divv\b+\rhoF \ \divv\bF, \label{nabb-4-rho}
\eea
\bea
\nabb_3 \sigma+\frac 3 2 \kab \sigma&=&-\slashed{\curl}\bb - \rhoF \ \slashed{\curl}\bbF, \label{nabb-3-sigma}\\
\nabb_4 \sigma+\frac 3 2 \ka \sigma&=&-\slashed{\curl}\b - \rhoF  \ \slashed{\curl}\bF \label{nabb-4-sigma}
\eea

\section{Gauge-invariant quantities}\label{section-gauge-invariant}
The Teukolsky equations we shall consider are wave equations for gauge-invariant quantities for linear gravitational and electromagnetic perturbations of Reissner-Nordstr{\"o}m spacetime. In the context of spin $\pm2$ Teukolsky-type equations, we will consider  in particular two quantities which are gauge-invariant: $\a$ and  $\ff$ for spin $+2$, and $\aa$ and $\underline{\ff}$ for spin $-2$. 

In order to identify the gauge-invariant quantities we consider null frame transformations, i.e. linear transformations which take null frames into null frames.

We recall here a generalisation of Lemma 2.3.1. of \cite{stabilitySchwarzschild}.
\begin{lemma}\label{null-frame-transformations}[Lemma 2.3.1. in \cite{stabilitySchwarzschild}] A linear null transformation can be written in the form 
\beaa
e_4'&=&e^a \left(e_4+f^A e_A \right), \\
e_3'&=& e^{-a} \left(e_3+\underline{f}^A e_A \right), \\
e_A'&=& {O_A}^B e_B+\frac 1 2 \underline{f}_A e_4+\frac 1 2 f_A e_3
\eeaa
where $a$ is a scalar function, $f$ and $\underline{f}$ are $S_{u,s}$-tensors and ${O_A}^B$ is an orthogonal transformation of $(S_{u,s}, \slashed{g})$, i.e. ${O_A}^B \slashed{g}_{BC}=\slashed{g}_{AC}$.
\end{lemma}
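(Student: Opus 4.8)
The plan is to exploit that, for the linear theory of this paper, a null frame transformation is a first-order deformation of the given frame, and to recover the parametrisation by decomposing this deformation in the original frame and imposing the defining relations \eqref{relations-null-frame} on the new frame. (Note that, read literally as an identity, the asserted form cannot be exact --- for instance $\g(e_4',e_4') = e^{2a}|f|^2 \neq 0$ for the stated right-hand side --- so the decomposition is to be understood modulo terms quadratic in the transformation parameters, which is precisely the level needed here.) Concretely, I would write $e_4' = e_4 + \dot e_4$, $e_3' = e_3 + \dot e_3$, $e_A' = e_A + \dot e_A$ with all variations first order, and expand $\dot e_4 = a\,e_4 + b\,e_3 + f^B e_B$, $\dot e_3 = \underline{a}\,e_3 + \underline{b}\,e_4 + \underline{f}^B e_B$, $\dot e_A = p_A e_4 + q_A e_3 + N_A{}^B e_B$ for scalars $a,b,\underline a,\underline b$, $S$-one-forms $f,\underline f, p, q$ and an $S$-endomorphism $N_A{}^B$.

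The core of the argument is then a short sequence of algebraic constraints obtained by substituting into \eqref{relations-null-frame} and keeping only first-order terms. First, $\g(e_4',e_4')=0$ and $\g(e_3',e_3')=0$ force $b = \underline b = 0$ (using $\g(e_4,e_3)=-2$), while $\g(e_3',e_4')=-2$ forces $\underline a = -a$; this already yields $e_4' = e^{a}(e_4 + f^B e_B)$ and $e_3' = e^{-a}(e_3 + \underline f^B e_B)$ to first order. Next, $\g(e_A',e_4')=0$ together with the form just found for $\dot e_4$ gives $q_A = \tfrac12 f_A$, and symmetrically $\g(e_A',e_3')=0$ gives $p_A = \tfrac12 \underline f_A$. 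Finally $\g(e_A',e_B')=\slashed g_{AB}$ forces the symmetric part of $N_A{}^B$ to vanish, so $N$ is $\slashed g$-antisymmetric and $O_A{}^B := \delta_A{}^B + N_A{}^B$ is, to first order, an orthogonal transformation of $(S,\slashed g)$; substituting back reproduces the stated form of $e_A'$. I would close by observing that the six free parameters --- the boost scalar $a$, the two $S$-one-forms $f$ and $\underline f$ generating null rotations, and the rotation $O$ --- match the dimension of the Lorentz group, so no admissible transformation has been missed; this also explains why passing to the Einstein--Maxwell setting is free here, since the frame-transformation algebra involves only $\g$ and not the Maxwell field $\F$.

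The one step that needs genuine care is the last, the identification of the tangential part as orthogonal: one must be explicit that the relation $\g(e_A',e_B')=\slashed g_{AB}$ in \eqref{relations-null-frame} refers to the induced metric on the (deformed) sphere being held fixed --- equivalently, that $e_A'$ is required to be adapted to it --- since without this normalisation $N_A{}^B$ is unconstrained and the would-be orthogonal factor acquires a spurious symmetric piece. A minor point to flag is non-uniqueness: reshuffling second-order terms (writing $1+a$ for $e^a$, or moving $f^B e_B$ in or out of the conformal factor) leaves the claim unchanged to the order kept, so the lemma asserts existence of such a representation rather than a canonical one. Beyond this the proof is essentially the bookkeeping above, the only real pitfalls being index placement and remembering that $\g(e_4,e_3) = -2$ rather than $-1$.
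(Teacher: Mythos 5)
The paper does not actually prove this lemma: it is recalled verbatim from Lemma 2.3.1 of \cite{stabilitySchwarzschild}, so there is no in-paper argument to compare against. Your linearized derivation is correct and is the standard argument underlying that cited lemma: expanding the variation of each frame vector in the original frame and imposing the relations \eqref{relations-null-frame} to first order does force $b=\underline{b}=0$, $\underline{a}=-a$, $q_A=\tfrac12 f_A$, $p_A=\tfrac12\underline{f}_A$, and the antisymmetry of $N_A{}^B$, which reproduces the stated parametrisation modulo quadratic terms. Your opening observation is also the right one to make: as an exact identity the stated form violates $\g(e_4',e_4')=0$, and indeed the full (nonlinear) version of the lemma in the cited reference carries explicit quadratic corrections (e.g.\ a term proportional to $|f|^2 e_3$ in $e_4'$) that are dropped here because only the linear transformation theory is needed for identifying gauge-invariant quantities; your remark that the representation is non-unique at second order is consistent with this.
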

Observe that the identity transformation is given by $a=f_A=\underline{f}_A=0$ and ${O_A}^B=\de_A^B$. Therefore, a linear perturbation of a null frame is one for which $a=f_A=\underline{f}_A=O(\ep)$ and ${O_A}^B=\de_A^B+O(\ep)$.

We recall here Proposition 2.3.4 of \cite{stabilitySchwarzschild}. We write the transformations for some of the Ricci coefficients and curvature components under a general null transformation of this type.

\begin{proposition}\label{prop:transformations1}[Proposition 2.3.4 of \cite{stabilitySchwarzschild}]
Under a general transformation of the type given in Lemma \ref{null-frame-transformations}, the linear transformations of Ricci coefficients and curvature components are the following: 
\bea
     \chih'&=&\chih -\DDs_2 f, \qquad  \chibh'=\chibh-\DDs_2 \fb  \label{chiprime}
   \eea
\bea
\a'&=&\a, \qquad \aa'= \aa  \label{aprime} \label{aaprime}\\
\b'&=&\b+\frac 3 2 \rho f , \qquad \bb'=\bb-\frac 3 2 \rho \fb \\
\rho'&=& \rho
\eea
 \bea
 \bF'&=&\bF +f \rhoF,  \qquad  \bbF'=\bbF - \fb \rhoF \label{bbfprime}\label{bfprime} \\
 \rhoF'&=&\rhoF.
 \eea
\end{proposition}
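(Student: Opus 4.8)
The plan is to prove Proposition \ref{prop:transformations1} by direct computation, substituting the null frame transformation from Lemma \ref{null-frame-transformations} into the defining relations \eqref{def1}, \eqref{def3}, \eqref{decomposition-F} for the relevant Ricci coefficients, curvature components, and electromagnetic components, and then linearizing. Since we only need the transformations to first order in $\ep$, we write $a = O(\ep)$, $f_A = O(\ep)$, $\fb_A = O(\ep)$, ${O_A}^B = \de_A^B + O(\ep)$ and systematically discard all quadratic terms. Crucially, because in the Reissner-Nordstr{\"o}m background the quantities $\chih$, $\chibh$, $\a$, $\b$, $\bb$, $\aa$, $\bF$, $\bbF$, $\sigma$, $\xi$, $\xib$, $\eta$, $\etab$, $\ze$ all vanish, any product of one of the perturbation parameters $(a, f, \fb)$ with one of these vanishing quantities is quadratic and hence negligible; the only background quantities that survive multiplication against $(a, f, \fb)$ are $\ka$, $\kab$, $\om$, $\omb$, $\rho$, $\rhoF$, $K$.

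Concretely, I would proceed component by component. For $\chih'$: from $\chi'_{AB} = \g(\D'_A e'_4, e'_B)$ with $e'_4 = e^a(e_4 + f^C e_C)$ and $e'_A = {O_A}^B e_B + \frac12 \fb_A e_4 + \frac12 f_A e_3$, expand using the background connection; the term $\D_A f^C e_C$ contributes $\nabb_A f_B$, whose symmetric traceless part is exactly $-(\DDs_2 f)_{AB}$ up to the conventions in Section \ref{tensor-algebra}; all other new terms involve $a$ or $f$ times a vanishing background Ricci coefficient (like $\chih$) or are trace terms absorbed into $\ka'$, hence do not appear in the traceless part $\chih'$. This yields \eqref{chiprime}, and symmetrically \eqref{chibprime}, sorry, \eqref{chibprime} is not defined — I mean the $\chibh'$ line. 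For the curvature components, I would use that the Weyl tensor $\W$ is a fixed $(0,4)$-tensor, so $\a'_{AB} = \W(e'_A, e'_4, e'_B, e'_4)$ is obtained purely by multilinearity: the leading term is $\W(e_A, e_4, e_B, e_4) = \a_{AB}$, and every correction term is $\W$ evaluated on a frame vector with at least one factor of $(a, f, \fb)$, which pairs against other frame legs to produce components like $\b$, $\rho$, $\aa$, etc., multiplied by a perturbation parameter — hence quadratic since $\b$, $\aa$ vanish in the background, except when the correction hits a leg in such a way as to produce $\rho$. For $\a'$: one checks that the surviving corrections are of the form $f \cdot \b$ or $\fb \cdot \b$ or $f \cdot \rho$ contracted against $e_4$ legs — but the $e_4 e_4$ structure forces any $\rho$ contribution to come with $\W_{A4B4}$-type terms which vanish, so $\a' = \a$. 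The computation for $\b'$ is where $\rho$ genuinely enters: $\b'_A = \frac12 \W(e'_A, e'_4, e'_3, e'_4)$, and replacing $e'_A \to \frac12 f_A e_3$ gives $\frac14 f_A \W(e_3, e_4, e_3, e_4) = f_A \rho$, while replacing $e'_3 \to \fb^C e_C$ gives a $\W_{A4C4}$ term that vanishes, and replacing $e'_4 \to f^C e_C$ gives $\W_{ACB4}$-type terms proportional to $\b$, negligible; collecting coefficients and using the precise normalization yields $\b' = \b + \frac32 \rho f$. The remaining lines $\rho' = \rho$, $\bb'$, $\aa'$ follow by the same bookkeeping, and the electromagnetic lines \eqref{bfprime}--\eqref{bbfprime} follow identically using that $\F$ is a fixed $2$-form with $\F(e_3, e_4) = 2\rhoF$ and $\F_{AB}$ vanishing in the background: $\bF'_A = \F(e'_A, e'_4)$ picks up $\frac12 \fb_A \F(e_3, e_4) = \fb_A \rhoF$ — wait, one must be careful: the $e'_4 = e^a(e_4 + f^C e_C)$ correction gives $\F(e_A, f^C e_C) = f^C \F_{AC}$ which vanishes, while $e'_A \supset \frac12 f_A e_3$ gives $\frac12 f_A \F(e_3, e_4) = f_A \rhoF$, so $\bF' = \bF + f\rhoF$; similarly $\bbF' = \bbF - \fb\rhoF$ with the sign coming from $\F(e_4, e_3) = -2\rhoF$ relative to the $e'_3$ expansion.

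The main obstacle is purely organizational rather than conceptual: keeping careful track of the numerous terms in the expansion of $\D'_{e'_A} e'_4$ (which involves differentiating the coefficient functions $e^a$, $f^C$, ${O_A}^B$ as well as the Christoffel symbols of the background) and correctly identifying which terms are quadratic in the perturbation — in particular recognizing that derivatives of the perturbation parameters, such as $\nabb_A f_B$, are \emph{linear} and must be retained, whereas $f_A \chibh_{BC}$ is quadratic and dropped. A secondary subtlety is matching sign and normalization conventions between the definitions in \eqref{def1}--\eqref{decomposition-F} and the angular operators $\DDs_2$, $\DDs_1$ in Section \ref{tensor-algebra} so that the coefficients ($-1$ in front of $\DDs_2 f$, $+\frac32$ in front of $\rho f$, $\pm 1$ in front of $f\rhoF$) come out exactly as stated. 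Since the statement is quoted verbatim from \cite{stabilitySchwarzschild}, a legitimate alternative is simply to invoke that reference for the gravitational transformations and to carry out only the short additional computation for the electromagnetic components \eqref{bfprime}--\eqref{bbfprime}, which are new to the Einstein-Maxwell setting; I would present the full self-contained computation but flag this shortcut.
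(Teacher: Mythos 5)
Your overall strategy is the right one, and in fact the only one available: the paper itself offers no proof of this proposition — it is quoted verbatim from Proposition 2.3.4 of \cite{stabilitySchwarzschild} — so a direct substitution of the frame transformation into the definitions \eqref{def1}, \eqref{def3}, \eqref{decomposition-F}, followed by linearization in $(a,f,\fb)$ and discarding of products with background-vanishing quantities, is exactly what is required. Your treatment of the new electromagnetic components is correct (including your self-correction: the $\frac 1 2 f_A e_3$ piece of $e_A'$ pairs with $\F(e_3,e_4)=2\rhoF$ to give $f_A\rhoF$, while $\F(e_4,e_4)=0$ kills the $\fb_A$ piece), and your identification of $-\DDs_2 f$ as the symmetric traceless part of $\nabb_A f_B$ matches the sign convention of Section \ref{tensor-algebra}.

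There is, however, one concrete slip in your bookkeeping for $\b'$. You assert that replacing $e_4'\to f^C e_C$ gives only ``$\W_{ACB4}$-type terms proportional to $\b$, negligible.'' That is true for the \emph{second} slot of $\frac 1 2 \W(e_A',e_4',e_3',e_4')$ (which produces $\W_{AC34}=2\sigma\ep_{AC}$, vanishing in the background), but not for the \emph{fourth} slot: there one gets $\frac 1 2 f^C \W_{A43C}=-\frac 1 2 f^C\W_{A4C3}$, and since $\W_{A3B4}=-\rho\,\de_{AB}+\sigma\ep_{AB}$ this equals $\frac 1 2 \rho f_A$ at linear order. Together with the $f_A\rho$ you correctly extract from $e_A'\supset \frac 1 2 f_A e_3$, this supplies the stated coefficient $\frac 3 2$. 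As written, your explicitly accounted terms sum only to $\b+\rho f$, and the phrase ``collecting coefficients and using the precise normalization'' papers over the missing contribution. The same remark applies symmetrically to $\bb'$, where the third-slot replacement $e_3'\to\fb^Ce_C$ yields $\frac 1 2 \fb^C\W_{A3C4}=-\frac 1 2\rho\fb_A$ in addition to the $-\rho\fb_A$ from $e_A'\supset\frac 1 2\fb_A e_4$. With that term restored the derivation is complete and agrees with the statement.
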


The extreme curvature components $\a$ and $\aa$ are gauge-invariant, as implied by Proposition \ref{prop:transformations1}, equation \eqref{aprime}. 

These curvature components are extremely important because in the case of the Einstein vacuum equation they verify a decoupled wave equation, the celebrated Teukolsky equation, first discovered in the Schwarzschild case in \cite{Bardeen} and generalized to the Kerr case in \cite{Teukolsky}.

In the case of Einstein-Maxwell equation, the presence of electromagnetic perturbations modifies the decoupling of the Teukolsky equations for $\a$ and $\aa$. Indeed, they do not verify a wave equation which is decoupled by all other curvature components, but instead a Teukolsky-type equation with a non-trivial right hand side. 

By Proposition \ref{prop:transformations1}, equation \eqref{bfprime}, the extreme electromagnetic components $\bF$ and $\bbF$ are not gauge-invariant if $\rhoF$ is not zero in the background. In the case of the Maxwell equations in Schwarzschild, the components $\bF$ and $\bbF$ are gauge-invariant, and satisfy a spin $\pm1$ Teukolsky equation. In \cite{Federico}, the author proves a boundedness and decay statement for solutions $\bF$ and $\bbF$ of the spin $\pm1$ Teukolsky equation. In the case of coupled gravitational and electromagnetic perturbations of Reissner-Nordstr{\"o}m spacetime, the spin $\pm1$ Teukolsky equations verified by $\bF$ and $\bbF$ (derived in Proposition \ref{Teukolsky-bF}) cannot be used, since they are not gauge-invariant.
It turns out that we will make use of different gauge-invariant quantities related to the electromagnetic components $\bF$ and $\bbF$. 

We define the following symmetric traceless $2$-tensors
\beaa
 \ff:=\DDs_2 \bF+\rhoF \chih, \qquad \underline{\ff}:=\DDs_2 \bbF-\rhoF \chibh
 \eeaa
 
 We summarize in the following lemma a fundamental property of the 2-tensors $\ff$ and $\underline{\ff}$.
 
 \begin{lemma}\label{ff-is-gaugeinv}
  The tensors $\ff$ and $\underline{\ff}$ are invariant to any linear null transformation. 
 \end{lemma}
 \begin{proof} Consider any linear null transformation described by $f$. Using \eqref{chiprime} and \eqref{bfprime}, we obtain
 \beaa
 \ff'&=&\DDs_2\bF'+\rhoF' \chih'=\DDs_2(\bF+f\rhoF )+\rhoF (\chih-\DDs_2 f)\\
 &=&\DDs_2\bF+\rhoF' \chih+\rhoF \DDs_2 f-\rhoF \DDs_2 f=\ff
 \eeaa
 and similarly for $\ffb$. 
 \end{proof}
 According to Lemma \ref{ff-is-gaugeinv}, we say that $\ff$ and $\underline{\ff}$ are gauge-invariant.  
 
 Notice that the quantities $\ff$ and $\ffb$ appear in the Bianchi identities for $\a$ and $\aa$. The equations \eqref{nabb-3-a} and \eqref{nabb-4-aa} can therefore be rewritten as 
 \bea
  \nabb_3\a+\left(\frac 1 2 \kab-4\omb \right)\a&=&-2 \DDs_2\, \b-3\rho\chih -2\rhoF \ \ff, \label{nabb-3-a-ff}\\
 \nabb_4\aa+\left(\frac 1 2 \ka-4\om\right) \aa&=&2 \DDs_2\, \bb-3\rho\chibh +2 \rhoF \ \ffb\label{nabb-4-aa-ff} 
 \eea
 By this rewriting of the Bianchi identities \eqref{nabb-3-a} and \eqref{nabb-4-aa} it is clear why $\ff$ and $\ffb$ shall appear on the right hand side of the wave equation verified by $\a$ and $\aa$. 

The quantities $\ff$ and $\ffb$ verify themselves Teukolsky-type equations, which are coupled with $\a$ and $\aa$ respectively. The equations for $\a$ and $\ff$ and for $\aa$ and $\ffb$ constitute the generalized spin $\pm 2$ Teukolsky system obtained in Section \ref{spin2}.

\section{Generalized spin $\pm2$-Teukolsky and Regge-Wheeler systems}\label{spin2}

In this section, we will introduce a generalization of the celebrated spin $\pm 2$ Teukolsky equations and the Regge-Wheeler equation, and explain the connection between them.

\subsection{Generalized spin $\pm2$ Teukolsky system}

The generalized spin $\pm2$ Teukolsky system concern symmetric traceless $2$-tensors in Reissner-Nordstr{\"o}m spacetime, which we denote $(\a, \ff)$ and $(\aa, \underline{\ff})$ respectively.

\begin{definition} Let $\a$ and $\ff$ be two smooth symmetric traceless $S^2_{u,s}$ 2-tensors defined
on a subset $\mathcal{D}\subset \mathcal{M}$. We say that $(\alpha, \ff)$ satisfy the {\bf generalized Teukolsky system of spin ${\bf +2}$} if they satisfy the following coupled system of PDEs:
\beaa
\Box_\g \a&=& -4\omb \nabb_4\a+2\left( \ka+2\om\right) \nabb_3\a+\left( \frac 1 2 \ka\kab-4\rho +4\rhoF^2+2\om\,\kab-10\omb\ka-8\om\omb -4\nabb_4\omb  \right)\,\a\\
&& +4\rhoF\left(\nabb_4\ff+ \left( \ka  +2\om \right) \ff\right), \\
\Box_\g (r\ff)&=& -2\omb \nabb_4(r\ff)+\left( \ka+2\om\right) \nabb_3(r\ff)+\left(-\frac 1 2 \ka\kab-3\rho+\om\kab-3\omb\ka-2\nabb_4\omb\right) r\ff \\
&&-r\rhoF \left(\nabb_3\a+\left(\kab-4\omb\right)\a\right)
\eeaa
where $\Box_\g=\Box_{\g_{M, Q}}=\g^{\mu\nu} \D_{\mu}\D_{\nu}$ denotes the wave operator in Reissner-Nordstr{\"o}m spacetime, and the coefficients (in terms of $\ka$, $\kab$, $\om$, $\omb$, $\rho$, $\rhoF$) are the corresponding background values in Reissner-Nordstr\"om\footnote{The generalized Teukolsky system as defined here is gauge-invariant. In particular, the coefficients of the equations can be evaluated in any gauge.}. 

Let $\aa$ and $\underline{\ff}$ be two smooth symmetric traceless $S^2_{u,s}$ 2-tensor defined
on a subset $\mathcal{D}\subset \mathcal{M}$. We say that $(\aa, \underline{\ff})$ satisfy the {\bf generalized Teukolsky system of spin ${\bf -2}$} if they satisfy
\beaa
\Box_\g \aa&=& -4\om \nabb_3\aa+2\left( \kab+2\omb\right) \nabb_4\aa+\left( \frac 1 2 \ka\kab-4\rho +4\rhoF^2+2\omb\,\ka-10\om\kab-8\om\omb -4\nabb_3\om  \right)\,\aa\\
&& -4\rhoF\left(\nabb_3\underline{\ff}+ \left( \kab  +2\omb \right) \underline{\ff}\right), \\
\Box_\g (r \underline{\ff})&=& -2\om \nabb_3(r\underline{\ff})+\left( \kab+2\omb\right) \nabb_4(r\underline{\ff})+\left(-\frac 1 2 \ka\kab-3\rho+\omb\ka-3\om\kab-2\nabb_3\om\right) r\underline{\ff} \\
&&+r\rhoF \left(\nabb_4\aa+\left(\ka-4\om\right)\aa\right)
\eeaa
\end{definition}

\begin{remark} When the electromagnetic tensor vanishes, i.e. if $\bF=\bbF=\rhoF=\sigmaF=0$, the generalized Teukolsky system of spin $\pm2$ reduces to the first equation, since $\ff=0$. Moreover, the first equation reduces to the Teukolsky equation of spin $\pm2$ in Schwarzschild.
\end{remark}

Given a linear gravitational and electromagnetic perturbations of Reissner-Nordstr{\"o}m spacetime, the quantities $(\a, \ff)$ and $(\aa, \underline{\ff})$ given by the gauge-invariant quantities defined in Section \ref{section-gauge-invariant} satisfy the generalized Teukolsky system of spin $\pm 2$ respectively. The equations are implied by the linearized Einstein-Maxwell equations in Section \ref{linearized-equations}, as showed in Proposition \ref{squarea} and Proposition \ref{squareff} in Appendix A.

\subsection{Generalized Regge-Wheeler system}

The other generalized system to be defined here is the generalized Regge-Wheeler system, to be satisfied again by symmetric traceless $S_{u,s}$ tensors $(\qf, \qf^\F)$.

\begin{definition} \label{def:rwe}
Let $\qf$ and $\qf^\F$ be two smooth symmetric traceless $S^2_{u,s}$ $2$-tensor on $\mathcal{D}$.
We say that $(\qf, \qf^\F)$ satisfy the {\bf generalized Regge--Wheeler system for spin $+2$} if they satisfy the following coupled system of PDEs:
\bea\label{finalsystem}
\begin{split}
\Box_\g\qf+\left( \ka\kab-10\rhoF^2\right)\ \qf&= \rhoF\Bigg(4r\lapp_2\qf^{\F}-4r\kab \nabb_4(\qf^{\F})-4r\ka\nabb_3( \qf^\F) + r\left(6\ka\kab+16\rho+8\rhoF^2\right)\qf^{\F}\Bigg)\\
&+\rhoF(l.o.t.)_1, \\
\Box_\g\qf^{\F}+\left( \ka\kab+3\rho\right)\ \qf^{\F}&=\rhoF \Bigg(-\frac {1}{ r} \qf\Bigg) +\rhoF^2 (l.o.t.)_2
\end{split}
\eea
where $(l.o.t.)_1$ and $(l.o.t.)_2$ are lower order terms with respect to $\qf$ and $\qf^\F$. Schematically\footnote{The exact form of the equations is obtained in Proposition \ref{wave-qfF} and Proposition \ref{wave-qf}.} $\nabb_3^{\leq 2}(l.o.t)=\qf$.

Let $\underline{\qf}$ and $\underline{\qf}^\F$ be two smooth symmetric traceless $S^2_{u,s}$ $2$-tensor on $\mathcal{D}$.
We say that $(\underline{\qf}, \underline{\qf}^\F)$ satisfy the {\bf generalized Regge--Wheeler system for spin $-2$} if they satisfy the following coupled system of PDEs:
\small
\bea\label{finalsystem-spin-2}
\begin{split}
\Box_\g \underline{\qf}+\left( \ka\kab-10\rhoF^2\right)\ \underline{\qf}&= -\rhoF\Bigg(4r\lapp_2\underline{\qf}^{\F}-4r\ka \nabb_3(\underline{\qf}^{\F})-4r\kab\nabb_4( \underline{\qf}^\F) + r\left(6\ka\kab+16\rho+8\rhoF^2\right)\underline{\qf}^{\F}\Bigg)\\
&-\rhoF(l.o.t.)_1, \\
\Box_\g \underline{\qf}^{\F}+\left( \ka\kab+3\rho\right)\ \qf^{\F}&=-\rhoF \Bigg(-\frac {1}{ r} \underline{\qf}\Bigg) +\rhoF^2 (l.o.t.)_2
\end{split}
\eea
\normalsize
\end{definition}

Observe that the coupled system of PDEs in \eqref{finalsystem-spin-2} is obtained from the system \eqref{finalsystem} by interchanging $\nabb_3$ with $\nabb_4$ and underlined quantities with non-underlined ones. 
Observe that the generalized Regge-Wheeler system of spin $\pm 2$ differ from each other only by the sign in front of the right hand side, since in interchanging the $e_3$ and $e_4$, the quantity $\rhoF=\frac 1 2 \F(e_3, e_4)$ changes sign.   The analysis of the two system will be completely analogous.

In Section \ref{transformation-theory}, we will show that given a solution $(\a, \ff)$ and $(\aa, \underline{\ff})$ of the spin $\pm 2$ Teukolsky equations, respectively, we can derive two solutions $(\qf,\qf^\F)$ and $(\underline{\qf},\underline{\qf}^\F)$, respectively, of the generalized Regge-Wheeler system. In view of the above considerations, it follows that we can associate to a linear gravitational and electromagnetic perturbations of Reissner-Nordstr{\"o}m, solutions to the Regge-Wheeler system. 

In the context of the proof of Main Theorem, we will do estimates directly at the level of the tensorial system \eqref{finalsystem}.

\subsection{Well-posedness}
For completeness, we state here a standard well-posedness theorem for the generalized Teukolsky systems.

\begin{proposition}[Well-posedness for the generalized Teukolsky system]\label{theorem-teuk} Let $(\a_0, \a_1) \in H^j_{loc}(\Sigma_0) \times H^{j-1}_{loc}(\Sigma_0)$ and $(\ff_0, \ff_1)\in H^j_{loc}(\Sigma_0) \times H^{j-1}_{loc}(\Sigma_0)$ be symmetric traceless $2$-tensor along $\Sigma_0$, with $j \geq 1$. Then there exists a unique pair $(\a, \ff)$ of symmetric traceless $2$-tensors on $\MM(0, \tau)$, for any $\tau > 0$, satisfying the generalized Teukolsky system of spin $+2$, with $(\a, \ff) \in H^j_{loc}(\Sigma_\tau) \times H^j_{loc}(\Sigma_\tau)$, $(n_{\Sigma_\tau}\a, n_{\Sigma_\tau}\ff) \in H^{j-1}_{loc}(\Sigma_\tau) \times H^{j-1}_{loc}(\Sigma_\tau)$, such that $(n_{\Sigma_0}\a, n_{\Sigma_0}\ff)|_{\Sigma_0} =(\a_1, \ff_1)$. \end{proposition}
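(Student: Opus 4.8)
The statement is a standard linear hyperbolic well-posedness result, so the plan is to recognize the generalized Teukolsky system as a linear second-order system with diagonal, normally hyperbolic principal part and to invoke the classical existence theory, supplemented by energy estimates adapted to the foliation $\Sigma_\tau$.

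\emph{Reduction to hyperbolic form.} Introduce the pair $\psi=(\psi_1,\psi_2):=(\a,\,r\ff)$, viewed as a section of the rank-two vector bundle $\mathfrak{s}\oplus\mathfrak{s}$ over $\MM$, where $\mathfrak{s}$ is the bundle of symmetric traceless $S$-tensors. The two equations of the Definition above then read, schematically,
\[
\Box_\g\psi_I=\sum_{J=1,2}\big(\mathcal{A}_I{}^J\big)^\mu\,\nabb_\mu\psi_J+\mathcal{B}_I{}^J\,\psi_J,\qquad I=1,2,
\]
with coefficients $\mathcal{A},\mathcal{B}$ that are smooth tensor fields on the exterior region, built from $\ka,\kab,\om,\omb,\rho,\rhoF$, their derivatives, and powers of $r$; in the regular frame $\mathscr N^*$ these are regular up to and including $\mathcal{H}^+$. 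The operator on the left is normally hyperbolic --- its principal symbol at a covector $\xi$ is $\g^{\mu\nu}\xi_\mu\xi_\nu$ times the identity on each fibre --- and all coupling and potential terms are of order $\le 1$. Thus the system falls under the standard theory of linear wave equations for sections of vector bundles over a globally hyperbolic spacetime.

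\emph{Existence and uniqueness.} The region $\MM(0,\tau)$, bounded by $\Sigma_0$, $\Sigma_\tau$ and the characteristic hypersurface $\mathcal{H}^+(0,\tau)$, is globally hyperbolic with Cauchy hypersurface $\Sigma_0$; the horizon is null, so no boundary condition is prescribed there, and $\mathcal{I}^+$ is a limit at infinity rather than a genuine boundary. For smooth data one therefore obtains, either by Leray's theory of hyperbolic systems, or by reduction to a first-order symmetric hyperbolic system, or by the H\"ormander parametrix construction, a unique smooth solution on $\MM(0,\tau)$, enjoying finite speed of propagation; uniqueness and the domain-of-dependence property also follow directly from the energy identity for the multiplier $T$ (modified by a red-shift vector field as in Section \ref{foliation} near $\mathcal{H}^+$).

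\emph{Energy estimates and regularity.} Contracting the $\psi$-system with $\nabb_T\psi$, together with the red-shift modification near the horizon, and integrating over $\MM(0,\tau')$ yields, on every $r$-compact subregion, an estimate of the form
\[
\|\psi\|_{H^1(\Sigma_{\tau'})}^2\ \les\ \|\psi\|_{H^1(\Sigma_0)}^2+\int_0^{\tau'}\|\psi\|_{H^1(\Sigma_s)}^2\,ds,
\]
where the first-order coupling contributions $\mathcal{A}\cdot\nabb\psi$ are absorbed by Cauchy--Schwarz against the spacetime bulk and the zeroth-order terms $\mathcal{B}\psi$ by Poincar\'e/Hardy on the $2$-spheres; Gr\"onwall then closes the $H^1$ bound. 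Commuting the system with the Killing fields $T$ and $\Omega_i$ (which preserve the principal part, producing only admissible lower-order error), and then using the equations themselves together with elliptic estimates on $\Sigma_\tau$ to recover the remaining $\nabb_3,\nabb_4$ derivatives, propagates this to the $H^j$ level. Finally, approximating data in $H^j_{loc}\times H^{j-1}_{loc}$ by smooth data, applying the uniform estimates to the difference of two approximating solutions, and passing to the limit, gives existence, uniqueness and the asserted regularity, with $\ff=(r\ff)/r$ having the same local regularity as $r\ff$ since $r\ge r_+>0$, and $(n_{\Sigma_0}\a,n_{\Sigma_0}\ff)=(\a_1,\ff_1)$ on $\Sigma_0$. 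The spin $-2$ system is handled identically upon interchanging $\nabb_3\leftrightarrow\nabb_4$ and underlined with non-underlined quantities.

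\emph{Main difficulty.} There is no genuine analytic obstacle, this being a standard result; the points requiring care are bookkeeping ones: organizing the argument so that the unknowns are treated as sections of the bundle of symmetric traceless $S$-tensors rather than componentwise, and checking that the coefficients $\mathcal{A},\mathcal{B}$ and the red-shift multiplier extend regularly across $\mathcal{H}^+$ in the frame $\mathscr N^*$. The non-compactness of $\Sigma_\tau$ towards $\mathcal{I}^+$ is immaterial, since only $H^j_{loc}$ control is claimed.
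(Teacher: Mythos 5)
Your proposal is correct. The paper itself offers no proof of this proposition --- it is stated ``for completeness'' as a standard well-posedness result for a linear system whose principal part is diagonal and normally hyperbolic --- and your write-up (diagonal $\Box_\g$ principal symbol on the bundle of symmetric traceless $S$-tensors, first-order coupling absorbed by energy estimates with the $T$ multiplier plus red-shift near $\mathcal{H}^+$, Gr\"onwall, commutation with $T$ and $\Omega_i$ for higher regularity, and a density argument for rough data) is exactly the standard argument the paper implicitly invokes.
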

A similar result holds for the generalized Teukolsky system for spin $-2$.

\section{Chandrasekhar transformation into Regge-Wheeler}\label{transformation-theory}

We now describe a transformation theory relating solutions of the generalized
Teukolsky system to solutions of the generalized Regge-Wheeler system. We emphasize that a physical space version of the Chandrasekhar transformation was first introduced in \cite{DHR}, for the Schwarzschild spacetime.

We introduce the following operators for a $n$-rank $S$-tensor $\Psi$:
\bea \label{operators}
\underline{P}(\Psi)&=&\frac{1}{\kab} \nabb_3(r \Psi), \qquad  P(\Psi)=\frac{1}{\ka} \nabb_4(r \Psi) 
\eea 
Observe that the operators $P$ and $\underline{P}$ above preserve the signature\footnote{As in \cite{Ch-Kl}, the signature of a tensor is given by the number of contraction with $e_4$ minus the number of contraction with $e_3$ in its definition. In the definition of $\underline{P}$, the derivative $\nabb_3$ lowers the signature or $\Psi$ of 1, and the division by $\kab$ (which has itself signature -1) raises the signature of 1.} of the tensor $\Psi$ as well as its rank. These operators can be thought of as rescaled derivatives: $\underline{P}$ is a rescaled version of $\nabb_3$, while $P$ is a rescaled version of $\nabb_4$. 

Recall that in a spherically symmetric spacetime $\nabb_3r=\frac 1 2 \kab r$ and $\nabb_4r=\frac 1 2 \ka r$, therefore
\bea
\underline{P}(\Psi)&=& \frac 1 2  r \Psi+\frac{1}{\kab} r\nabb_3( \Psi), \qquad P(\Psi)= \frac 1 2  r \Psi+\frac{1}{\ka} r\nabb_4( \Psi) \label{explicit-operators}
\eea

Given a solution  $(\alpha, \ff)$ of the generalized Teukolsky system of spin $+2$, we can define the following \emph{derived} quantities for $(\a, \ff)$:
\bea\label{quantities}
\begin{split}
\psi_0 &= r^2 \kab^2 \a, \\
\psi_1&=\underline{P}(\psi_0), \\
\psi_2&=\underline{P}(\psi_1)=\underline{P}(\underline{P}(\psi_0))=:\qf, \\
\psi_3&= r^2 \kab \ \ff, \\
\psi_4&= \underline{P}(\psi_3)=:\qf^\F
\end{split}
\eea
Similarly, given a solution  $(\aa, \underline{\ff})$ of the generalized Teukolsky system of spin $-2$, we can define the following \emph{derived} quantities for $(\aa, \underline{\ff})$:
\bea\label{quantities-2}
\begin{split}
\underline{\psi}_0 &= r^2 \ka^2 \aa, \\
\underline{\psi}_1&=P(\underline{\psi}_0), \\
\underline{\psi}_2&=P(\underline{\psi}_1)=P(P(\underline{\psi}_0))=:\underline{\qf}, \\
\underline{\psi}_3&= r^2 \ka \ \underline{\ff}, \\
\underline{\psi}_4&= P(\underline{\psi}_3)=:\underline{\qf}^\F
\end{split}
\eea
These quantities are again symmetric traceless $S$ $2$-tensors. 

\begin{remark} Observe that, even if $\ff$ is a symmetric traceless $2$-tensor, we apply the Chandrasekhar transformation only once to obtain the quantity $\qf^\F$ which verifies a Regge-Wheeler-type equation (as opposed to $\a$, for which the Chandrasekhar transformation is applied twice). This is because $\ff$ by definition is constructed from the one-form $\bF$, which verifies a spin $+1$ Teukolsky-type equation. 
\end{remark}

The following  proposition
is proven in Appendix B. 
\begin{proposition} \label{prop:rwt1}
Let $(\alpha, \ff)$ be a solution of the generalized Teukolsky system of spin $+2$ on $D^+(\Sigma_0)$. Then the symmetric traceless tensors $(\qf, \qf^\F)$ as defined through \eqref{quantities} satisfy the generalized Regge-Wheeler system of spin $+2$ on $D^+(\Sigma_0)$. 

Similarly, let $(\aa, \underline{\ff})$ be a solution of the generalized Teukolsky system of spin $-2$ on $D^+(\Sigma_0)$. Then the symmetric traceless tensors $(\underline{\qf}, \underline{\qf}^\F)$ as defined through \eqref{quantities-2} satisfy the generalized Regge-Wheeler system of spin $-2$ on $D^+(\Sigma_0)$. 
\end{proposition}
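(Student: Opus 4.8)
The plan is to establish the spin $+2$ case; the spin $-2$ case follows by the symmetry $\nabb_3 \leftrightarrow \nabb_4$, underlined $\leftrightarrow$ non-underlined already noted after the definition of the generalized Teukolsky system. The strategy is a direct, if lengthy, computation: starting from the two Teukolsky-type PDEs for $(\a, \ff)$, apply the rescaled derivative operators $\underline{P}$ defined in \eqref{operators}–\eqref{explicit-operators} the prescribed number of times and show that the images $\qf = \underline{P}(\underline{P}(r^2\kab^2\a))$ and $\qf^\F = \underline{P}(r^2\kab\ \ff)$ satisfy \eqref{finalsystem}. The essential tool is a commutation identity between $\square_\g$ (equivalently, the operator $\nabb_3\nabb_4$ plus angular and lower-order pieces) and $\underline{P}$: one must compute $[\square_\g, \underline{P}]$ acting on $S$-tensors in Reissner-Nordstr\"om, using the commutation formulas \eqref{commutation-formulas}, \eqref{commutator-rnabla} and the background relations $\nabb_3 r = \tfrac12 \kab r$, $\nabb_4 r = \tfrac12 \ka r$, together with the transport equations \eqref{nabb-3-kab}, \eqref{nabb-4-ka}, \eqref{nabb-3-ka}, \eqref{nabb-4-kab} for $\ka,\kab$ and the equation \eqref{nabb-4-omb} for $\om,\omb$, plus the explicit Reissner-Nordstr\"om values of $\ka,\kab,\om,\omb,\rho,\rhoF,K$ from Section \ref{null-frames-ricci}.

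Concretely I would proceed in the following order. First, rewrite both Teukolsky equations in the "$\nabb_3\nabb_4$ form", i.e. isolate $\square_\g \Psi = -\nabb_3\nabb_4\Psi + \nabb_4\nabb_3\Psi$-type expressions against the angular Laplacian $\lapp_2$ and the first-order terms, so that the structure matches what $\underline{P}$ will act on. Second, derive the key lemma: for a symmetric traceless $2$-tensor $\Psi$,
\[
\square_\g\big(\underline{P}(\Psi)\big) = \underline{P}\big(\square_\g \Psi\big) + (\text{explicit first-order operator in } \nabb_3,\nabb_4,\lapp_2 \text{ applied to }\Psi) + (\text{potential})\cdot \underline{P}(\Psi) + \dots,
\]
tracking every coefficient via the background transport equations. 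Third, apply this lemma twice to the $\a$-equation (after the initial rescaling $\a \mapsto \psi_0 = r^2\kab^2\a$, which is chosen precisely so that the first-order $\nabb_4$ term is eliminated at each stage — this is the Chandrasekhar mechanism) and once to the $\ff$-equation (after $\ff \mapsto \psi_3 = r^2\kab\ \ff$). Fourth, bookkeep the right-hand sides: the coupling term $4\rhoF(\nabb_4\ff + (\ka+2\om)\ff)$ in the $\a$-equation, after two applications of $\underline{P}$, must reorganize into the bracket $4r\lapp_2\qf^\F - 4r\kab\nabb_4\qf^\F - 4r\ka\nabb_3\qf^\F + r(6\ka\kab+16\rho+8\rhoF^2)\qf^\F$ modulo $\rhoF(\text{l.o.t.})_1$; and the coupling term $-r\rhoF(\nabb_3\a + (\kab-4\omb)\a)$ in the $\ff$-equation, after one application of $\underline{P}$, must collapse to $-\tfrac1r \rhoF\ \qf$ modulo $\rhoF^2(\text{l.o.t.})_2$. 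Identifying these requires repeatedly substituting the relations \eqref{quantities} expressing $\psi_1,\psi_2,\psi_3,\psi_4$ back in terms of lower $\psi_i$'s and re-expressing $\nabb_3$ of a lower quantity in terms of the next quantity up.

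The main obstacle is the coupling-term bookkeeping in the $\a$-equation: because $\qf$ is two derivatives above $\a$ while $\qf^\F$ is only one derivative above $\ff$, matching $\underline{P}^2$ applied to $\rhoF\nabb_4\ff$ against angular-Laplacian and $\nabb_3,\nabb_4$ derivatives of $\qf^\F$ demands using the $\ff$-Teukolsky equation itself (to trade a $\nabb_3\nabb_4\ff$ for $\lapp_2\ff$ and lower-order terms) and carefully commuting $\nabb_4$ past $\underline{P} = \tfrac12 r + \tfrac1\kab r\nabb_3$, generating curvature terms via the third identity in \eqref{commutation-formulas}. Keeping all potential coefficients exact — so that the final potentials are precisely $\ka\kab - 10\rhoF^2$ and $\ka\kab + 3\rho$ — is where errors are most likely, and is the reason the computation is relegated to Appendix \ref{derivation-Regge}; here I would organize it by first doing the $Q=0$ check (recovering the Schwarzschild Regge-Wheeler equation of \cite{DHR}, which fixes the $\a$-sector potential and the $\underline{P}^2$ mechanism independently) and only then switching on the $\rhoF$-dependent terms, treating the electromagnetic coupling as a perturbation whose structure is dictated by \eqref{nabb-3-a-ff} and the $\ff$-Teukolsky equation.
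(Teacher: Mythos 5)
Your proposal follows essentially the same route as the paper's Appendix B: rescale $\a\mapsto\psi_0=r^2\kab^2\a$ and $\ff\mapsto\psi_3=r^2\kab\,\ff$, establish a structural lemma for $\square_\g\circ\underline{P}$ (the paper's Lemma \ref{squareP} combined with Lemma \ref{generalwavePhi}), apply it twice to the $\a$-sector and once to the $\ff$-sector, and use the wave equation of the electromagnetic quantity to convert the $\nabb_3\nabb_4$ term in the coupling into $\lapp_2\qf^\F$ plus lower order (the paper does this substitution at the level of the $\qf^\F$-equation rather than the $\ff$-equation, which is an immaterial reordering). The plan is correct and matches the paper's proof.
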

\begin{proof} In Lemma \ref{squareP}, we compute the wave equation verified by a derived quantity of the form $\underline{P}(\Psi)$. We use this lemma to derive the wave equation for $\qf^\F$ in Proposition \ref{wave-qfF}, from the Teukolsky equation for $\ff$. The wave equation for $\qf$ in Proposition \ref{wave-qf} is obtained from the Teukolsky equation for $\a$. See Appendix B.
\end{proof}

The fact that the derived quantities $(\qf, \qf^\F)$ satisfy the generalized Regge-Wheeler system, together with the transport relations \eqref{quantities}, will be the key to estimating the generalized Teukolsky equations.

\subsection{Relation with the higher order quantities defined in Schwarzschild}\label{relation-DHR}
In the linear and non-linear stability of Schwarzschild, similar derived quantities are defined.

In the work of linear stability of Schwarzschild \cite{DHR}, the transformation theory is defined in the following way (see Section 7.3 in \cite{DHR}):
\beaa
\psi:&=&-\frac{1}{2r\Omega^2}\nabb_3\left(r\Omega^2 \a\right), \qquad P:=\frac{1}{r^3\Omega} \nabb_3\left( \psi r^3 \Omega\right)
\eeaa
and the quantity that verifies the Regge-Wheeler equation is $\Psi=r^5P$, given therefore by
\bea\label{PsiDHR}
\Psi=\frac{r^2}{\Omega} \nabb_3\left( -\frac{r^2}{2\Omega}\nabb_3\left(r\Omega^2 \a\right)  \right)
\eea
Recall that in double null coordinates used in \cite{DHR}, $\kab=\tr\chib=-\frac 2 r \Omega$. Substituting $\Omega=-\frac r 2 \kab$ in \eqref{PsiDHR}, we obtain
\beaa
\Psi=-2\frac{r}{\kab} \nabb_3\left( \frac{r}{\kab}\nabb_3\left(\frac{1}{4}r^3\kab^2 \a\right)  \right)=-\frac 1 2 r \qf
\eeaa
which relate the $\Psi$ in \cite{DHR} to our $\qf$.

In the work of non-linear stability of Schwarzschild in \cite{stabilitySchwarzschild}, the derived quantities are defined in the following way at the linear level (see Appendix in \cite{stabilitySchwarzschild})\footnote{In the derivation of the non-linear terms, a different definition of the derived quantities is used, which is fundamental in the treatment of the non-linear terms. The two definitions coincide at the linear level.}:
\beaa
\Phi_0&=&r^2\kab^2 \a, \\
\Phi_1&=&\underline{P}(\Phi_0), \\
\Phi_2&=&\underline{P}(\Phi_1)
\eeaa 
In particular, in the case of vanishing electromagnetic tensor, i.e. $\ff=\ffb=0$, the quantities defined in \eqref{quantities} coincide with the above. The Regge-Wheleer system reduces to the equation:
\beaa
\Big(\square_{\g_M}+ \ka\kab\Big)\Phi_2&=& O(\ep^2)
\eeaa
which is the main equation used in \cite{stabilitySchwarzschild} to derive decay estimates for $\qf$, and subsequently for $\a$ and all other curvature and connection coefficients quantities.

\subsection{Relation with the linear stability of Reissner-Nordstr{\"o}m spacetime}
We will now finally relate the equations presented above to the full system of linearized
gravitational and electromagnetic perturbations of Reissner-Nordstr{\"o}m spacetime in the context of linear stability of Reissner-Nordstr{\"o}m.

Consider a solution to the linearized Einstein-Maxwell equations  around Reissner-Nordstr{\"o}m spacetime, as presented in Section \ref{linearized-equations}.
Then, the quantities $\a_{AB}$, $\aa_{AB}$ and the quantities $\ff_{AB}= \DDs_2 \bF_{AB}+\rhoF \chih_{AB}$ and $\underline{\ff}_{AB}=\DDs_2 \bbF_{AB} -\rhoF \chibh_{AB}$ verify the generalized Teukolsky system of spin $\pm2$. 
We obtain the following theorem.

\begin{theorem} \label{prop:relfull}
Let $\a$, $\aa$, $\ff$, $\underline{\ff}$ be the curvature components of a solution to the linearized Einstein-Maxwell equations around Reissner-Nordstr{\"o}m spacetime as in Section \ref{linearized-equations}. Then $(\a, \ff)$ satisfy the generalized Teukolsky system of spin $+2$, and $(\aa, \underline{\ff})$ satisfy the generalized Teukolsky system of spin $-2$. 
\end{theorem}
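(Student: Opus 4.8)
The plan is to deduce the statement directly from the linearized Einstein--Maxwell equations collected in Section~\ref{linearized-equations}. By the definition of the generalized Teukolsky system, it suffices to show that the gauge-invariant pairs $(\a,\ff)$ and $(\aa,\ffb)$ — with $\ff=\DDs_2\bF+\rhoF\chih$ and $\ffb=\DDs_2\bbF-\rhoF\chibh$ — solve the two coupled wave equations written out in that definition. I would obtain these wave equations by commuting null derivatives through the relevant transport and Bianchi equations, exactly as in the Schwarzschild derivation of \cite{DHR}; the computation is long and I would relegate it to Appendix~\ref{computations-appendix} (Propositions~\ref{squarea} and~\ref{squareff}), recording here only its skeleton. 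Throughout I would use that, since the Reissner--Nordstr\"om background is spherically symmetric, the wave operator acting on a symmetric traceless $S$-tensor $\psi$ has the schematic form $\Box_\g\psi=-\nabb_3\nabb_4\psi+\lapp_2\psi$ plus terms first order in $\nabb_3,\nabb_4$ with coefficients built from $\ka,\kab,\om,\omb$, and that the relevant background quantities are the explicit functions of $r$ recorded in Section~\ref{null-frames-ricci}.

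First, for $\a$: I would start from the linearized Bianchi identity \eqref{nabb-3-a-ff} for $\nabb_3\a$, apply $\nabb_4$, and commute using the formulas of Section~\ref{sec:commutation}, in particular \eqref{commutator-nabb-4-DDs}. The result involves $\nabb_4\DDs_2\b$, $\nabb_4\chih$ and $\nabb_4\ff$; I would then substitute the Bianchi identity \eqref{nabb-4-b} for $\nabb_4\b$, the structure equation \eqref{nabb-4-chih} for $\nabb_4\chih$, and the Codazzi relation \eqref{Codazzi-chi} combined with the elliptic identity $\DDs_2\DDd_2=-\tfrac{1}{2}\lapp_2+K$ from \eqref{angular-operators} to turn $\DDs_2\divv\chih$ into $\lapp_2\chih$ plus lower order. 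Collecting terms, the angular piece assembles $\Box_\g\a$, the genuinely gauge-dependent quantities (occurrences of $\b$, $\chih$, $\xi$, $\eta$ not packaged inside the invariant combination $\ff$) cancel identically — this decoupling is forced a priori because $\a$ is gauge-invariant by \eqref{aprime} and the system is well-posed by Proposition~\ref{theorem-teuk}, but it must be exhibited explicitly — and the surviving $\rhoF$-proportional terms collect into $4\rhoF\bigl(\nabb_4\ff+(\ka+2\om)\ff\bigr)$. The equation for $\aa$ follows by the substitution $\nabb_3\leftrightarrow\nabb_4$ together with underline $\leftrightarrow$ no-underline, starting from \eqref{nabb-4-aa-ff}.

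Next, for $r\ff$: I would differentiate $\ff=\DDs_2\bF+\rhoF\chih$ once in $\nabb_3$ and once in $\nabb_4$, using the Maxwell equations \eqref{nabb-3-bF}--\eqref{nabb-4-bbF}, the Bianchi-type equation \eqref{nabb-4-b} (which controls $\nabb_4\bF$ in terms of $\b$ and $\a$), and the structure equations \eqref{nabb-3-chih}, \eqref{nabb-4-chih} for $\chih$; the gauge-dependent pieces produced separately by $\bF$ and by $\chih$ cancel in the combination $\ff$, in keeping with the invariance computation of Section~\ref{section-gauge-invariant}, and one is left with a closed equation whose coupling term is $-r\rhoF\bigl(\nabb_3\a+(\kab-4\omb)\a\bigr)$; the spin $-2$ analogue for $r\ffb$ is obtained by the same symmetry. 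I expect the main obstacle to be purely one of bookkeeping rather than of ideas: one must track every commutator term generated at each differentiation, reduce all angular operators to $\lapp_2$ in a uniform way, and insert the explicit background values $\ka=2/r$, $\kab=-2\Up/r$, $\om=0$, $\omb=M/r^2-Q^2/r^3$, $\rho=-2M/r^3+2Q^2/r^4$, $\rhoF=Q/r^2$, $K=1/r^2$ to see the numerous intermediate cancellations that produce the stated coefficients. This lengthy but routine verification is what Appendix~\ref{computations-appendix} carries out.
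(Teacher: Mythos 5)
Your proposal follows the paper's own route: the theorem is proved by carrying out exactly the computation you sketch (apply $\nabb_4$ to the Bianchi identity \eqref{nabb-3-a-ff} for $\a$, and to the first-order transport equation for $\nabb_3\ff$ obtained from the Maxwell and structure equations, then commute, substitute back, and watch the gauge-dependent terms cancel), which is the content of Propositions \ref{squarea} and \ref{squareff} in Appendix \ref{computations-appendix}; the coupling terms you predict, $4\rhoF\left(\nabb_4\ff+(\ka+2\om)\ff\right)$ and $-r\rhoF\left(\nabb_3\a+(\kab-4\omb)\a\right)$, match theirs exactly. Minor cosmetic quibbles that do not affect the argument, since you defer to the explicit computation anyway: in the $\a$ equation the elliptic identity is applied to $\DDs_2\divv\a$ coming from the Bianchi identity for $\nabb_4\b$ (Codazzi and $\DDs_2\divv\chih$ enter rather in the $\bF$/$\ff$ chain), equation \eqref{nabb-4-b} does not directly control $\nabb_4\bF$, and invoking well-posedness of the Teukolsky system as an a priori reason for the cancellation is circular.
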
 
\begin{proof} See Proposition \ref{squarea} and Proposition \ref{squareff} in Appendix A for the derivation of the generalized Teukolsky system for $(\a, \ff)$. 
\end{proof}

Using Proposition \ref{prop:rwt1}, we can therefore associate to any solution to the  linearized Einstein-Maxwell equations around Reissner-Nordstr{\"o}m spacetime, two symmetric traceless $2$-tensors which verify the generalized Regge-Wheeler system of spin $\pm 2$. The result in this paper will therefore imply boundedness and decay results for $\qf$, $\qf^\F$, $\a$, $\ff$, and $\underline{\qf}$, $\underline{\qf}^\F$, $\aa$, $\ffb$.

\section{Energy quantities and statements of the main theorem}\label{statement}
We first give the definitions of weighted energy quantities in Section \ref{definition-norms}, and then we provide the precise statement of the Main Theorem in Section \ref{section-main-theorem}.

\subsection{Definition of weighted energies}\label{definition-norms}
We define in this section a number of weighted energies. Recall the notation described in Section \ref{foliation}. 
Recall the vectorfield $T$ defined by \eqref{definition-T} in terms of the null frames $(e_3, e_4)$ and $(e_3^*, e_4^*)$. We define in addition the following vectorfield:
\beaa
R=\frac 1 2(-\Up e_3^*+e_4^*)=\frac 1 2 (-e_3+\Up e_4)
\eeaa
 Note that in coordinates $T=\partial_t$ and $R=\Up \partial_r$, and
 \bea
g(T,T)=-g(R, R) = -\Up, \qquad g(T, R) = 0, \qquad T(r)=0, \qquad R(r)=\Up \label{R(r)}
 \eea
Notice that $T$ and $R$ are both parallel to $e_4^*$ at the horizon. To control all the derivatives at the horizon, we define the following modified $T$ and $R$.        Let $\th$  a smooth bump  function   equal $1$ on  $\MM_{red}$,  vanishing for  $r\ge \frac{12}{10}r_{\mathcal{H}}$ and define the vectorfields 
      \bea
       \label{eq:Rc-Tc}
       \begin{split}
  \Rbrev&:=  \th  \frac 1 2 ( e^*_4-e^*_3) +(1-\th)        \Up^{-1} R\\
  \Tbrev&:=\th  \frac 1 2 ( e^*_4+e^*_3) +(1-\th)        \Up^{-1} T
  \end{split}
      \eea 
      We also introduce the following notation:
      \beaa
     \check{\nabb_4}(\Psi):=\nabb_4(\Psi)+\frac 1 r \Psi 
      \eeaa
which will be used in Section \ref{section-rp-estimates} in the context of the $r^p$-estimates. 
We also denote $\nabb_4^*=\nabb_{e_4^*}$,  $\nabb_3^*=\nabb_{e_3^*}$.

\subsubsection{Weighted energies for $\qf$, $\qf^\F$, $\underline{\qf}$, $\underline{\qf}^\F$}

The energies in this section will in general be applied to $\Psi=\qf$, $\qf^\F$, $\underline{\qf}$ or $\underline{\qf}^\F$. Let $p$ be a free parameter, which will eventually take the values $\de \leq p \leq 2-\de$.

    We introduce the following  weighted energies for $\Psi$. To simplify the notations, we suppress the volume form, and write $\int_{\MM} d\operatorname{vol}_{\MM}=\int_r \int_t \int_0^{\pi} \int_{0}^{2\pi} r^2 \sin\th dt dr d\th d\phi $ for the spacetime integrals and  $\int_{\Sigma} d \operatorname{vol}_{\Sigma}=\int r^2 \sin\th dr d\th d\phi $ for integrals along spacelike hypersurfaces. 
    \begin{enumerate}
    \item Energy quantities on $\Si_\tau$:
    \begin{itemize}
    \item Basic energy quantity\footnote{The definition of the energy is divided in three regions because the foliation $\Si_\tau$ is defined to be null near $\mathcal{H}^+$ and null near $\mathcal{I}^+$, and therefore along null cones not all derivatives can be controlled. }
\bea
\label{def:basic-energy}
\begin{split}
E[\Psi](\tau):&= \int_{\Si_{red}}    |\nabb_3^*\Psi|^2 +|\nabb\Psi|^2 + r^{-2}|\Psi|^2\\
&+ \int_{\Si_\tau \setminus (\Si_{red} \cup \Si_{far})}    |\nabb_4 \Psi|^2  + |\nabb_3\Psi|^2 +|\nabb\Psi|^2 + r^{-2}|\Psi|^2 \\
&+ \int_{\Si_{far}}    |\nabb_4 \Psi|^2   +|\nabb\Psi|^2 + r^{-2}|\Psi|^2
\end{split}
\eea
Notice that the energy $E[\Psi](\tau)$ contains regular derivative close to the horizon. 
\item Weighted energy quantity in the far away region
\beaa
 \bsplit
  E_{p\,; \,R}[\Psi](\tau):&=  \int_{\Si_{\ge  R}(\tau)}  r^p |\check{\nabb}_4\Psi|^2
  \end{split}
  \eeaa
  \item Weighted energy quantity
  \beaa
  \bsplit
  E_{p}[\Psi](\tau):&=E[\Psi](\tau)+ E_{p\,; \,R}[\Psi](\tau)
  \end{split}
  \eeaa

\end{itemize}
\item Energy quantities on the event horizon $\mathcal{H}^+$:
\beaa
E_{\mathcal{H}^+}[\Psi](\tau_1, \tau_2):&=& \int_{\mathcal{H}^+(\tau_1, \tau_2)} |\nabb_4^*\Psi|^2+|\nabb \Psi|^2+|\Psi|^2  
\eeaa

\item Energy quantities on null infinity $\mathcal{I}^+$:
\beaa
E_{\mathcal{I}^+, p}[\Psi](\tau_1, \tau_2):&=& \int_{\mathcal{I}^+(\tau_1, \tau_2)} |\nabb_3\Psi|^2+r^p|\nabb \Psi|^2+r^{p-2}|\Psi|^2  
\eeaa

    \item  Weighted spacetime bulk energies in $\MM(\tau_1, \tau_2)$:
    \begin{itemize}
    \item Basic Morawetz bulk 
    \bea
    \label{def:Mor-bulk}
    \begin{split}
\Mor[\Psi](\tau_1, \tau_2):= &\int_{\MM(\tau_1, \tau_2)} \frac{M^2}{r^3} |R(\Psi)|^2+ \frac{M}{r^4} |\Psi|^2 \\
&+ \int_{\MM(\tau_1, \tau_2)}\frac{(r^2-3M r +2Q^2)^2}{r^5}\left( |\nabb \Psi|^2+\frac{M^2}{r^2} |T\Psi|^2 \right)     
\end{split}
\eea
Notice that the Morawetz bulk $\Mor[\Psi](\tau_1, \tau_2)$ is degenerate at the photon sphere $r=r_P$. 
\item Red-shifted Morawetz bulk
 \beaa
\Mor_{\mathcal{H}^+}[\Psi](\tau_1, \tau_2):= \int_{\MM(\tau_1, \tau_2)} |\Rbrev(\Psi)|^2 +|\Tbrev \Psi|^2    +|\nabb\Psi|^2  + M^{-2} |\Psi|^2
     \eeaa
\item Improved Morawetz bulk
\beaa
\bsplit
\Morr[\Psi](\tau_1, \tau_2)&:=\Mor[\Psi](\tau_1, \tau_2)+\int_{\MM_{far}(\tau_1, \tau_2)} r^{-1-\de}  |\nabb_3(\Psi)|^2
 \end{split}
\eeaa
\item  Weighted bulk norm in the far away region 
\bea\label{Morawetz-far-away}
\MM_{p\,; \,R}[\Psi](\tau_1, \tau_2):&=\int_{\MM_{\ge  R}(\tau_1, \tau_2) }  r^{p-1}  \left( p | \check{\nabb}_4(\Psi) |^2 +(2-p)  ( |\nabb \Psi|^2+   r^{-2}  |\Psi|^2)\right)  
\eea
\item Weighted bulk norm
\beaa
\MM_{p}[\Psi](\tau_1, \tau_2):&=\Morr[\Psi](\tau_1, \tau_2)+\Mor_{\mathcal{H}^+}[\Psi](\tau_1, \tau_2)+\MM_{p\,; \,R}[\Psi](\tau_1, \tau_2)
\eeaa
\end{itemize}
    \end{enumerate}
    
    \subsubsection{Weighted energies for $\a$, $\psi_1$, $\ff$ and $\aa$, $\underline{\psi}_1$, $\underline{\ff}$}
    The quantities in this section will be applied to $\a$, $\psi_1$, $\ff$ and to $\aa$, $\underline{\psi}_1$, $\underline{\ff}$.

    \begin{enumerate}
    \item Weighted energy quantities on $\Si_\tau$:
    \begin{itemize}
    \item Basic energy quantities
     \beaa
      E[\ff](\tau):&=&  \int_{\Si_{red}}    |\nabb_3^*\ff|^2 +|\nabb\ff|^2 + r^{-2}|\ff|^2\\
      &&+\int_{\Si_\tau \setminus (\Si_{red} \cup \Si_{far})}|\ff|^2+|T\ff|^2+|R\ff|^2+|\nabb \ff|^2\\
      &&+ \int_{\Si_{far}}    |\nabb_4 \ff|^2   +|\nabb\ff|^2 + r^{-2}|\ff|^2\\
    E[\psi_1](\tau):&=& \int_{\Si_{red}}    |\nabb_3^*\psi_1|^2 +|\nabb\psi_1|^2 + r^{-2}|\psi_1|^2\\
    &&+ \int_{\Si_\tau \setminus (\Si_{red} \cup \Si_{far})} |\psi_1|^2+ |T\psi_1|^2+|R\psi_1|^2\\
    &&+ \int_{\Si_{far}}    |\nabb_4 \psi_1|^2   +|\nabb\psi_1|^2 + r^{-2}|\psi_1|^2\\
     E[\a](\tau):&=& \int_{\Si_{red}}    |\nabb_3^*\a|^2 +|\nabb\a|^2 + r^{-2}|\a|^2\\
     &&+ \int_{\Si_\tau \setminus (\Si_{red} \cup \Si_{far})} |\a|^2+ |T\a|^2+|R\a|^2+|\nabb\a|^2\\
     &&+ \int_{\Si_{far}}    |\nabb_4 \a|^2   +|\nabb\a|^2 + r^{-2}|\a|^2
    \eeaa
    and similarly for $\underline{\ff}$, $\underline{\psi_1}$ and $\aa$. 
\item Weighted energy quantities in the far away region
     \beaa
    E_{p\,; \,R}[\ff](\tau):&=&\int_{\Si_{\ge  R}(\tau)}r^{2+p}|\ff|^2+r^{p+4}|\check{\nabb}_4\ff|^2+r^{4+p}|\nabb \ff|^2\\
    E_{p\,; \,R}[\psi_1](\tau):&=& \int_{\Si_{\ge  R}(\tau)} r^p |\psi_1|^2+r^{p+2}|\check{\nabb}_4\psi_1|^2+r^{2+p}|\nabb\psi_1|^2\\
     E_{p\,; \,R}[\a](\tau):&=& \int_{\Si_{\ge  R}(\tau)} r^{2+p} |\a|^2+r^{p+4}|\check{\nabb}_4\a|^2+r^{4+p} |\nabb\a|^2
    \eeaa
   \item Weighted energy quantities
  \beaa
  \bsplit
  E_{p}[\ff](\tau):&=E[\ff](\tau)+ E_{p\,; \,R}[\ff](\tau), \\
  E_{p}[\psi_1](\tau):&=E[\psi_1](\tau)+ E_{p\,; \,R}[\psi_1](\tau), \\
  E_{p}[\a](\tau):&=E[\a](\tau)+ E_{p\,; \,R}[\a](\tau)
  \end{split}
  \eeaa
 
\end{itemize}

\item Energy quantities on the event horizon $\mathcal{H}^+$:
\beaa
E_{\mathcal{H}^+}[\ff](\tau_1, \tau_2):&=& \int_{\mathcal{H}^+(\tau_1, \tau_2)} |\nabb_4^*\ff|^2+|\nabb \ff|^2+|\ff|^2 \\
E_{\mathcal{H}^+}[\psi_1](\tau_1, \tau_2):&=& \int_{\mathcal{H}^+(\tau_1, \tau_2)} |\nabb_4^*\psi_1|^2+|\nabb \psi_1|^2+|\psi_1|^2  \\
E_{\mathcal{H}^+}[\a](\tau_1, \tau_2):&=& \int_{\mathcal{H}^+(\tau_1, \tau_2)} |\nabb_4^*\a|^2+|\nabb \a|^2+|\a|^2 
\eeaa
  
    \item Non-degenerate bulk norms on $\MM(\tau_1, \tau_2)$:
    \begin{itemize}
    \item Basic non-degenerate bulk norms
     \beaa
      \widehat{\MM}[\ff](\tau_1, \tau_2)&=&\int_{\MM(\tau_1, \tau_2)}|\ff|^2+|T\ff|^2+|R\ff|^2+|\nabb \ff|^2\\
    \widehat{\MM}[\psi_1](\tau_1, \tau_2)&=& \int_{\MM(\tau_1, \tau_2)} |\psi_1|^2+ |T\psi_1|^2+|R\psi_1|^2\\
     \widehat{\MM}[\a](\tau_1, \tau_2)&=& \int_{\MM(\tau_1, \tau_2)} |\a|^2+ |T\a|^2+|R\a|^2+|\nabb\a|^2
    \eeaa
    and similarly for $\underline{\ff}$, $\underline{\psi_1}$ and $\aa$.
    \item Weighted bulk norms in the far away region
\beaa
\widehat{\MM}_{p\,; \,R}[\ff](\tau_1, \tau_2):&=&\int_{\MM_{\ge  R}(\tau_1, \tau_2) } r^{1+p}|\ff|^2+r^{3+p}|\nabb_3\ff|^2+r^{p+3}|\check{\nabb}_4\ff|^2+r^{3+p}|\nabb\ff|^2, \\
\widehat{\MM}_{p\,; \,R}[\psi_1](\tau_1, \tau_2):&=&\int_{\MM_{\ge  R}(\tau_1, \tau_2) } r^{-1+p}|\psi_1|^2+r^{1+p}|\nabb_3\psi_1|^2+r^{p+1}|\check{\nabb}_4\psi_1|^2+r^{1+p}|\nabb\psi_1|^2 \nn\\
\widehat{\MM}_{p\,; \,R}[\a](\tau_1, \tau_2):&=&\int_{\MM_{\ge  R}(\tau_1, \tau_2) }r^{1+p}|\a|^2+r^{p+3}|\nabb_3\a|^2+r^{p+3}|\check{\nabb}_4\a|^2+r^{3+p}|\nabb\a|^2\\
\widehat{\MM}_{\de\,; \,R}[\underline{\ff}](\tau_1, \tau_2):&=&\int_{\MM_{\ge  R}(\tau_1, \tau_2) } r^{1-\de}|\underline{\ff}|^2+r^{1-\de}|\nabb_3\underline{\ff}|^2+r^{3+\de}|\nabb_4\underline{\ff}|^2+r^{3-\de}|\nabb\underline{\ff}|^2, \\
\widehat{\MM}_{\de\,; \,R}[\underline{\psi}_1](\tau_1, \tau_2):&=&\int_{\MM_{\ge  R}(\tau_1, \tau_2) } r^{-1-\de}|\underline{\psi}_1|^2+r^{-1-\de}|\nabb_3\underline{\psi}_1|^2+r^{1+\de}|\nabb_4\underline{\psi}_1|^2+r^{1-\de}|\nabb\underline{\psi}_1|^2 \nn\\
\widehat{\MM}_{\de\,; \,R}[\aa](\tau_1, \tau_2):&=&\int_{\MM_{\ge  R}(\tau_1, \tau_2) }r^{-1-\de}|\aa|^2+r^{-1-\de}|\nabb_3\aa|^2+r^{3+\de}|\nabb_4\aa|^2+r^{3-\de}|\nabb\aa|^2
\eeaa
\item Weighted bulk norms
  \beaa
  \bsplit
  \widehat{\MM}_{p}[\ff](\tau):&=\widehat{\MM}[\ff](\tau)+ \widehat{\MM}_{p\,; \,R}[\ff](\tau), \\
  \widehat{\MM}_{p}[\psi_1](\tau):&=\widehat{\MM}[\psi_1](\tau)+ \widehat{\MM}_{p\,; \,R}[\psi_1](\tau), \\
  \widehat{\MM}_{p}[\a](\tau):&=\widehat{\MM}[\a](\tau)+ \widehat{\MM}_{p\,; \,R}[\a](\tau)\\
  \widehat{\MM}_{\de}[\underline{\ff}](\tau):&=\widehat{\MM}[\underline{\ff}](\tau)+ \widehat{\MM}_{\de\,; \,R}[\underline{\ff}](\tau), \\
  \widehat{\MM}_{\de}[\underline{\psi}_1](\tau):&=\widehat{\MM}[\underline{\psi}_1](\tau)+ \widehat{\MM}_{\de\,; \,R}[\underline{\psi}_1](\tau), \\
  \widehat{\MM}_{\de}[\aa](\tau):&=\widehat{\MM}[\aa](\tau)+ \widehat{\MM}_{\de\,; \,R}[\aa](\tau)
  \end{split}
  \eeaa

\end{itemize}
    \end{enumerate}

\subsubsection{Weighted energy for the inhomogeneous term}
We introduce the following norm for the right hand side $\M$ of the Regge-Wheeler equation in the far away region:
\bea\label{definition-norm-M}
\II_{p\,; \,R}[ \M](\tau_1,\tau_2) :&=& \int_{\MM_{\ge R}(\tau_1,\tau_2)  } r^{1+p} |\M|^2
\eea

\subsubsection{Higher order energies}
   
   To estimate higher order energies we also introduce the following notation, motivated by the fact that the Regge-Wheeler system commutes with $T$ and the angular momentum operators $\Omega_i$.  We define 
   \begin{enumerate}
\item Higher derivative energies for $n\geq 1$:
   \beaa
   E^{n, T, \nabb}[\Psi]&=&\sum_{i+j \leq n} E[T^i(r\nabb_A)^j\Psi], \qquad  E^{n, T, \nabb}_{p\,; \,R}[\Psi]= \sum_{i+j \leq n}E_{p\,; \,R}[T^i(r\nabb_A)^j\Psi], \\
   E^{n, T, \nabb}_{p}[\Psi]&=& E^{n, T, \nabb}[\Psi]+E^{n, T, \nabb}_{p\,; \,R}[\Psi]
   \eeaa
   and
   \beaa
  E_p^{n, T, \nabb}[\ff]&=&\sum_{i+j \leq n} E_p[T^i(r\nabb_A)^j\ff] , \qquad  E_p^{n, T, \nabb}[\psi_1]=\sum_{i+j \leq n} E_p[T^i(r\nabb_A)^j\psi_1]\\
   E_p^{n, T, \nabb}[\a]&=&\sum_{i+j \leq n} E_p[T^i(r\nabb_A)^j\a] 
   \eeaa
   Similarly for the energies at the horizon and at null infinity. 
   \item Higher derivative Morawetz bulks:
   \beaa
   \Mor^{n, T, \nabb}[\Psi]&=&\sum_{i+j \leq n} \Mor[T^i(r\nabb_A)^j\Psi], \qquad  \Morr^{n, T, \nabb}[\Psi]=\sum_{i+j \leq n} \Morr[T^i(r\nabb_A)^j\Psi], \\
   \MM^{n, T, \nabb}_{p\,; \,R}[\Psi]&=&\sum_{i+j \leq n}\MM_{p\,; \,R}[T^i(r\nabb_A)^j\Psi], \qquad \widehat{\MM}^{n, T, \nabb}_p[\Psi]=\sum_{i+j \leq n} \widehat{\MM}_p[T^i(r\nabb_A)^j\Psi]\\
   \MM^{n, T, \nabb}_{p}[\Psi]&=& \Morr^{n, T, \nabb}[\Psi]+\MM^{n, T, \nabb}_{p\,; \,R}[\Psi], 
   \eeaa
   and
   \beaa
    \widehat{\MM}^{n, T, \nabb}_{p}[\ff]&=& \sum_{i+j \leq n}\widehat{\MM}_p[T^i(r\nabb_A)^j\ff], \qquad \widehat{\MM}^{n, T, \nabb}_{p}[\psi_1]= \sum_{i+j \leq n}\widehat{\MM}_p[T^i(r\nabb_A)^j\psi_1]\\
     \widehat{\MM}^{n, T, \nabb}_{p}[\a]&=& \sum_{i+j \leq n}\widehat{\MM}_p[T^i(r\nabb_A)^j\a]
   \eeaa
   Similarly for the red-shifted Morawetz bulks.
   \item Higher derivative norm for $\MM$:
   \beaa
   \II^{n, T, \nabb}_{p\,; \,R}[ \M]&=&\sum_{i+j \leq n} \II_{p}[ T^i(r\nabb_A)^j\M] 
   \eeaa
 \end{enumerate}

\subsection{Precise statement of the Main Theorem}\label{section-main-theorem}
We are now ready to state the boundedness and decay theorem for solutions $(\a, \ff)$ of the generalized Teukolsky system.

In the estimates below we will denote $\AA\les \BB$ if there exists an universal constant $C$ such that $\AA \le C \BB$.

\begin{theorem}\label{main-theorem-1}(Spin $+2$) Let $(\a_0, \a_1) \in H^j_{loc}(\Sigma_0) \times H^{j-1}_{loc}(\Sigma_0)$ and $(\ff_0, \ff_1)\in H^j_{loc}(\Sigma_0) \times H^{j-1}_{loc}(\Sigma_0)$ be as in the well-posedness Proposition \ref{theorem-teuk}, and let $\a$, $\psi_1$, $\qf$, $\ff$, $\qf^\F$ be as defined by \eqref{quantities}.  Then the following estimates hold, for all $\de \le p \le 2-\de$ and  for any $\tau>0$: 
\begin{enumerate}
\item  energy and red-shifted boundedness, degenerate integrated local energy decay and $r^p$ hierarchy of estimates for $\qf$ and $\qf^\F$:
       \bea\label{first-estimate-main-theorem-1}
  \begin{split}
     & E_p[\qf](\tau)  + E^{1, T, \nabb}_p[\qf^\F](\tau) + E_{\mathcal{H}^+}[\qf](0, \tau)  + E_{\mathcal{H}^+}^{1, T, \nabb}[\qf^\F](0,\tau)  +E_{\mathcal{I}^+, p}[\qf](0, \tau) +E^{1, T, \nabb}_{\mathcal{I}^+, p}[\qf^\F](0, \tau)            \\
     &+\MM_p[\qf](0,\tau) +\MM^{1, T, \nabb}_p[\qf^\F](0,\tau)\\
      &\les E_{p}[\qf](0)+E^{1, T, \nabb}_p[\qf^\F](0)+E[\ff](0)+E_p[\psi_1](0)+E_p[\a](0)
      \end{split}
       \eea
       \item energy boundedness, integrated local energy decay and $r^p$ hierarchy of estimates for $\a$, $\psi_1$, $\ff$: 
    \bea\label{first-estimate-a-f-main-theorem-1}
    \begin{split}
  &  E_p[\a](\tau)+E_p[\psi_1](\tau)+E_p^{1, T, \nabb}[\ff](\tau)+ E_{\mathcal{H}^+}[\a](0, \tau)+E_{\mathcal{H}^+}[ \psi_1](0, \tau)+E_{\mathcal{H}^+}^{1, T, \nabb}[\ff](0, \tau)\\
  &+\widehat{\MM}_p[\a](0, \tau)+ \widehat{\MM}_p[\psi_1](0, \tau)+ \widehat{\MM}^{1, T, \nabb}_p[\ff](0, \tau)\\
&\les E_{p}[\qf](0)+E^{1, T, \nabb}_p[\qf^\F](0)+E_p[\a](0)+ E_p[\psi_1](0)+ E_p^{1, T, \nabb}[\ff](0)
\end{split}
  \eea
 \item higher order energy and integrated decay estimates for $\qf$ and $\qf^\F$, for any integer $n \geq 1$:
   \bea\label{higher-estimate-main-theorem}
   \begin{split}
   & E_p^{n, T, \nabb}[\qf](\tau) + E_p^{n+1, T, \nabb}[\qf^\F](\tau)       +  \MM_{p}^{n, T, \nabb}[\qf](0, \tau) +  \MM_{p}^{n+1, T, \nabb}[\qf^\F](0, \tau)   \\
&\les E_p^{n, T, \nabb}[\qf](0)+E_p^{n+1, T, \nabb}[\qf^\F](0)+E_p^{n, T, \nabb}[\ff](0)+E_p^{n, T, \nabb}[\psi_1](0)+E_p^{n, T, \nabb}[\a](0)
\end{split}
   \eea
   \item higher order energy and integrated decay estimates for $\a$, $\psi_1$ and $\ff$, for any integer $n \geq 1$:
   \bea\label{higher-estimate-main-theorem-a-f}
   \begin{split}
  & E^{n, T, \nabb}_p[\a](\tau)+E^{n, T, \nabb}_p[\psi_1](\tau)+E^{n+1, T, \nabb}_p[\ff](\tau)\\
  &+\widehat{\MM}^{n, T, \nabb}_p[\a](0, \tau)+ \widehat{\MM}^{n, T, \nabb}_p[\psi_1](0, \tau)+ \widehat{\MM}^{n+1, T, \nabb}_p[\ff](0, \tau)\\
&\les E_p^{n, T, \nabb}[\qf](0)+E_p^{n+1, T, \nabb}[\qf^\F](0)+E_p^{n+1, T, \nabb}[\ff](0)+E_p^{n, T, \nabb}[\psi_1](0)+E_p^{n, T, \nabb}[\a](0)
\end{split}
   \eea
    \item polynomial decay for the energy:
\bea\label{polynomial-decay-theorem}
\begin{split}
&E_{\de}[\a](\tau)+E_{\de}[\psi_1](\tau)+E^{1, T, \nabb}_{\de}[\ff](\tau) + E_{\de}[\qf](\tau)+ E_{\de}^{1, T, \nabb}[\qf^\F](\tau)\\
&\les \frac{1}{\tau^{2-\de}}\mathbb{D}_{2, 2-\de}[\a, \psi_1, \qf, \ff, \qf^\F](0)
\end{split}
\eea
where 
\beaa
\mathbb{D}_{2, 2-\de}[\a, \psi_1, \qf, \ff, \qf^\F](0)&=&E_{2-\de}^{2, T, \nabb}[\a](0)+E_{2-\de}^{2, T, \nabb}[\psi_1](0)+E_{2-\de}^{3, T, \nabb}[\ff](0)\\
&&+ E_{2-\de}^{2, T, \nabb}[\qf](0)+E_{2-\de}^{3, T, \nabb}[\qf^\F](0)
\eeaa

\end{enumerate}
\end{theorem}

As an example of the pointwise estimate which follow from the above theorem, we point out the pointwise estimate for the solutions to the generalized Teukolsky system $(\a, \ff)$. 

\begin{corollary} Let $(\a_0, \a_1)$ and $(\ff_0, \ff_1)$ be smooth and of compact support. Then the solution $(\a, \ff)$ satisfy
\beaa
|r^{\frac{5+\de}{2}} \a| \leq C\tau^{-\frac{2-\de}{2}}, \qquad |r^{\frac{5+\de}{2}} \ff| \leq C\tau^{-\frac{2-\de}{2}}
\eeaa
where $C$ depends on an appropriate higher Sobolev norm.
\end{corollary}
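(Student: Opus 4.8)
The plan is to obtain the pointwise bound directly from the energy decay estimate \eqref{polynomial-decay-theorem} of the Main Theorem by Sobolev embedding on the orbit spheres, after commuting that estimate with the angular momentum operators. So first I would commute: the generalized Regge--Wheeler system \eqref{finalsystem}, and likewise the generalized Teukolsky system and the transport relations \eqref{quantities}, commute with the angular momentum Killing fields $\Omega_i$, $i=1,2,3$, by spherical symmetry of the background and the $\Omega_i$-invariance of the coefficients $\ka,\kab,\rho,\rhoF,r$. Hence \eqref{polynomial-decay-theorem} may be applied verbatim to $\Omega^\gamma(\a,\psi_1,\qf,\ff,\qf^\F)$ for every multi-index $|\gamma|\le2$. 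Summing over $|\gamma|\le2$ and using the elliptic estimate $\sum_{j\le2}\|(r\nabb)^j\xi\|_{L^2(S^2_{\tau,r})}^2\les\sum_{|\gamma|\le2}\|\Omega^\gamma\xi\|_{L^2(S^2_{\tau,r})}^2$ on the round $2$-spheres, I would obtain
\bea\label{eq:cor-commuted}
E^{2,T,\nabb}_{\de}[\a](\tau)+E^{2,T,\nabb}_{\de}[\ff](\tau)&\les&\frac{1}{\tau^{2-\de}}\,\mathbb{D}(0),
\eea
with $\mathbb{D}(0):=\sum_{|\gamma|\le2}\mathbb{D}_{2,2-\de}[\Omega^\gamma\a,\Omega^\gamma\psi_1,\Omega^\gamma\qf,\Omega^\gamma\ff,\Omega^\gamma\qf^\F](0)$, which is finite because the data are smooth and compactly supported.

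Next I would extract the $r$-weight. In the far region $\{r\ge R\}$, for $\xi$ any of $(r\nabb)^j\a$ or $(r\nabb)^j\ff$ with $j\le2$, set $G_\xi(r):=r^{3+\de}\int_{S^2_{\tau,r}}|\xi|^2\,dA$. Along the outgoing null part $\Si_{far}$ the adapted transversal derivative $\check{\nabb}_4=\nabb_4+\frac1r$ preserves the $L^2(S^2_{\tau,r})$ structure, so that $\frac{d}{dr}\int_{S^2_{\tau,r}}|\xi|^2\,dA=2\int_{S^2_{\tau,r}}\langle\xi,\check{\nabb}_4\xi\rangle\,dA$ and, by Cauchy--Schwarz, $|G_\xi'(r)|\les r^{2+\de}\int_{S^2_{\tau,r}}|\xi|^2\,dA+r^{4+\de}\int_{S^2_{\tau,r}}|\check{\nabb}_4\xi|^2\,dA$; since $[\check{\nabb}_4,r\nabb_A]=0$ by \eqref{commutator-rnabla}, integrating from $r$ to $\infty$ (the limit at infinity vanishing by finiteness of the energy) bounds $\sup_{r\ge R}G_\xi(r)$ by $E_{\de\,;\,R}[\xi](\tau)$, hence $\sup_{r\ge R}\sum_{j\le2}r^{3+\de}\int_{S^2_{\tau,r}}|(r\nabb)^j\a|^2\,dA\les E^{2,T,\nabb}_{\de\,;\,R}[\a](\tau)$, and similarly for $\ff$ (whose far-region weighted energy carries the same weights $r^{2+\de}|\ff|^2$, $r^{4+\de}|\check{\nabb}_4\ff|^2$). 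On the bounded region $r_{\mathcal{H}}\le r\le R$ I would instead use a one-dimensional trace inequality in the radial direction, together with the horizon-regular derivatives contained in $E[\a](\tau)$ and $E[\ff](\tau)$, to control the same quantities there by $E^{2,T,\nabb}[\a](\tau)$ and $E^{2,T,\nabb}[\ff](\tau)$.

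Finally I would apply the rescaled Sobolev embedding $r^2\|\xi\|_{L^\infty(S^2_{\tau,r})}^2\les\sum_{j\le2}\|(r\nabb)^j\xi\|_{L^2(S^2_{\tau,r})}^2$ for symmetric traceless $S$-tensors, which together with the previous step gives, for every $(\tau,r,\omega)$,
\bea\label{eq:cor-pointwise}
r^{5+\de}|\a|^2(\tau,r,\omega)&\les&\sum_{j\le2}r^{3+\de}\int_{S^2_{\tau,r}}|(r\nabb)^j\a|^2\,dA\ \les\ E^{2,T,\nabb}_{\de}[\a](\tau)\ \les\ \frac{1}{\tau^{2-\de}}\,\mathbb{D}(0),
\eea
and the identical chain for $\ff$; taking square roots yields $|r^{\frac{5+\de}{2}}\a|+|r^{\frac{5+\de}{2}}\ff|\les\tau^{-\frac{2-\de}{2}}\sqrt{\mathbb{D}(0)}$, i.e.\ the assertion with $C=C(\mathbb{D}(0))$ a higher Sobolev norm of the data. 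I do not expect a genuine analytic obstacle here: the corollary is a routine Sobolev consequence of the Main Theorem. The only point that needs care is the bookkeeping of the commutation step — checking that the coupled structure of \eqref{finalsystem} and of the transport relations \eqref{quantities} is genuinely preserved under $\Omega_i$-commutation, so that \eqref{polynomial-decay-theorem} applies to the commuted quantities with no loss, and pinning down the precise finite higher-order initial norm $\mathbb{D}(0)$.
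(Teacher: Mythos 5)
Your proposal is correct and is exactly the intended derivation: the paper states this corollary without a separate proof, as a direct consequence of the polynomial decay estimate \eqref{polynomial-decay-theorem} combined with Sobolev embedding on the spheres, using the $r$-weights already present in the far-region energies $E_{\de\,;\,R}[\a]$ and $E_{\de\,;\,R}[\ff]$ and the extra angular commutations (which cost only more derivatives on the data, consistent with "$C$ depends on an appropriate higher Sobolev norm"). The only bookkeeping point you flag — that the decay statement must first be upgraded to the twice-angularly-commuted energies — is handled by the paper's higher-order estimates \eqref{higher-estimate-main-theorem}--\eqref{higher-estimate-main-theorem-a-f}, which hold for arbitrary $n$ and feed into the same dyadic argument of Section \ref{polynomial-combined}.
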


A similar theorem holds for the spin $-2$ case, with the corresponding energies defined above. It implies the following pointwise estimate.

\begin{corollary} Let $(\aa_0, \aa_1)$ and $(\underline{\ff}_0, \underline{\ff}_1)$ be smooth and of compact support. Then the solution $(\aa, \underline{\ff})$ satisfy
\beaa
|r \aa| \leq C\tau^{-\frac{2-\de}{2}}, \qquad |r^2\underline{\ff}| \leq C\tau^{-\frac{2-\de}{2}}
\eeaa
where $C$ depends on an appropriate higher Sobolev norm.
\end{corollary}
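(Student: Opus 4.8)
The plan is to deduce the pointwise bounds from the energy decay estimate \eqref{polynomial-decay-theorem-2} of the Main Theorem (spin $-2$) by commuting with the angular momentum operators $\Omega_i$ and then applying Sobolev embedding on the spheres $S$ of the foliation $\Sigma_\tau$. Since the right hand side of \eqref{polynomial-decay-theorem-2} is a fixed higher order norm of the initial data, finite under the smoothness and compact support hypothesis, the only real issue is to upgrade the $L^2$ control it provides on $\aa$ and $\underline{\ff}$ to $L^\infty$ control with the weight $r$.

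First I would check that the generalized Teukolsky and Regge--Wheeler systems \eqref{finalsystem-2} and the transport relations \eqref{quantities-2} are preserved under the $\Omega_i$. The potentials $V_1, V_2$ and all the coefficients $\ka, \kab, \rho, \rhoF$ entering \eqref{finalsystem-2} and \eqref{quantities-2} are radial, the operators $\lapp_2, \nabb_3, \nabb_4$ and $P$ commute with $\Omega_i$ on $S$-tensors in the spherically symmetric background (cf.\ Section \ref{sec:commutation}), and $\Omega_i$ commutes with $\Box_\g$ since it is Killing; hence for every multi-index $\gamma$, $(\Omega^\gamma\aa, \Omega^\gamma\underline{\ff})$ again solves the generalized Teukolsky system of spin $-2$, with corresponding derived quantities $\Omega^\gamma\underline{\qf}$, $\Omega^\gamma\underline{\psi}_1$, $\Omega^\gamma\underline{\qf}^\F$. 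Applying \eqref{polynomial-decay-theorem-2} (together with the higher order estimates \eqref{higher-estimate-main-theorem-2}--\eqref{higher-estimate-main-theorem-a-f-2}, so as to absorb the extra angular derivatives on the right hand side into still higher order data norms) to each commuted solution with $|\gamma|\le 2$ gives
\be{commuted-decay-proof}
\sum_{|\gamma|\le 2}\Big(E[\Omega^\gamma\aa](\tau) + E^{1, T, \nabb}[\Omega^\gamma\underline{\ff}](\tau)\Big) \les \frac{1}{\tau^{2-\de}}\,\mathbb{D}'(0),
\eeq
with $\mathbb{D}'(0)$ a higher order version of $\mathbb{D}_{2, 2-\de}$ involving two additional angular derivatives, finite by hypothesis.

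Next, fix $\tau>0$ and $p\in\Sigma_\tau$, set $r_0=r(p)$, and let $S\subset\Sigma_\tau$ be the sphere through $p$, of area $\simeq r_0^2$. The $L^2$--$L^\infty$ Sobolev inequality on $S$, together with the fact that the $\Omega_i$ span the tangent space of $S$ and commute with the round metric, gives
\be{sobolev-sphere-proof}
r_0^2\big(|\aa(p)|^2 + |\underline{\ff}(p)|^2\big) \les \sum_{|\gamma|\le 2}\Big(\|\Omega^\gamma\aa\|_{L^2(S)}^2 + \|\Omega^\gamma\underline{\ff}\|_{L^2(S)}^2\Big).
\eeq
To pass from the $L^2(S)$ norms to integrals over $\Sigma_\tau$, I would integrate along the generator of $\Sigma_\tau$ transversal to the spheres --- the outgoing null direction $e_4$ in $\MM_{far}$, and the corresponding incoming null, respectively spacelike, direction in $\MM_{red}$ and $\MM_{trap}$ --- using the fundamental theorem of calculus and the decay of $\aa,\underline{\ff}$ at large $r$ on $\Sigma_\tau$; since moving to larger $r$ only improves the $r_0^2$ factor, this yields $\|\Omega^\gamma\aa\|_{L^2(S)}^2 \les E[\Omega^\gamma\aa](\tau)$ and similarly $\|\Omega^\gamma\underline{\ff}\|_{L^2(S)}^2 \les E^{1, T, \nabb}[\Omega^\gamma\underline{\ff}](\tau)$ (the extra angular derivative being needed since $E[\underline{\ff}]$ already carries one). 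Combining with \eqref{sobolev-sphere-proof} and \eqref{commuted-decay-proof} gives $r_0^2(|\aa(p)|^2+|\underline{\ff}(p)|^2)\les\tau^{-(2-\de)}\mathbb{D}'(0)$, which is the claim with $C=\mathbb{D}'(0)^{1/2}$.

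I expect the main obstacle to be the transversal integration step: making it uniform across the three regions $\MM_{red}$, $\MM_{trap}$, $\MM_{far}$, where $\Sigma_\tau$ is in turn incoming null, spacelike and outgoing null, and in particular organising the $r$--weights so that precisely the weight $r^1$ is obtained. The weaker weight here, compared with the $r^{(5+\de)/2}$ available for $\a$ in the spin $+2$ corollary, is a genuine feature: it reflects that the basic energy $E[\aa](\tau)$ carries no $r^p$--weight, unlike $E[\a](\tau)$, which is itself a consequence of the fact that near null infinity $\aa$ is governed by the worse-behaved ingoing quantity $\underline{\psi}_1$. Verifying the commutation claim for the coupling operator $r\lapp_2$ and for the twice-iterated Chandrasekhar operator in \eqref{quantities-2} is routine but should be made explicit.
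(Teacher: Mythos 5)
The paper states this corollary without proof, and your argument — commuting the full system with the Killing fields $\Omega_i$, invoking the decay estimate \eqref{polynomial-decay-theorem-2} for the commuted solutions, and converting the unweighted basic energies $E[\Omega^\gamma\aa](\tau)$, $E^{1,T,\nabb}[\Omega^\gamma\underline{\ff}](\tau)$ into the weight-$r$ pointwise bound via transversal integration along $\Sigma_\tau$ plus Sobolev embedding on the spheres — is exactly the standard derivation the corollary is meant to follow from. Your closing observation correctly identifies why only the weight $r^{1}$ (rather than $r^{(5+\de)/2}$ as in the spin $+2$ case) is obtained: the spin $-2$ energies carry no $r^p$-weights.
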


Observe that the $r$-weights in these estimates, even though they do not seem to come from a standard application of the $r^p$-method, are consistent with the decay for $\a$ in Schwarzschild (see \cite{DHR} and \cite{stabilitySchwarzschild}). This is because one does not apply the $r^p$-estimates directly to the quantities $\a$ and $\ff$, but first needs to apply it to $\qf$ and $\qf^\F$ and then integrate in the $e_3$ direction. As a result of those transport estimates, exemplified in Lemma \ref{general-lemma-transport}, one obtains the above $r$-weights for $\a$, $\ff$, $\aa$, $\underline{\ff}$.

\subsection{The logic of the proof}
The remainder of the paper concerns the proof of Main Theorem. 
We outline here the main steps.
\begin{enumerate}
\item We consider the two equations composing the system \eqref{finalsystem} separately and derive two separated integrated local energy decay for them, with right hand side which is not controlled at this stage. We use standard techniques in order to derive the equations: we apply the vector field method to obtain energy and Morawetz estimates. We also improve the estimates with the red-shift vector field. Finally, we use the $r^p$-method of Dafermos and Rodnianski to obtain integrated decay in the far-away region. We also obtain higher order estimates by commuting the equations with the Killing vector fields. 
This is done in Section \ref{separate-estimates}. 
\item We analyze the right hand side of the separated estimates, and separate it into coupling terms and lower order terms. We derive estimates for the coupling terms on the right hand side and transport estimates for the lower order terms on the right hand side. Our goal is to absorb the norms of these inhomogeneous terms on the right hand side with the Morawetz bulks of the estimates on the left hand side, using the smallness of the charge.  This is done in Section \ref{lot-absorbing}.

The coupling terms are particularly problematic because of the degeneracy of the Morawetz bulks at the photon sphere. We present a cancellation of the higher order terms at the photon sphere which allows to close the estimates. This is done in Section \ref{coupling-subsection}. 

The lower order terms are treated using enhanced transport estimates, which make also use of Bianchi identities. This is done in Section \ref{section-lower-order}.

\item We put together the above estimates to prove the Main Theorem. This is done in Section \ref{proof-of-theorem-this-is-the-end}.
\end{enumerate}

We will show all the details of the proof for the spin $+2$ system. We outline the proof of the case spin $-2$ and the main differences with the spin $+2$ in Section \ref{spin-2-all}.

\section{Estimates for the Regge-Wheeler equations separately}\label{separate-estimates}
In this section, we prove estimates for the two equations separately, keeping the right hand side of the equations unchanged. 
Schematically, we write the two equations in  \eqref{finalsystem} as a general expression:
\bea\label{general-equation}
\Big(\Box_\g-V_i \Big) \Psi_i&=& \M_i
\eea
where $\M_i$ are whatever expressions we have on the left hand side of the equation. The two equations comprising the system are obtained by
\bea\label{potentials}
\Psi_1&=&\qf, \ \underline{\qf}, \qquad V_1=-\ka\kab+10 \rhoF^2=\frac{1}{r^2}\left(4-\frac{8M}{r}+\frac{14Q^2}{r^2}\right), \\
\Psi_2&=&\qf^\F, \ \underline{\qf}^\F \qquad V_2=-\ka\kab-3\rho=\frac{1}{r^2}\left(4-\frac{2M}{r}-\frac{2Q^2}{r^2}\right)
\eea
Observe that $\Psi_1$ and $\Psi_2$ are symmetric traceless $2$-tensors, therefore the operator $\Box_\g$ refers to the wave operator applied to a $2$-tensor. We will analyze the general equation without worrying for now of the right hand side $\M$.

The goal of this section is to prove the following two estimates, for $\qf$ and $\qf^\F$ respectively:
    \bea\label{estimate1}
       \begin{split}
      &  E_p[\qf](\tau)      + E_{\mathcal{H}^+}[\qf](0, \tau)  +E_{\mathcal{I}^+, p}[\qf](0, \tau)      +\MM_p[\qf](0,\tau)\\
        &\les E_{p}[\qf](0)+\II_{p\,; \,R}[ \M_1[\qf, \qf^\F]](0,\tau) \\
        &- \int_{\MM(0,\tau)} \left((r-r_P)R(\qf)+ \Lambda T(\qf )+\frac 1 2  w\qf\right)\c \M_1[\qf, \qf^\F]
        \end{split}
  \eea
  and
      \bea\label{estimate2}
       \begin{split}
    &  E^{1, T, \nabb}_p[\qf^\F](\tau)    + E_{\mathcal{H}^+}^{1, T, \nabb}[\qf^\F](0,\tau)  +E^{1, T, \nabb}_{\mathcal{I}^+, p}[\qf^\F](0, \tau)          +\MM^{1, T, \nabb}_p[\qf^\F](0,\tau)\\
      &\les E^{1, T, \nabb}_p[\qf^\F](0)+  \II^{1, T, \nabb}_{p\,; \,R}[\M_2[\qf, \qf^\F]]( 0,\tau)  \\
       &- \int_{\MM(0,\tau)} \left((r-r_P)R(\qf^\F)+ \Lambda T(\qf^\F )+\frac 1 2  w\qf^\F\right)\c \M_2[\qf, \qf^\F]\\
        &- \int_{\MM(0,\tau)} \left((r-r_P)R(T\qf^\F)+ \Lambda T(T\qf^\F )+\frac 1 2  w T\qf^\F\right)\c T(\M_2[\qf, \qf^\F])\\
        &- \int_{\MM(0,\tau)} \left((r-r_P)R(r\nabb_A\qf^\F)+ \Lambda T(r\nabb_A\qf^\F )+\frac 1 2  w r\nabb_A\qf^\F\right)\c r\nabb_A(\M_2[\qf, \qf^\F])
   \end{split}
  \eea
  where $w$ is defined in \eqref{definitionw} and $\Lambda$ is chosen big enough in Section \ref{subsection-energy-estimates}. 
 Identical estimates hold for $\underline{\qf}$ and $\underline{\qf}^\F$.

 In Section \ref{lot-absorbing}, we will then combine the two estimates above to a unique estimate for the whole system.

\subsection{Preliminaries}
We collect in this section some preliminaries lemmas in order to apply the vector field method to equation \eqref{general-equation}.

Consider the   energy-momentum tensor associated to the wave equation \eqref{general-equation}:
\bea\label{defenergymomentum}
\begin{split}
 \QQ_{\mu\nu}:&=\D_\mu  \Psi \c \D _\nu \Psi 
          -\frac 12 \g_{\mu\nu} \left(\D_\la \Psi\c\D^\la \Psi + V_i\Psi \c \Psi\right)\\
          &=\D_\mu  \Psi \c \D _\nu \Psi 
          -\frac 12 \g_{\mu\nu} \LL_i[\Psi]
          \end{split}
 \eea
   where $\c$ denotes the scalar product induced by $\g$ and $\D$ is the spacetime covariant derivative. The second line of \eqref{defenergymomentum} defines the Lagrangian $\LL_i[\Psi]$. Notice that 
   \bea\label{components-Q}
    \QQ_{33}=|\nabb_3 \Psi|^2, \qquad \QQ_{44}=|\nabb_4 \Psi|^2, \qquad \QQ_{34}=|\nabb \Psi|^2+V_i |\Psi|^2
   \eea
\begin{lemma}
The divergence of the energy-momentum tensor $\QQ$ verifies
 \bea\label{divergenceQ}
 \D^\nu\QQ_{\mu\nu}
  &=& \D_\mu  \Psi \c\M_i + \D^\nu  \Psi ^A\R_{ A   B   \nu\mu}\Psi^B-\frac 1 2 \D_\mu V_i \Psi\c \Psi
 \eea
and its trace is given by 
\bea\label{trace-Q}
\g^{\mu\nu}\QQ_{\mu\nu}&=&-\LL_i(\Psi)- V_i|\Psi|^2
\eea
\end{lemma}

We will make use of the following standard computation.
\begin{proposition}
\label{prop:qf-tensorial'}
Consider  a $S$-tensor $\Psi$ verifying the spacetime equation \eqref{general-equation}.
Let $X= a(r) e_3+b(r) e_4$ be a vectorfield, $w$ a scalar  function and $M$  a one form. Defining
 \bea\label{definition-of-P}
 \PP_\mu^{(X, w, M)}[\Psi]&=&\QQ_{\mu\nu} X^\nu +\frac 1 2  w \Psi \D_\mu \Psi -\frac 1 4\Psi^2   \pr_\mu w +\frac 1 4 \Psi^2 M_\mu,
  \eea
 then,  
  \bea
  \label{le:divergPP-gen}
  \begin{split}
  \D^\mu \PP_\mu^{(X, w, M)}[\Psi]&= \frac 1 2 \QQ  \c\piX+\left( - \frac 1 2 X( V_i ) -\frac 1 4   \Box_\g  w \right)|\Psi|^2+\frac 12  w \LL_i[\Psi] +\frac 1 4  \D^\mu (\Psi^2 M_\mu)      \\
  &+  \left(X( \Psi )+\frac 1 2   w \Psi\right)\c \M_i[\Psi] 
   \end{split}
 \eea
 
\end{proposition}

 {\bf Notation.} 
For convenience    we  introduce the notation,
 \bea
 \label{eq:modified-div}
 \EE[X, w, M](\Psi)&:=&   \D^\mu  \PP_\mu^{(X, w, M)}[\Psi]  -   \left(X( \Psi )+\frac 1 2   w \Psi\right)\c \M_i   
  \eea
  Thus equation  \eqref{le:divergPP-gen} becomes
   \bea
   \label{le:divergPP-gen-EE}
   \EE[X, w, M](\Psi)&=& \frac 1 2 \QQ  \c\piX+\left( - \frac 1 2 X( V_i ) -\frac 1 4   \Box_\g  w \right)|\Psi|^2+\frac 12  w \LL_i[\Psi] +\frac 1 4  \Db^\mu (\Psi^2 M_\mu)  
 \eea
When   $M=0$ we   simply write  $\EE[X, w](\Psi)$.

\subsubsection{Main identities for vectorfields $X=a(r) e_3+ b(r) e_4$}

 Recall that the Ricci coefficients defined with respect to the null frame $\mathscr{N}$ given by \eqref{outgoing-null-pair} have the following values:
 \bea\label{value-ka-om}
 \ka=\frac{2}{r}, \qquad \kab=-\frac{2\Up}{r}, \qquad \om=0, \qquad \omb=\frac{M}{r^2}-\frac{Q^2}{r^3}
 \eea
 In particular, we have
 \bea\label{formulas-christoffel}
 \begin{split}
\D_3 e_3 &=\left(-\frac{2M}{r^2}+\frac{2Q^2}{r^3}\right) e_3 , \quad \D_3 e_4=\left(\frac{2M}{r^2}-\frac{2Q^2}{r^3}\right) e_4, \quad  \D_A e_B = - \frac{\Up}{2r}\slashed{g}_{AB} e_4+\frac{1}{2r} \slashed{g}_{AB} e_3\\
 &\D_4 e_A =\D_4 e_4=\D_4 e_3= \D_3 e_A =0, \qquad  \D_A  e_4 = \frac 1 r  e_A, \qquad \D_A e_3=-\frac{\Up}{r} e_A.
 \end{split}
\eea
 Moreover, 
 \bea\label{derivatives-r}
 e_3(r)=-\Up, \qquad e_4(r)=1
 \eea
and
 \bea\label{Omegab'}
\Up'=\frac{2M}{r^2}-\frac{2Q^2}{r^3}
 \eea

Since we will make large use of vectorfields of the form $X= a(r) e_3+b(r) e_4$, we summarize here the general computation of its deformation tensor.

\begin{lemma}\label{general-deformation} Let $X=a(r) e_3+b(r) e_4$ a vectorfield. The component of its deformation tensor verify
\beaa
\piX_{44}&=& -4a', \qquad \piX_{34}= 2 \Up a'-2 b'+\left(\frac{4M}{r^2}-\frac{4Q^2}{r^3} \right)a, \qquad \piX_{3A}= \piX_{4A}=0,\\
\piX_{33}&=&  4\Up b'+\left(-\frac{8M}{r^2}+\frac{8Q^2}{r^3} \right)b,  \qquad \piX_{AB}= \left(-\frac{2\Up}{r}a+\frac{2}{r}b\right) \slashed{g}_{AB}
\eeaa
\end{lemma}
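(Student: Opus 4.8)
The plan is to compute the components of $\piX$ directly in the null frame $\mathscr{N}=\{e_3,e_4,e_A\}$ of \eqref{outgoing-null-pair}, using the convention $\piX_{\mu\nu}=\D_\mu X_\nu+\D_\nu X_\mu$, i.e. $\piX_{\alpha\beta}=\g(\D_{e_\alpha}X,e_\beta)+\g(\D_{e_\beta}X,e_\alpha)$. The only inputs needed are the covariant–derivative table and the identities $e_3(r)=-\Up$, $e_4(r)=1$, $e_A(r)=0$ collected just above the statement, together with the orthogonality relations \eqref{relations-null-frame}.

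First I would apply the Leibniz rule to $X=a(r)e_3+b(r)e_4$: for any frame vector $e_\alpha$,
\[
\D_{e_\alpha}X=(e_\alpha a)\,e_3+a\,\D_{e_\alpha}e_3+(e_\alpha b)\,e_4+b\,\D_{e_\alpha}e_4,
\]
and since $a,b$ depend on $r$ only, $e_\alpha(a)=a'\,e_\alpha(r)$, $e_\alpha(b)=b'\,e_\alpha(r)$. Substituting the table gives $\D_{e_4}X=a'e_3+b'e_4$ (using $\D_4e_3=\D_4e_4=0$), $\D_{e_3}X=\big(-\Up a'+(-\tfrac{2M}{r^2}+\tfrac{2Q^2}{r^3})a\big)e_3+\big(-\Up b'+(\tfrac{2M}{r^2}-\tfrac{2Q^2}{r^3})b\big)e_4$ (using $\D_3e_3=(-\tfrac{2M}{r^2}+\tfrac{2Q^2}{r^3})e_3$, $\D_3e_4=(\tfrac{2M}{r^2}-\tfrac{2Q^2}{r^3})e_4$), and $\D_{e_A}X=\big(-\tfrac{\Up}{r}a+\tfrac1r b\big)e_A$ (using $\D_Ae_3=-\tfrac{\Up}{r}e_A$, $\D_Ae_4=\tfrac1r e_A$).

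Then I would contract these three expressions against $e_3,e_4,e_A$ and read off each component using $\g(e_3,e_4)=-2$, $\g(e_3,e_3)=\g(e_4,e_4)=0$, $\g(e_3,e_A)=\g(e_4,e_A)=0$, $\g(e_A,e_B)=\slashed{g}_{AB}$. For instance $\piX_{44}=2\g(\D_{e_4}X,e_4)=2a'\g(e_3,e_4)=-4a'$; $\piX_{33}=2\g(\D_{e_3}X,e_3)$ retains only the $e_4$–part of $\D_{e_3}X$, giving $4\Up b'+(-\tfrac{8M}{r^2}+\tfrac{8Q^2}{r^3})b$; $\piX_{34}=\g(\D_{e_3}X,e_4)+\g(\D_{e_4}X,e_3)=2\Up a'+(\tfrac{4M}{r^2}-\tfrac{4Q^2}{r^3})a-2b'$; the mixed components $\piX_{3A},\piX_{4A}$ vanish because $\D_{e_3}X$ and $\D_{e_4}X$ have no $S$-tangent part while $\D_{e_A}X\parallel e_A\perp e_3,e_4$; and $\piX_{AB}=2\g(\D_{e_A}X,e_B)=\big(-\tfrac{2\Up}{r}a+\tfrac2r b\big)\slashed{g}_{AB}$.

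There is no real obstacle here — this is a bookkeeping computation. The two points that require care are (i) that $e_\alpha$ acting on the radial coefficients produces the factor $e_\alpha(r)$, so that $a',b'$ enter with the weights $-\Up$ from $e_3$ and $1$ from $e_4$; and (ii) faithfully carrying the inhomogeneous terms of $\D_3e_3$ and $\D_3e_4$, which generate the $\tfrac{M}{r^2}$ and $\tfrac{Q^2}{r^3}$ contributions to $\piX_{33}$ and $\piX_{34}$. Two consistency checks I would run: setting $Q=0$ must reproduce the Schwarzschild deformation tensor, and taking $X=T=\tfrac12(e_3+\Up e_4)$ (so $a=\tfrac12$, $b=\tfrac{\Up}{2}$, $b'=\tfrac{M}{r^2}-\tfrac{Q^2}{r^3}$) must yield $\piT=0$, since $T$ is Killing; a short substitution confirms all four nontrivial components vanish.
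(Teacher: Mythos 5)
Your computation is correct and is essentially the same as the paper's: the paper first tabulates the deformation tensors $\pi^{(3)}$, $\pi^{(4)}$ of the frame vectors and then combines them via the Leibniz rule, whereas you apply the Leibniz rule directly to $\D_{e_\alpha}X$ before contracting, which is the same bookkeeping in a different order. All components, and your Killing-field consistency check for $T$, agree with Lemma \ref{general-deformation}.
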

\begin{proof} Recall that the deformation tensor of a vector field $V$ is defined as $\,^{(V)}\pi_{\a\b}:=\mathcal{L}_V \g_{\a\b}=\D_\a V_\b+\D_\b V_\a$. Using \eqref{formulas-christoffel}, we easily compute the components of $\piX$. 
\end{proof}

 We define the following vectorfields:
\begin{itemize}
\item the Morawetz vector field $Y=f(r) R$
\item the $r^p$-hierarchy vector field $Z=l(r) e_4$
\end{itemize}

As a corollary of Lemma \ref{general-deformation}, we compute the deformation tensor of all the vectorfields we will use in the estimates. 
\begin{corollary}\label{lemma:componentspiR} The components of the deformation tensor of $T$ all vanish identically. The only components of the deformation tensor of $R$ which do not vanish are the following:
\beaa
\piR_{34}&=&-\frac{4M}{r^2}+\frac{4Q^2}{r^3} , \qquad \piR_{AB} =\frac {2\Up}{ r}\slashed{g}_{AB}
\eeaa
The components of the deformation tensor of $Y=f(r)R$  which do not vanish are the following
\beaa
\piY_{33}=2\Up^2 f' , \quad \piY_{44}= 2 f', \quad \piY_{34}= \left(-\frac{4M}{r^2}+\frac{4 Q^2}{r^3}  \right)f  -2 f' \Up , \quad \piY_{AB}=\frac{2\Up}{r}f\slashed{g}_{AB}
\eeaa
The components of the deformation tensor of $Z=l(r) e_4$ which do not vanish are the following 
\beaa
\piZ_{34}= -2 l', \qquad \piZ_{33}=  4\Up l'+\left(-\frac{8M}{r^2}+\frac{8Q^2}{r^3} \right)l, \qquad \piZ_{AB}=\frac{2}{r}l \slashed{g}_{AB}
\eeaa
\end{corollary}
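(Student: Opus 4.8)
The plan is to obtain this corollary as a direct specialization of Lemma \ref{general-deformation}, by writing each of the four vectorfields in the form $X=a(r)e_3+b(r)e_4$ relative to the null frame $\mathscr{N}=\{e_3,e_4,e_A\}$ of \eqref{outgoing-null-pair} and then substituting.

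First I would record the coefficients. From \eqref{definition-T} and the definition of $R$ in Section \ref{definition-norms} one has $T=\frac12 e_3+\frac12\Up\, e_4$ and $R=-\frac12 e_3+\frac12\Up\, e_4$, so that $Y=f(r)R$ corresponds to $a=-\frac12 f$, $b=\frac12 f\Up$, while $Z=l(r)e_4$ corresponds to $a=0$, $b=l$. Differentiating in $r$ and using $\Up'=\frac{2M}{r^2}-\frac{2Q^2}{r^3}$ from \eqref{Omegab'}: for $T$ one gets $a'=0$, $b'=\frac12\Up'=\frac{M}{r^2}-\frac{Q^2}{r^3}$; for $R$ the same $a'=0$, $b'=\frac{M}{r^2}-\frac{Q^2}{r^3}$; for $Y$, $a'=-\frac12 f'$ and $b'=\frac12(f'\Up+f\Up')$; for $Z$, $a'=0$ and $b'=l'$.

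Next I would substitute these into the four formulas of Lemma \ref{general-deformation}. Since the $3A$ and $4A$ components vanish identically there, only the $44$, $34$, $33$, $AB$ components need checking. For $T$ every component vanishes: $\piT_{44}=\piT_{34}=0$ is immediate, $\piT_{33}=4\Up\cdot\frac12\Up'+(-\frac{8M}{r^2}+\frac{8Q^2}{r^3})\cdot\frac12\Up=0$, and $\piT_{AB}=(-\frac{\Up}{r}+\frac{\Up}{r})\slashed{g}_{AB}=0$, recovering the known fact that $T$ is Killing. For $R$ the $44$ and $33$ components vanish by the same cancellation, while $\piR_{34}=-2(\frac{M}{r^2}-\frac{Q^2}{r^3})+(\frac{4M}{r^2}-\frac{4Q^2}{r^3})(-\frac12)=-\frac{4M}{r^2}+\frac{4Q^2}{r^3}$ and $\piR_{AB}=\frac{2\Up}{r}\slashed{g}_{AB}$. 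The cases $Y$ and $Z$ are identical in spirit: substituting the coefficients above and expanding $b'$ yields $\piY_{44}=2f'$, $\piY_{33}=2\Up^2 f'$ (after the $\Up f\Up'$-terms cancel), $\piY_{34}=(-\frac{4M}{r^2}+\frac{4Q^2}{r^3})f-2\Up f'$, $\piY_{AB}=\frac{2\Up}{r}f\slashed{g}_{AB}$, and $\piZ_{34}=-2l'$, $\piZ_{33}=4\Up l'+(-\frac{8M}{r^2}+\frac{8Q^2}{r^3})l$, $\piZ_{AB}=\frac{2}{r}l\slashed{g}_{AB}$, which is exactly the assertion.

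There is no serious obstacle: the statement is pure bookkeeping on top of Lemma \ref{general-deformation}. The only things requiring attention are keeping the opposite signs of the $e_3$-coefficient in $T=\frac12(e_3+\Up e_4)$ versus $R=\frac12(-e_3+\Up e_4)$ straight, and expanding the $\Up'$ hidden inside $b'$ consistently; it is precisely the interplay of those $\Up'$-terms with the explicit $\frac{4M}{r^2}a$ term in $\piX_{34}$ and the $-\frac{8M}{r^2}b$ term in $\piX_{33}$ that forces the vanishing of $\piT$ and of $\piR_{33}$, $\piY_{33}$.
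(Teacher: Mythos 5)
Your proposal is correct and follows exactly the paper's own proof: both apply Lemma \ref{general-deformation} with $a(r),b(r)$ read off from $T=\frac12(e_3+\Up e_4)$, $R=\frac12(-e_3+\Up e_4)$, $Y=fR$, $Z=le_4$, and verify the cancellations driven by $\Up'=\frac{2M}{r^2}-\frac{2Q^2}{r^3}$. Your computations (including the vanishing of $\piT$, $\piR_{33}$ and $\piY_{33}$) all check out.
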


In order to apply Proposition \ref{prop:qf-tensorial'}, we compute the general expression for $ \QQ\c \piX$ for $X=a(r) e_3+b(r) e_4$.

 \begin{lemma}
 \label{le:QQcpidX}
 Let $\QQ$ be the energy momentum tensor as defined in \eqref{defenergymomentum}. For $X=a(r) e_3+b(r) e_4$, we have
 \beaa
 \QQ\c \piX  &=& \left( \Up a'- b'+\left(\frac{2M}{r^2}-\frac{2Q^2}{r^3} -\frac{2\Up}{r}\right)a +\frac 2 r b\right)|\nabb \Psi|^2+ \left( \Up b'+\left(-\frac{2M}{r^2}+\frac{2Q^2}{r^3} \right)b \right)|\nabb_4 \Psi|^2 \\
 &&-a'|\nabb_3 \Psi|^2+\left(\frac{2\Up}{r}a-\frac{2}{r}b\right) \LL_i[\Psi]+\left(\Up a'- b'+\left(\frac{2M}{r^2}-\frac{2Q^2}{r^3} \right)a \right)V_i |\Psi|^2
 \eeaa
  or equivalently
 \beaa
\QQ\c \piX  &=& \left( \Up a'- b'+\left(\frac{2M}{r^2}-\frac{2Q^2}{r^3} \right)a \right)|\nabb \Psi|^2+ \left( \Up b'+\left(-\frac{2M}{r^2}+\frac{2Q^2}{r^3} \right)b \right)|\nabb_4 \Psi|^2 -a'|\nabb_3 \Psi|^2\\
 &&+\left(-\frac{2\Up}{r}a+\frac{2}{r}b\right)\nabb_3 \Psi \c \nabb_4 \Psi+\left(\Up a'- b'+\left(\frac{2M}{r^2}-\frac{2Q^2}{r^3} +\frac{2\Up}{r}\right)a -\frac{2}{r}b\right)V_i |\Psi|^2
 \eeaa
 \end{lemma}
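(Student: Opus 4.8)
The plan is to expand $\QQ\cdot\piX=\QQ^{\mu\nu}\piX_{\mu\nu}$ directly in the null frame $\mathscr{N}=\{e_3,e_4,e_A\}$, reading off the components of $\piX$ from Lemma \ref{general-deformation} and those of $\QQ$ from \eqref{components-Q}. The computation is purely algebraic, and the key simplifying observation is that, by Lemma \ref{general-deformation}, the only non-vanishing components of $\piX$ are $\piX_{44}$, $\piX_{33}$, $\piX_{34}$ and the angular part $\piX_{AB}=\big(-\tfrac{2\Up}{r}a+\tfrac{2}{r}b\big)\slashed{g}_{AB}$, which is pure trace; in particular $\piX_{3A}=\piX_{4A}=0$, so all mixed angular contractions vanish.

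First I would record the index-raising rule in this frame. Since $\g^{34}=-\tfrac12$, $\g^{33}=\g^{44}=0$ and $\g^{AB}=\slashed{g}^{AB}$, raising a pair of null indices produces a factor $\tfrac14$, so that $\QQ^{33}=\tfrac14\QQ_{44}$, $\QQ^{44}=\tfrac14\QQ_{33}$, $\QQ^{34}=\tfrac14\QQ_{34}$ and $\QQ^{AB}=\slashed{g}^{AC}\slashed{g}^{BD}\QQ_{CD}$. Contracting against $\piX$ and using that $\piX_{AB}$ is proportional to $\slashed{g}_{AB}$ gives
\[
\QQ\cdot\piX=\tfrac14\QQ_{44}\,\piX_{33}+\tfrac14\QQ_{33}\,\piX_{44}+\tfrac12\QQ_{34}\,\piX_{34}+\Big(-\tfrac{2\Up}{r}a+\tfrac{2}{r}b\Big)\,\slashed{g}^{AB}\QQ_{AB}.
\]
Here $\slashed{g}^{AB}\QQ_{AB}=|\nabb\Psi|^2-\LL_i[\Psi]$ directly from the definition \eqref{defenergymomentum}, while $\QQ_{44}=|\nabb_4\Psi|^2$, $\QQ_{33}=|\nabb_3\Psi|^2$ and $\QQ_{34}=|\nabb\Psi|^2+V_i|\Psi|^2$ by \eqref{components-Q}.

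It then remains to substitute the explicit values of $\piX_{44},\piX_{33},\piX_{34}$ from Lemma \ref{general-deformation} and to collect the coefficients of $|\nabb\Psi|^2$, $|\nabb_4\Psi|^2$, $|\nabb_3\Psi|^2$, $\LL_i[\Psi]$ and $V_i|\Psi|^2$; this yields the first displayed identity. The second, equivalent, form is obtained by re-expanding the $\LL_i[\Psi]$ term via $\LL_i[\Psi]=-\nabb_3\Psi\cdot\nabb_4\Psi+|\nabb\Psi|^2+V_i|\Psi|^2$ (again read from \eqref{defenergymomentum} using $\g^{34}=-\tfrac12$) and re-collecting: the $-\tfrac{2\Up}{r}a+\tfrac{2}{r}b$ contribution leaves the $|\nabb\Psi|^2$ coefficient, a $\nabb_3\Psi\cdot\nabb_4\Psi$ term is produced, and the $V_i|\Psi|^2$ coefficient absorbs the corresponding shift. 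The whole argument is mechanical; the only points requiring care are the factor $\tfrac14$ attached to each raised pair of null indices and the sign bookkeeping in assembling the $V_i|\Psi|^2$ coefficient, so I anticipate no real obstacle.
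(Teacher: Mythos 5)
Your proposal is correct and follows essentially the same route as the paper: expand $\QQ\cdot\piX$ in the null frame with the factors $\tfrac14\QQ_{44}\piX_{33}+\tfrac14\QQ_{33}\piX_{44}+\tfrac12\QQ_{34}\piX_{34}$ plus the angular trace, use $\slashed{g}^{AB}\QQ_{AB}=|\nabb\Psi|^2-\LL_i[\Psi]$ (the paper obtains this via the trace identity \eqref{trace-Q}, you read it off the definition — same content), substitute the components from Lemma \ref{general-deformation} and \eqref{components-Q}, and pass to the second form via $\LL_i[\Psi]=-\nabb_3\Psi\cdot\nabb_4\Psi+|\nabb\Psi|^2+V_i|\Psi|^2$. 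All the inputs and bookkeeping factors are correctly identified; the remaining steps are the purely mechanical collection of coefficients, exactly as in the paper.
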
 
 \begin{proof} The computation is easily deduced using \eqref{trace-Q} and \eqref{components-Q}. 
 \end{proof}

 \begin{corollary}\label{QQcpiY} For $Y=f(r) R$, we have 
 \beaa
 \QQ \c \piY&=& 2f\left( \frac{1}{r}-\frac{3M}{r^2}+\frac{2Q^2}{r^3}  \right)|\nabb \Psi|^2+2 f'|R \Psi|^2+\left(-\frac{2\Up}{r}f-\Up f'\right) \LL_i[\Psi]\\
 &&+\left(-\frac{2M}{r^2}+\frac{2Q^2}{r^3}\right) fV_i |\Psi|^2
 \eeaa
 and for $Z=l(r) e_4$ we have
 \beaa
 \QQ \c \piZ&=& \left(- l' +\frac 2 r l\right)|\nabb \Psi|^2+ \left( \Up l'+\left(-\frac{2M}{r^2}+\frac{2Q^2}{r^3} \right)l \right)|\nabb_4 \Psi|^2-\frac{2}{r}l \LL_i[\Psi]- l' V_i |\Psi|^2
 \eeaa
 \end{corollary}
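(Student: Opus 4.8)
The plan is to obtain both identities as immediate specializations of Lemma \ref{le:QQcpidX}, which already gives a closed formula for $\QQ\c\piX$ whenever $X=a(r)e_3+b(r)e_4$. So the only work is to read off the correct coefficients $a,b$ for $X=Y$ and $X=Z$, substitute, and simplify; no further integration by parts or structural input is needed.

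First, for $Y=f(r)R$, I would use the representation $R=\frac 1 2(-e_3+\Up e_4)$, so that $Y$ has the admissible form with $a(r)=-\tfrac 1 2 f$ and $b(r)=\tfrac{\Up}{2}f$. Differentiating and inserting $\Up'=\frac{2M}{r^2}-\frac{2Q^2}{r^3}$ from \eqref{Omegab'} gives $a'=-\tfrac 1 2 f'$ and $b'=\big(\frac{M}{r^2}-\frac{Q^2}{r^3}\big)f+\tfrac{\Up}{2}f'$. Plugging these into the first identity of Lemma \ref{le:QQcpidX} produces a raw expression in $|\nabb\Psi|^2$, $|\nabb_4\Psi|^2$, $|\nabb_3\Psi|^2$, $\LL_i[\Psi]$ and $V_i|\Psi|^2$. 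The cleaning step is to recombine the $|\nabb_3\Psi|^2$ and $|\nabb_4\Psi|^2$ terms (whose coefficients should come out to $\tfrac 1 2 f'$ and $\tfrac 1 2 \Up^2 f'$) into $2f'|R\Psi|^2$ using $R\Psi=\tfrac 1 2(-\nabb_3\Psi+\Up\nabb_4\Psi)$; the leftover cross term $\nabb_3\Psi\c\nabb_4\Psi$ I would eliminate via $\nabb_3\Psi\c\nabb_4\Psi=-\LL_i[\Psi]+|\nabb\Psi|^2+V_i|\Psi|^2$, which follows from $\QQ_{34}=|\nabb\Psi|^2+V_i|\Psi|^2$ in \eqref{components-Q} together with the definition \eqref{defenergymomentum} of $\QQ$ (recalling $\g_{34}=-2$). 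After substituting $\frac{2\Up}{r}=\frac 2 r-\frac{4M}{r^2}+\frac{2Q^2}{r^3}$ to collect the rational functions of $r$, the stated formula for $\QQ\c\piY$ should drop out.

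For $Z=l(r)e_4$ the situation is simpler, since $e_4$ is itself a frame vector and no recombination of null derivatives is required: I would just take $a(r)=0$, $b(r)=l$ in Lemma \ref{le:QQcpidX} and read off the coefficients directly. The $|\nabb_3\Psi|^2$ coefficient $-a'$ then vanishes, and the remaining coefficients $-l'+\tfrac 2 r l$ (for $|\nabb\Psi|^2$), $\Up l'+\big(-\frac{2M}{r^2}+\frac{2Q^2}{r^3}\big)l$ (for $|\nabb_4\Psi|^2$), $-\tfrac 2 r l$ (for $\LL_i[\Psi]$) and $-l'$ (for $V_i|\Psi|^2$) are exactly those in the statement.

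I do not expect a genuine obstacle here — it is a routine substitution into an already-proven identity. The only mildly delicate point is the bookkeeping in the $Y$ case, namely keeping the Lagrangian term $\LL_i[\Psi]$ and the cross term $\nabb_3\Psi\c\nabb_4\Psi$ consistent while re-expressing the $e_3,e_4$ derivatives through $R\Psi$. As a cross-check I would also contract the components of $\QQ$ from \eqref{components-Q} directly with the explicit deformation tensors $\piY$ and $\piZ$ recorded in Corollary \ref{lemma:componentspiR}, and confirm that both routes give the same answer.
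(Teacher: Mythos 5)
Your proposal is correct and follows exactly the paper's own argument: both identities are obtained by substituting $a=-\tfrac12 f$, $b=\tfrac{\Up}{2}f$ (resp. $a=0$, $b=l$) into Lemma \ref{le:QQcpidX}, and in the $Y$ case recombining the $|\nabb_3\Psi|^2$, $|\nabb_4\Psi|^2$ and cross terms into $2f'|R\Psi|^2$ via $\LL_i[\Psi]=-\nabb_3\Psi\c\nabb_4\Psi+|\nabb\Psi|^2+V_i|\Psi|^2$. The coefficient bookkeeping you describe checks out against the stated formulas.
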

  
 We recall here an useful lemma.
  \begin{lemma}
\label{calculation:square-radial}
If  $f=f(r)$ then
\beaa
\Box_\g f(r) &=& r^{-2} \pr_r \left( r^2 \Up \pr_rf\right) 
\eeaa
\end{lemma}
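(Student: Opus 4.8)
The plan is to compute $\Box_\g f=\g^{\mu\nu}\D_\mu\D_\nu f$ directly in the null frame $\mathscr{N}=\{e_3,e_4,e_A\}$ of \eqref{outgoing-null-pair}. Since the inverse metric has components $\g^{34}=\g^{43}=-\tfrac12$, $\g^{33}=\g^{44}=0$ and $\g^{AB}=\slashed g^{AB}$, this reduces to
\[
\Box_\g f=-\tfrac12\big(\D_3\D_4 f+\D_4\D_3 f\big)+\slashed g^{AB}\D_A\D_B f,
\]
and everything then follows from the Reissner--Nordstr\"om covariant derivative relations recorded just before Lemma \ref{general-deformation}, together with $e_3(r)=-\Up$, $e_4(r)=1$ and $e_A(r)=0$ from \eqref{derivatives-r}, and the identity $\D_3 e_4=\Up' e_4$ coming from \eqref{Omegab'}.

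The key steps are the following. First, for the angular term: since $f=f(r)$ one has $\D_B f=e_B(f)=0$, so $\D_A\D_B f=-(\D_{e_A}e_B)f$; substituting $\D_A e_B=-\tfrac{\Up}{2r}\slashed g_{AB}e_4+\tfrac{1}{2r}\slashed g_{AB}e_3$ and using the derivatives of $r$ gives $\D_A\D_B f=\tfrac{\Up}{r}\slashed g_{AB}f'$, hence $\slashed g^{AB}\D_A\D_B f=\tfrac{2\Up}{r}f'$. Second, for the null terms: $\D_4 f=e_4(f)=f'$, and using $\D_4 e_3=0$, $\D_3 e_4=\Up' e_4$ one computes $\D_3\D_4 f=\D_4\D_3 f=-\Up f''-\Up' f'$. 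Adding up,
\[
\Box_\g f=\Up f''+\Up' f'+\tfrac{2\Up}{r}f'=r^{-2}\pr_r\big(r^2\Up\,\pr_r f\big),
\]
which is the claimed formula. (Equivalently, and even more quickly, one can invoke the coordinate expression $\Box_\g f=\tfrac{1}{\sqrt{-\det\g}}\pr_\mu\!\big(\sqrt{-\det\g}\,\g^{\mu\nu}\pr_\nu f\big)$ in the coordinates $(t,r,\th,\phi)$ of \eqref{RNintro-1}: there $\sqrt{-\det\g}=r^2\sin\th$ and $\g^{rr}=\Up$, so for $f=f(r)$ only the $r$-derivative survives and the identity is immediate.)

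There is no genuine obstacle here; the only point requiring care is the bookkeeping of the connection terms $\D_A e_B$ and $\D_3 e_4$, so that the contributions $\tfrac{2\Up}{r}f'$ and $\Up' f'$ are accounted for. It is precisely these terms that promote the naive $\Up f''$ to the divergence form $r^{-2}\pr_r(r^2\Up\,\pr_r f)$.
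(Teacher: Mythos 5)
Your computation is correct and follows essentially the same route as the paper: the paper simply applies its general null-frame formula for $\Box_\g$ (Lemma \ref{square-RN}) with the Reissner--Nordstr\"om values $\ka=\tfrac{2}{r}$, $\kab=-\tfrac{2\Up}{r}$, $\om=0$, $\omb=\tfrac{M}{r^2}-\tfrac{Q^2}{r^3}$, whereas you re-derive that same decomposition directly from the covariant derivative relations; both yield $\Up f''+\Up' f'+\tfrac{2\Up}{r}f'$ and hence the divergence form.
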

\begin{proof}
Using Lemma \ref{square-RN}, and the values \eqref{value-ka-om} and \eqref{derivatives-r}, we  deduce   for a function $f=f(r)$, 
 \beaa
 \Box_\g  f &=& -\frac 1 2  (\nabb_3 (f') - \nabb_4 (\Up f')) +(\omb-\frac 1 2 \kab ) f' -(\om-\frac 1 2 \ka)  \Up f'\\
 &=& \frac 1 2  (2\Up f'' + \Up' f') +\left(\frac{M}{r^2}-\frac{Q^2}{r^3}+\frac{\Up}{r} \right) f' +\frac {  \Up}{r} f'
 \eeaa
 and using \eqref{Omegab'}, we finally obtain
  \beaa
 \Box_\g  f  &=& \Up f''  +\left(\frac{2M}{r^2}-\frac{2Q^2}{r^3}+\frac{2\Up}{r} \right) f' =r^{-2} \pr_r \left( r^2 \Up \pr_rf\right) 
 \eeaa
 as desired.
 \end{proof}

 \subsection{Morawetz estimates}

We will derive now the main identity to apply the vector field method to $Y=f(r)R$. We choose $w$ as a function of $f$.
 \begin{proposition}
  \label{prop-ident-Mor1}
 Let  $Y=f(r) R$  and   $w=  r^{-2}\Up\pr_r \left( r^2 f\right)  $.  Then for a solution $\Psi$ to the equation \eqref{general-equation}, we obtain
 \beaa
 \EE[Y, w](\Psi)&=& f\left(\frac{r^2-3M r +2 Q^2 }{r^3}\right)|\nabb \Psi|^2+ f'|R \Psi|^2+\left( -\frac 1 2 \pr_r\left(\Up V_i\right)f-\frac 1 4  r^{-2} \pr_r \left( r^2 \Up \pr_rw\right) \right)|\Psi|^2 
 \eeaa
  \end{proposition}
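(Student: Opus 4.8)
The plan is to specialise the general divergence identity \eqref{le:divergPP-gen-EE} to the vectorfield $X=Y=f(r)R$ with $M=0$, and to observe that the prescribed weight $w=r^{-2}\Up\pr_r(r^2 f)$ is exactly the one that removes the Lagrangian $\LL_i[\Psi]$ from the resulting expression. With $M=0$ the identity \eqref{le:divergPP-gen-EE} reads
\[
\EE[Y, w](\Psi)=\frac 1 2 \QQ\c\piY+\left(-\frac 1 2 Y(V_i)-\frac 1 4 \Box_\g w\right)|\Psi|^2+\frac 1 2 w\,\LL_i[\Psi],
\]
so everything reduces to inserting the already-computed expression for $\QQ\c\piY$.

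First I would substitute $\QQ\c\piY$ from Corollary \ref{QQcpiY}. The angular term $2f\left(\frac 1 r-\frac{3M}{r^2}+\frac{2Q^2}{r^3}\right)|\nabb\Psi|^2$ contributes, after the factor $\frac 1 2$, precisely $f\,\frac{r^2-3Mr+2Q^2}{r^3}|\nabb\Psi|^2$, and the term $2f'|R\Psi|^2$ contributes $f'|R\Psi|^2$, matching the two non-potential terms in the statement. The coefficient of $\LL_i[\Psi]$ in $\frac 1 2 \QQ\c\piY$ is $-\frac{\Up}{r}f-\frac{\Up}{2}f'$; adding the contribution $\frac 1 2 w\,\LL_i[\Psi]$ shows that this term vanishes if and only if $w=\frac{2\Up}{r}f+\Up f'=\Up\big(f'+\frac 2 r f\big)=r^{-2}\Up\pr_r(r^2 f)$, which is the stated choice of $w$.

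It then remains to collect the $|\Psi|^2$ terms. Since $V_i=V_i(r)$ and $R=\Up\pr_r$ in $(t,r)$-coordinates, $Y(V_i)=f\Up V_i'$, hence $-\frac 1 2 Y(V_i)=-\frac 1 2 f\Up V_i'$. From $\frac 1 2 \QQ\c\piY$ the potential contribution is $\frac 1 2 \left(-\frac{2M}{r^2}+\frac{2Q^2}{r^3}\right)fV_i=-\frac 1 2 \Up' f V_i$ by \eqref{Omegab'}. Adding these two pieces gives $-\frac 1 2 f(\Up' V_i+\Up V_i')=-\frac 1 2 f\,\pr_r(\Up V_i)$, while Lemma \ref{calculation:square-radial} turns $-\frac 1 4 \Box_\g w$ into $-\frac 1 4 r^{-2}\pr_r(r^2\Up\pr_r w)$. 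Assembling all the terms yields exactly the claimed identity.

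The computation carries no real obstacle: once Corollary \ref{QQcpiY} and Lemma \ref{calculation:square-radial} are available it is purely algebraic. The only two points needing a moment's care are the verification $r^{-2}\pr_r(r^2 f)=f'+\frac 2 r f$, which is what makes $w$ cancel the Lagrangian, and the sign bookkeeping in rewriting $-\frac{2M}{r^2}+\frac{2Q^2}{r^3}=-\Up'$ when combining the potential terms.
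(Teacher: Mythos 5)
Your proof is correct and follows the paper's own argument exactly: specialize \eqref{le:divergPP-gen-EE} with $M=0$, insert $\QQ\c\piY$ from Corollary \ref{QQcpiY}, observe the stated $w$ kills the $\LL_i[\Psi]$ term, combine the two potential contributions into $-\frac 1 2 f\,\pr_r(\Up V_i)$ via $-\frac{2M}{r^2}+\frac{2Q^2}{r^3}=-\Up'$, and apply Lemma \ref{calculation:square-radial} to $\Box_\g w$. No gaps.
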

 \begin{proof}
By definition \eqref{le:divergPP-gen-EE} and according to    Corollary \ref{QQcpiY}  we have, 
  \beaa
 \EE[Y, w](\Psi)&=& f\left( \frac{1}{r}-\frac{3M}{r^2}+\frac{2Q^2}{r^3}  \right)|\nabb \Psi|^2+ f'|R \Psi|^2+\left(\frac 12  w -\frac{\Up}{r}f-\frac{\Up}{2} f'\right) \LL_i[\Psi]\\
 &&+\left( - \frac 1 2 Y( V_i )+\left(-\frac{M}{r^2}+\frac{Q^2}{r^3}\right) fV_i -\frac 1 4   \Box_\g  w \right)|\Psi|^2 
 \eeaa
 and observe that the choice of $w$ cancels out the term involving $\LL_i[\Psi]$. Since the potentials are scalar functions of $r$ only, writing $Y(V_i)=\Up f\pr_rV_i$, we obtain
  \beaa
 \EE[Y, w](\Psi)&=& f\left( \frac{1}{r}-\frac{3M}{r^2}+\frac{2Q^2}{r^3}  \right)|\nabb \Psi|^2+ f'|R \Psi|^2\\
 &&+\left( \left(-\frac {\Up}{2} \pr_rV_i+\left(-\frac{M}{r^2}+\frac{Q^2}{r^3}\right) V_i \right)f-\frac 1 4   \Box_\g  w \right)|\Psi|^2 
 \eeaa
 Observing that $-\frac{M}{r^2}+\frac{Q^2}{r^3}=-\frac 12 \Up'$ and using Lemma \ref{calculation:square-radial} to write $\Box_\g  w$, we get the desired expression.
 \end{proof}
 
\begin{remark} Observe that the coefficient of the first term of $\EE[fR, w](\Psi)$ is given by $\frac{r^2-3M r +2 Q^2 }{r^3}$, whose zero gives the $r$-value of photon sphere in Reissner-Nordstr{\"o}m, $r_P=\frac{3M + \sqrt{9M^2-8Q^2}}{2}$. This term has therefore a degeneracy in the trapping region.
\end{remark}

We separate the expression $\EE[Y, w](\Psi)$ into a part involving and one not involving the potential $V_i$:
\beaa
\EE[Y, w](\Psi)&=&\EE_0[Y, w](\Psi)+\EE_{V_i}[Y, w](\Psi), \\
\EE_0[Y, w](\Psi)&=& f\left(\frac{r^2-3M r +2 Q^2 }{r^3}\right)|\nabb \Psi|^2+ f'|R \Psi|^2-\frac 1 4  r^{-2} \pr_r \left( r^2 \Up \pr_rw\right)  |\Psi|^2, \\
\EE_{V_i}[Y, w](\Psi)&=&-\frac 1 2 \pr_r\left(\Up V_i\right)f |\Psi|^2
\eeaa
 The goal is   to  show that there  exist choices of $f, w$ verifying the   condition
  $w= r^{-2}\Up\pr_r \left( r^2 f\right)  $     of Proposition \ref{prop-ident-Mor1}   which make  $ \EE[Y, w](\Psi)$            positive definite.

\subsubsection{Construction of the function $f(r)$}
    
     We  first look  for choices     of $f$ and $w$ such that   the coefficients appearing in $\EE_0[Y, w](\Psi)$ are  non-negative. We immediately observe the following facts.
     \begin{enumerate}
      \item Observe that $r^2-3M r +2 Q^2 $ is non-negative for $r\geq r_P$ and non-positive for $r\leq r_P$. Therefore, the coefficient of $|\nabb \Psi|^2$ being non-negative implies that $f\leq 0$ in $r<r_P$ and $f \geq 0$ in $r>r_P$, therefore $f=0$ at $r=r_P$. 
 \item    The coefficient of $|R\Psi|^2$ being non-negative implies that the function $f$ must be  increasing  as a function of $r$. 
 \item To ensure that the coefficient of $|\Psi|^2$ is non-negative we need to choose $w$ such that the following holds: $\pr_r \left( r^2 \Up \pr_rw\right)\leq 0$.  
 \end{enumerate}
 
Once $w$ is chosen, the function $f$ can be determined by the condition $w=  r^{-2}\Up\pr_r \left( r^2 f\right)  $. In particular we   are naturally led to  define
 \bea
 \label{eq:John2}
  r^2 f= \int_{r_P} ^r  \frac{r^2}{\Up} w. 
   \eea
 Recalling that $\Up \geq 0$ in the exterior region, to ensure that 1. is satisfied we only need a non-negative $w$.  
We define $w$ based on the argument given in \cite{Stogin} and applied in \cite{stabilitySchwarzschild}. 
 \begin{lemma}
 \label{le:John2'}
 
  There exists $r_*:=2M +\sqrt{4M^2-3 Q^2} > r_P$  such   that, defining the function $w$ as
 \bea \label{definitionw}
 w=\begin{cases}
 & \frac{2}{r_*}  \Up(r_*)=: w_*  \qquad  \quad  \mbox{if}  \quad    r< r_*\\
  & \frac{2}{r}  \Up(r)=\frac{2}{r} (1-\frac{2M}{r}+\frac{Q^2}{r^2})  \qquad \qquad \qquad\mbox{if} \, \,\,\quad    r\ge r_*
 \end{cases}
 \eea
 then $w$ is               $C^1$ and  non-negative. In addition, in the subextremal Reissner-Nordstr{\"o}m spacetime for $|Q|<M$, we have $\pr_r \left( r^2 \Up \pr_rw\right)\leq 0$.
  \end{lemma}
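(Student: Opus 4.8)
The plan is to verify the three claimed properties of $w$ directly from the explicit formula \eqref{definitionw}, treating the piecewise definition carefully at the junction $r=r_*$. First I would define $r_*$ as the stated root and check that $r_* > r_P$: writing $r_* = 2M + \sqrt{4M^2 - 3Q^2}$ and $r_P = \tfrac{3M + \sqrt{9M^2 - 8Q^2}}{2}$, both are increasing-type expressions in $M$ and can be compared directly; in the Schwarzschild limit $Q=0$ one has $r_* = 4M > 3M = r_P$, and for $|Q|<M$ the inequality persists by a monotonicity/continuity argument (or by squaring the relevant difference). The key motivation for this particular $r_*$ is that it is the value of $r$ at which the function $r \mapsto \tfrac{2}{r}\Up(r)$ attains its maximum on the exterior region — equivalently, the unique root of $\pr_r\!\left(\tfrac{2}{r}\Up(r)\right)=0$ with $r > r_P$ — so that the "frozen" constant value $w_* = \tfrac{2}{r_*}\Up(r_*)$ on $\{r<r_*\}$ matches both the value and the derivative of the outer branch at $r=r_*$, giving the $C^1$ gluing for free.

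Second I would establish non-negativity: on $\{r \ge r_*\}$ we have $w = \tfrac 2r \Up(r) \ge 0$ since $\Up \ge 0$ throughout the exterior $r \ge r_{\mathcal H} = r_+$ and $r_* > r_+$; on $\{r < r_*\}$ the function is the positive constant $w_*$. Third, and this is where the real content lies, I would verify $\pr_r\!\left(r^2 \Up\, \pr_r w\right) \le 0$ on each branch. On $\{r < r_*\}$ this is immediate because $w$ is constant, so $\pr_r w = 0$ and the expression vanishes. On $\{r > r_*\}$ one computes $\pr_r w = \pr_r\!\left(\tfrac 2r \Up\right)$ explicitly using $\Up = 1 - \tfrac{2M}{r} + \tfrac{Q^2}{r^2}$, then multiplies by $r^2\Up$ and differentiates again; the resulting polynomial-in-$1/r$ expression must be shown to be $\le 0$ for all $r > r_*$. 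This is the step requiring genuine care: one reduces it to a polynomial inequality in $r$ (after clearing powers of $r$), and the condition $|Q| < M$ together with $r > r_* \ge r_+$ should force the correct sign. I expect this polynomial sign analysis — keeping track of the interplay between $M$ and $Q$ and exploiting that $r_*$ is precisely the critical point of the outer branch — to be the main obstacle; the idea of \cite{Stogin} presumably supplies the algebraic trick that makes the sign manifest, e.g. factoring out $(r - r_*)$ or an equivalent quantity that vanishes to the right order at the junction.

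Finally I would record that, with this choice, the distributional derivative $\pr_r(r^2\Up\,\pr_r w)$ has no singular part at $r=r_*$ precisely because of the $C^1$ matching, so the inequality $\pr_r(r^2\Up\,\pr_r w) \le 0$ holds in the classical sense on each open branch and no negative Dirac mass appears at the gluing point. This is exactly the sign needed to make the coefficient of $|\Psi|^2$ in $\EE_0[Y,w](\Psi)$ from Proposition \ref{prop-ident-Mor1} non-negative, with $f$ then defined through \eqref{eq:John2}; and since $w \ge 0$, the formula \eqref{eq:John2} produces an $f$ that is negative for $r < r_P$, vanishes at $r = r_P$, and is positive and increasing for $r > r_P$, matching conditions 1–2 above.
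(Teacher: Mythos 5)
Your plan reproduces the paper's proof step for step: $r_*$ is taken to be the unique critical point of $r\mapsto \tfrac{2}{r}\Up(r)$ beyond the photon sphere (the larger root of $r^2-4Mr+3Q^2=0$, since $\pr_r w=-2\,\tfrac{r^2-4Mr+3Q^2}{r^4}$ on the outer branch), which makes the $C^1$ gluing automatic; non-negativity is immediate from $\Up\ge 0$ on the exterior; and the sign condition is trivial on the constant branch. The one step you announce but do not execute --- showing $\pr_r\!\left(r^2\Up\,\pr_r w\right)\le 0$ for $r>r_*$ --- is exactly where the only real content of the lemma sits, so for completeness: the paper computes
$-\pr_r\!\left(r^2\Up\,\pr_r w\right)=\tfrac{4}{r^5}\big(3Mr^3-(8M^2+4Q^2)r^2+15MQ^2r-6Q^4\big)$
and checks positivity not by factoring out $(r-r_*)$ (as you speculate) but by splitting the cubic as $r\big(3Mr^2-(8M^2+4Q^2)r+\tfrac{32}{3}MQ^2\big)+\tfrac{13}{3}MQ^2r-6Q^4$; the quadratic factor has discriminant $16(2M^2-Q^2)^2$ and larger root $\tfrac{8M}{3}$, hence is positive for $r\ge r_*>3M$, and the linear remainder is likewise positive there. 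With that computation inserted, your plan closes; the remaining remarks (no singular part at the junction, $r_*>r_P$, the consequences for $f$) are correct and consistent with the surrounding text.
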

\begin{proof}
Defining $w$ as in \eqref{definitionw}, we have that for $r> r_*$,   
   \bea
   \label{eq:sign-rV}
   \pr_r w=2\pr_r( \frac 1 r \Up)  = 2\left(- r^{-2}\Up+ r^{-1}\Up'\right)=-2\frac{r^2-4M r +3 Q^2}{r^4}
   \eea
   Since $w$ is  constant  for $r< r_*$ to ensure that   $w$  is  $C^1$ we need to  have  $\pr_r w(r_*)=0 $, i.e.   $r_*=2M +\sqrt{4M^2-3 Q^2} $. The function $w$ is clearly non-negative in the exterior region. 
   
For  $r>r_*$, 
   \beaa
  - \pr_r \left( r^2 \Up \pr_rw\right)&=&2\pr_r \left( \Up \frac{r^2-4M r +3 Q^2}{r^2}\right)= 2  \Up'\left(1-\frac{4M}{r}+\frac{3Q^2}{r^2}\right)+ 2  \Up \left(\frac{4M}{r^2}-\frac{6Q^2}{r^3}\right)\\
   &=&4\Big(\frac{3M}{r^2}-\frac{8M^2}{r^3}-\frac{4Q^2}{r^3}+\frac{15 M Q^2}{r^4}-\frac{6 Q^4}{r^5}\Big) \\
   &=&\frac{4}{r^5}\left(r\left(3M r^2 -(8M^2+4Q^2)r+\frac{32}{3}MQ^2 \right)+\frac{13}{3}MQ^2r-6 Q^4 \right).
   \eeaa
   and the two parts of the polynomial are clearly positive for $r \ge r_*$ in the subextremal range $|Q|<M$.
   \end{proof}

Once $w$ is defined as in Lemma \ref{le:John2'}, we can explicitly evaluate  $f$. For simplicity, we just evaluate $f$ for $r \ge r_*$. Denoting by $C_*$, with $C_*>0$,  the value for $r^2f$  ar $r=r_*$, i.e. $C_*=(r^2 f)(r_*)= \int_{r_P} ^{r_*}  \frac{r^2}{\Up}  w   dr\leq 2 r_* (r_*- r_P) $, we have
\beaa
 r^2f(r)&=&   \int_{r_P} ^r  \frac{r^2}{\Up}  w   dr=            ( r^2 f)(r_*)+  \int_{r_*} ^r 2 \la  d\la=r^2+ C_* -r_*^2
\eeaa  
Since $r_*^2-C_*>0$ in the subextremal range, the function is increasing for $r\ge r_*$. For $r < r_*$, using \eqref{eq:John2} written as $\pr_r(r^2 f)=\frac{r^2}{\Up} w$, we have 
\beaa
 r^2 \pr_r \left(r\frac{w}{\Up}\right)&=&r^2 \left[ r^{-1} \pr_r \left(\frac{r^2}{ \Up} w \right)- r^{-2} \frac{r^2}{ \Up} w \right]=  r  \pr_r \left(\frac{r^2}{ \Up} w \right)- \frac{r^2}{ \Up} w\\
 &=& r \pr_r\pr_r(r^2 f)-\pr_r(r^2 f) = 3 r^2 f'+ r^3 f''=\pr_r( r^3 f')
 \eeaa
Using Lemma \ref{le:John2'} and \eqref{eq:sign-rV}, we have
   \beaa
    \pr_r(r^3 f' )=r^2 \pr_r \left(r\frac{w}{\Up}\right)   = w_* r^2\pr_r\left (\frac{r}{ \Up}\right) \le 0
    \eeaa
We deduce  that  $r^3 f'$  has a minimum at $r=r_*$.       Thus, for all $r\in[r_\HH, r_*]$,
   \beaa
   r^3 f'(r)\ge    r_*^3 f'(r_*) =2(r_*^2-C_*) > 0
   \eeaa
This proves condition 2., and together with Lemma \ref{le:John2'}, this proves that $\EE_0[Y,w](\Psi)$ is positive definite.

 We  summarize the results in the following.
  
   \begin{proposition}
   \label{prop:pre-mor1}
   There exist functions $f\in C^2 , w\in C^1 $   verifying the relation    $w=  r^{-2}\Up\pr_r \left( r^2 f\right)  $ and such that for some $C=C(Q, M)>0$ and all $r\ge r_\HH$,
\beaa
 f\left(\frac{r^2-3M r +2 Q^2 }{r^3}\right)\ge   C\frac{\left(r^2-3M r +2 Q^2\right)^2 }{r^5}, \qquad   f'(r)\ge \frac{C}{r^3}, \qquad \frac 1 4  r^{-2} \pr_r \left( r^2 \Up \pr_rw\right)\leq 0
\eeaa
In particular $\EE_0[Y, w](\Psi)$ is positive definite for $r\ge r_\HH$.
   \end{proposition}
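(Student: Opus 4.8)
The plan is to make explicit the three functions $w$, $f$ and their interrelation, and to verify each of the three stated inequalities separately, working always in the subextremal regime $|Q| < M$. First I would take $w$ exactly as in Lemma \ref{le:John2'}: $w = w_* = \frac{2}{r_*}\Up(r_*)$ for $r < r_*$ and $w = \frac 2 r \Up(r)$ for $r \ge r_*$, where $r_* = 2M + \sqrt{4M^2 - 3Q^2} > r_P$. That lemma already gives $w \in C^1$, $w \ge 0$ on the exterior, and — crucially — $\pr_r(r^2 \Up \pr_r w) \le 0$, which is precisely the third inequality. Then I would \emph{define} $f$ by the relation forced by Proposition \ref{prop-ident-Mor1}, namely $r^2 f(r) = \int_{r_P}^r \frac{\la^2}{\Up(\la)} w(\la)\, d\la$, so that $w = r^{-2}\Up \pr_r(r^2 f)$ holds by construction. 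Since $\Up \ge 0$ on the exterior and $w \ge 0$, this $f$ vanishes at $r = r_P$, is $\le 0$ for $r < r_P$ and $\ge 0$ for $r > r_P$; moreover $f \in C^2$ because $w \in C^1$ and $\Up$ is smooth and positive away from the horizon. (Near the horizon one checks the integral and its derivatives extend regularly, using that $w$ is constant there.)

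Next I would establish the lower bound on $f'$. The key computation, already carried out in the excerpt, is $\pr_r(r^3 f') = r^2 \pr_r\!\big(r w / \Up\big)$. For $r < r_*$ this equals $w_* r^2 \pr_r(r/\Up)$, and using $\pr_r(r/\Up) = (r^2 - 4Mr + 3Q^2)/r^2 \cdot (\text{something} \le 0)$ — more precisely the sign computation $\pr_r w = -2(r^2 - 4Mr + 3Q^2)/r^4$ from \eqref{eq:sign-rV} combined with $r_* $ being the larger root of $r^2 - 4Mr + 3Q^2$ — one gets $\pr_r(r^3 f') \le 0$ on $[r_\HH, r_*]$, so $r^3 f'$ attains its minimum over $[r_\HH,r_*]$ at $r = r_*$. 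For $r \ge r_*$ one computes directly from $r^2 f(r) = r^2 + C_* - r_*^2$ (with $C_* = (r^2 f)(r_*) \le 2 r_*(r_* - r_P)$) that $r^3 f'(r) = 2(r_*^2 - C_*) > 0$, the positivity being exactly where $|Q| < M$ enters (one needs $C_* < r_*^2$, i.e. $2 r_*(r_* - r_P) < r_*^2$, i.e. $r_* < 2 r_P$, which can be checked in the subextremal range, or alternatively one bounds $C_*$ more carefully). Hence $r^3 f' \ge 2(r_*^2 - C_*) =: c_0 > 0$ everywhere, giving $f'(r) \ge c_0 / r^3 \gtrsim C/r^3$.

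For the first inequality I would argue that $f$ and the factor $(r^2 - 3Mr + 2Q^2)/r^3$ have the same sign (both vanish only at $r_P$, both change sign there), so their product is $\ge 0$ and we only need a quantitative lower bound. Away from a neighbourhood of $r_P$ this is immediate since $|f|$ is bounded below there and $f$ has definite sign; near $r_P$ one uses $f(r_P) = 0$ together with $f'(r_P) \ge c_0/r_P^3 > 0$ from the previous step to conclude $f(r) \gtrsim (r - r_P)$ for $r$ near $r_P$, while $r^2 - 3Mr + 2Q^2 = (r - r_P)(r - r_P')$ with $r_P' $ the smaller root, so $(r^2-3Mr+2Q^2)/r^3 \gtrsim (r - r_P)$ there too; multiplying gives the claimed $\gtrsim (r^2 - 3Mr + 2Q^2)^2/r^5$. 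One also checks the estimate survives as $r \to \infty$ (where $f \sim 1$ and $(r^2 - 3Mr+2Q^2)/r^3 \sim 1/r$, so the left side $\sim 1/r$ and the right side $\sim 1/r$, consistent). Finally, the third inequality $\frac14 r^{-2}\pr_r(r^2 \Up \pr_r w) \le 0$ is Lemma \ref{le:John2'} verbatim. Combining the three bounds in the formula of Proposition \ref{prop-ident-Mor1} shows $\EE_0[Y,w](\Psi)$ controls $|\nabb\Psi|^2$, $|R\Psi|^2$ and $|\Psi|^2$ with nonnegative weights, i.e. it is positive definite for $r \ge r_\HH$. I expect the main obstacle to be the bookkeeping needed to show $C_* < r_*^2$ (equivalently the strict positivity of $r^3 f'$ for $r \ge r_*$) uniformly in the subextremal range, and the matching of the near-$r_P$ expansions so that the first inequality holds with a single constant $C$ across the trapping region, the horizon region, and the far region simultaneously.
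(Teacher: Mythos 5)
Your construction is exactly the paper's: the same $w$ from Lemma \ref{le:John2'}, the same integral definition $r^2 f=\int_{r_P}^r \la^2\,\Up^{-1} w\,d\la$, and the same monotonicity argument $\pr_r(r^3 f')\le 0$ on $[r_\HH,r_*]$ forcing $r^3 f'\ge 2(r_*^2-C_*)>0$ everywhere, with the explicit form $r^2f=r^2+C_*-r_*^2$ for $r\ge r_*$. The one inaccuracy is your parenthetical claim that $f$ extends regularly to the horizon --- in fact $f$ blows up logarithmically there, as the paper itself notes before Proposition \ref{prop:pre-mor3} --- but since $f$ and $r^2-3Mr+2Q^2$ are both negative below $r_P$ this only strengthens the stated one-sided inequalities, so your proof stands.
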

   
   \subsubsection{Improved lower bound away from the horizon}
In the previous subsection, we proved that $\EE_0[Y, w](\Psi)$ is positive definite in the exterior region. Nevertheless, we are interested in the positivity of  $\EE[Y, w](\Psi)$, so the term  $\EE_{V_i}[Y, w](\Psi)$ has to be taken into account. This term depends on the potential, and it can easily be seen that for the potentials we are considering it is not positive in general. 

We will control this term with the help of the Poincar\'e inequality, gaining positivity from the $|\nabb \Psi|^2$ term, and adding a well-chosen one form $M$ to the definition of $\EE$. We will follow the procedure used in \cite{stabilitySchwarzschild}. 

We begin by stating the Poincar\'e inequality.
     \begin{lemma}[Poincar\'e inequality \cite{Ch-Kl}]
   If $\Psi$ is a $2$-tensor on $S$, then
\bea\label{Poincare}
\int_S |\nabb \Psi|^2  &\ge & \frac{2}{r^2}\int_S  |\Psi |^2   
\eea
\end{lemma}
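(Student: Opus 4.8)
The plan is to reduce the inequality to a spectral gap for the Laplacian on symmetric traceless $2$-tensors over $S$, and to read that gap off from the elliptic identities \eqref{angular-operators}. First I would integrate by parts on the closed surface $S$ and use $\DDs_2\DDd_2=-\frac12\lapp_2+K$ from \eqref{angular-operators}, together with $K=r^{-2}$ (cf.\ \eqref{GCdef}), to rewrite the left-hand side as
\[
\int_S|\nabb\Psi|^2=2\int_S|\DDd_2\Psi|^2-\frac{2}{r^2}\int_S|\Psi|^2 .
\]
The asserted inequality is therefore equivalent to the bound $\int_S|\DDd_2\Psi|^2\ge\frac{2}{r^2}\int_S|\Psi|^2$, i.e.\ to the claim that the bottom of the spectrum of the nonnegative operator $\DDs_2\DDd_2$ on symmetric traceless $2$-tensors is at least $2/r^2$.

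To get this I would diagonalise $\DDs_2\DDd_2$ via a spherical-harmonic decomposition. The display already shows that $\DDd_2$ is injective on symmetric traceless $2$-tensors (a nonzero kernel element would force $\int_S|\nabb\Psi|^2<0$), so by Fredholm theory its $L^2$-adjoint $\DDs_2$ is surjective and $\Psi=\DDs_2\eta$ for some $1$-form $\eta$. On the simply connected sphere every $1$-form is $\eta=\DDs_1(\rho,\sigma)$; expanding $\rho,\sigma$ in scalar harmonics ($-\lapp_0$ has eigenvalue $\ell(\ell+1)/r^2$ on the $\ell$-th mode), the $\ell=0$ part is annihilated by $\DDs_1$ and the $\ell=1$ part by $\DDs_2\DDs_1$ (these span the conformal Killing fields, which make up $\ker\DDs_2$), so $\Psi$ is an $L^2$-sum of modes with $\ell\ge2$. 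On the $\ell$-th mode $\Psi_\ell=\DDs_2\eta_\ell$, chaining \eqref{angular-operators} — $\DDd_1\DDs_1=-\lapp_0$ and $\DDs_1\DDd_1=-\lapp_1+K$ give $-\lapp_1\eta_\ell=\tfrac{\ell(\ell+1)-1}{r^2}\eta_\ell$, then $\DDd_2\DDs_2=-\tfrac12\lapp_1-\tfrac12K$ and one more application of $\DDs_2$ — yields
\[
\DDs_2\DDd_2\Psi_\ell=\frac{(\ell-1)(\ell+2)}{2r^2}\,\Psi_\ell ,
\]
so that $\int_S|\DDd_2\Psi_\ell|^2=\frac{(\ell-1)(\ell+2)}{2r^2}\int_S|\Psi_\ell|^2\ge\frac{2}{r^2}\int_S|\Psi_\ell|^2$ because $(\ell-1)(\ell+2)\ge4$ when $\ell\ge2$. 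Summing in $\ell$ and inserting into the first display finishes the proof.

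The one genuinely substantive point is the harmonic decomposition — specifically, that symmetric traceless $2$-tensors on $S^2$ have no $\ell\le1$ component (equivalently, there are no nonzero transverse traceless $2$-tensors on $S^2$); everything else is bookkeeping with the operator identities of Section \ref{tensor-algebra}. An alternative route avoiding the explicit expansion is to apply the Bochner-type identity once more to the $1$-form $\DDd_2\Psi$ together with a Poincar\'e estimate for $1$-forms, but that merely relocates the spectral-gap question to $1$-forms; the tensor-harmonic argument above (or simply a citation of this standard fact, e.g.\ \cite{Ch-Kl} or \cite{DHR}) is cleanest.
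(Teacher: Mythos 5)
Your argument is correct. The paper in fact states this lemma without proof, treating it as the standard Poincar\'e (spectral gap) inequality for symmetric traceless $2$-tensors on the round sphere, so there is no in-text argument to compare against; your derivation supplies exactly the justification one would cite. The two pillars of your proof both check out against the paper's own operator identities: the integration by parts with $\DDs_2\DDd_2=-\frac12\lapp_2+K$ and $K=r^{-2}$ gives $\int_S|\nabb\Psi|^2=2\int_S|\DDd_2\Psi|^2-\frac{2}{r^2}\int_S|\Psi|^2$, and chaining \eqref{angular-operators} through the decomposition $\Psi=\DDs_2\DDs_1(\rho,\sigma)$ yields $\DDs_2\DDd_2\Psi_\ell=\frac{(\ell-1)(\ell+2)}{2r^2}\Psi_\ell$, with $\ell\ge2$ because the $\ell=0,1$ scalar modes are killed by $\DDs_1$ and $\DDs_2\DDs_1$ respectively (equivalently, $\ker\DDd_2=\{0\}$ forces the absence of low harmonics). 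Since $(\ell-1)(\ell+2)\ge4$ for $\ell\ge2$, the constant $2/r^2$ follows and is saturated at $\ell=2$, consistent with the sharp form stated in \eqref{Poincare}. The only point you flag as needing outside input — completeness of the tensor-harmonic expansion, i.e.\ the absence of transverse traceless tensors on $S^2$ — is indeed the standard fact one finds in \cite{Ch-Kl} or \cite{DHR}, and your Fredholm argument (injectivity of $\DDd_2$ from the first display, hence surjectivity of $\DDs_2$) is an acceptable self-contained substitute.
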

Using the Poincar\'e inequality, for $\mu>0$ we can write
  \beaa
   \int_Sf\left(\frac{r^2-3M r +2 Q^2 }{r^3}\right)|\nabb \Psi|^2            & \ge&          \mu  \int_S f\left(\frac{r^2-3M r +2 Q^2 }{r^3}\right) |\nabb\Psi|^2   \\
   &&+(1-\mu)\int_S  2f\left(\frac{r^2-3M r +2 Q^2 }{r^5}\right) | \Psi|^2 
   \eeaa
  According to   Proposition \ref{prop:pre-mor1},  for a sufficiently small  $\mu>0$, we  deduce
  \bea\label{bound-missing-positivity}
  \begin{split}
  \int_S\EE [Y, w](\Psi)&\ges \int_S f'|R \Psi|^2+\frac{(r^2-3M r +2 Q^2)^2 }{r^5}|\nabb \Psi|^2\\
  &+\Big(2\left(\frac{r^2-3M r +2 Q^2 }{r^5}\right)-\frac 1 2 \pr_r\left(\Up V_i\right)\Big)f |\Psi|^2 
  \end{split}
  \eea
  This procedure created extra-positivity in the $|\Psi|^2$ term, but it can be seen that with the given potentials it is not enough to get a positive definite expression in some region of the exterior.  To achieve   positivity for all values of $r \geq r_{\mathcal{H}^+}$ we shall modify the original  energy  densities  $\EE [Y, w](\Psi)$
   by considering instead  $  \EE [Y, w, M](\Psi) $,  with $M= 2 h R$ for a function $h=h(r)$. Then, by formula \eqref{le:divergPP-gen-EE}, we have
 \beaa\label{modificationM}
 \begin{split}
  \EE [Y, w, M](\Psi) &= \EE[fR, w](\Psi)+\frac 1 4  \D^\mu (\Psi^2 {M}_{\mu}) =\EE[Y, w](\Psi)+\frac 1 2 \D^\mu ( h R_\mu) |\Psi|^2 + h \Psi R(\Psi)
   \end{split}
  \eeaa
Observe that, using \eqref{R(r)} and Corollary \ref{lemma:componentspiR}:
\beaa
 \D^\mu (h R_\mu) &=& R(h) + h( \D^\mu  R_\mu)=\Up h' + \frac 1 2 h\tr \,(\piR)=\Up h' +  h\left( \frac{2M}{r^2}-\frac{2Q^2}{r^3}+\frac {2\Up}{ r}  \right)= r^{-2} \pr_r(\Up r^2 h)
 \eeaa
therefore
 \bea\label{modificationM-1}
 \begin{split}
  \EE [Y, w, 2hR](\Psi) &=\EE[Y, w](\Psi)+\frac 1 2 r^{-2} \pr_r(\Up r^2 h) |\Psi|^2 + h \Psi R(\Psi)
   \end{split}
  \eea

  We present below the construction of the function $h$ that will provide the improved lower bound away from the horizon.

  \begin{proposition}
  \label{prop:pre-mor2} Let $\Psi$ be a $2$-tensor solution of \eqref{general-equation}. Suppose that the potential $V_i$ satisfies the following:
  \begin{enumerate}
  \item For $r\le r_P$, then $2\left(\frac{r^2-3M r +2 Q^2 }{r^5}\right)-\frac 1 2 \pr_r(\Up V_i) \leq 0 $ , 
 \item For $r \geq r_P$, then $A_i:=-\frac 12 \pr_r(\Up V_i)-\frac{4\Up}{r^5}\left(r^2-4Mr+3Q^2 \right) \geq 0$ 
  \end{enumerate}
 Then there exists    a function $h=h(r) $    with bounded   derivative $h'$, which is   supported   in $r\ge r_P$,  such that $h= O(r^{-2}) , h'=O(r^{-3})$  for $r\ge r_*$ and such that for $r\geq \frac 5 6  r_P$
\beaa
 \int_S   \EE [Y, w, 2hR](\Psi) \ges \int_{S} \left( \frac{1}{r^3} |R(\Psi)|^2+\frac{\left(r^2-3M r +2Q^2\right)^2}{r^5}|\nabb \Psi|^2  +  \frac{\Up}{r^4} |\Psi|^2    \right) 
  \eeaa
  \end{proposition}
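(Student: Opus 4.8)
The plan is to construct the function $h$ explicitly by solving an ODE so that the term $\frac12 r^{-2}\pr_r(\Up r^2 h)$ appearing in \eqref{modificationM-1} cancels the bad part of the $|\Psi|^2$ coefficient in \eqref{bound-missing-positivity}, and then to absorb the $h\Psi R(\Psi)$ cross term by a Cauchy--Schwarz argument against the $|R(\Psi)|^2$ and $|\Psi|^2$ densities which we already control with room to spare from Proposition \ref{prop:pre-mor1}. Concretely, combining \eqref{bound-missing-positivity} with \eqref{modificationM-1}, for any $h$ we have, after integrating over $S$,
\beaa
\int_S \EE[Y,w,2hR](\Psi) &\ges& \int_S f'|R\Psi|^2 + \frac{(r^2-3Mr+2Q^2)^2}{r^5}|\nabb\Psi|^2 \\
&& + \Big(2\tfrac{r^2-3Mr+2Q^2}{r^5} - \tfrac12\pr_r(\Up V_i) + \tfrac12 r^{-2}\pr_r(\Up r^2 h)\Big)f|\Psi|^2 + h\Psi R(\Psi).
\eeaa
So I would choose $h$ supported in $\{r\ge r_P\}$ solving (in the region $r\ge r_P$, where $f>0$ and the first hypothesis is irrelevant) an ODE of the form $\tfrac12 r^{-2}\pr_r(\Up r^2 h) = -2\tfrac{r^2-3Mr+2Q^2}{r^5} + \tfrac12\pr_r(\Up V_i) - \tfrac{\Up}{r^4}$, i.e. $\pr_r(\Up r^2 h)$ equals (twice) the quantity $A_i$ plus a controlled remainder; the second hypothesis $A_i\ge 0$ for $r\ge r_P$ is exactly what guarantees that the resulting primitive $\Up r^2 h(r) = \int_{r_P}^r(\ldots)$ is well-defined, has the claimed asymptotics $h=O(r^{-2})$, $h'=O(r^{-3})$ for large $r$ (using the explicit $V_i$ from \eqref{potentials}, whose $\pr_r(\Up V_i)$ decays like $r^{-4}$), and vanishes at $r=r_P$ so that $h$ extends by zero to $r\le r_P$ as a $C^1$ (indeed Lipschitz) function with bounded derivative.

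Next I would handle the region $r_P\le r$ where $h$ lives: with the above choice the $|\Psi|^2$ coefficient becomes $\ges \frac{\Up}{r^4}f$, which together with the fact that $f$ is bounded below by a positive constant for $r\ge \tfrac56 r_P$ (indeed $f$ is strictly increasing, $f(r_P)=0$, so $f\ge c>0$ for $r\ge r_P$, and a small modification handles $\tfrac56 r_P\le r\le r_P$ where $h=0$ and the first hypothesis $2\tfrac{r^2-3Mr+2Q^2}{r^5}-\tfrac12\pr_r(\Up V_i)\le 0$ combined with $f\le 0$ there gives a nonnegative contribution) yields the $\frac{\Up}{r^4}|\Psi|^2$ term. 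The cross term $h\Psi R(\Psi)$ is then dominated by $\tfrac{\eps}{r^3}|R\Psi|^2 + \tfrac{C_\eps}{r^3}h^2 r^3|\Psi|^2 \les \tfrac{\eps}{r^3}|R\Psi|^2 + C_\eps r^{-1}h^2|\Psi|^2$, and since $h=O(r^{-2})$ and $h$ is supported where $\Up\ges 1$ after taking $R_0$ large (for bounded $r$, $h^2 r^{-1}\les \tfrac{\Up}{r^4}$ trivially up to constants since everything is bounded), choosing $\eps$ small lets the $f'|R\Psi|^2\ges \tfrac{1}{r^3}|R\Psi|^2$ and $\tfrac{\Up}{r^4}|\Psi|^2$ terms absorb it, while the angular term is untouched. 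For $r_\HH\le r\le \tfrac56 r_P$ the proposition is not asserted, so nothing more is needed there.

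The main obstacle is the \emph{matching at and near the photon sphere}: since $f$ vanishes to first order at $r=r_P$, the $|\Psi|^2$ density coming from $\EE_0+\EE_{V_i}$ degenerates there, so one cannot simply demand a pointwise lower bound proportional to $\tfrac{\Up}{r^4}f$ — one genuinely needs the extra $h$-term to supply the non-degenerate $|\Psi|^2$ positivity in a neighborhood of $r_P$, and this forces $h$ (equivalently its primitive $\Up r^2 h$) to be chosen so that it does not itself vanish too fast at $r_P$; verifying that the ODE solution has the right sign and size there, uniformly in the small-charge regime, is the delicate point, and is precisely where the two structural hypotheses on $V_i$ are used. A secondary technical point is checking the decay $h=O(r^{-2})$, $h'=O(r^{-3})$ for $r\ge r_*$, which requires the explicit form of $\pr_r(\Up V_i)$ from \eqref{potentials} and a short computation showing the relevant integrand is $O(r^{-2})$ so that its primitive is $O(1)$ hence $\Up r^2 h=O(1)$; once this is in hand, the Cauchy--Schwarz absorption of $h\Psi R(\Psi)$ is routine. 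I would close by recording that, after fixing $\mu$ in the Poincar\'e step and then $h$ and finally $\eps$ in the Cauchy--Schwarz step, all choices depend only on $M$ and $Q$, giving the stated implicit constant.
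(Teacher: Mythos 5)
Your overall strategy is the same as the paper's: use the identity \eqref{modificationM-1} to add a correction $\frac12 r^{-2}\pr_r(\Up r^2h)|\Psi|^2 + h\Psi R(\Psi)$, choose $h$ supported in $r\ge r_P$ to restore zeroth-order positivity where $f$ degenerates, and then control the cross term. But there is a genuine gap at the crucial step, namely the absorption of $h\Psi R(\Psi)$, and your specific choice of $h$ makes it worse. (A minor inconsistency first: in your displayed estimate the correction $\tfrac12 r^{-2}\pr_r(\Up r^2h)$ appears inside the bracket multiplied by $f$, whereas \eqref{modificationM-1} adds it \emph{without} the factor $f$; your ODE then drops the $f$ again. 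I will read the argument with the correct placement.)

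First, the global ODE $\tfrac12 r^{-2}\pr_r(\Up r^2h)= \tfrac{\Up}{r^4}-\bigl(2\tfrac{r^2-3Mr+2Q^2}{r^5}-\tfrac12\pr_r(\Up V_i)\bigr)f$ over-corrects at large $r$: there the natural coefficient is $\sim \tfrac{2}{r^3}f\to\tfrac{2}{r^3}\gg\tfrac{\Up}{r^4}$, so the forcing is $\sim -\tfrac{4}{r}$ after multiplying by $2r^2$, giving $\Up r^2h\sim -4\log r$ and $h\sim -4r^{-2}\log r$. This is not $O(r^{-2})$, and more importantly it violates the pointwise positivity criterion for the quadratic form $f'|R\Psi|^2+G|\Psi|^2+h\Psi R\Psi$, which requires $h^2\le 4f'\,G\sim r^{-3}\cdot r^{-3}$, i.e.\ $|h|\lesssim r^{-3}$. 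No choice of $\eps$ fixes this; the Cauchy--Schwarz absorption simply fails at infinity. (The paper's $h=\tfrac{4\Up}{r^4}\widetilde h$ with $\widetilde h$ eventually constant is $O(r^{-4})$, comfortably admissible.)

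Second, and more fundamentally, even after localizing the correction to where it is needed, the positivity of the $2\times2$ form on $[r_P,r_*]$ is a \emph{borderline} sign question, not one settled by "choose $\eps$ small": the cross term there is of the same order as the geometric mean of the diagonal terms. The paper resolves this not by Cauchy--Schwarz but by an exact completion of the square, writing $h\Psi R\Psi=\tfrac{2\widetilde h}{r^3}(R\Psi-\Omegab r^{-1}\Psi)^2-\tfrac{2\widetilde h}{r^3}|R\Psi|^2-\tfrac{2\widetilde h}{r^5}\Omegab^2|\Psi|^2$ with the specific ansatz $\widetilde h=r^2f$; the price is that the $|R\Psi|^2$ coefficient drops from $f'$ to $f'-\tfrac{2f}{r}$, and showing this remains positive on $[r_P,r_*]$ is exactly Lemma \ref{lemmafirst} — a nontrivial computation using the explicit construction of $f$ from the Morawetz step and the smallness of $Q$. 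You flag this as "the delicate point... precisely where the two structural hypotheses on $V_i$ are used," but that attribution is off: the hypotheses on $V_i$ handle the sign of the zeroth-order coefficient (your $A_i\ge0$ condition), while the genuinely hard part is the $|R\Psi|^2$ loss, which depends on quantitative properties of $f$ and is entirely absent from your outline. As written, the proposal does not close.
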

  \begin{proof} Putting together \eqref{bound-missing-positivity} and \eqref{modificationM-1}, we have 
  \beaa
&& \int_S \EE [Y, w, 2hR](\Psi) \\
&\ges& \int_S f'|R \Psi|^2+\frac{(r^2-3M r +2 Q^2)^2 }{r^5}|\nabb \Psi|^2\\
 &&+\int_S\Big(2\left(\frac{r^2-3M r +2 Q^2 }{r^5}\right)+\frac 1 2 \pr_r\left(\Omegab V_i\right)\Big)f |\Psi|^2 -\frac 1 2 r^{-2} \pr_r(\Omegab r^2 h) |\Psi|^2 + h \Psi R(\Psi)
  \eeaa
where we write $\Omegab=-\Up$. We now write the function $h$ as $h=-\frac{4}{r^4}\Omegab \widetilde{h}$, hence
\beaa
-\frac 1 2 r^{-2} \pr_r(\Omegab r^2 h) |\Psi|^2 + h \Psi R(\Psi) &=&2 r^{-2} \pr_r\left( \frac{\Omegab^2}{r^2} \widetilde{h}\right) |\Psi|^2 -\frac{4}{r^4}\Omegab \widetilde{h} \Psi R(\Psi)\\
&=&\left( \Omegab'- \frac{ \Omegab}{r}\right)\frac{4\Omegab}{r^4} \widetilde{h} |\Psi|^2+2  \frac{\Omegab^2}{r^4} \widetilde{h}' |\Psi|^2 -\frac{4}{r^4}\Omegab \widetilde{h} \Psi R(\Psi)
\eeaa
and observe that $ \Omegab'- \frac{ \Omegab}{r}=\frac{1}{r^3}\left(r^2-4Mr+3Q^2 \right)$. We also express
\beaa
-\frac{4}{r^4}\Omegab \widetilde{h} \Psi R(\Psi)&=&\frac{2\widetilde{h}}{r^3}(R(\Psi)- \Omegab r^{-1} \Psi)^2 - \frac{2 \widetilde{h}}{r^3} |R(\Psi)|^2 -
 \frac{2 \widetilde{h}}{r^5}\Omegab^ 2 |\Psi|^2
 \eeaa
 therefore we deduce
   \beaa
&& \int_S \EE [Y, w, 2hR](\Psi) \\
&\ges& \int_S \left(f'- \frac{2 \widetilde{h}}{r^3}\right)|R \Psi|^2+\frac{(r^2-3M r +2 Q^2)^2 }{r^5}|\nabb \Psi|^2+\frac{2\widetilde{h}}{r^3}(R(\Psi)- \Omegab r^{-1} \Psi)^2\\
 &&+\int_S \Big(2\left(\frac{r^2-3M r +2 Q^2 }{r^5}\right)f+\frac 1 2 \pr_r(\Omegab V_i) f+  \frac{4\Omegab}{r^5}\left(r^2-4Mr+3Q^2 \right)\frac{\widetilde{h}}{r^2} +2  \frac{\Omegab^2}{r^4} \widetilde{h}' - \frac{2 \widetilde{h}}{r^5}\Omegab^ 2\Big)|\Psi|^2
  \eeaa
  We write $\frac 12 \pr_r(\Omegab V_i)=-\frac{4\Omegab}{r^5}\left(r^2-4Mr+3Q^2 \right)+A_i$, and we write the coefficient on the right hand side as 
  \beaa
  2\left(\frac{r^2-3M r +2 Q^2 }{r^5}\right)f-\frac{4\Omegab}{r^5}\left(r^2-4Mr+3Q^2 \right)\left(f-\frac{\widetilde{h}}{r^2}\right)+A_if +2  \frac{\Omegab^2}{r^4} \widetilde{h}' - \frac{2 \widetilde{h}}{r^5}\Omegab^ 2
  \eeaa
We choose
\beaa
  \widetilde{h}=\begin{cases}
\quad   0\qquad \qquad \qquad   r\le r_P\\
  \quad r^2 f \qquad\qquad  r_P\le r \le r_*\\
  (r_*)^2 f(r_*)\qquad\,  r \ge r_*
\end{cases}
\eeaa
Since $f\ge 0$ for $r\ge r_P$, we have that $\widetilde{h}\ge 0$ everywhere, therefore in the estimates we can ignore   the term $\frac{2\widetilde{h}}{r^3}(R(\Psi)- \Omegab r^{-1} \Psi)^2 $, and we obtain
  \beaa
&& \int_S \EE [Y, w, 2hR](\Psi) \ges \int_S \left(f'- \frac{2 \widetilde{h}}{r^3}\right)|R \Psi|^2+\frac{(r^2-3M r +2 Q^2)^2 }{r^5}|\nabb \Psi|^2\\
 &&+\int_S \Big( 2\left(\frac{r^2-3M r +2 Q^2 }{r^5}\right)f-\frac{4\Omegab}{r^5}\left(r^2-4Mr+3Q^2 \right)\left(f-\frac{\widetilde{h}}{r^2}\right)+A_if +  \frac{2\Omegab^2}{r^4}\left( \widetilde{h}' - \frac{ \widetilde{h}}{r}\right)\Big)|\Psi|^2
  \eeaa

 We consider the following cases.
 
 \bigskip

{\bf Case 1}($r\le r_P$).     Since $\widetilde{h}=0$, we  obtain
 \beaa
 \int_S \EE [Y, w, 2hR](\Psi) &\ges& \int_S f'|R \Psi|^2+\frac{(r^2-3M r +2 Q^2)^2 }{r^5}|\nabb \Psi|^2\\
 &&+\int_S \Big(2\left(\frac{r^2-3M r +2 Q^2 }{r^5}\right)-\frac{4\Omegab}{r^5}\left(r^2-4Mr+3Q^2 \right)+A_i\Big)f|\Psi|^2
  \eeaa
Recall that for $r\le r_P$, we have $f \leq 0$ and $f'\geq 0$. Therefore if the potential $V_i$ is such that $2\left(\frac{r^2-3M r +2 Q^2 }{r^5}\right)+\frac 1 2 \pr_r(\Omegab V_i) \leq 0 $ for $r\le r_P$, we get positivity. 

\bigskip

{\bf Case 2}( $ r_P\le r    \le r_*$).  Since $\widetilde{h}=r^2 f$ and $\widetilde{h}' - \frac{ \widetilde{h}}{r}=r^2 f'+rf$, we deduce 
 \beaa
 \int_S \EE [Y, w, 2hR](\Psi) &\ges& \int_S \left(f'- \frac{2 f}{r}\right)|R \Psi|^2+\frac{(r^2-3M r +2 Q^2)^2 }{r^5}|\nabb \Psi|^2\\
 &&+\int_S \Big( \left(2\left(\frac{r^2-3M r +2 Q^2 }{r^5}\right)+A_i  +  \frac{2\Omegab^2}{r^3}\right)f+  \frac{2\Omegab^2}{r^2} f'\Big)|\Psi|^2
  \eeaa
  Recall that for $r\geq r_P$, we have $f \geq 0$ and $f'\geq 0$. Therefore if the potential $V_i$ is such that $2\left(\frac{r^2-3M r +2 Q^2 }{r^5}\right)+A_i  +  \frac{2\Omegab^2}{r^3}\geq 0 $ for $r_P\le r    \le r_*$, we get positivity of the last term. In particular, this is true if $A_i \geq 0$ for $r_P\le r    \le r_*$. It  remains to analize the first term.
\begin{lemma}\label{lemmafirst}
In the interval $[r_P, r_*]$ we have,
\beaa
 f'-\frac{2}{r} f >0.
\eeaa
\end{lemma}
\begin{proof}
Since $f$ vanishes at $r=r_P$ we have,
\beaa
f'-\frac{2}{r} f &=& r^{-2} (r^2 f)' - 4 r^{-3}(r^2 f)= r^{-2} (r^2 f)' -4 r^{-3}\int_{r_P}^ r (r^2 f)' 
\eeaa
Recall that, by Lemma \ref{le:John2'}, $w=-r^{-2}\Omegab (r^2 f)'=w_*$   is a positive constant for all $r\leq r_*$, therefore in the interval $[r_P, r_*]$. We deduce,
\beaa
(r^2 f)'=-\frac{r^2 }{\Omegab}w_*
\eeaa
Using that, for $r>r_P$, $\frac{r^2 }{\Omegab}$ is a decreasing function, we have the bound
\beaa
f'-\frac{2}{r} f &=&- \frac{1}{\Omegab} w_* +4 r^{-3}\int_{r_P}^ r \frac{r^2 }{\Omegab} w_* =w_* \left(-\frac{1}{\Omegab} +4 r^{-3}\int_{r_P}^ r \frac{r^2 }{\Omegab} \right)\\
&>& w_* \left(-\frac{1}{\Omegab} +4 r^{-3}(r-r_P) \frac{r^2 }{\Omegab}\right)= \frac{w_*}{-\Omegab} \left(1 -\frac 4 r (r-r_P) \right)=\frac{w_*}{-\Omegab r} \left(-3 r +4 r_P\right)
\eeaa
Defining  $r_*^c:=2M +\frac 2 3 \sqrt{9M^2-8e^2}<r_*$, we have that if $r<r_*^c$, then $\left(-3 r +4 r_P\right)>0$ and therefore $f'-\frac{2}{r} f>0$.

Consider now $f$ for $r>r_*$, i.e. $f(r)=1+\frac{C_*-r_*^2}{r^2}$ and $f'(r)=\frac{2(r_*^2-C_*)}{r^3}$. Since $f$ is a $C^2$ function we can compute 
\beaa
(f'-\frac{2}{r} f)(r_*)&=&\frac{2}{r_*} \left(1-\frac{2C_*}{r_*^2} \right)=\frac{2}{r_*^3} \left(r_*^2-2C_*\right) 
\eeaa
Consider $D_*(Q)=r_*(Q)^2-2C_*(Q)$ as a function of $Q$. Clearly $D_*(Q)$ is a continuous function of $Q$ and for $Q=0$, we have that $D(0)\approx 1.56 M^2 >M^2$. Therefore, for $Q$ small enough $D_*(Q) > \frac 1 2 M^2>0$. This implies that, for such $Q$, $(f'-\frac{2}{r} f)(r_*)>0$. 

Consider the derivative of $f'-\frac{2}{r} f =w_* \left(-\frac{1}{\Omegab} +4 r^{-3}\int_{r_P}^ r \frac{r^2 }{\Omegab} \right)$. We have 
\beaa
\left(f'-\frac{2}{r} f\right)'&=& w_* \left(\frac{\Omegab'}{\Omegab^2} -12 r^{-4}\int_{r_P}^ r \frac{r^2 }{\Omegab}+4r^{-3} \frac{r^2 }{\Omegab} \right)\\
&<& w_* \left(\frac{\Omegab'}{\Omegab^2} -12 r^{-4}(r-r_P) \frac{r^2 }{\Omegab}+4r^{-3} \frac{r^2 }{\Omegab} \right)= w_* \left(\frac{\Omegab'}{\Omegab^2} -\frac{4(2r-3r_P)}{r^2\Omegab} \right)
\eeaa
We have that $\Omegab'<0$ for all $r>r_\HH$ and if $r<r_* <\frac 3 2 r_P$, we obtain $\left(f'-\frac{2}{r} f\right)'<0$ in the interval $[r_P, r_*]$. This, together with $(f'-\frac{2}{r} f)(r_*)>0$, implies that $(f'-\frac{2}{r} f)(r)>0$ in the whole interval, as desired.
\end{proof}

\bigskip

{\bf Case 3}( $  r \ge r_*$)  Since $\widetilde{h} $ is constant  we deduce 
  \beaa
&& \int_S \EE [Y, w, 2hR](\Psi) \ges \int_S \left(f'- \frac{2 \widetilde{h}}{r^3}\right)|R \Psi|^2+\frac{(r^2-3M r +2 Q^2)^2 }{r^5}|\nabb \Psi|^2\\
 &&+\int_S \Big[ \Big(2\left(\frac{r^2-3M r +2 Q^2 }{r^5}\right)-\frac{4\Omegab}{r^5}\left(r^2-4Mr+3Q^2 \right)\Big)\left(f(r)-\frac{(r_*)^2 }{r^2} f(r_*)\right)\\
 &&+A_if+\Big(2\left(\frac{r^2-3M r +2 Q^2 }{r^2}\right) - 2\Omegab^2\Big)  \frac{ \widetilde{h}}{r^5}\Big]|\Psi|^2
  \eeaa
We examine the first term recalling that  that  for $r\ge r_*$ we have $f=1+\frac{C_*-r_*}{r^2}=1-c_1 r^{-2}$, for $c_1>0$. Hence, 
\beaa
f'-2 r^{-3} \widetilde{h} =2  r^{-3}( c_1 - \widetilde{h})
\eeaa
Since   $( c_1 - \widetilde{h})$  is constant it follows  that the  sign of  $(f'-2 r^{-3} \widetilde{h})$ does not change  for $r\ge r_*$.  Thus, since it was positive   for $r\le r_*$, and recall that $w$ and $\widetilde{h}$ are continuous functions,  it must remain strictly  positive for $r\ge r_*$. 

Observe that for $r>r_*$, from the definition of $r_*$ and the fact that $f$ is increasing, we have that $f(r)-\frac{(r_*)^2 }{r^2} f(r_*)\geq 0$. Moreover, the coefficient $2\left(\frac{r^2-3M r +2 Q^2 }{r^5}\right)-\frac{4\Omegab}{r^5}\left(r^2-4Mr+3Q^2 \right) \geq 0$ for $r\geq r_*$.  The last term is given by 
\beaa
2\left(\frac{r^2-3M r +2 Q^2 }{r^2}\right) - 2\Up^2&=&\frac{2}{r^4}\left( Mr(r^2-4Mr+3Q^2)+Q^2(Mr-Q^4)\right) 
\eeaa
which is positive for $r\geq r_*$. Therefore, if $A_i \geq 0$ for $r \geq r_*$, we obtain positivity of this term. 

This proves the proposition.
  \end{proof} 
  
  \begin{corollary} Let $\Psi=\qf$ or $\Psi=\qf^\F$ be solutions of the system \eqref{finalsystem}. Then for $r\geq \frac 5 6 r_P$
\beaa
 \int_S   \EE [Y, w, 2hR](\Psi) \ges \int_{S} \left( \frac{1}{r^3} |R(\Psi)|^2+\frac{\left(r^2-3M r +2Q^2\right)^2}{r^5}|\nabb \Psi|^2  +  \frac{\Up}{r^4} |\Psi|^2    \right) 
  \eeaa
  with $Y$, $w$ and $h$ as defined above.
  \end{corollary}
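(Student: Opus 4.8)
The plan is to deduce the corollary directly from Proposition \ref{prop:pre-mor2}, applied with $\Psi=\qf$, $V_i=V_1$ and with $\Psi=\qf^\F$, $V_i=V_2$ (the potentials recorded in \eqref{potentials}), and with $f$, $w$, $h$ the functions constructed in Lemma \ref{le:John2'} and in the proof of Proposition \ref{prop:pre-mor2}. The only thing that has to be verified is that $V_1$ and $V_2$ satisfy the two structural hypotheses of that proposition: the sign condition $2\frac{r^2-3Mr+2Q^2}{r^5}-\frac12\partial_r(\Up V_i)\le 0$ on $r_{\mathcal{H}}\le r\le r_P$, and $A_i:=-\frac12\partial_r(\Up V_i)-\frac{4\Up}{r^5}(r^2-4Mr+3Q^2)\ge 0$ on $r\ge r_P$.

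First I would just compute. Since $V_1=\frac{1}{r^2}\big(4-\frac{8M}{r}+\frac{14Q^2}{r^2}\big)$, $V_2=\frac{1}{r^2}\big(4-\frac{2M}{r}-\frac{2Q^2}{r^2}\big)$ and $\Up=1-\frac{2M}{r}+\frac{Q^2}{r^2}$, the products $\Up V_i$ and their $r$-derivatives are explicit polynomials in $1/r$, and the algebra collapses to the clean expressions
\[
A_1=\frac{10Q^2}{r^7}\big(2r^2-5Mr+3Q^2\big),\qquad A_2=\frac{3}{r^7}\big(3Mr^3-(8M^2+4Q^2)r^2+15MQ^2\,r-6Q^4\big).
\]
For $A_1$ one checks that the larger root of $2r^2-5Mr+3Q^2$, namely $\frac{5M+\sqrt{25M^2-24Q^2}}{4}$, lies strictly below $r_P=\frac{3M+\sqrt{9M^2-8Q^2}}{2}$ for every $|Q|<M$ (a one-line comparison after squaring, using $8Q^2<12M^2$), so $A_1\ge 0$ on $r\ge r_P$ in the whole subextremal range. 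For $A_2$, setting $Q=0$ gives $A_2=\frac{3M}{r^5}(3r-8M)\ge \frac{3M^2}{r^5}$ on $[r_P,\infty)=[3M,\infty)$, while the correction $A_2-A_2|_{Q=0}=\frac{3Q^2}{r^7}\big(-4r^2+15Mr-6Q^2\big)$ is of size $O(Q^2/r^5)$ on $r\ge r_P$; hence $A_2\ge 0$ on $[r_P,\infty)$ once $|Q|$ is small enough, which is where the hypothesis $|Q|\ll M$ enters.

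For the sign condition on $[r_{\mathcal{H}},r_P]$ I would note that $r^2-3Mr+2Q^2=(r-r_P)\big(r-\tfrac{3M-\sqrt{9M^2-8Q^2}}{2}\big)$, whose smaller root lies below $r_{\mathcal{H}}$ (since $\sqrt{9M^2-8Q^2}\ge M$), so the first term of the condition is $\le 0$ there and it suffices to show $\partial_r(\Up V_i)\ge 0$ on $[r_{\mathcal{H}},r_P]$. At $Q=0$ one computes $\partial_r(\Up V_1)=-\frac{8}{r^5}(r-2M)(r-4M)\ge 0$ and $\partial_r(\Up V_2)=-\frac{2}{r^5}(4r^2-15Mr+8M^2)>0$ on $[2M,3M]$, both with a definite margin away from the endpoints, so continuity in $Q$ again yields the two inequalities for small charge. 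Feeding these verifications into Proposition \ref{prop:pre-mor2} gives precisely the asserted coercivity of $\int_S \EE[Y,w,2hR](\Psi)$ for $\Psi=\qf$ and $\Psi=\qf^\F$, with the same $f$, $w$, $h$.

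The hard part will be nothing conceptual: it is simply the explicit evaluation of $\Up V_i$, $\partial_r(\Up V_i)$ and $A_i$ and the check of the resulting polynomial inequalities on the prescribed $r$-ranges. The margins are uniform except near $r=r_P$ for the second potential $V_2$, and this is exactly why the charge is taken small rather than working in the full subextremal range; that single polynomial estimate for $A_2$ is the step most worth double-checking, everything else being mechanical substitution into Proposition \ref{prop:pre-mor2}.
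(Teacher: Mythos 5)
Your proposal is correct and follows essentially the same route as the paper: the corollary is proved by verifying that $V_1$ and $V_2$ satisfy the two hypotheses of Proposition \ref{prop:pre-mor2}, and your expressions for $A_1$ and $A_2$ agree with the paper's (up to an overall factor of $3$ in $A_2$). The only difference is organizational — the paper checks the sign of the combined polynomial $2\frac{r^2-3Mr+2Q^2}{r^5}+\frac12\partial_r(\Omegab V_i)$ directly on $r\le r_P$, whereas you split it into the two terms and invoke continuity in $Q$ from the $Q=0$ case — which is a minor variation of the same verification.
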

  \begin{proof} It suffices to show that the potentials $V_1$ and $V_2$ verify the hypothesis of Proposition \ref{prop:pre-mor2}.
  
  The potential $V_1$ verifies
  \beaa
  \frac 1 2 \pr_r(\Omegab V_1)&=& -\frac{4\Omegab}{r^5}\left(r^2-4Mr+3Q^2\right)+\frac{5Q^2}{r^4}\left( \Omegab'-\frac{4\Omegab}{r}\right)
  \eeaa
  giving $A_1=\frac{5Q^2}{r^4}\left( \Omegab'-\frac{4\Omegab}{r}\right)=\frac{10Q^2}{r^7}\left( 2r^2-5Mr+3Q^2\right)\geq 0$ for $r\geq r_P$, which verifies the second condition. To verify the first condition we have 
  \beaa
  2\left(\frac{r^2-3M r +2 Q^2 }{r^5}\right)+\frac 1 2 \pr_r(\Omegab V_1)&=&\frac{1}{r^7}\left( 6 r^4-30Mr^3+(32M^2+40Q^2)r^2-90MQ^2r+42Q^4\right)
  \eeaa
  which is negative for $r \leq r_P$.
  
  The potential $V_2$ verifies
   \beaa
  \frac 1 2 \pr_r(\Omegab V_2)&=& -\frac{\Omegab}{r^4}\left(4r-7M\right)-\frac{6}{r^7}\left( Mr-Q^2\right)^2
  \eeaa
  giving $A_2=\frac{1}{r^7}\left(9Mr^3-24M^2r^2-12Q^2r^2+45MQ^2 r-18Q^4 \right)$ which is positive for $r\geq r_P$. To verify the first condition we have 
  \beaa
  2\left(\frac{r^2-3M r +2 Q^2 }{r^5}\right)+\frac 1 2 \pr_r(\Omegab V_2)&=&\frac{1}{r^7}\left( 6 r^4-21Mr^3+(8M^2+8Q^2)r^2+5MQ^2r-6Q^4\right)
  \eeaa
  which is negative for $r \leq r_P$.
  \end{proof}

\subsubsection{Correction near the horizon}
The main problem about the estimate obtained in Proposition \ref{prop:pre-mor2} is that $f$ blows up     logarithmically  near  $r=r_\HH$. To correct for  this we can modify  our choice of $f$ and $w$, introducing a cut-off for $f$ and $w$,  in the same way as in \cite{stabilitySchwarzschild}.  

\begin{proposition}
\label{prop:pre-mor3}
Let $\Psi=\qf$ or $\Psi=\qf^\F$ be solutions of the system \eqref{finalsystem}. Given a small parameter    $\de>0$   there exist   functions  $f_\de\in C^2(r>0)  $, $w_\de\in C^1(r>0) $   and  $h\in C^2(r>0)$  
 verifying,   for all  $r>0$,
 \beaa
 |f_\de(r)|\les \de^{-1}   r^{-2}, \qquad w_\de \les  r^{-1}, \qquad h\les  r^{-4}
 \eeaa
 and a constant $ C>0$ (independent of $\de$)  such that in the exterior region
  \beaa
  \int_S   \EE [f_\de R, w_\de, 2hR](\Psi) &\ges& C^{-1} \int_{S}  \frac{1}{r^3} |R(\Psi)|^2+\frac{\left(r^2-3M r +2Q^2\right)^2}{r^5}\left( |\nabb \Psi|^2+\frac{1}{r^2} |T\Psi|^2 \right)  +  \frac{\Up}{r^4} |\Psi|^2    \\
  &-&\int_S\overline{W}_\de|\Psi|^2
 \eeaa
 where $\overline{W}_\de$ is a  non-negative  potential  supported  in the region $  r\le \frac 5 6 r_P$.
\end{proposition}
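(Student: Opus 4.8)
\textbf{Proof plan for Proposition \ref{prop:pre-mor3}.}

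The plan is to follow the cut-off construction of \cite{stabilitySchwarzschild}, modifying the functions $f$, $w$ of Proposition \ref{prop:pre-mor2} only in a collar neighbourhood of the horizon so that the logarithmic blow-up of $f$ near $r=r_\HH$ is removed, at the cost of introducing a compactly supported negative potential $\overline{W}_\de$ which will later be absorbed by the redshift estimate. First I would fix a small parameter $\de>0$ and a smooth cut-off $\chi_\de$ which equals $1$ for $r\ge r_\HH+\de$ and vanishes for $r\le r_\HH+\de/2$, and set $w_\de := \chi_\de\, w$, where $w$ is the function from Lemma \ref{le:John2'}; then define $f_\de$ by the same relation $w_\de = r^{-2}\Up\,\pr_r(r^2 f_\de)$, i.e. $r^2 f_\de(r)=\int_{r_P}^r \frac{r^2}{\Up}w_\de$. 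Since $w_\de$ vanishes in a neighbourhood of the horizon, the integrand $\frac{r^2}{\Up}w_\de$ is bounded near $r=r_\HH$ (the $\Up^{-1}$ singularity is killed by the cut-off), so $f_\de$ is now bounded as $r\to r_\HH$; the bounds $|f_\de|\les \de^{-1}r^{-2}$, $w_\de\les r^{-1}$ follow by inspection from $\int_{r_\HH+\de/2}^{r_\HH+\de}\frac{r^2}{\Up}\les \de^{-1}$ in that region and the fact that for $r\ge r_*$ the construction agrees with the one in Lemma \ref{le:John2'}. The function $h$ (equivalently $\widetilde h$) is left exactly as in Proposition \ref{prop:pre-mor2}, so $h\les r^{-4}$ and $h$ is supported in $r\ge r_P$, untouched by the modification near the horizon.

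Next I would re-run the computation of $\EE[f_\de R, w_\de, 2hR](\Psi)$ using Proposition \ref{prop-ident-Mor1} and the modification identity \eqref{modificationM-1}. In the region $r\ge r_\HH+\de$ nothing changes: here $f_\de=f$, $w_\de=w$, so the positivity already established in Proposition \ref{prop:pre-mor2} (for $r\ge \frac56 r_P$) and in Proposition \ref{prop:pre-mor1} (for $r_\HH\le r\le \frac56 r_P$) applies verbatim, giving the stated positive bulk for $r\ge \frac 56 r_P$. In the collar $r_\HH\le r\le r_\HH+\de$ the cut-off introduces extra terms: $\EE[f_\de R, w_\de](\Psi)$ still has the good $f_\de'|R\Psi|^2$ and $f_\de(\tfrac{r^2-3Mr+2Q^2}{r^3})|\nabb\Psi|^2$ contributions — note $f_\de\le 0$ there and the $\nabb$-coefficient is negative, so that term is still $\ge 0$ — but the coefficient of $|\Psi|^2$, namely $-\tfrac12\pr_r(\Up V_i)f_\de-\tfrac14 r^{-2}\pr_r(r^2\Up\pr_r w_\de)$, now contains $\pr_r w_\de$ and $\pr_r^2 w_\de$ which are no longer sign-definite. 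I would simply bound this whole $|\Psi|^2$-coefficient in absolute value: it is bounded (with a constant depending on $\de$) and supported in $r\le r_\HH+\de\le \frac56 r_P$, so it can be written as $\ge (\text{good, nonnegative}) - \overline W_\de|\Psi|^2$ with $\overline W_\de\ge 0$ smooth and supported in $\{r\le \frac56 r_P\}$ (in fact in the collar). For the $|T\Psi|^2$ term in the claimed bound I would use the standard trick, exactly as in \cite{stabilitySchwarzschild}: away from the photon sphere the full gradient energy is controlled, and at the photon sphere $r=r_P$ the degenerate weight $(r^2-3Mr+2Q^2)^2/r^5$ already vanishes to the right order, while a further application of the Poincaré inequality \eqref{Poincare} together with the wave equation \eqref{general-equation} converts part of the $|\nabb\Psi|^2$ positivity into the $r^{-2}|T\Psi|^2$ term (using $|\nabb\Psi|^2\gtrsim r^{-2}|\Psi|^2$ and the relation between $\Box_\g$, $\nabb_3\nabb_4$, $\lapp$ and $T$); the verification that $V_1,V_2$ satisfy the hypotheses of Proposition \ref{prop:pre-mor2} was already carried out in the preceding Corollary, so only the near-horizon bookkeeping is new.

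Collecting the two regions, integrating over $S$, and choosing the implicit constant $C$ independent of $\de$ (the $\de$-dependence is entirely absorbed into $\overline W_\de$ and into the size of $f_\de$, not into the coefficients of the positive bulk, which coincide with those of Proposition \ref{prop:pre-mor2} wherever they are nonzero) yields the stated inequality. The main obstacle I anticipate is not any single estimate but the careful sign-tracking in the collar $r_\HH\le r\le r_\HH+\de$: one must check that the modification does not spoil the nonnegativity of the $|R\Psi|^2$ and $|\nabb\Psi|^2$ coefficients (this is where we need $f_\de$ to remain monotone and to keep the correct sign relative to $r_P$, which forces the cut-off to be supported strictly inside $r<r_P$), and that the only price is a $|\Psi|^2$ error of the advertised support — an error which is harmless precisely because the subsequent redshift estimate (Proposition referenced in Section \ref{separate-estimates}) provides a non-degenerate bulk control of $|\Psi|^2$ near the horizon that can absorb any $\overline W_\de|\Psi|^2$ supported in $r\le\frac56 r_P$.
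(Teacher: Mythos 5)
Your near-horizon cut-off is a legitimate variant of the paper's construction: the paper caps $u=r^2f$ from below by composing with a saturating function $F$ (so that $w_\de = w\,F'(-\de u/M^2)$ switches off smoothly where $\Up$ is exponentially small in $\de$), whereas you multiply $w$ by a cut-off supported in $r\ge r_\HH+\de/2$ and re-integrate. Both versions keep $w_\de\ge 0$, keep $f_\de\le 0$ and $f_\de'=\frac{w_\de}{\Up}-\frac{2f_\de}{r}\ge 0$ below $r_P$, give $|f_\de|\les \de^{-1}r^{-2}$, and dump the bad $|\Psi|^2$ contributions into a nonnegative $\overline{W}_\de$ supported in $r\le\frac56 r_P$ (the paper takes $\overline{W}_\de=1_{r_\HH\le r\le \frac56 r_P}|W_\de|$). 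So that half of your argument is sound and essentially equivalent to the paper's.

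The genuine gap is the degenerate $|T\Psi|^2$ term. The bulk $\EE[f_\de R, w_\de, 2hR](\Psi)$ generated by the multiplier $f_\de R$ controls $|R\Psi|^2$, $|\nabb\Psi|^2$ (degenerately) and $|\Psi|^2$, but contains no $|T\Psi|^2$ anywhere: since $R=\frac 12(-e_3+\Up e_4)$ and $T=\frac 12(e_3+\Up e_4)$, no pointwise combination of $|R\Psi|^2$, $|\nabb\Psi|^2$, $|\Psi|^2$ dominates $|T\Psi|^2$; the Poincar\'e inequality only trades $|\nabb\Psi|^2$ for $r^{-2}|\Psi|^2$, and "using the wave equation" relates second derivatives and cannot yield a pointwise lower bound on $|T\Psi|^2$. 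The only available source of $|T\Psi|^2$ is the cross term $\nabb_3\Psi\c\nabb_4\Psi=\frac{1}{\Up}\left(|T\Psi|^2-|R\Psi|^2\right)$ hidden inside the Lagrangian $\LL_i[\Psi]$, which your choice of $w_\de$ (designed precisely to cancel the $\LL_i$ coefficient) removes. The paper therefore perturbs the scalar multiplier a second time: it sets $w=w_0-\de' w'$ with $w'=2\frac{D}{r^2}\Up\,\frac{(r^2-3Mr+2Q^2)^2}{r^5}\,\zeta(r)$, so that the no-longer-cancelled term $-\frac 12 \de' w'\LL_i[\Psi]$ contributes $\de' D\,\frac{(r^2-3Mr+2Q^2)^2}{r^7}\,\zeta\,|T\Psi|^2$ (the factor $\Up$ in $w'$ cancels the $\frac{1}{\Up}$ in the cross term), plus small multiples of $|R\Psi|^2$, $|\nabb\Psi|^2$ and $\Box_\g(w')\,|\Psi|^2$ that are absorbed by the already-positive bulk for $\de'$ small. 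Without this additional modification of $w$ your estimate simply does not produce the $\frac{(r^2-3Mr+2Q^2)^2}{r^7}|T\Psi|^2$ piece claimed in the proposition; you should add this step.
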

\begin{proof} We sketch here the proof, which appears in \cite{stabilitySchwarzschild}.
 Introducing $ u:=r^2 f$, we have
 \beaa
  f= r^{-2} u , \qquad w=r^{-2}\Up \partial_ru
 \eeaa
For a given $\de >0  $ we define
\beaa
u_\de  &:=& - \frac{M^2}{\de}   F(-\frac{\de}{M^2}   u), \qquad    f_\de := r^{-2} u_\de\\
w_\de &:=&r^{-2}\Up\pr_r  u_\de  = r^{-2}\Up F' (-\frac{\de}{M^2}  u) \pr_r u= w F' (-\frac{\de}{M^2} u)
\eeaa
 where  $F:\mathbb{R}\to\mathbb{R}$ is   is a fixed increasing  smooth function   such that 
  \beaa
  F(x)=
  \begin{cases}  
  &x\qquad  \mbox{for}\quad  x\le 1\\
&2 \qquad  \mbox{for}\quad  x\ge 3
  \end{cases}
 \eeaa
 Hence, for    sufficiently small $\de>0$ and constant $C>0$,
 \beaa
 f_\de =\begin{cases}
 &\frac{-2M^2}{\de r^2} \qquad \, \,\mbox{for}  \qquad      0\le  \Up \le   C    e^{-\frac{3}{\de}}\\
 &  f   \qquad \qquad \mbox{for}  \qquad      C^{-1}   e^{-\frac{1}{\de}} \le  \Up
 \end{cases}
 \eeaa
We evaluate $W_\de = -\frac 1 4 \pr_r \left(  r^2 \Up \pr_r  w_\de \right)$. We deduce
\beaa
W_\de = -\frac 1 4  \pr_r \left(  r^2 \Up \pr_r  w \right)=W \ge  \frac{M}{r^4}, \qquad   \mbox{for} \quad r\ge  r_P.
\eeaa
We introduce $\overline{W}_\de= 1_{  r_\HH\le r\le \frac 5 6 r_P }| W_\de|$. Applying Proposition \ref{prop:pre-mor2} using   the functions $f_\de$ and $w_\de $   defined  above we obtain the desired estimates, without the $T$ derivative.  To get the improved estimate   we then  modify slightly the definition of $w$, i.e. we set 
 $w= w_0- \de'    w'$   for a small  parameter  $\de'$   where $w_0$  is    our previous choice and  
\beaa
w'=2\frac{D}{r^2}     \Up  \frac{(r^2-3M r +2Q^2)^2}{r^5}  \ze(r)
\eeaa
where $\ze$ is a  suitable smooth, non-negative,  function of $r$    vanishing in a neighborhood  of  $r=r_P$  and equal to  the unity
 outside a slightly larger neighborhood. Evaluate using \eqref{le:divergPP-gen-EE},
 \beaa
\EE_1[ f_\de R, w_0-\de' \Up w', 2hR]  (\Psi)&=& \EE[ f_\de R, w_0, 2hR]-\frac 1 2 \de'   w'  \LL(\Psi)+\de'  |\Psi^2 | \frac 1 4 \square_\g( w')
\eeaa
This allows the creation of the $T$ derivative in the degenerate bulk. 
\end{proof}

\subsection{Red-shift estimate}
 Observe that the  vectorfields    $T$ and $R$  become both proportional to $e_3$ when $\Up=0$, i.e. the estimate   of Proposition \ref{prop:pre-mor3}
    degenerates  along   the horizon.
    Here we make use of  the       Dafermos-Rodnianski        red shift vectorfield 
   to     compensate for this degeneracy.   
   
   \begin{lemma} Let $N= a(r) \nabb_3+b(r) \nabb_4$ be a vectorfield. If the functions $a(r)$ and $b(r)$ verify
\beaa
a(r_\HH)=0, \qquad   b(r_\HH)=-1, \qquad -a'(r_\HH)\ge \frac{1}{4M},\qquad -b'(r_\HH)\ge \frac{1}{8M}
\eeaa
then, along the horizon, we have
  \bea
  \label{eq:red-shift1}
  \bsplit
   \QQ^{\a\b}\piN_{\a\b} &\ge  \frac{1}{8M} (  | \nabb_3\Psi|^2+ |\nabb_4\Psi|^2+|\nabb \Psi|^2)
      \end{split}
     \eea
    and
    \bea
    \bsplit
     \label{eq:red-shiftEE1}
    \EE[N,  0](\Psi)&\ge  \frac{1}{16M} \left(  | \nabb_3\Psi|^2+ |\nabb_4\Psi|^2+|\nabb \Psi|^2   +\frac{1}{M^2}|\Psi|^2        \right) 
    \end{split}
    \eea 
    \end{lemma}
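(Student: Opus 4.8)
The statement is a purely algebraic computation at the horizon $\{\Up=0\}$: I want to show that with the stated conditions on $a,b$, the contraction $\QQ^{\a\b}\piN_{\a\b}$ and the modified divergence $\EE[N,0](\Psi)$ are coercive in all null derivatives plus the zeroth order term. First I would compute the deformation tensor $\piN$ of the general vectorfield $N = a(r)\nabb_3 + b(r)\nabb_4$ restricted to the horizon. This is a direct application of Lemma \ref{general-deformation}: there $X = a(r)e_3 + b(r)e_4$ uses coordinate derivatives, but since $N$ is written with the same coefficient structure, the same computation gives $\piN_{44} = -4a'$, $\piN_{34} = 2\Up a' - 2b' + (\tfrac{4M}{r^2}-\tfrac{4Q^2}{r^3})a$, $\piN_{33} = 4\Up b' + (-\tfrac{8M}{r^2}+\tfrac{8Q^2}{r^3})b$, $\piN_{3A}=\piN_{4A}=0$, and $\piN_{AB} = (-\tfrac{2\Up}{r}a + \tfrac{2}{r}b)\slashed g_{AB}$. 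At the horizon $\Up = 0$, $r = r_\HH = r_+$, and using $\tfrac{2M}{r_+^2}-\tfrac{2Q^2}{r_+^3} = \tfrac{1}{r_+^2}(2Mr_+ - 2Q^2)$; since $r_+ \approx 2M$ when $|Q|\ll M$ this quantity is $\approx \tfrac{1}{2M}$. Using $a(r_\HH)=0$, the surviving nonzero components at the horizon are $\piN_{44} = -4a'(r_\HH)$, $\piN_{34} = -2b'(r_\HH)$, $\piN_{33} = (-\tfrac{8M}{r^2}+\tfrac{8Q^2}{r^3})b(r_\HH) = \tfrac{8}{r_+}(\tfrac{M}{r_+}-\tfrac{Q^2}{r_+^2})$ (positive, of size $\approx \tfrac{2}{M}$), and $\piN_{AB} = \tfrac{2}{r}b(r_\HH)\slashed g_{AB} = -\tfrac{2}{r_+}\slashed g_{AB}$.

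Next I would contract with $\QQ$ using the identity $\QQ\c\piN = \tfrac12\QQ_{34}\piN_{34} + \tfrac14\QQ_{44}\piN_{33} + \tfrac14\QQ_{33}\piN_{44} + \QQ_{AB}\piN_{AB}$, together with $\QQ_{33} = |\nabb_3\Psi|^2$, $\QQ_{44} = |\nabb_4\Psi|^2$, $\QQ_{34} = |\nabb\Psi|^2 + V_i|\Psi|^2$, and $\slashed g^{AB}\QQ_{AB} = \QQ_{34} - \LL_i[\Psi] - V_i|\Psi|^2$ from \eqref{trace-Q}. On the horizon, since $\Up = 0$, one has $V_i(r_\HH) = -\ka\kab + (\text{terms}) $; in the normalization where $\ka\kab \propto \Up$ these potentials vanish at the horizon ($V_1 = \tfrac{1}{r^2}(4-\tfrac{8M}{r}+\tfrac{14Q^2}{r^2})$ does \emph{not} vanish, but the factor $\Up$ attached to $\ka\kab$ means $\ka\kab|_{\HH}=0$; I would just carry $V_i(r_\HH)$ as an explicit harmless constant times $|\Psi|^2$ since $|V_i(r_\HH)| \les M^{-2}$). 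Collecting the coefficients: the $|\nabb_3\Psi|^2$ term gets $-a'(r_\HH) \ge \tfrac{1}{4M}$; the $|\nabb_4\Psi|^2$ term gets $\tfrac14\piN_{33} \approx \tfrac{1}{2M}$ which is bounded below; the $|\nabb\Psi|^2$ term gets $\tfrac12\piN_{34} + \slashed g^{AB}\piN_{AB} = -b'(r_\HH) - \tfrac{4}{r_+}$ — here I would use $-b'(r_\HH) \ge \tfrac{1}{8M}$ but note $\tfrac{4}{r_+} \approx \tfrac{2}{M}$ is large and negative; this is the place where one must be slightly careful and either strengthen the lower bound on $-b'$ or absorb the angular term against the $\QQ_{AB}$ contribution — in fact the angular contribution $\QQ_{AB}\piN_{AB}$ also produces $-\tfrac{2}{r_+}(\LL_i + V_i|\Psi|^2)$, and $-\LL_i$ equals $-|\nabb\Psi|^2 + \nabb_3\Psi\c\nabb_4\Psi - V_i|\Psi|^2$, so the genuinely problematic part is a cross term $\nabb_3\Psi\c\nabb_4\Psi$ which is controlled by the already-coercive $|\nabb_3\Psi|^2$ and $|\nabb_4\Psi|^2$ terms via Cauchy-Schwarz (choosing $-a'(r_\HH)$ large enough, which the hypothesis permits). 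The net effect, after absorbing cross terms and using $|Q|\ll M$, is the claimed lower bound \eqref{eq:red-shift1} with constant $\tfrac{1}{8M}$.

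For the second inequality \eqref{eq:red-shiftEE1}, I would invoke the identity \eqref{le:divergPP-gen-EE} with $M=0$, $w=0$: $\EE[N,0](\Psi) = \tfrac12\QQ\c\piN - \tfrac12 N(V_i)|\Psi|^2$. The first term is bounded below by \eqref{eq:red-shift1}; the second is $-\tfrac12 N(V_i)|\Psi|^2 = -\tfrac12(a\nabb_3 + b\nabb_4)(V_i)|\Psi|^2$, and since $a(r_\HH)=0$, $b(r_\HH)=-1$, and $V_i$ and its first derivatives are bounded by $M^{-2}$ and $M^{-3}$ near the horizon, this is $O(M^{-3})|\Psi|^2$. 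Since $|\nabb\Psi|^2$ does not control $|\Psi|^2$ on its own, I would recover the $\tfrac{1}{M^2}|\Psi|^2$ term claimed in \eqref{eq:red-shiftEE1} not from $\EE$ directly but by noting (as in the standard Dafermos-Rodnianski construction) that the redshift estimate is applied in a small neighborhood of the horizon and combined with a Poincaré-type inequality (Lemma \ref{Poincare}: $\int_S|\nabb\Psi|^2 \ges \tfrac{1}{r^2}\int_S|\Psi|^2$), which transfers a portion of the coercive $|\nabb\Psi|^2$ into a coercive $M^{-2}|\Psi|^2$ term — here I would state this is valid provided $|Q|\ll M$ so that all the $r_\HH$-dependent constants are within $O(1)$ of their Schwarzschild values $r_\HH = 2M$. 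The resulting constant $\tfrac{1}{16M}$ absorbs the loss from Poincaré and from absorbing the $N(V_i)$ term, completing the proof.

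\textbf{Main obstacle.} The delicate point is the sign and size of the coefficient of $|\nabb\Psi|^2$ in $\QQ\c\piN$: the angular part of $\piN$ at the horizon, $\piN_{AB} = \tfrac{2}{r_+}b\,\slashed g_{AB}$, is not small, so the naive coefficient $-b'(r_\HH) - \tfrac{4}{r_+}$ of $|\nabb\Psi|^2$ would be negative. Resolving this requires carefully tracking that $\QQ_{AB}\piN_{AB}$, via the trace identity \eqref{trace-Q}, actually feeds its bad part into $-\LL_i[\Psi]$, whose genuinely dangerous piece is only the indefinite cross term $\nabb_3\Psi\c\nabb_4\Psi$; this cross term is then dominated by the two coercive null-derivative-squared terms, with the hypothesis $-a'(r_\HH) \ge \tfrac{1}{4M}$ giving exactly the slack needed (the standard mechanism behind the Dafermos-Rodnianski redshift vectorfield). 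Everything else is a routine algebraic bookkeeping of $r_\HH$-dependent constants, controlled uniformly by $|Q|\ll M$.
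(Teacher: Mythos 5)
Your argument for \eqref{eq:red-shift1} is essentially the paper's. You compute $\piN$ at the horizon from Lemma \ref{general-deformation}, and your key observation --- that the apparently bad coefficient $-b'-\tfrac{4}{r_+}$ of $|\nabb\Psi|^2$ is an artifact of the raw contraction, cured by routing $\QQ_{AB}\piN_{AB}$ through the trace identity \eqref{trace-Q} so that the only surviving indefinite piece is the cross term $\nabb_3\Psi\cdot\nabb_4\Psi$ with coefficient $\tfrac{2}{r}b$ --- is precisely the content of the second formula of Lemma \ref{le:QQcpidX}, which the paper invokes directly instead of rederiving. The genuine divergence is in \eqref{eq:red-shiftEE1}: the paper obtains the $\tfrac{1}{M^2}|\Psi|^2$ coercivity \emph{pointwise} from the potential contributions, namely the coefficient $\tfrac{2}{r}V_i-\tfrac12 N(V_i)$ of $|\Psi|^2$ in $\EE[N,0]$, which is checked to be $\ge\tfrac{1}{10M^3}$ at $r=r_\HH$ for both potentials (note $V_1(r_\HH)$ vanishes when $Q=0$, so it is the derivative term $-\tfrac12 N(V_1)=\tfrac12 V_1'(r_\HH)>0$ that supplies the positivity --- your remark that $V_i(r_\HH)$ is merely a ``harmless constant'' misses that it is actually the \emph{source} of the zeroth-order term in the paper's proof). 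Your alternative, recovering $|\Psi|^2$ from $|\nabb\Psi|^2$ via the Poincar\'e inequality \eqref{Poincare}, is legitimate but only yields \eqref{eq:red-shiftEE1} after integration over the spheres, whereas the lemma as stated, and as fed into the construction of $N_\HH$ and the combined Morawetz triplet, is a pointwise inequality; if you take this route you must restate the conclusion as $\int_S\EE[N,0](\Psi)\ge\dots$. One further caveat, shared with the paper's own ``easily follows'': with the minimal values $-a'=\tfrac{1}{4M}$, $\tfrac{2M}{r^2}-\tfrac{2Q^2}{r^3}\approx\tfrac{1}{2M}$ and cross-term coefficient $\tfrac{2}{r_\HH}\approx\tfrac{1}{M}$, the discriminant condition for the $(\nabb_3\Psi,\nabb_4\Psi)$ quadratic form to dominate $\tfrac{1}{8M}(|\nabb_3\Psi|^2+|\nabb_4\Psi|^2)$ fails, so your claim that the hypothesis on $-a'(r_\HH)$ gives ``exactly the slack needed'' is optimistic; a strictly larger lower bound on $-a'(r_\HH)$ (or the extra $2T$ built into $N_{(0)}$ immediately after the lemma) is what actually closes the absorption.
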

\begin{proof}
Note  that on the horizon we have $\Up=0$. Using the second formula of Lemma \ref{le:QQcpidX}, we have 
 \beaa
\QQ\c \piN  &=& \left(- b'+\left(\frac{2M}{r^2}-\frac{2Q^2}{r^3} \right)a \right)|\nabb \Psi|^2+ \left( \left(-\frac{2M}{r^2}+\frac{2Q^2}{r^3} \right)b \right)|\nabb_4 \Psi|^2 -a'|\nabb_3 \Psi|^2\\
 &&+\left(\frac{2}{r}b\right)\nabb_3 \Psi \c \nabb_4 \Psi+\left(- b'+\left(\frac{2M}{r^2}-\frac{2Q^2}{r^3}\right)a -\frac{2}{r}b\right)V_i |\Psi|^2
 \eeaa
 Hence, for $a(r_\HH)=0,\, b(r_\HH)=-1$ we have
  \beaa
\QQ\c \piN  &=& \left(- b' \right)|\nabb \Psi|^2+ \left(\frac{2M}{r^2}-\frac{2Q^2}{r^3} \right)|\nabb_4 \Psi|^2 -a'|\nabb_3 \Psi|^2-\frac{2}{r}\nabb_3 \Psi \c \nabb_4 \Psi+\left(- b' +\frac{2}{r}\right)V_i |\Psi|^2\\
&\ge& \frac{1}{ 4M}  |\nabb_3(\Psi)|^2+\frac{1}{4M}|\nabb_4(\Psi)|^2+\frac{1}{8M} |\nabb \Psi|^2-\frac 2 r e_3\Psi \c e_4\Psi
 \eeaa
 since $V_i(r_\HH) \geq 0$ for $i=1,2$. The desired lower bound        in \eqref{eq:red-shift1}  easily follows. 
 
 Using \eqref{le:divergPP-gen-EE}, we have along the horizon
   \beaa
   \EE[N, 0](\Psi)&=& \frac 1 2 \QQ  \c\piN - \frac 1 2 N( V_i ) |\Psi|^2  \\
   &=&  \frac{1}{16 M} (  | \nabb_3\Psi|^2+ |\nabb_4\Psi|^2+|\nabb \Psi|^2)+\left(\frac {2}{r} V_i - \frac 1 2 N( V_i )\right) |\Psi|^2  
 \eeaa
 and since on the horizon $\frac {2}{r} V_i - \frac 1 2 N( V_i )\geq \frac{1}{10 M^3} $ we have the desired bound.
 \end{proof}
 Note that the vectorfield  
     $ N_{(0)}= a e_3 +b e_4  +2T $ with $a=-1-\frac{1}{4M}(r-r_\HH)$, $b=-1- \frac{1}{8M}(r-r_\HH) $  verify the conditions of the second part of the     previous  lemma.
By picking a  positive   bump function $\ka=\ka(r) $, supported in the region  in $[-2, 2]$ and   equal to $1$   for $[-1,1]$  and   define, for sufficiently small  $\de_\HH>0$.
     \beaa
     N_\HH &=&    \ka_\HH      N_{(0)}, \qquad  \ka_\HH:= \ka(\frac{\Up}{\de_\HH})
     \eeaa
then   $ N_\HH         $   is a   smooth vectorfield   supported in the region 
$|\Up|\le 2 \de_\HH$   such  that    the following estimate holds for both equations 
            \beaa
       \EE[N_\HH, 0](\Psi)&\ge & \frac{1}{20M}  1_{|\Up|\le \de_\HH}   \left( |\nabb_3\Psi|^2 +|\nabb_4 \Psi|^2   + |\nabb\Psi|^2+\frac{1}{M^2}|\Psi|^2\right) \\
       &-&    \frac 1 M \de_\HH^{-1}  1_{\de_\HH\le |\Up| \le 2 \de_\HH}     \left( |\nabb_3\Psi|^2 +|\nabb_4 \Psi|^2   + |\nabb \Psi|^2 +\frac{1}{M^2}|\Psi|^2 \right)
       \eeaa
As done in \cite{stabilitySchwarzschild}, we consider the combined Morawetz triplet  
   \beaa
   (Y, w,  M ):= (f_\de R, w_\de,  2hR )+\ep_\HH( N_\HH, \, 0, 0 ), 
   \eeaa
    with $\ep_\HH>0$ 
sufficiently small to be determined later.  Here   $(f_\de R, w_\de, 2hR )$   is the triplet   given by Proposition  \ref{prop:pre-mor3}.
To combine  the estimates in the entire region  we use the modified vectorfields $\Rbrev, \Tbrev$ in the   horizon region defined by \eqref{eq:Rc-Tc}. 
 Then, for sufficiently small $ \de_\HH$, in the region  $-\de_\HH \le |\Up|$, \, $r\le \frac{5}{6} r_P$, we have with a constant  $\La_\HH^{-1} :=C^{-1} \de_\HH^4>0$
      \beaa
 \int_S  ( \EE_\de+\ep_\HH \EE_\HH)& \ge&    M^{-1}\La_\HH^{-1}\int_S\left( |\Rbrev(\Psi)|^2 +|\Tbrev \Psi|^2    +|\nabb\Psi|^2  + M^{-2} |\Psi|^2\right)  - \int_S\overline{W}_\de|\Psi|^2
     \eeaa

Finally, we can eliminate the potential $\overline{W}_\de$  by a procedure  analogous to that used in \cite{stabilitySchwarzschild}, by adding an additional well-chosen one-form $(0, 0, 2 h_2 \Rbrev) $.    

\subsection{Energy estimates}\label{subsection-energy-estimates}
We add to the previous Morawetz estimates the energy estimates obtained by using the vectorfield $T$. By Corollary \ref{lemma:componentspiR}, $\piT=0$, and since $T(V)= V' T(r)=0$, we have $\EE[T,0]=0.$ We can then consider the triplet 
  \bea\label{definition-X}
   (X, w, M):=  (\widetilde{X}, w, M) + 2\Lambda(T,0,0)  
  \eea
for $\Lambda$ big enough. We apply the divergence theorem to $\PP_\mu^{(X,  w, M)}$ in the spacetime region bounded by $\Sigma_0$ and $\Sigma_\tau$ in the exterior region. Recalling \eqref{eq:modified-div}, by divergence theorem we have:
    \bea
    \label{eq:boundariesMM}
    \bsplit
             & \int_{\Si_\tau}\PP\c n_\Si+\int_{\mathcal{H}^+(0, \tau)} \PP \c e_4+\int_{\mathcal{I}^+(0, \tau)} \PP \c e_3  +\int_{\MM(0, \tau)} \EE       \\
             & = \int_{\Si_0}\PP\c n_\Si- \int_{\MM(0, \tau)} ( X(\Psi) +\frac 1 2  w \Psi) \c \M[ \Psi ]
              \end{split}
              \eea
              where      $\EE=  \EE[X,  w,  M] (\Psi)$.
                           Thus in view of the above estimates   we have,  for  $C\gg \de^{-1}$,       \bea
       \label{equation:LBforEE}
        \int_{\MM(0, \tau)}  \EE&\geq C^{-1}  \left(\Mor[\Psi]( 0, \tau)+\Mor_{\mathcal{H}^+}[\Psi]( 0, \tau)\right)
       \eea
    We now analyze the boundary terms in \eqref{eq:boundariesMM}.
           \subsubsection{Boundary term  along  the horizon}    Along  the horizon we have  $w, h, h_2=0 $. Hence,
      \beaa
      \PP\c   e_4 &=& \QQ(\widetilde{X}+2\La T, e_4)=\QQ(f_\de R+ \ep_\HH N_\HH+2\La T, e_4)=\QQ(-\frac 1 2 f_\de e_3+ \ep_\HH e_4+\La e_3 , e_4) \\
      &=&\ep_\HH |\nabb_4 \Psi|^2+\left(\Lambda - \frac 1 2 f_\de\right)\QQ_{34}
            \eeaa
 and choosing $\La$ such that $\Lambda - \frac 1 2 f_\de\geq 0$, we obtain positivity, and in particular :         
           \beaa
    \int_{\mathcal{H}^+(0, \tau)}  \PP\c   e_4\ges E_{\mathcal{H}^+}[\Psi](0, \tau). 
      \eeaa

       \subsubsection{ Boundary terms along  null infinity}
  Along null infinity, we compute
  \beaa
  \PP \c e_3&\ge &  \frac{\La}{4}\left(  \QQ_{33}   +  \QQ_{34} \right) =  \frac{\La}{4}\left( |e_3\Psi|^2 +|\nabb\Psi|^2+ V |\Psi|^2\right)
  \eeaa
  Therefore
       \beaa
     \int_{\mathcal{I}^+(0, \tau)} \PP \c e_3  &\ges &   E_{\mathcal{I}^+, 0}[\Psi](0, \tau)
      \eeaa

      \subsubsection{ Boundary terms along $\Si_{0}, \Si_{\tau}$}
        Along   a hypersurface $\Si_\tau$ with timelike  unit  future normal  $n_{\Sigma}= a e_3 + b e_4 $,    we have,       
      \beaa
      \PP\c n_{\Sigma}&=&\QQ( \widetilde{X}+2\La T, n_\Si)+\frac 1 2 w \Psi n_\Si (\Psi) -\frac 1  4  n_\Si (w) \Psi^2 + \frac 1 2        n_\Si \c(  h R+ h_2 \Rbrev )        \Psi^2
      \eeaa 
      We can consider two regions.
      \begin{enumerate}
        \item       In the  region  where   $|\Up| \le \de_\HH$        we  have   
          $w=h=0$  and  $ h_2=O(\de)$.    Thus 
      \beaa
        \PP\c n_\Si&=&\QQ( X+\La T, n_\Si)+ \frac 1 2 h_2    n\c \Rbrev         \Psi^2 \\
        &\ge & \frac 1 2 \ep_\HH\left(   a  \QQ_{44} + b\QQ_{33}+ (a+b) \QQ_{34} \right) - O(\de)        \Psi^2\\
        &\ge&  \frac 1 2 \ep_\HH\left( a | e_3\Psi|^2 + b |e_4|\Psi|^3 + (a+b)( |\nabb\Psi|^2 + V|\Psi|^2) \right)  - O(\de)        \Psi^2\\
      \eeaa 
      Recalling  the Poincar\'e inequality \eqref{Poincare},
and  the fact that  $\de$ is much smaller  that $\ep_\HH$ we deduce, in this region,
\beaa
 \int_{\Si(\tau)}   \PP\c n_\Si \ge  \frac 1 8 \ep_\HH   E[\Psi](\tau)
\eeaa

\item  In the region  $|\Up| \geq \de_\HH $ we have $w = O(r^{-1})$   thus 
     \beaa
         \PP\c n&\ge & \frac 1 4  \La         \left(   a  \QQ_{44} + b\QQ_{33}+ (a+b) \QQ_{34} \right)- 
         O( r^{-2} |\Psi|^2)\\
         &=& \frac 1 4 \La         \left( a | e_3\Psi|^2 + b |e_4 \Psi|^3 + (a+b)( |\nabb\Psi|^2 + V |\Psi|^2) \right)  -O(r^{-2} |\Psi|^2) 
         \eeaa
         and since $V_i\ge\frac{4\de_\HH}{r^2}$ for $i=1,2$, we have  for $\La \de_\HH$   sufficiently large, in this region
\beaa
 \int_{\Si(\tau)}  \PP\c n\ge   E[\Psi](\tau)
\eeaa
\end{enumerate}

 From \eqref{eq:boundariesMM}, we therefore deduce the estimate     
         \bea\label{Morawetz-1}
         \begin{split}
     & E[\Psi](\tau)+E_{\mathcal{H}^+}[\Psi](0, \tau)+E_{\mathcal{I}^+, 0}[\Psi](0, \tau)+ \Mor[\Psi](0, \tau)+\Mor_{\mathcal{H}^+}[\Psi]( 0, \tau)\\
      &\les E[\Psi](0)-\int_{\MM(0, \tau)} ( X(\Psi) +\frac 1 2  w \Psi) \c \M_i[ \Psi ]
      \end{split}
      \eea

 \subsection{Improved  Morawetz estimate}
In what follows we derive an improved Morawetz estimate  where we replace $\Mor[\Psi](0, \tau)$ by $\Morr[\Psi](0, \tau)$. We apply the vectorfield method to $T^\de=r^{-\de} T$. 

 \begin{proposition}
\label{prop:improved-Morawetz}
Let $ T^\de= r^{-\de} T$ . Then for a solution $\Psi$ to the equation \eqref{general-equation}, we have
\beaa
\EE[T^\de, 0,0](\Psi) &=&-\frac 1 4 |\nabb_4\Psi|^2   \de r^{-1-\de}\Up^2 +\frac 1 4 |\nabb_3\Psi|^2 \de r^{-1-\de} , \\
\PP^{(T^\de, 0,0)}  \c e_4 &=& \frac 1 2 r^{-\de}\Up |\nabb_4\Psi|^2+\frac 1 2 r^{-\de} (|\nabb\Psi|^2+V_i |\Psi|^2)\\
\PP^{(T^\de, 0,0)}  \c e_3 &=& \frac 1 2 r^{-\de} |\nabb_3\Psi|^2+\frac 1 2 r^{-\de}\Up (|\nabb\Psi|^2+V_i |\Psi|^2)
\eeaa
\end{proposition}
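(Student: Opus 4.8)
The statement is the standard vectorfield bookkeeping for the multiplier $X=T^\de=r^{-\de}T$ with trivial lower-order data $w=0$, $M=0$, so the plan is to specialize the general identity \eqref{le:divergPP-gen-EE} of Proposition \ref{prop:qf-tensorial'}. With $w=0$ and $M=0$ one has $\PP_\mu^{(T^\de,0,0)}[\Psi]=\QQ_{\mu\nu}(T^\de)^\nu$ and
\[
\EE[T^\de,0](\Psi)=\tfrac12\,\QQ\cdot\pi^{(T^\de)}-\tfrac12\,T^\de(V_i)|\Psi|^2 .
\]
First I would note that $T^\de(V_i)=0$: by \eqref{R(r)} we have $T(r)=0$, and $V_i$ is a function of $r$ only (see \eqref{potentials}), so $T^\de(V_i)=r^{-\de}T(V_i)=r^{-\de}V_i'(r)T(r)=0$. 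Hence $\EE[T^\de,0](\Psi)=\tfrac12\,\QQ\cdot\pi^{(T^\de)}$ and the whole first identity reduces to computing this contraction.

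For the deformation-tensor term I would use that $T$ is Killing, so by Corollary \ref{lemma:componentspiR} its deformation tensor vanishes; consequently the deformation tensor of $T^\de=r^{-\de}T$ is purely the ``gradient part'',
\[
\pi^{(T^\de)}_{\mu\nu}=(\partial_\mu r^{-\de})\,T_\nu+(\partial_\nu r^{-\de})\,T_\mu=-\de\,r^{-\de-1}\big(T_\nu\,\partial_\mu r+T_\mu\,\partial_\nu r\big).
\]
Contracting with the symmetric tensor $\QQ$ and using that in $(t,r)$-coordinates $\nabla r=\Up\,\partial_r=R$ gives $\QQ\cdot\pi^{(T^\de)}=-2\de\,r^{-\de-1}\,\QQ(R,T)$. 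Writing $R=\tfrac12(-e_3+\Up e_4)$ and $T=\tfrac12(e_3+\Up e_4)$ and using the values \eqref{components-Q} of the components of $\QQ$, the mixed term $\QQ_{34}$ cancels and one gets $\QQ(R,T)=\tfrac14\big(\Up^2\QQ_{44}-\QQ_{33}\big)=\tfrac14\big(\Up^2|\nabb_4\Psi|^2-|\nabb_3\Psi|^2\big)$, hence
\[
\EE[T^\de,0](\Psi)=\tfrac12\,\QQ\cdot\pi^{(T^\de)}=-\tfrac14\,\de\,r^{-1-\de}\Up^2|\nabb_4\Psi|^2+\tfrac14\,\de\,r^{-1-\de}|\nabb_3\Psi|^2,
\]
as claimed. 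Equivalently, one may simply apply Lemma \ref{le:QQcpidX} with $a(r)=\tfrac12 r^{-\de}$, $b(r)=\tfrac{\Up}{2}r^{-\de}$, using $\Up'=\tfrac{2M}{r^2}-\tfrac{2Q^2}{r^3}$ from \eqref{Omegab'}; a short computation shows the coefficients of $|\nabb\Psi|^2$, of $\nabb_3\Psi\cdot\nabb_4\Psi$, and of $V_i|\Psi|^2$ all vanish, leaving exactly the two surviving terms above.

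Finally, for the boundary fluxes I would expand $\PP^{(T^\de,0,0)}\cdot e_4=\QQ_{\mu\nu}(T^\de)^\nu(e_4)^\mu=\QQ(T^\de,e_4)$. Since $T^\de=\tfrac12 r^{-\de}e_3+\tfrac{\Up}{2}r^{-\de}e_4$, this is $\tfrac12 r^{-\de}\QQ_{34}+\tfrac{\Up}{2}r^{-\de}\QQ_{44}$, and inserting $\QQ_{34}=|\nabb\Psi|^2+V_i|\Psi|^2$ and $\QQ_{44}=|\nabb_4\Psi|^2$ from \eqref{components-Q} gives the second identity. The third is obtained identically with $e_3$ in place of $e_4$, using $\QQ_{33}=|\nabb_3\Psi|^2$. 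I do not expect any genuine obstacle here; the only step needing slight care is the cancellation of the three coefficients in $\QQ\cdot\pi^{(T^\de)}$, which is immediate once one exploits that $T$ has vanishing deformation tensor.
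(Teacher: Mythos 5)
Your proposal is correct and follows essentially the same route as the paper: since $T$ is Killing, only the gradient part of $\pi^{(T^\de)}$ survives, and the contraction $\QQ\cdot\pi^{(T^\de)}$ is evaluated in the null frame using $e_3(r^{-\de})=\de r^{-1-\de}\Up$, $e_4(r^{-\de})=-\de r^{-1-\de}$, with the $\QQ_{34}$ cross-terms cancelling exactly as you note, while the fluxes are just $r^{-\de}\QQ(T,e_{3/4})$ expanded via \eqref{components-Q}. Your intermediate repackaging as $-2\de r^{-1-\de}\QQ(R,T)$ and the alternative check via Lemma \ref{le:QQcpidX} are cosmetic variants of the same computation.
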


We are now ready to derive the improved Morawetz estimates for $\Psi$. 

\begin{theorem}\label{final-Morawetz} Let $\Psi$ be a solution to the equation \eqref{general-equation}. Consider a fix $\de>0$. Then the following improved Morawetz estimates hold
     \beaa
  && E[\Psi](\tau)+E_{\mathcal{H}^+}[\Psi](0, \tau)+E_{\mathcal{I}^+, 0}[\Psi](0, \tau)\\
  &&+ \Morr[\Psi](0, \tau)+\Mor_{\mathcal{H}^+}[\Psi]( 0, \tau)-\int_{\MM_{r \geq R}(0, \tau)}\frac 1 4 |\nabb_4\Psi|^2   \de r^{-1-\de}\Up^2\\
    &&\les E[\Psi](0)-          \int_{\MM(0,\tau)} \left(X(\Psi) +\frac 1 2  w \Psi+\th(r) r^{-\de}T( \Psi )\right)\c \M_i[\Psi]
  \eeaa
  \end{theorem}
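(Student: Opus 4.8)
The plan is to combine the Morawetz estimate already obtained in \eqref{Morawetz-1} with the additional estimate generated by the vectorfield $T^\de = r^{-\de}T$ of Proposition \ref{prop:improved-Morawetz}, localized away from the horizon by a cutoff $\th(r)$. First I would fix a smooth cutoff $\th=\th(r)$ which vanishes for $r \le R$ (with $R$ the far-away parameter) and equals $1$ for $r \ge 2R$, and apply the divergence identity \eqref{eq:modified-div} to the current $\PP^{(\th T^\de, 0, 0)}_\mu[\Psi]$ on the region $\MM(0,\tau)$. By Proposition \ref{prop:improved-Morawetz}, in the region $r \ge 2R$ the bulk $\EE[T^\de,0,0](\Psi)$ contributes $\frac 14 \de r^{-1-\de}|\nabb_3\Psi|^2 - \frac14 \de r^{-1-\de}\Up^2 |\nabb_4\Psi|^2$; the first term is exactly the positive $r^{-1-\de}|\nabb_3\Psi|^2$ density appearing in $\Morr[\Psi](0,\tau)$, while the second term is negative but lower order in the $r$-weight and will be carried to the left-hand side as written in the statement. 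In the transition region $R \le r \le 2R$ where $\th' \ne 0$, the extra commutator terms involving $\th'$ are supported on a compact $r$-interval and are controlled by the non-degenerate (away from the photon sphere, away from the horizon) bulk $\Mor[\Psi](0,\tau)$ already present on the left-hand side of \eqref{Morawetz-1}; these can be absorbed after multiplying the $T^\de$-estimate by a sufficiently small constant.

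Next I would handle the boundary terms. On $\Si_0$ and $\Si_\tau$, the fluxes $\PP^{(\th T^\de,0,0)}\c n_\Si$ are, by the formulas in Proposition \ref{prop:improved-Morawetz}, bounded by $E[\Psi](\tau)$ and $E[\Psi](0)$ respectively — indeed $\PP^{(T^\de,0,0)}\c e_4$ and $\PP^{(T^\de,0,0)}\c e_3$ are manifestly non-negative (up to the potential $V_i \ge 0$), and since $\th$ is supported in $r \ge R$ the fluxes are already non-degenerate there. Along $\mathcal{H}^+$ there is no contribution at all since $\th \equiv 0$ near the horizon. Along $\mathcal{I}^+$, $\PP^{(T^\de,0,0)}\c e_3 = \frac12 r^{-\de}|\nabb_3\Psi|^2 + \frac12 r^{-\de}\Up(|\nabb\Psi|^2 + V_i|\Psi|^2) \ge 0$, so this boundary term has a favorable sign and can simply be dropped (or kept, contributing to $E_{\mathcal{I}^+,0}$). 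Finally, the inhomogeneous term from the identity \eqref{eq:modified-div} applied to $\th T^\de$ produces precisely $-\int_{\MM(0,\tau)} \th(r) r^{-\de} T(\Psi) \c \M_i[\Psi]$, which is the extra term displayed in the statement.

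Adding a small multiple $\ep$ of the resulting $\th T^\de$-estimate to \eqref{Morawetz-1}: the left-hand side gains $\ep \int_{\MM_{r \ge 2R}} \frac14 \de r^{-1-\de}|\nabb_3\Psi|^2$, which upgrades $\Mor[\Psi](0,\tau)$ to $\Morr[\Psi](0,\tau)$ by the very definition \eqref{def:Mor-bulk} of the improved bulk; the loss term $-\int_{\MM_{r \ge R}(0,\tau)} \frac14 \de r^{-1-\de}\Up^2 |\nabb_4\Psi|^2$ is kept on the left-hand side with a minus sign as in the statement of the theorem (it will later be absorbed, after commutation and combination with the $r^p$-estimates, using the better $r$-weights on the $\nabb_4$-derivative); and the transition-region errors from $\th'$ are absorbed into the already-controlled non-degenerate part of $\Mor[\Psi](0,\tau)$ at the cost of enlarging the universal constant. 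Collecting the right-hand sides gives exactly $E[\Psi](0) - \int_{\MM(0,\tau)}(X(\Psi) + \frac12 w\Psi + \th(r)r^{-\de}T(\Psi))\c \M_i[\Psi]$.

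The main obstacle I anticipate is bookkeeping the sign and weight of the $-\frac14 \de r^{-1-\de}\Up^2|\nabb_4\Psi|^2$ term: it is genuinely negative and not absorbable at this stage, so one must be careful to state the estimate with this term explicitly on the left (as the theorem does) rather than pretending it is harmless; the resolution is that in the subsequent $r^p$-hierarchy the $\nabb_4$-derivative always comes with strictly better $r$-weights, so when the Morawetz and $r^p$ estimates are combined this defect is dominated. A secondary technical point is ensuring that the $\th'$-supported commutator terms really do land inside the region $r \le 2R$ where $\Mor[\Psi]$ is non-degenerate — this requires $R \gg r_P$ and $R \gg r_\HH$, which holds by the choice of $R$ in Section \ref{foliation}.
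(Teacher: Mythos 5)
Your proposal is correct and follows essentially the same route as the paper: a cutoff of the $r^{-\delta}T$ multiplier supported in the far region, the divergence identity \eqref{eq:modified-div}, control of the transition-region bulk and boundary terms by $\Mor[\Psi]$ and $E[\Psi]$ (hence by \eqref{Morawetz-1}), positivity of the fluxes on $\Sigma_\tau$ and $\mathcal{I}^+$, and retention of the negative $\de r^{-1-\de}\Up^2|\nabb_4\Psi|^2$ term on the left to be absorbed later by the $r^p$-hierarchy. The only differences are cosmetic (the cutoff conventions and absorbing the transition terms via a small constant rather than by re-invoking \eqref{Morawetz-1}).
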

\begin{proof}

 Let   $\th=\th(r) $ supported    for   $r\ge R/2$   with     $\th=1$ for $r\ge R$. Consider the vectorfield $T^{\de}_R=\th(r)T^\de$.  We apply the divergence theorem to $ \PP^{(\de)}:=\PP^{(T^{\de}_R, 0, 0)}$ in the spacetime region bounded by $\Sigma_0$ and $\Sigma_\tau$. By \eqref{eq:modified-div}, we have 
 \beaa
  \D^\mu  \PP_\mu^{(T^{\de}_R, 0, 0)}[\Psi]&=&  \EE[T^{\de}_R, 0, 0](\Psi)  +   \th(r) r^{-\de}T( \Psi )\c \M_i   
  \eeaa
By divergence theorem we then have:
     \beaa
   \int_{\Si_\tau}  \PP^{(\de)}\c  n_\Si +\int_{\mathcal{I}^+(0, \tau)}  \PP^{(\de)}\c  e_3  +\int_{\MM(0, \tau)}\EE[T^{\de}_R, 0, 0]=\int_{\Si_0} \PP^{(\de)}\c   n_\Si -          \int_{\MM(0,\tau)} \th(r) r^{-\de}T( \Psi )\c \M_i
     \eeaa
     By estimating the region where $r$ is bounded by $R$ by the energies, and using the Morawetz estimates \eqref{Morawetz-1} and Proposition \ref{prop:improved-Morawetz}, we obtain the desired estimate. 
  \end{proof}
 Observe that the term $-\int_{\MM_{r \geq R}(0, \tau)}\frac 1 4 |\nabb_4\Psi|^2   \de r^{-1-\de}\Up^2$ in the estimates above will be absorbed in the next subsection by the $r^p$-hierarchy estimates.

 \subsection{The $r^p$-hierarchy estimates}\label{section-rp-estimates}
  
  To derive the $r^p$-estimates, we apply the vector field method to $Z=l(r) e_4$. We choose $w$ as a function of $l$.
 
 \begin{proposition}\label{main-identity-r^pvf} Let $Z=l(r) e_4$ and $w=\frac{2l}{r}$. Then for a solution $\Psi$ to the equation \eqref{general-equation}, we obtain
 \beaa
 \EE[Z, w](\Psi)   &=& \frac 1 2\left(- l' +\frac 2 r l\right)\left(|\nabb \Psi|^2+V_i |\Psi|^2\right)+ \frac 1 2 l' |\nabb_4 \Psi|^2-  \frac{l''}{2r}|\Psi|^2\\
  &&+O\left(\frac{M}{r^2}, \frac{Q^2}{r^3}\right) |l| |\nabb_4 \Psi|^2+O\left(\frac{M}{r^4}, \frac{Q^2}{r^5}\right) \left[ |l|+r|l'|+r^2|l''|  \right]|\Psi|^2
   \eeaa
 \end{proposition}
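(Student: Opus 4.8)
The plan is to apply the general divergence identity of Proposition~\ref{prop:qf-tensorial'}, specialized to the vectorfield $Z = l(r)e_4$ and the scalar function $w = \frac{2l}{r}$, and then extract the terms carefully using the explicit values of the Ricci coefficients for the frame $\mathscr{N}$ with $e_4$ geodesic. Concretely, I would start from formula \eqref{le:divergPP-gen-EE}, which reads
\[
\EE[Z, w](\Psi) = \frac 1 2 \QQ \c \piZ + \left(-\frac 1 2 Z(V_i) - \frac 1 4 \Box_\g w\right)|\Psi|^2 + \frac 1 2 w\, \LL_i[\Psi],
\]
(taking $M=0$), and substitute the two ingredients computed earlier in the excerpt: the expression for $\QQ\c\piZ$ from Corollary~\ref{QQcpiY}, namely
\[
\QQ \c \piZ = \left(-l' + \frac 2 r l\right)|\nabb\Psi|^2 + \left(\Up l' + \left(-\frac{2M}{r^2}+\frac{2Q^2}{r^3}\right)l\right)|\nabb_4\Psi|^2 - \frac 2 r l\, \LL_i[\Psi] - l' V_i|\Psi|^2,
\]
and the radial wave operator formula from Lemma~\ref{calculation:square-radial}, $\Box_\g w = r^{-2}\pr_r(r^2\Up\pr_r w)$.

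\textbf{Key steps.} First, I would observe that the choice $w = \frac{2l}{r}$ is engineered precisely so that the $\LL_i[\Psi]$ terms cancel: in $\QQ\c\piZ$ the Lagrangian appears with coefficient $-\frac 2 r l$, while the $\frac 12 w \LL_i[\Psi]$ term contributes $+\frac 1 r l\, \LL_i[\Psi]$; combining $\frac12 \QQ\c\piZ$ with $\frac12 w\LL_i$ gives total coefficient $-\frac 1 r l + \frac 1 r l = 0$. This is the same mechanism used in Proposition~\ref{prop-ident-Mor1} for the Morawetz vectorfield. Second, I would collect the genuinely positive principal terms: from $\frac12\QQ\c\piZ$ we get $\frac12(-l'+\frac2r l)|\nabb\Psi|^2 + \frac12\Up l'|\nabb_4\Psi|^2$, and I would split $\Up l' = l' - (1-\Up)l'$ with $1-\Up = \frac{2M}{r} - \frac{Q^2}{r^2}$, so that $\frac12\Up l'|\nabb_4\Psi|^2 = \frac12 l'|\nabb_4\Psi|^2 + O(\frac Mr, \frac{Q^2}{r^2})|l'|\,|\nabb_4\Psi|^2$; since $|l'|$ contributes an $O(\frac{M}{r^2},\frac{Q^2}{r^3})|l|$-type error only after using that $l$ is a bounded growing weight (this bookkeeping is exactly the schematic $O$-notation in the statement, absorbing $r|l'|$ and $r^2|l''|$ into the error). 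Third, I would handle the $|\Psi|^2$-coefficient: the $V_i$ contributions are $-\frac12 Z(V_i) - \frac12 l' V_i = -\frac12 l\,e_4(V_i) - \frac12 l' V_i$; since $V_i = \frac{1}{r^2}(\text{const} + O(\frac Mr, \frac{Q^2}{r^2}))$ and $e_4(r)=1$, one has $V_i = \frac{c_i}{r^2} + O(\frac{M}{r^3},\frac{Q^2}{r^4})$ so $-\frac12 l' V_i$ produces the term that combines with $\frac12(-l'+\frac2r l)V_i|\Psi|^2$ coming from the $|\nabb\Psi|^2$-grouping (I would keep $\frac12(-l'+\frac2r l)(|\nabb\Psi|^2 + V_i|\Psi|^2)$ as the grouped "good" quantity, matching $\QQ_{34}$), with all remaining pieces being $O(\frac{M}{r^4},\frac{Q^2}{r^5})$ errors multiplied by $|l|$, $r|l'|$, $r^2|l''|$. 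Fourth, for the $-\frac14\Box_\g w$ term I would compute $\Box_\g(\frac{2l}{r}) = r^{-2}\pr_r(r^2\Up\pr_r(\frac{2l}{r}))$; the leading behavior is $-\frac{l''}{r}$ (from $\pr_r(r^2\cdot\frac{2l'}{r} - 2l)$ type terms with $\Up\approx 1$), yielding $-\frac14\Box_\g w|\Psi|^2 = -\frac{l''}{2r}|\Psi|^2 + O(\frac{M}{r^4},\frac{Q^2}{r^5})[\,|l|+r|l'|+r^2|l''|\,]|\Psi|^2$, which is the displayed $-\frac{l''}{2r}|\Psi|^2$ term plus errors. Assembling, and using $|\nabb\Psi|^2 + V_i|\Psi|^2 = \QQ_{34}$, produces exactly the claimed identity.

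\textbf{Main obstacle.} The only delicate point is the careful tracking of the lower-order $|\Psi|^2$-coefficient through the $\Box_\g w$ computation and through the $e_4$-derivatives of $V_i$ and of the metric function $\Up$: one must verify that after all cancellations the residual terms are genuinely of the advertised size $O(\frac{M}{r^4},\frac{Q^2}{r^5})$ times the combination $|l| + r|l'| + r^2|l''|$, and in particular that no term of size $r^{-1}|l'|$ or $r^{-2}|l|$ survives outside the explicitly displayed $-\frac{l''}{2r}|\Psi|^2$ and $\frac12(-l'+\frac2r l)V_i|\Psi|^2$ pieces. This is routine but requires patience; it is the analogue of the corresponding verification in the Schwarzschild case in \cite{DHR} and \cite{stabilitySchwarzschild}, and the presence of the extra $Q^2/r^3$ corrections in $\Up$, $\om$, $\omb$ only changes the subleading error terms, not the structure. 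Everything else follows by direct substitution into Proposition~\ref{prop:qf-tensorial'} and Corollary~\ref{QQcpiY}.
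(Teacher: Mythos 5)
Your proposal follows the paper's proof essentially verbatim: substitute Corollary \ref{QQcpiY} and Lemma \ref{calculation:square-radial} into \eqref{le:divergPP-gen-EE}, use $w=\frac{2l}{r}$ to cancel the Lagrangian term, group $\QQ_{34}=|\nabb\Psi|^2+V_i|\Psi|^2$ with coefficient $\frac 1 2(-l'+\frac 2 r l)$, and observe that the leftover potential contribution $-\frac{l}{2}\left(V_i'+\frac{2}{r}V_i\right)$ is $O\left(\frac{M}{r^4},\frac{Q^2}{r^5}\right)|l|$ because the leading $c_i/r^2$ part of $V_i$ is annihilated by this combination. The only blemish is the intermediate claim that the leading behavior of $\Box_\g w$ is $-\frac{l''}{r}$ (it is $+\frac{2l''}{r}$), but the final term you extract, $-\frac 1 4 \Box_\g w\,|\Psi|^2=-\frac{l''}{2r}|\Psi|^2$ plus admissible errors, is correct and matches the paper.
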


We will correct the definition of $\EE$ by using a one form $M$ in order to  compensate for   the    term $ -   \frac 1 2   r^{-1} l''  |\Psi|^2$ in the above expression. 

\begin{proposition}
\label{prop:QC-general-multiplier2} Let $Z$ and $w$ be as in Proposition \ref{main-identity-r^pvf}, and let  $M=\frac{ 2l'}{r}  e_4=\frac{2l'}{rl} Z$. Then for a solution $\Psi$ to the equation \eqref{general-equation},   we have
\beaa
 \label{eq:rp-Daf-Rodn1}
\EE[Z, w , M]&=&  \frac  1 2  l'\big| \check{\nabb}_4(\Psi)\big|^2+\frac 1 2  \left(- l'+\frac{2l}{r}\right)(|\nabb\Psi|^2+V_i|\Psi|^2)\\
 &&+O\left(\frac{M}{r^2}, \frac{Q^2}{r^3}\right) |l| |\nabb_4 \Psi|^2+O\left(\frac{M}{r^4}, \frac{Q^2}{r^5}\right) \left[ |l|+r|l'|+r^2|l''|  \right]|\Psi|^2
\eeaa
\end{proposition}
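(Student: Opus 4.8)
The plan is to reduce the computation to the already-established Proposition \ref{main-identity-r^pvf} and only work out the single extra term generated by the one-form $M$. Indeed, in the identity \eqref{le:divergPP-gen-EE} the one-form $M$ enters $\EE[X,w,M]$ only through the summand $\frac 14 \Db^\mu(\Psi^2 M_\mu)$, so
\[
\EE[Z, w, M](\Psi) = \EE[Z, w](\Psi) + \frac 14 \Db^\mu(\Psi^2 M_\mu),
\]
and I may substitute the expression for $\EE[Z,w](\Psi)$ obtained in Proposition \ref{main-identity-r^pvf}. Thus the whole proof comes down to computing the divergence of the vectorfield $\frac{2l'}{r}|\Psi|^2 e_4$, since $M_\mu = \frac{2l'}{r}(e_4)_\mu$.

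First I would compute $\frac 14 \Db^\mu(\Psi^2 M_\mu) = \frac 14 \Div\!\left(\frac{2l'}{r}|\Psi|^2 e_4\right)$ using $\Div(\phi e_4) = e_4(\phi) + \phi\,\Div e_4$. In the frame $\mathscr{N}$ (with $e_4$ geodesic) one has $\Div e_4 = \ka = \frac 2r$ and $e_4(r) = 1$ exactly, by Lemma \ref{general-deformation} and \eqref{derivatives-r}; moreover, since $\D_4 e_A = 0$ in this frame, $e_4(|\Psi|^2) = 2\,\Psi\c\nabb_4\Psi$. Putting $\phi = \frac{2l'}{r}|\Psi|^2$ and using $e_4\!\left(\frac{2l'}{r}\right) = \frac{2l''}{r} - \frac{2l'}{r^2}$, this yields
\[
\frac 14 \Db^\mu(\Psi^2 M_\mu) = \frac{l''}{2r}|\Psi|^2 + \frac{l'}{2r^2}|\Psi|^2 + \frac{l'}{r}\,\Psi\c\nabb_4\Psi .
\]

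Next I would add this to the expression of Proposition \ref{main-identity-r^pvf}. The crucial point — and the reason for the particular choice of $M$ — is that the term $-\frac{l''}{2r}|\Psi|^2$ appearing in $\EE[Z,w](\Psi)$ is exactly cancelled by the $+\frac{l''}{2r}|\Psi|^2$ above. What survives, apart from the term $\frac 12\left(-l'+\frac{2l}{r}\right)(|\nabb\Psi|^2 + V_i|\Psi|^2)$ and the lower-order errors, is $\frac 12 l'|\nabb_4\Psi|^2 + \frac{l'}{r}\Psi\c\nabb_4\Psi + \frac{l'}{2r^2}|\Psi|^2$. Since $\check{\nabb}_4\Psi = \nabb_4\Psi + \frac 1r\Psi$, one has $|\check{\nabb}_4\Psi|^2 = |\nabb_4\Psi|^2 + \frac 2r\,\Psi\c\nabb_4\Psi + \frac{1}{r^2}|\Psi|^2$, so these three terms combine into $\frac 12 l'|\check{\nabb}_4\Psi|^2$, which is exactly the first term of the claimed identity; the $O\!\left(\frac{M}{r^2},\frac{Q^2}{r^3}\right)$ and $O\!\left(\frac{M}{r^4},\frac{Q^2}{r^5}\right)$ errors are simply carried over verbatim from Proposition \ref{main-identity-r^pvf}.

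There is no substantial obstacle: the argument is a short, essentially algebraic manipulation built on top of Proposition \ref{main-identity-r^pvf}. The only points demanding a little care are (i) the $S$-tensor bookkeeping justifying $e_4(|\Psi|^2) = 2\,\Psi\c\nabb_4\Psi$, which relies on the angular frame being parallel along $e_4$ in $\mathscr{N}$, and (ii) checking that all the terms proportional to $\frac Mr$, $\frac{Q^2}{r^2}$ and their $r$-derivatives — once $l$ is eventually taken to be a power $r^p$ — remain inside the advertised error classes, exactly as in the proof of Proposition \ref{main-identity-r^pvf}.
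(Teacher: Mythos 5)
Your proposal is correct and follows essentially the same route as the paper: expand $\frac14\D^\mu(\Psi^2 M_\mu)$, observe that its $\frac{l''}{2r}|\Psi|^2$ contribution cancels the corresponding term in Proposition \ref{main-identity-r^pvf}, and complete the square to produce $\frac12 l'|\check{\nabb}_4\Psi|^2$. The only cosmetic difference is that the paper evaluates $\D^\mu M_\mu$ by writing $M=\frac{2l'}{rl}Z$ and using $\tr\,\piZ$, whereas you compute $\Div(\phi e_4)$ directly via $\Div e_4=\ka=\frac{2}{r}$; the two computations give the same value $\frac{2l'}{r^2}+\frac{2l''}{r}$.
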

\begin{proof} By \eqref{le:divergPP-gen-EE} we get
\beaa
\EE[Z, w , M]&=&\EE[Z, w]+\frac 1 4( \D^\mu M_\mu)|\Psi|^2+\frac 12 \Psi \c M(\Psi)
\eeaa
We compute, using Corollary \ref{lemma:componentspiR}
\beaa
\D^\mu M_\mu&=&\g^{\mu\nu}\D_\nu M_\mu=\g^{\mu\nu}\D_\nu(\frac{2l'}{rl} Z_\mu)=  \frac{l'}{r l}  \tr\piZ+Z(  \frac{2l'}{r l})=  \frac{l'}{r l}  (2l'+\frac{4l}{r})+2l e_4\left(  \frac{l'}{r l}\right)=\frac{2 l'}{r^2} +\frac{2l''}{r}
\eeaa
We deduce, 
\beaa
\EE[Z, w , M]&=& \frac 1 2\left(- l' +\frac 2 r l\right)\left(|\nabb \Psi|^2+V_i |\Psi|^2\right)+ \frac{l'}{2} |\nabb_4 \Psi|^2+\frac{l'}{2r^2}|\Psi|^2+  \frac{l'}{r}     \Psi \c e_4(\Psi)\\
 &&+O\left(\frac{M}{r^2}, \frac{Q^2}{r^3}\right) |l| |\nabb_4 \Psi|^2++O\left(\frac{M}{r^4}, \frac{Q^2}{r^5}\right) \left[ |l|+r|l'|+r^2|l''|  \right]|\Psi|^2
\eeaa
Writing $ \frac 1 2 l'|e_4\Psi|^2+ \frac{ l'}{2r^2}|\Psi|^2+  r^{-1} l'     \Psi \c e_4(\Psi)=\frac  1 2  l'( e_4(\Psi)+ r^{-1} \Psi)^2=\frac 1 2 l' |\check{\nabb}_4 \Psi|^2$, we get the desired expression.
\end{proof}

We now relate the bulk $\EE$ with the weighted bulk norm in the far away region. 
 \begin{corollary}\label{bound-E-rp}
 Assume that $l(r)=r^p$. Given a fixed $\de>0$, for all $\de \le p\le 2-\de$  and $ R\gg \max(\frac{M}{\de}, \frac{Q^2}{\de^2}) $, the following estimate holds
 \beaa
 \label{eq:Dafermos-Rodn1}
 \int_{\MM_{\ge R}(0,\tau)  }\EE[Z, w , M]&\ge &\frac 1  4 \MMdot_{p\,; \,R}[\Psi](0, \tau)
 \eeaa
\end{corollary}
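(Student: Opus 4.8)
The plan is to combine the pointwise identity for $\EE[Z,w,M]$ from Proposition \ref{prop:QC-general-multiplier2} with the smallness conditions on $R$, and then integrate over the far-away region, comparing against the definition \eqref{Morawetz-far-away} of $\MMdot_{p\,;\,R}$. First I would substitute $l(r)=r^p$ into the expression of Proposition \ref{prop:QC-general-multiplier2}: then $l'=pr^{p-1}$, $l''=p(p-1)r^{p-2}$, so that
\[
\frac 12 l' = \frac p2 r^{p-1}, \qquad \frac12\left(-l'+\frac{2l}{r}\right)=\frac{2-p}{2}r^{p-1},
\]
and the principal part of $\EE[Z,w,M]$ becomes
\[
\frac p2 r^{p-1}|\check\nabb_4\Psi|^2+\frac{2-p}{2}r^{p-1}\left(|\nabb\Psi|^2+V_i|\Psi|^2\right),
\]
which, after inserting the lower bound $V_i\gtrsim r^{-2}$ valid for large $r$ (immediate from \eqref{potentials}, since $V_i=4r^{-2}+O(Mr^{-3})$), dominates a multiple of $r^{p-1}\big(p|\check\nabb_4\Psi|^2+(2-p)(|\nabb\Psi|^2+r^{-2}|\Psi|^2)\big)$, i.e.\ precisely the integrand of $\MMdot_{p\,;\,R}$.

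Next I would bound the error terms. The terms $O(Mr^{-2},Q^2r^{-3})|l||\nabb_4\Psi|^2$ are of size $O(Mr^{p-2},Q^2r^{p-3})|\nabb_4\Psi|^2$; writing $\nabb_4\Psi=\check\nabb_4\Psi-r^{-1}\Psi$ we split this into a piece $\lesssim Mr^{-1}\cdot r^{p-1}|\check\nabb_4\Psi|^2$ and a piece $\lesssim Mr^{-1}\cdot r^{p-3}|\Psi|^2$. Similarly the terms $O(Mr^{-4},Q^2r^{-5})[|l|+r|l'|+r^2|l''|]|\Psi|^2$ have $|l|+r|l'|+r^2|l''|\lesssim (1+p+p^2)r^p\lesssim r^p$ for $\de\le p\le 2-\de$, hence are $\lesssim M r^{-1}\cdot r^{p-3}|\Psi|^2 + Q^2 r^{-2}\cdot r^{p-3}|\Psi|^2$. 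In every case the prefactor carries an extra power $Mr^{-1}$ or $Q^2r^{-2}$, which is $\le M R^{-1}+Q^2R^{-2}\ll\de$ throughout $\MM_{\ge R}$ by the hypothesis $R\gg\max(M/\de,Q^2/\de^2)$; since the coefficients in the main part are $\gtrsim\de r^{p-1}$ (using $p\ge\de$ and $2-p\ge\de$), the error terms are absorbed into half of the positive main terms. Integrating the resulting pointwise inequality over $\MM_{\ge R}(0,\tau)$ and recalling \eqref{Morawetz-far-away} gives $\int_{\MM_{\ge R}(0,\tau)}\EE[Z,w,M]\ge\frac14\MMdot_{p\,;\,R}[\Psi](0,\tau)$, as claimed.

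I expect the only mildly delicate point to be the bookkeeping of which power of $r$ each error term carries after the substitution $\nabb_4\Psi=\check\nabb_4\Psi-r^{-1}\Psi$, and making sure the degeneracy of the coefficients as $p\to 0$ or $p\to 2$ (the factors $p$ and $2-p$) is respected — this is exactly why the hypothesis is stated with $\de\le p\le 2-\de$ and $R\gg\max(M/\de,Q^2/\de^2)$ rather than merely $R\gg M,Q$. There is no real obstacle: this is a routine positivity/absorption argument once the identity of Proposition \ref{prop:QC-general-multiplier2} is in hand, and the constant $\frac14$ is generous enough to leave room for the absorption.
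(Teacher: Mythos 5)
Your proposal is correct and follows essentially the same route as the paper: substitute $l=r^p$ into the identity of Proposition \ref{prop:QC-general-multiplier2}, recognize the principal part as the integrand of $\MMdot_{p\,;\,R}$ (using $V_i\gtrsim r^{-2}$ for large $r$), and absorb the $O(M/r^2,Q^2/r^3)$ and $O(M/r^4,Q^2/r^5)$ error terms for $r\ge R$ using $R\gg\max(M/\de,Q^2/\de^2)$ together with the lower bounds $p,2-p\ge\de$. You merely spell out the absorption step (including the decomposition $\nabb_4\Psi=\check\nabb_4\Psi-r^{-1}\Psi$) that the paper leaves implicit.
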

\begin{proof}
  By Proposition \ref{prop:QC-general-multiplier2}, we have
\beaa
\EE[Z, w , M]&=&  \frac  p 2  r^{p-1}\big| \check{\nabb}_4(\Psi)\big|^2+\frac 1 2 (2-p) r^{p-1} (|\nabb\Psi|^2+V_i|\Psi|^2)\\
&&+O\left(\frac{M}{r^2}, \frac{Q^2}{r^3}\right) r^p |\nabb_4 \Psi|^2+O\left(\frac{M}{r^4}, \frac{Q^2}{r^5}\right)  r^p |\Psi|^2 
\eeaa
While integrating in $r \geq R$, the two last terms can be absorbed by the first two terms. Thus, we obtain
 \beaa
  \int_{\MM_{\ge R}(0,\tau)  }\EE[Z, w , M]&\ge &\frac 1  4  \int_{\MM_{\ge R}(\tau_1,\tau_2)  } r^{p-1}( p   |\check{\nabb}_4(\Psi)|^2+ (2-p) (|\nabb \Psi|^2 + r^{-2}|\Psi|^2))\\
  &=&\frac 1  4 \MMdot_{p\,; \,R}[\Psi](0, \tau)
   \eeaa
as desired.
\end{proof}

 We will compute now the current $\PP[Z, w]$ associated to the vector field $Z$.
  
\begin{lemma}
\label{prop:QC-general-multiplier} let $Z$, $w $ and $M$ as defined in Proposition \ref{prop:QC-general-multiplier2}. The current $\PP_\mu[Z, w, M]$ verifies
\beaa
\PP\c e_4&=&l|\check{\nabb}_4\Psi|^2 - \frac 1 2 r^{-2}  e_4( r  l |\Psi|^2) \\
 \PP\c e_3 &=&l (|\nabb\Psi|^2+V_i |\Psi|^2)+\frac  12  r^{-2}e_3\big ( r  l \Psi^2)  +O\left(\frac{M}{r^2}, \frac{Q^3}{r^5}\right) |l'|  |\Psi|^2
\eeaa
     \end{lemma}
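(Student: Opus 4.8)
The plan is to compute the current $\PP_\mu[Z,w,M]$ explicitly by contracting the general expression
\[
\PP_\mu^{(X,w,M)}[\Psi]=\QQ_{\mu\nu}X^\nu+\tfrac12 w\,\Psi\,\D_\mu\Psi-\tfrac14\Psi^2\pr_\mu w+\tfrac14\Psi^2 M_\mu
\]
with $e_4$ and $e_3$, using the specific choices $X=Z=l(r)e_4$, $w=\tfrac{2l}{r}$ and $M=\tfrac{2l'}{r}e_4$. The key inputs are the values of the Ricci coefficients in the null frame $\mathscr{N}$ (so $\ka=\tfrac2r$, $\kab=-\tfrac{2\Up}{r}$, $\om=0$, $\omb=\tfrac{M}{r^2}-\tfrac{Q^2}{r^3}$), the derivative relations $e_3(r)=-\Up$, $e_4(r)=1$ from \eqref{derivatives-r}, and the component identities \eqref{components-Q}: $\QQ_{33}=|\nabb_3\Psi|^2$, $\QQ_{44}=|\nabb_4\Psi|^2$, $\QQ_{34}=|\nabb\Psi|^2+V_i|\Psi|^2$.

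First I would handle $\PP\c e_4$. Since $Z=l\,e_4$ and $\g(e_4,e_4)=0$, $\g(e_4,e_3)=-2$, the term $\QQ_{\mu\nu}X^\nu$ contracted with $e_4$ gives $l\,\QQ_{44}=l|\nabb_4\Psi|^2$. The term $\tfrac12 w\,\Psi\,\D_4\Psi = \tfrac{l}{r}\Psi\,\nabb_4\Psi$; the term $-\tfrac14\Psi^2 e_4(w)$ with $w=2l/r$ gives $-\tfrac14\Psi^2\big(\tfrac{2l'}{r}-\tfrac{2l}{r^2}\big)$; and $\tfrac14\Psi^2 M_4 = 0$ since $M=\tfrac{2l'}{r}e_4$ and $\g(e_4,e_4)=0$. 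Collecting the $\nabb_4$-pieces I expect exactly $l|\nabb_4\Psi|^2 + \tfrac{2l}{r}\Psi\nabb_4\Psi + \tfrac{l}{r^2}\Psi^2$ up to the $-\tfrac{l'}{2}\Psi^2$ remainder, which should reorganize as $l\big(\nabb_4\Psi+\tfrac1r\Psi\big)^2 = l|\check\nabb_4\Psi|^2$ together with a total-derivative remainder; recognizing $e_4(r\,l\,|\Psi|^2) = r\,l\,e_4(|\Psi|^2) + (l+rl')|\Psi|^2$ and $e_4(|\Psi|^2)=2\Psi\nabb_4\Psi$ lets me rewrite the leftover precisely as $-\tfrac12 r^{-2}e_4(r\,l|\Psi|^2)$, matching the claimed formula. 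I would double-check the bookkeeping of the $\tfrac{l'}{r}$ cross-terms coming from $M$ here — this is the only place a sign error can creep in.

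Next, $\PP\c e_3$: now $\QQ_{\mu\nu}Z^\nu\c e_3 = l\,\QQ(e_4,e_3)=l\,\QQ_{34}=l(|\nabb\Psi|^2+V_i|\Psi|^2)$, which is the leading term. The remaining contributions $\tfrac12 w\Psi\D_3\Psi - \tfrac14\Psi^2 e_3(w) + \tfrac14\Psi^2 M_3$ involve $e_3(w)=e_3(2l/r)$ and $M_3 = \tfrac{2l'}{r}\g(e_4,e_3)=-\tfrac{4l'}{r}$, and I expect these to assemble into $+\tfrac12 r^{-2}e_3(r\,l\,\Psi^2)$ plus an $O(M/r^2, Q^2/r^3)|l'||\Psi|^2$ error term, using $e_3(r)=-\Up$ and that $\Up = 1 - \tfrac{2M}{r}+\tfrac{Q^2}{r^2}$ so $1-\Up = O(M/r, Q^2/r^2)$; the discrepancy between the ``clean'' total derivative and the actual one is governed by exactly these $O(M/r,Q^2/r^2)$-type factors. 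The main obstacle — though it is really just careful algebra rather than a conceptual difficulty — is tracking how the $\Up$-dependence (from $e_3(r)=-\Up$ and from $\D_3\Psi$ versus $\nabb_3\Psi$, noting $\om=0$ here so no correction there) interacts with the total-derivative repackaging, and making sure the residual terms genuinely fall into the advertised $O\!\left(\tfrac{M}{r^2},\tfrac{Q^3}{r^5}\right)|l'||\Psi|^2$ schematic rather than contaminating the good $|\nabb\Psi|^2+V_i|\Psi|^2$ coefficient. Once both contractions are done, the lemma follows by simply reading off the results.
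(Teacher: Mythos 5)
Your computation is correct and follows essentially the same route as the paper: contract the current with $e_4$ and $e_3$ using the component identities \eqref{components-Q}, note that the $M$-term vanishes against $e_4$ and contributes $-r^{-1}l'|\Psi|^2$ against $e_3$, complete the square to produce $l|\check{\nabb}_4\Psi|^2$ plus the total derivative $-\tfrac12 r^{-2}e_4(rl|\Psi|^2)$, and for the $e_3$ contraction absorb the mismatch between $\Up$ and $1$ into the $O(\cdot)|l'||\Psi|^2$ error. Your bookkeeping of the error term (via $1-\Up=O(M/r,Q^2/r^2)$) is exactly the mechanism the paper uses, so nothing further is needed.
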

\begin{proof} 
Recall the definition of current \eqref{definition-of-P}. Then 
\beaa
\PP\c   e_4&=& l\QQ_{44}+ \frac 1 r  l \Psi \c e_4\Psi - \frac 1 2 e_4(r^{-1} l)| \Psi|^2=l|e_4\Psi+ \frac 1 r   \Psi|^2 - \frac 1 2 r^{-2}  e_4( r  l |\Psi|^2),\\
 \PP\c e_3 &=& l  \QQ_{34}+  \frac 1 2 r^{-1}   l e_3(\Psi^2) -\frac 1  2  e_3  ( r^{-1}   l  ) \Psi^2- r^{-1} l' |\Psi|^2\\
&=&l \QQ_{34} +\frac  12  r^{-2}e_3\big ( r  l \Psi^2)+ r^{-1} f'|\Psi|^2+  O\left(\frac{M}{r^2}, \frac{Q^3}{r^5}\right) |l'|  |\Psi|^2- r^{-1} f' |\Psi|^2 \\
&=&l \QQ_{34} +\frac  12  r^{-2}e_3\big ( r  l \Psi^2)  +O\left(\frac{M}{r^2}, \frac{Q^3}{r^5}\right) |l'|  |\Psi|^2
 \eeaa
as desired.
\end{proof}

We are now ready to derive the $r^p$-estimates for $\Psi$. 
\begin{theorem}
  \label{theorem:Daf-Rodn1} Let $\Psi$ be a solution to the equation \eqref{general-equation}.  Consider  a fixed $\de>0$ and let $R\gg \max(\frac{M}{\de}, \frac{Q^2}{\de^2}) $. Then for all $\de\le p\le 2-\de $  the following $r^p$-estimates hold 
  \bea    \label{eq:Daf-Rodn-estim2}  
  \begin{split}
     &\Ed_{p\,;\, R}[\Psi](\tau)+\int_{\mathcal{I}^+(0, \tau)}\left( r^p|\nabb \Psi|^2+r^{p-2}|\Psi|^2\right)+  \MMdot_{p\,;\,R}[\Psi](0, \tau)   \\
      &\les 
          E_{p}[\Psi](0)-          \int_{\MM(0,\tau)} \left(R^p(X(\Psi) +\frac 1 2  w \Psi)+ l_p \ec_4 \Psi\right)\c \M_i[\Psi]
          \end{split}
\eea
\end{theorem}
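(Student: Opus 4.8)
## Proof plan for Theorem \ref{theorem:Daf-Rodn1}

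The plan is to apply the energy identity \eqref{eq:boundariesMM} to the current associated to the vectorfield triplet $(Z, w, M) = (l_p e_4, \frac{2l_p}{r} e_4, \frac{2l_p'}{r} e_4)$ of Proposition \ref{prop:QC-general-multiplier2}, localized to the far region by a cutoff, and then combine the resulting far-region estimate with the global Morawetz estimate of Theorem \ref{final-Morawetz} so that the bulk terms arising in the overlap region $R/2 \le r \le R$ are absorbed. First I would fix $\de > 0$, take $l_p(r) = r^p \chi(r)$ where $\chi$ is a smooth cutoff equal to $1$ for $r \ge R$ and vanishing for $r \le R/2$, with $R \gg \max(M/\de, Q^2/\de^2)$ chosen exactly as in Corollary \ref{bound-E-rp} so that the $O(M/r^2, Q^2/r^3)$ error terms in $\EE[Z, w, M]$ can be absorbed into the main terms $\frac p2 r^{p-1}|\check\nabb_4\Psi|^2 + \frac{2-p}{2} r^{p-1}(|\nabb\Psi|^2 + V_i|\Psi|^2)$ upon integration over $\{r \ge R\}$. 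For $\de \le p \le 2-\de$ both coefficients $p$ and $2-p$ are bounded below by $\de$, which is what makes the bulk coercive; this is the content of Corollary \ref{bound-E-rp}.

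Next I would write the divergence identity \eqref{eq:modified-div} for $\PP^{(Z,w,M)}$ and integrate over $\MM(0,\tau)$, splitting $\MM$ into $\MM_{R/2 \le r \le R}$ and $\MM_{\ge R}$ exactly as in the proof of Theorem \ref{final-Morawetz}. On $\MM_{\ge R}$ the bulk $\int \EE[Z,w,M]$ is bounded below by $\frac14 \MMdot_{p;R}[\Psi](0,\tau)$ by Corollary \ref{bound-E-rp}; on $\MM_{R/2 \le r \le R}$, since $r$ is comparable to $R$, the integrand is controlled by $\Morr[\Psi](0,\tau)$ (the Morawetz bulk is non-degenerate away from $r = r_P$, and $R \gg r_P$). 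For the boundary terms: on $\Si_\tau \cap \{r \ge R\}$ the flux $\PP \cdot n_\Si$ equals $\frac{l_p}{2}\big(|\check\nabb_4\Psi|^2 + \Up^{-1}(|\nabb\Psi|^2 + V_i|\Psi|^2)\big)$ up to the exact divergence $-\frac12 r^{-2} e_4(r l_p |\Psi|^2)$ from Lemma \ref{prop:QC-general-multiplier}, which is non-negative modulo a term supported at the inner edge $r \sim R$ that is absorbed by $E[\Psi](\tau)$, and it dominates $\Ed_{p;R}[\Psi](\tau)$; on $\mathcal{I}^+$ the flux $\PP \cdot e_3 = l_p(|\nabb\Psi|^2 + V_i|\Psi|^2) + \frac12 r^{-2} e_3(r l_p \Psi^2) + O(\cdots)$ gives, after using $V_i \sim r^{-2}$, the term $\int_{\mathcal{I}^+}(r^p|\nabb\Psi|^2 + r^{p-2}|\Psi|^2)$; on $\Si_0 \cap\{r \ge R\}$ one has $\PP\cdot n_\Si \les r^p(|\check\nabb_4\Psi|^2 + |\nabb\Psi|^2 + r^{-2}|\Psi|^2) \les$ the data term in $E_p[\Psi](0)$. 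The horizon does not appear since $l_p$ vanishes there.

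The final step is bookkeeping: from the localized identity one gets
\beaa
\Ed_{p;R}[\Psi](\tau) + \int_{\mathcal{I}^+(0,\tau)}(r^p|\nabb\Psi|^2 + r^{p-2}|\Psi|^2) + \MMdot_{p;R}[\Psi](0,\tau) &\les& E_p[\Psi](0) + \Morr[\Psi](0,\tau) + E[\Psi](\tau) \\
&& -\int_{\MM(0,\tau)} \big(R^p(X(\Psi) + \tfrac12 w\Psi) + l_p \ec_4\Psi\big) \cdot \M_i[\Psi],
\eeaa
where $R^p$ denotes the cutoff weight; here the $X, w$ appearing are those from the Morawetz triplet and the extra inhomogeneous contribution $-\int l_p \ec_4\Psi \cdot \M_i$ comes from the $Z$-multiplier via \eqref{eq:modified-div} with $Z(\Psi) + \frac12 w\Psi = l_p(e_4\Psi + r^{-1}\Psi) = l_p\ec_4\Psi$. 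Then I add Theorem \ref{final-Morawetz} (which controls $E[\Psi](\tau)$, $\Morr[\Psi](0,\tau)$, $E_{\mathcal{H}^+}$, $E_{\mathcal{I}^+,0}$ and crucially also contains the negative term $-\int_{\MM_{r\ge R}}\frac14|\nabb_4\Psi|^2 \de r^{-1-\de}\Up^2$) to the above; the $r^p$-bulk $\MMdot_{p;R}$ controls $\int r^{p-1}|\check\nabb_4\Psi|^2 \gtrsim \int r^{p-1}|\nabb_4\Psi|^2 - \int r^{p-3}|\Psi|^2$, and for $p \ge \de$ this absorbs the leftover negative $|\nabb_4\Psi|^2$ term from the improved Morawetz estimate (since $r^{-1-\de}\Up^2 \les r^{-1-\de} \le r^{p-1}$ for $p \ge \de$), closing the estimate. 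The main obstacle I anticipate is the careful absorption at the two ``seams'': the inner seam $r \sim R$ where the cutoff $\chi$ has nonzero derivative (handled because $R$ is fixed, so all such terms sit in the non-degenerate Morawetz bulk $\Morr$ and in $E[\Psi]$), and the consistent treatment of the signs and the combination of the two negative $|\nabb_4\Psi|^2$ contributions — the one from $\EE[T^\de,0,0]$ in Theorem \ref{final-Morawetz} and any from the $O(M/r^2, Q^2/r^3)$ errors here — which requires that $p$ stay bounded away from $0$, exactly the hypothesis $p \ge \de$.
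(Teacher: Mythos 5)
Your proposal is correct and follows essentially the same route as the paper: the cutoff multiplier $l_p=\th(r)r^p$ with the triplet $(Z_p,w_p,M_p)$, the splitting of $\MM(0,\tau)$ into the overlap region $R/2\le r\le R$ (absorbed by the Morawetz estimate \eqref{Morawetz-1}) and the far region (where Corollary \ref{bound-E-rp} gives coercivity of the bulk and Lemma \ref{prop:QC-general-multiplier} gives the boundary fluxes). The only cosmetic differences are that you write $w$ as a vector rather than the scalar $\tfrac{2l_p}{r}$, and you fold into the proof the subsequent absorption of the $-\int_{\MM_{r\ge R}}\tfrac14|\nabb_4\Psi|^2\de r^{-1-\de}\Up^2$ term, which the paper carries out immediately after the theorem rather than inside its proof.
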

  \begin{proof}  Let   $\th=\th(r) $ supported    for   $r\ge R/2$   with     $\th=1$ for $r\ge R$  such that $l_p=\th(r) r^p$, $Z_p=l_p e_4$, $w_p=\frac{2l_p}{r}$, $M_p=\frac{ 2l_p'}{r}  e_4$.
Let $\PP^{(p)}:=\PP[  Z_p, w_p, M_p]$.  We apply the divergence theorem to $ \PP^{(p)}$ in the spacetime region bounded by $\Sigma_0$ and $\Sigma_\tau$. We first note that 
     \beaa
       \D^\mu  \PP^{(p)} &=& \EE[Z_p,  w_p, M_p]+(Z_p(\Psi)+\frac 1 2 w_p \Psi) \c \M_i[\Psi]= \EE[Z_p,  w_p, M_p] +(l_p(r) e_4(\Psi)+\frac{ l_p}{r} \Psi) \c \M_i[\Psi]\\
       &=&\EE[Z_l,  w_l, M_l] +l_p(r) \ec_4(\Psi)\c \M_i[\Psi]
       \eeaa
By divergence theorem we then have:
     \beaa
   \int_{\Si_\tau}   \PP\c  N_\Si +\int_{\mathcal{I}^+(0, \tau)}   \PP\c e_3+\int_{\MM(0, \tau)}\EE[Z_l,  w_l, M_l]=\int_{\Si_0}   \PP\c   N_\Si -          \int_{\MM(0,\tau)} l_p \ec_4 \Psi\c \M[\Psi]
     \eeaa
We can estimate the integrals where $r$ is bounded by $R$ with the Morawetz energies.       Hence, we obtain
      \beaa
    &&\int_{\Si_{r \geq R}(\tau)}   \PP\c  N_\Si  +\int_{\mathcal{I}^+(0, \tau)}   \PP\c e_3+\int_{\MM_{r \geq R}(0, \tau)}\EE[Z_l,  w_l, M_l]\\
    &\les&\int_{\Si_{r \geq R}(0)}   \PP\c   N_\Si +R^p\left(E[\Psi](0)+E[\Psi](\tau)+R^{-1} \Mor[\Psi](0,\tau)\right) -          \int_{\MM(0,\tau)} l_p \ec_4 \Psi\c \M[\Psi]
     \eeaa
     Using the Morawetz estimates \eqref{Morawetz-1} and Corollary \ref{bound-E-rp}, we obtain 
      \beaa
   && \int_{\Si_{r \geq R}(\tau)}   \PP\c  N_\Si +\int_{\mathcal{I}^+(0, \tau)}   \PP\c e_3+ \dot{\MM}_{p\,;\,R}[\Psi](0, \tau) \\
    &\les&\int_{\Si_{r \geq R}(0)}   \PP\c   N_\Si +R^p E[\Psi](0) -          \int_{\MM(0,\tau)} \left(R^p(X(\Psi) +\frac 1 2  w \Psi)+ l_p \ec_4 \Psi\right)\c \M[\Psi]
     \eeaa
 Finally, by Lemma \ref{prop:QC-general-multiplier}, we have the bounds
 \beaa
 \int_{\Si_{r \geq R}(\tau)}   \PP\c  N_\Si&=& \int_{\Si_{r \geq R}(\tau)}   \PP\c e_4=\int_{\Si_{r \geq R}(\tau)} \left( r^p|\check{\nabb}_4\Psi|^2 - \frac 1 2 r^{-2}  e_4( r^{p+1} |\Psi|^2)\right) \ges \frac 1 2  \dot{E}_{p, R}[\Psi](\tau), \\
  \int_{\mathcal{I}^+(0, \tau)}   \PP\c e_3&=&  \int_{\mathcal{I}^+(0, \tau)} r^p (|\nabb\Psi|^2+V_i |\Psi|^2) +\frac  12  r^{-2}e_3\big ( r  l \Psi^2) +O\left(\frac{M}{r^2}, \frac{Q^3}{r^5}\right) |l'|  |\Psi|^2 \\
  &\ges&\int_{\mathcal{I}^+(0, \tau)} r^p|\nabb \Psi|^2+r^{p-2}|\Psi|^2
 \eeaa
 by performing the integration by parts   for the second terms in the integrals, and absorbing the boundary term. This proves the theorem.
         \end{proof}     
         
           We combine Theorem \ref{final-Morawetz} and Theorem \ref{theorem:Daf-Rodn1}. Then the term $-\int_{\MM_{r \geq R}(0, \tau)}\frac 1 4 |\nabb_4\Psi|^2   \de r^{-1-\de}\Up^2$ can be absorbed by the Morawetz bulk with better decay in $|\nabb_4\Psi|^2$. We finally obtain the following estimate:
              \bea\label{final-morawetz-rp}
              \begin{split}
  & E_p[\Psi](\tau)+E_{\mathcal{H}^+}[\Psi](0, \tau)+E_{\mathcal{I}^+, p}[\Psi](0, \tau)+ \MM_p[\Psi](0, \tau)\\
    &\les E_p[\Psi](0)-          \int_{\MM(0,\tau)} \left(X(\Psi) +\frac 1 2  w \Psi+\th(r) r^{-\de}T( \Psi )+ \th(r) r^p \ec_4 \Psi\right)\c \M_i[\Psi]
    \end{split}
  \eea
       
      \subsection{The  inhomogeneous  term }
      
      We now analyze the inhomogeneous term 
      \beaa
      -          \int_{\MM(0,\tau)} \left(X(\Psi) +\frac 1 2  w \Psi+\th(r) r^{-\de}T( \Psi )+ \th(r) r^p \ec_4 \Psi\right)\c \M_i[\Psi].
      \eeaa
       Recall from \eqref{definition-X} that the vectorfield $X$ is given by    $X=  f_\de   R +\ep_\HH Y_\HH +\La T$. Recall that $f$ vanishes at $r=r_P$, and $\th(r)$ is supported in the far away region, in particular away from the trapping region. We separate the terms involving $R$ and $T$ with the term involving $l_p \ec_4$, which we bound by the absolute value. 
\bea\label{inhomogeneous-term}
\begin{split}
&-          \int_{\MM(0,\tau)} \left(X(\Psi) +\frac 1 2  w \Psi +\th(r) r^{-\de}T( \Psi )+ \th(r) r^p \ec_4 \Psi\right)\c \M[\Psi]\\
&\le - \int_{\MM(0,\tau)} \left((r-r_P)R(\Psi)+ \Lambda T(\Psi )+\frac 1 2  w\Psi\right)\c \M[\Psi]\\
&+C\int_{\MM_{\ge R}(0,\tau)  }\left( r^p |\check{\nabb}_4\Psi||\M[\Psi]|+r^{-\de}|T\Psi| |\M[\Psi]|\right)
\end{split}
\eea
The integral in the far-away region can be separated using Cauchy-Schwarz:
         \beaa
\int_{\MM_{\ge R}(0,\tau)  }  r^p |\check{\nabb}_4\Psi||\M[\Psi]|&\le&  \lambda \int_{\MM_{\ge R}(0,\tau)  } r^{p-1} |\check{\nabb}_4\Psi|^2+\lambda^{-1} \int_{\MM_{\ge R}(0,\tau)  } r^{p+1} |\M[\Psi]|^2 \\
\int_{\MM_{\ge R}(0,\tau)  }  r^{-\de}|T\Psi| |\M[\Psi]|&\le&  \lambda \int_{\MM_{\ge R}(0,\tau)  } r^{-1-3\de} |T\Psi|^2+\lambda^{-1} \int_{\MM_{\ge R}(0,\tau)  } r^{1+\de} |\M[\Psi]|^2
         \eeaa
   For $\lambda$ small enough the first  integrals on the right can be absorbed in the Morawetz bulk $\MM_p[\Psi]$ in \eqref{final-morawetz-rp},  and  recalling the definition of $\II_{p\,; \,R}[ \M]$ in \eqref{definition-norm-M}, we obtain 
            \bea\label{final-estimate-Psi}
            \begin{split}
       & E_p[\Psi](\tau)+E_{\mathcal{H}^+}[\Psi](0, \tau)+E_{\mathcal{I}^+, p}[\Psi](0, \tau)+ \MM_p[\Psi](0, \tau)\\
           &\les E_p[\Psi](0)+\II_{p\,; \,R}[ \M_i](0,\tau)- \int_{\MM(0,\tau)} \left((r-r_P)R(\Psi)+ \Lambda T(\Psi )+\frac 1 2  w\Psi\right)\c \M_i[\Psi] 
        \end{split}
         \eea  
     Applying \eqref{final-estimate-Psi} to $\Psi=\qf$ and $\M[\Psi]=\M_1[\qf, \qf^\F]$, we finally obtain estimate \eqref{estimate1}. Similarly, applying it to $\Psi=\underline{\qf}$ and $\M[\Psi]=\M_1[\underline{\qf}, \underline{\qf}^\F]$, we obtain an identical estimate for the spin $-2$ quantities.

             \subsection{Higher order estimates}
             In this section we extend the above weighted estimates to higher order. We note the trivial fact that the wave operator $\Box_\g$ commutes with the Lie differentiation with the Killing fields of the Reissner-Nordstr{\"o}m metric, i.e. $T$ and $\Omega_i$. In particular, since $T(V_i)=r\nabb(V_i)=0$, when applying $T$ or $r\nabb$ to equation \eqref{general-equation}, we obtain
             \beaa
\Big(\Box_\g-V_i \Big) (T\Psi_i)&=& T(\M_i[\Psi]), \label{commuted-equation-T}\\
\Big(\Box_\g-V_i \Big) (r\nabb\Psi_i)&=& r\nabb(\M_i[\Psi])\label{commuted-equation-nabb}
\eeaa
             Recalling the higher order energies defined in Section \ref{definition-norms}, applying \eqref{final-estimate-Psi} to $n$-commuted $T^i(r\nabb)^j \Psi$, we immediately conclude the following Corollary.
             \begin{corollary}\label{higher-order-single} Let $\Psi$ be a solution to the equation \eqref{general-equation}. Then the following higher order estimates hold 
             \bea\label{final-estimate-Psi-higher}
             \begin{split}
              &  E_p^{n, T, \nabb}[\Psi](\tau)  +E^{n, T, \nabb}_{\mathcal{H}^+}[\Psi](0, \tau)+E^{n, T, \nabb}_{\mathcal{I}^+, p}[\Psi](0, \tau)     +  \MM_{p}^{n, T, \nabb}[\Psi](0, \tau)   \\
              &\les E_p^{n, T, \nabb}[\Psi](0)+\II^{n, T, \nabb}_{p\,; \,R}[ \M](0, \tau)\\
                 &- \sum_{i+j \leq n}\int_{\MM(0,\tau)} \left((r-r_P)R((T^i)(r\nabb)^j\Psi)+ \Lambda T((T^i)(r\nabb)^j\Psi )+\frac 1 2  w (T^i)(r\nabb)^j\Psi\right)\c (T^i)(r\nabb)^j\M[\Psi] 
                 \end{split}
                \eea
             \end{corollary}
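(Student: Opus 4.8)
The plan is to obtain \eqref{final-estimate-Psi-higher} by commuting the scalar-type wave equation \eqref{general-equation} with the Killing symmetries of Reissner--Nordstr\"om and then feeding each commuted field into the single-tensor estimate \eqref{final-estimate-Psi} already proved in Section \ref{separate-estimates}. First I would set up the commuted equations. Since the Reissner--Nordstr\"om metric is static and spherically symmetric, $\Box_\g$ commutes with the Lie derivatives along $T$ and along the angular momentum operators $\Omega_i$; and since $V_i=V_i(r)$ while $T(r)=\Omega_i(r)=0$, the potential term is annihilated by these differentiations. Hence each iterated derivative of $\Psi$ along $T$ and the $\Omega_i$ again solves an equation of the form \eqref{general-equation}, with source the same combination of derivatives applied to $\M_i$. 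Using \eqref{commutator-rnabla}, together with the standard fact that on the round $S^2$-spheres the operators $r\nabb_A$ and the $\Omega_i$ generate the same space of angular derivatives modulo bounded zeroth-order factors, this is equivalent to
\[
\Big(\Box_\g - V_i\Big)\big(T^{i}(r\nabb)^{j}\Psi\big) \;=\; T^{i}(r\nabb)^{j}\,\M_i[\Psi], \qquad i+j\le n,
\]
up to $O(r^{-2})$ corrections to the potential coming from the angular-curvature commutator; these are harmless because they preserve the signs exploited in Proposition \ref{prop:pre-mor5} and in the red-shift and $r^p$-hierarchy estimates of Section \ref{separate-estimates}. (Alternatively one keeps the $\Omega_i$-derivatives throughout and converts to $r\nabb_A$-derivatives only at the end, by elliptic equivalence of norms on $S^2$.)

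The second step is to observe that every ingredient of Section \ref{separate-estimates} --- the energy-momentum identity of Proposition \ref{prop:qf-tensorial'}, the Morawetz multiplier $Y=f_\de R$ with its correction one-forms, the red-shift vectorfield, the $T$-energy estimate, the improved Morawetz multiplier $T^\de$, and the $r^p$-multiplier $Z$ --- applies verbatim with $\Psi$ replaced by the rank-$(j+2)$ $S$-tensor $T^{i}(r\nabb)^{j}\Psi$. The tensor rank enters those arguments only through the vanishing of the curvature coupling $\R_{AB43}=0$ in Reissner--Nordstr\"om (used in Proposition \ref{prop:qf-tensorial'}, and rank-independent since each angular-index term vanishes) and through the Poincar\'e inequality \eqref{Poincare}, which holds for $S$-tensors of every rank with a constant no worse than the one for $2$-tensors. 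Therefore \eqref{final-estimate-Psi} is valid with $(\Psi,\M_i)$ replaced by $\big(T^{i}(r\nabb)^{j}\Psi,\; T^{i}(r\nabb)^{j}\M_i\big)$ for each pair $i+j\le n$.

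Finally I would sum these estimates over all $i+j\le n$ and identify the sums with the higher-order quantities of Section \ref{definition-norms}: the left-hand side collapses to $E_p^{n,T,\nabb}[\Psi](\tau)+E^{n,T,\nabb}_{\mathcal{H}^+}[\Psi]+E^{n,T,\nabb}_{\mathcal{I}^+,p}[\Psi]+\MM_p^{n,T,\nabb}[\Psi]$, while the right-hand side becomes $E_p^{n,T,\nabb}[\Psi](0)+\II^{n,T,\nabb}_{p;\,R}[\M]$ plus the displayed sum of commuted spacetime integrals, which is exactly \eqref{final-estimate-Psi-higher}. I expect the only genuinely delicate point to be the first step --- controlling the commutator of $\Box_\g$ acting on $S$-tensors with $r\nabb_A$ (equivalently, justifying that one may interchange the $\Omega_i$-commuted and $r\nabb_A$-commuted norms) --- which is dispatched by \eqref{commutator-rnabla} together with the boundedness, on the spherically symmetric background, of the residual angular-curvature terms; everything after that is a mechanical repetition of the estimates already established for a single tensor.
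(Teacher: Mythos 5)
Your proposal is correct and follows essentially the same route as the paper: commute \eqref{general-equation} with $T$ and the angular derivatives (using that these commute with $\Box_\g$ and annihilate the $r$-dependent potential), apply the single-tensor estimate \eqref{final-estimate-Psi} to each commuted field, and sum over $i+j\le n$. Your additional care about the zeroth-order curvature corrections arising when commuting $\Box_\g$ with $r\nabb_A$ on $S$-tensors is a point the paper passes over silently, but it does not change the argument.
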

               Applying the above estimate to $\Psi=\qf^\F$, $\M[\Psi]=\M_2[\qf, \qf^\F]$ with $n=1$, we finally obtain estimate \eqref{estimate2}. Similarly, applying it to $\Psi=\underline{\qf}^\F$, $\M[\Psi]=\M_2[\underline{\qf}, \underline{\qf}^\F]$ and $n=1$, we obtain an identical estimate for the spin $-2$ quantities.
               
               Observe that applying the above estimate to $\Psi=\qf^\F$, $\M[\Psi]=\M_2[\qf, \qf^\F]$ with $n=0$, we obtain the non-commuted estimate
                     \bea\label{estimate2-uncommuted}
       \begin{split}
    &  E_p[\qf^\F](\tau)    + E_{\mathcal{H}^+}[\qf^\F](0,\tau)  +E_{\mathcal{I}^+, p}[\qf^\F](0, \tau)          +\MM_p[\qf^\F](0,\tau)\\
      &\les E_p[\qf^\F](0)+  \II_{p\,; \,R}[\M_2[\qf, \qf^\F]]( 0,\tau)  \\
       &- \int_{\MM(0,\tau)} \left((r-r_P)R(\qf^\F)+ \Lambda T(\qf^\F )+\frac 1 2  w\qf^\F\right)\c \M_2[\qf, \qf^\F]
   \end{split}
  \eea

\section{Combined estimates for the system of spin $+2$}\label{lot-absorbing}
In this section, we obtain the estimates for the generalized Regge-Wheeler system of spin $+2$. In Section \ref{spin-2-all}, we will outline the similar procedure used in the case of the generalized Regge-Wheeler system of spin $-2$.

We combine estimates \eqref{estimate1} and \eqref{estimate2-uncommuted} in order to obtain the combined estimate \eqref{first-estimate-main-theorem-1} in the Main Theorem.

We sum $A$ times estimate \eqref{estimate1} to $B$ times the non-commuted estimate \eqref{estimate2-uncommuted}, with $C$ times \eqref{estimate2-uncommuted} commuted with $T$ and $D$ times estimate \eqref{estimate2-uncommuted} commuted with $r\nabb$, where $A$, $B$, $C$, $D$ are positive constants. We get  
    \bea\label{estimate3}
       \begin{split}
        E_p[\qf](\tau)&  + E^{1, T, \nabb}_p[\qf^\F](\tau)+E_{\mathcal{H}^+}[\qf](0, \tau)+E^{1, T, \nabb}_{\mathcal{H}^+}[\qf^\F](0, \tau)+E_{\mathcal{I}^+, p}[\qf](0, \tau)+E^{1, T, \nabb}_{\mathcal{I}^+, p}[\qf^\F](0, \tau)            \\
      &  +\MM_p[\qf](0,\tau) +\MM^{1, T, \nabb}_p[\qf^\F](0,\tau)\leq E_{p}[\qf](0)+E^{1, T, \nabb}_p[\qf^\F](0)\\
       &+\II_{p\,; \,R}[ \M_1[\qf, \qf^\F]](0,\tau)+  \II^{1, T, \nabb}_{p\,; \,R}[\M_2[\qf, \qf^\F]]( 0,\tau)  \\
        &- A\int_{\MM(0,\tau)} \left((r-r_P)R(\qf)+ \Lambda T(\qf )+\frac 1 2  w\qf\right)\c \M_1[\qf, \qf^\F]\\
          &- B\int_{\MM(0,\tau)} \left((r-r_P)R(\qf^\F)+ \Lambda T(\qf^\F )+\frac 1 2  w\qf^\F\right)\c \M_2[\qf, \qf^\F]\\
        &- C\int_{\MM(0,\tau)} \left((r-r_P)R(T\qf^\F)+ \Lambda T(T\qf^\F )+\frac 1 2  w T\qf^\F\right)\c T(\M_2[\qf, \qf^\F])\\
        &- D\int_{\MM(0,\tau)} \left((r-r_P)R(r\nabb_A\qf^\F)+ \Lambda T(r\nabb_A\qf^\F )+\frac 1 2  w r\nabb_A\qf^\F\right)\c r\nabb_A(\M_2[\qf, \qf^\F])
        \end{split}
  \eea
  In the combined estimate \eqref{estimate3}, the last five lines on the right hand side are not controlled from initial data at this point. Our goal is to study these terms in order to obtain estimates for them. In doing so, the particular structure of the right hand side terms $\M_1[\qf, \qf^\F]$ and $\M_2[\qf, \qf^\F]$ will play an important role. 
To simplify such structure, recalling that $\rhoF=\frac{Q}{r^2}$, we can write the system \eqref{finalsystem} in the following concise form:
\bea\label{schematic-system}
\begin{cases}
\Big(\square_\g-V_1\Big)\qf= \M_1[\qf, \qf^\F]:=Q\ \C_1[\qf^\F]+Q  \ \Lbb_1[\qf^\F]+Q^2 \ \Lbb_1[\qf], \\
\Big(\square_\g-V_2\Big)\qf^{\F}=\M_2[\qf, \qf^\F]:=Q\  \C_2[\qf] +Q^2 \Lbb_2[\qf^\F]
\end{cases}
\eea      
where
\bea\label{definition-main-coefficients}
  \C_1[\qf^\F]&=&\frac{4}{r}\lapp_2\qf^{\F}-\frac{4}{r}\kab \nabb_4\qf^{\F}-\frac{4}{r}\ka \nabb_3\qf^{\F} + \frac 1 r \left(6\ka\kab+16\rho+8\rhoF^2\right)\qf^{\F}, \\
 \C_2[\qf]&=& -\frac {1}{ r^3} \qf, \\
 \Lbb_1[\qf]&=& -\frac{2}{r^2} \psi_0-\frac{4}{r^3}\psi_1, \\
 \Lbb_1[\qf^\F]&=& -12\rho \psi_3 -Q^2 \frac{40}{r^4} \psi_3, \\
 \Lbb_2[\qf^\F]&=& \frac{4}{r^3} \psi_3
 \eea
and $|Q|\ll M$ is the charge of the Reissner-Nordstr{\"o}m spacetime.

We will first derive estimates for the coupling terms $\C_1$ and $\C_2$, and then for the lower order terms $\Lbb_1$ and $\Lbb_2$.

\subsection{Estimates for the coupling terms}\label{coupling-subsection}

Our goal is to absorb the coupling terms on the right hand side of \eqref{estimate3} by the Morawetz bulks of $\qf$ and $\qf^\F$ on the left hand side of \eqref{estimate3}. In the trapping region, the analysis is more subtle, because of the degeneracy of the Morawetz bulks.

The goal of this section is to prove the following estimate for the coupling terms:
\bea\label{global-estimate-coupling}
\begin{split}
&\II_{p\,; \,R}[ \C_1[\qf^\F]](0,\tau)+  \II^{1, T, \nabb}_{p\,; \,R}[\C_2[\qf]]( 0,\tau) - A\int_{\MM(0,\tau)} \left((r-r_P)R(\qf)+ \Lambda T(\qf )+\frac 1 2  w\qf\right)\c \C_1[\qf^\F]\\
          &- B\int_{\MM(0,\tau)} \left((r-r_P)R(\qf^\F)+ \Lambda T(\qf^\F )+\frac 1 2  w\qf^\F\right)\c \C_2[\qf]\\
        &- C\int_{\MM(0,\tau)} \left((r-r_P)R(T\qf^\F)+ \Lambda T(T\qf^\F )+\frac 1 2  w T\qf^\F\right)\c T(\C_2[\qf])\\
        &- D\int_{\MM(0,\tau)} \left((r-r_P)R(r\nabb_A\qf^\F)+ \Lambda T(r\nabb_A\qf^\F )+\frac 1 2  w r\nabb_A\qf^\F\right)\c r\nabb_A(\C_2[\qf])\\
        &\les \MM_p[\qf](0,\tau)+\MM^{1, T, \nabb}_p[\qf^\F](0,\tau)+E_p[\qf](\tau) +E_p[\qf^\F](\tau)+E_p[\qf](0) +E_p[\qf^\F](0)\\
&- C\int_{\MM_{\frac 5 6 r_P \leq r \leq \frac 7 6 r_P}}\Lambda ( -\Up Q^2 \Lbb_2[\qf^\F])\c T\left(-\frac{1}{r^3} \qf \right)+C\int_{\MM_{\frac 5 6 r_P \leq r \leq \frac 7 6 r_P}}\frac 1 2  w ( -Q^2\Up \Lbb_2[\qf^\F]) \c \left(-\frac{1}{r^3} \qf \right)
\end{split}
\eea

We separate the proof in three parts: the far-away region, outside the trapping region, and at the photon sphere. 

\subsubsection{Absorption in the far-away region}

The goal of this subsection is to prove the following estimate
\bea\label{absorption-coupling-far-away}
\II_{p\,; \,R}[  \C_1[\qf^\F]](0,\tau)+  \II^{1, T, \nabb}_{p\,; \,R}[ \C_2[\qf]]( 0,\tau) &\les& \MM_p[\qf](0,\tau)+\MM^{1, T, \nabb}_p[\qf^\F](0,\tau)
\eea

In the far-away region, we will need to keep track of the powers of $r$. In particular:
\bea\label{schematic-coupling-terms}
\C_1[\qf^\F]&\simeq&\Big[\frac{1}{r}\lapp_2\qf^{\F}, \frac{1}{r^2}   e_4\qf^{\F},\frac{1}{r^2} e_3\qf^{\F}, \frac{1}{r^3}\qf^{\F}\Big], \\
 \C_2[\qf]&\simeq& \Big[\frac {1}{ r^3} \qf\Big]
\eea
where $\simeq$ denotes the asymptotics in $r$ towards null infinity.
Therefore, we have
\beaa
\II_{p\,; \,R}[  \C_1[\qf^\F]](0,\tau)&=& \int_{\MM_{far}(0,\tau)} r^{1+p}  |\frac{1}{r}\lapp_2\qf^{\F}, \frac{1}{r^2}   e_4\qf^{\F},\frac{1}{r^2} e_3\qf^{\F}, \frac{1}{r^3}\qf^{\F}|^2 \\
&=&  \int_{\MM_{far}(0,\tau)} r^{-5+p}|(r\nabb)^2\qf^{\F}|^2+r^{-3+p} |\nabb_4\qf^{\F}|^2+r^{-3+p}| \nabb_3\qf^{\F}|^2+r^{-5+p}|\qf^{\F}|^2
\eeaa
 Since the powers of $r$ of $\II_p[ \qf, \C_1[\qf^\F]]$ decay all faster than the respective ones in $\MM^{1, T, \nabb}_p[\qf^\F]$, we have that, for $R$ big enough
 \beaa
\II_{p\,; \,R}[  \C_1[\qf^\F]](0,\tau) &\les& \MM^{1, T, \nabb}_p[\qf^\F](0,\tau)
 \eeaa
Similarly for $\II^{1, T, \nabb}_{p\,; \,R}[ \C_2[\qf]]( 0,\tau)$, which proves \eqref{absorption-coupling-far-away}.

\subsubsection{Absorption outside the photon sphere}\label{outside-trapping-subsub}
The goal of this subsection is to prove the following estimate
\bea\label{absorption-coupling-outside-trapping}
\int_{\MM(0,\tau)\setminus r=r_p} \operatorname{integrals \ in \ RHS \ of \eqref{estimate3}} \les \MM_p[\qf](0,\tau) +\MM^{1, T, \nabb}_p[\qf^\F](0,\tau)
\eea

Outside the photon sphere, the Morawetz bulks $\MM_p[\qf](0,\tau) +\MM^{1, T, \nabb}_p[\qf^\F](0,\tau)$ contain the following terms: $|R\qf|^2$, $|T\qf|^2$, $|r\nabb_A \qf|^2$, $|\qf|^2$, $|R\qf^\F|^2$, $|T\qf^\F|^2$,  $|r\nabb_A \qf^\F|^2$, $|\qf^\F|^2$, $|RT\qf^\F|^2$, $|R(r\nabb_A)\qf^\F|^2$, $|TT\qf^\F|^2$, $|T(r\nabb_A)\qf^\F|^2$, $|r^2\nabb^2_A\qf^\F|^2$,
with $\Rbrev$ and $\Tbrev$ respectively in the red-shift region. 
Therefore the integrals on the right hand side of \eqref{estimate3} outside the photon sphere can be easily bounded by $\MM_p[\qf](0,\tau) +\MM^{1, T, \nabb}_p[\qf^\F](0,\tau)$ using Cauchy-Schwarz. 

\subsubsection{Absorption at the photon sphere}
The goal of this subsection is to prove the following estimate
\bea\label{absorption-coupling-trapping}
\begin{split}
\int_{\MM_{trap}(0,\tau)} \operatorname{integrals \ in \ RHS \ of \eqref{estimate3}} &\les \MM_p[\qf](0,\tau)+\MM^{1, T, \nabb}_p[\qf^\F](0,\tau)\\
&+E_p[\qf](\tau) +E_p[\qf^\F](\tau)+E_p[\qf](0) +E_p[\qf^\F](0)\\
&- C\int_{\MM_{trap}}\Lambda ( -\Up Q^2 \Lbb_2[\qf^\F])\c T\left(-\frac{1}{r^3} \qf \right)\\
&+C\int_{\MM_{trap}}\frac 1 2  w ( -Q^2\Up \Lbb_2[\qf^\F]) \c \left(-\frac{1}{r^3} \qf \right)
\end{split}
\eea
Recall that at the photon sphere, the Morawetz bulks $\MM_p[\qf](0,\tau) +\MM^{1, T, \nabb}_p[\qf^\F](0,\tau)$ contain the following terms with no degeneracy: $|R\qf|^2$,  $|\qf|^2$, $|R\qf^\F|^2$, $|T\qf^\F|^2$, $|r\nabb_A \qf^\F|^2$, $|\qf^\F|^2$, $|RT\qf^\F|^2$, $|R(r\nabb_A)\qf^\F|^2$,
and the following degenerate terms: $$(r-r_P)^2 \left( |T\qf|^2+|r\nabb_A \qf|^2+ |TT\qf^\F|^2+|T(r\nabb_A)\qf^\F|^2+|r^2\lapp\qf^\F|^2\right).$$

We analyze each line of the last four lines of \eqref{estimate3} for the coupling terms.

\bigskip

{\bf{First line} } Consider the highest order term $\frac 4 r \lapp_2 \qf^\F$. This gives 
\beaa
- A\int_{\MM_{trap}} \left((r-r_P)R(\qf)+ \Lambda T(\qf )+\frac 1 2  w\qf\right)\c \frac 4 r \lapp_2 \qf^\F
\eeaa
For the first term, we use Cauchy-Schwarz while moving the degeneracy $r-r_P$ to the laplacian, in order to absorb it by the Morawetz bulk:
\beaa
- A\int_{\MM_{trap}} \left((r-r_P)R(\qf)\right)\c \frac 4 r \lapp_2 \qf^\F &\leq& \left(\int_{\MM_{trap}} |R(\qf)|^2\right)^{1/2} \left(\int_{\MM_{trap}} (r-r_P)^2 |\lapp_2 \qf^\F|^2\right)^{1/2}\\
&\les&\MM_p[\qf](0,\tau) +\MM^{1, T, \nabb}_p[\qf^\F](0,\tau)
\eeaa
We keep for the moment the term with $T$:
\bea\label{problematic-term-1}
- A\int_{\MM_{trap}} \Lambda T(\qf ) \c \frac 4 r \lapp_2 \qf^\F
\eea
and the term with the zero-th order term:
\bea\label{problematic-term-2}
- A\int_{\MM_{trap}} \frac 1 2  w\qf \c \frac 4 r \lapp_2 \qf^\F
\eea
Consider the first order terms $-\frac{4}{r}\kab \nabb_4\qf^{\F}-\frac{4}{r}\ka \nabb_3\qf^{\F}$. They can be written as:
\beaa
-\frac{4}{r}\kab \nabb_4\qf^{\F}-\frac{4}{r}\ka \nabb_3\qf^{\F}&=&-\frac{4}{r}\kab \frac{1}{\Up}(T+R)\qf^{\F}-\frac{4}{r}\ka (T-R)\qf^{\F}\\
&=&-\frac{4}{r}( \frac{\kab}{\Up}+\ka)T\qf^{\F}-\frac{4}{r}(\frac{\kab}{\Up}-\ka)R\qf^{\F}=\frac{16}{r^2} R\qf^\F
\eeaa 
This gives 
\beaa
- A\int_{\MM_{trap}} \left((r-r_P)R(\qf)+ \Lambda T(\qf )+\frac 1 2  w\qf\right)\c \left(\frac{16}{r^2} R\qf^\F \right)
\eeaa
Since all first derivatives of $\qf^\F$ compare in the Morawetz bulk without degeneracy, we can apply Cauchy-Schwarz for the terms involving $R\qf$ and $\qf$, which are also non-degenerate. To absorb the term in $T$ we need to perform an integration by parts in $T$, and obtain a spacetime integral for $TR \qf^\F$, which is not degenerate. The boundary terms can be estimated by the energies. 

Consider the zero-th order term $\frac 1 r \left(6\ka\kab+16\rho+8\rhoF^2\right)\qf^{\F}$. Again, the terms involving $R\qf$ and $\qf$ can be absorbed by the Morawetz bulk. Similarly as before, the term involving $T\qf$ can be absorbed upon integration by parts, picking up energies as boundary terms.  while we keep the term with $T$.

To summarize, in the first line of estimate \eqref{estimate3} we absorbed all the terms by Morawetz bulks and energies, except terms \eqref{problematic-term-1} and \eqref{problematic-term-2}.

\bigskip

{\bf{Second line} } Consider the second line of estimate \eqref{estimate3}. The coupling term is given by
\beaa
- B\int_{\MM_{trap}} \left((r-r_P)R(\qf^\F)+ \Lambda T(\qf^\F )+\frac 1 2  w\qf^\F\right)\c \left(-\frac{1}{r^3} \qf \right)
\eeaa
All the quantities in $\qf^\F$ appear in the Morawetz bulk as non-degenerate, therefore we can apply Cauchy-Schwarz and absorb the second line by the Morawetz bulks.

\bigskip

{\bf{Third line} } Consider the third line of estimate \eqref{estimate3}. The coupling term is given by 
\beaa
- C\int_{\MM_{trap}}\left((r-r_P)R(T\qf^\F)+ \Lambda T(T\qf^\F )+\frac 1 2  w T\qf^\F\right)\c T\left(-\frac{1}{r^3} \qf \right)
\eeaa
For the first term, we use Cauchy-Schwarz while moving the degeneracy $r-r_P$ to the $T\qf$. We keep for the moment the term with $TT\qf^\F$:
\bea\label{problematic-term-3}
- C\int_{\MM_{trap}}\Lambda T(T\qf^\F )\c T\left(-\frac{1}{r^3} \qf \right)
\eea
and the term with $T\qf^\F$:
\bea\label{problematic-term-4}
- C\int_{\MM_{trap}}\frac 1 2  w T\qf^\F \c T\left(-\frac{1}{r^3} \qf \right)
\eea
To summarize, in the third line of estimate \eqref{estimate3} we absorbed all the terms by Morawetz bulks, except \eqref{problematic-term-3} and \eqref{problematic-term-4}.

\bigskip

{\bf{Fourth line} } Consider the fourth line of estimate \eqref{estimate3}. The coupling term is given by 
\beaa
- D\int_{\MM_{trap}}\left((r-r_P)R(r\nabb_A\qf^\F)+ \Lambda T(r\nabb_A\qf^\F )+\frac 1 2  w r\nabb_A\qf^\F\right)\c r\nabb_A\left(-\frac{1}{r^3} \qf \right)
\eeaa
For the first term, we use Cauchy-Schwarz while moving the degeneracy $r-r_P$ to the $r\nabb\qf$. 
We keep for the moment the term with $T\nabb\qf^\F$:
\bea\label{problematic-term-5}
- D\int_{\MM_{trap}}\Lambda T(r\nabb_A\qf^\F )\c r\nabb_A\left(-\frac{1}{r^3} \qf \right)
\eea
and the term with $\nabb \qf^\F$:
\bea\label{problematic-term-6}
- D\int_{\MM_{trap}}\frac 1 2  w r\nabb_A\qf^\F\c r\nabb_A\left(-\frac{1}{r^3} \qf \right)
\eea
To summarize, in the fourth line of estimate \eqref{estimate3} we absorbed all the terms by Morawetz bulks, except \eqref{problematic-term-3} and \eqref{problematic-term-4}.

 Consider term \eqref{problematic-term-3}. Using Lemma \ref{wave-T-R} and \eqref{schematic-system}, we write:
\bea\label{use-wave}
\begin{split}
\frac 1 \Up TT\qf^\F  &=-\square_\g \qf^\F   +\frac 1 \Up RR\qf^\F  +\lapp_2\qf^\F  +\frac 2 r  R \qf^\F \\
&=\frac 1 \Up RR\qf^\F  +\lapp_2\qf^\F +\frac 2 r  R \qf^\F -V_2 \qf^\F  +\frac{Q}{r^3} \qf -Q^2 \Lbb_2[\qf^\F]
\end{split}
 \eea
The term \eqref{problematic-term-3} therefore becomes
\beaa
- C\int_{\MM_{trap}}\Lambda T(T\qf^\F )\c T\left(-\frac{1}{r^3} \qf \right)&=& - C\int_{\MM_{trap}}\Lambda (RR\qf^\F  +\Up\lapp_2\qf^\F  )\c T\left(-\frac{1}{r^3} \qf \right)\\
&& - C\int_{\MM_{trap}}\Lambda (\frac 2 r \Up R \qf^\F -\Up V_2 \qf^\F )\c T\left(-\frac{1}{r^3} \qf \right)\\
&& - C\int_{\MM_{trap}}\Lambda (\Up\frac{Q}{r^3} \qf)\c T\left(-\frac{1}{r^3} \qf \right)\\
&& - C\int_{\MM_{trap}}\Lambda ( -\Up Q^2 \Lbb_2[\qf^\F])\c T\left(-\frac{1}{r^3} \qf \right)
\eeaa
The second line of the right hand side of the above identity can be bounded by performing integration by parts in $T$, and obtaining $TR\qf^\F$ and $T\qf^\F$ which are non-degenerate in the Morawetz bulk. As boundary terms, we pick up energies.
The third line of the right hand side of the above identity can be bounded by the energy.
The fourth line will be absorbed in Section \ref{section-lower-order}, since it involves the lower order terms. 

We concentrate now on the first term on the first line on the right hand side of the above identity. Since we will have to integrate per parts in $R$ as well as in $T$, it is convenient to consider a cut-off function in $R$, $\chi(r)$ such that $\chi(r)=0$ for $r \leq \frac{6}{7} r_P$ and $r\geq \frac{8}{7} r_P$ and $\chi(r)=1$ for $\frac{7}{8} r_P \leq r \leq \frac{9}{8}r_P$. We first integrate by parts in $R$:
\beaa
- C\int_{\MM_{trap}}\Lambda \chi(r) RR\qf^\F \c T\left(-\frac{1}{r^3} \qf \right)&=& C\int_{\MM_{trap}}\Lambda \chi'(r) R\qf^\F \c T\left(-\frac{1}{r^3} \qf \right)+\Lambda \chi(r) R\qf^\F \c RT\left(-\frac{1}{r^3} \qf \right)
\eeaa
The first term on the right hand side vanishes at the photon sphere, therefore it can be bounded by the Morawetz bulks. For the second term we commute $TR$ and perform again integration by parts in $T$.

Finally, term \eqref{problematic-term-3} is bounded by:
\beaa
\eqref{problematic-term-3} &\les& - C\int_{\MM_{trap}}\Lambda \Up\lapp_2\qf^\F  \c T\left(-\frac{1}{r^3} \qf \right)- C\int_{\MM_{trap}}\Lambda ( -\Up Q^2 \Lbb_2[\qf^\F])\c T\left(-\frac{1}{r^3} \qf \right)\\
&&+\MM_p[\qf](0,\tau) +\MM^{1, T, \nabb}_p[\qf^\F](0,\tau)+E_p[\qf](\tau) +E_p[\qf^\F](\tau)+E_p[\qf](0) +E_p[\qf^\F](0)
\eeaa

Consider term \eqref{problematic-term-5}. Performing integration by parts in $T$ and then in $r\nabb$ we obtain
\beaa
- D\int_{\MM_{trap}}\Lambda T(r\nabb_A\qf^\F )\c r\nabb_A\left(-\frac{1}{r^3} \qf \right)&\les &   D\int_{\MM_{trap}}\Lambda (r\nabb_A\qf^\F )\c r\nabb_A\left(-\frac{1}{r^3} T\qf \right)\\
&&+E_p[\qf](\tau) +E_p[\qf^\F](\tau)+E_p[\qf](0) +E_p[\qf^\F](0)\\
&\les &  - D\int_{\MM_{trap}}\Lambda (r^2\lapp_2\qf^\F )\c \left(-\frac{1}{r^3} T\qf \right)\\
&&+E_p[\qf](\tau) +E_p[\qf^\F](\tau)+E_p[\qf](0) +E_p[\qf^\F](0)
\eeaa
Consider now $\eqref{problematic-term-1}+\eqref{problematic-term-3}+\eqref{problematic-term-5}$. This gives 
\beaa
\eqref{problematic-term-1}+\eqref{problematic-term-3}+\eqref{problematic-term-5}&\les& - A\int_{\MM_{trap}} \Lambda T(\qf ) \c \frac 4 r \lapp_2 \qf^\F\\
&&- C\int_{\MM_{trap}}\Lambda \Up\lapp_2\qf^\F  \c T\left(-\frac{1}{r^3} \qf \right)- C\int_{\MM_{trap}}\Lambda ( -\Up Q^2 \Lbb_2[\qf^\F])\c T\left(-\frac{1}{r^3} \qf \right)\\
&&- D\int_{\MM_{trap}}\Lambda (r^2\lapp_2\qf^\F )\c \left(-\frac{1}{r^3} T\qf \right)\\
&&+\MM_p[\qf](0,\tau) +\MM^{1, T, \nabb}_p[\qf^\F](0,\tau)\\
&&+E_p[\qf](\tau) +E_p[\qf^\F](\tau)+E_p[\qf](0) +E_p[\qf^\F](0)
\eeaa
Choosing the constant $A$, $C$ and $D$ such that 
\bea\label{condition-cancellation}
\begin{split}
4A r^2-C\Up(r)-Dr^2|_{r=r_P}&=0, \\
\left(4A r^2-C\Up(r)-Dr^2\right)'|_{r=r_P}&=(8Ar-C \Up'(r)-2Dr)|_{r=r_P}=0
\end{split}
\eea
 we obtain a cancellation of second order for the terms involving $T\qf \c \lapp_2 \qf^\F$ at the photon sphere. 
 
 \begin{remark}\label{signs-different}  Observe that a choice of positive constants $A$, $C$, $D$ verifying conditions \eqref{condition-cancellation} is possible. We first remark that the two conditions in \eqref{condition-cancellation} reduce to one relation only. Indeed, recall that at the photon sphere relation \eqref{Up'-photonsphere} holds, i.e. 
 \beaa
 \Up'(r_P)= \frac{2}{r_P} \Up(r_P).
 \eeaa
 Therefore the second condition in \eqref{condition-cancellation} for the derivative reads
 \beaa
 0&=& 8Ar_P-C \Up'(r_P)-2Dr_P= 8Ar_P-C \frac{2}{r_P} \Up(r_P)-2Dr_P=\frac{2}{r_P}(4A r_P^2-C\Up(r_P)-Dr_P^2)
 \eeaa
 which reduces to the first condition in \eqref{condition-cancellation}. Finally,  since $\Up(r_P)\geq 0$, there exists a choice of positive constants $A$, $C$, $D$ such that $(4A-D) r_P^2-C\Up(r_P)=0$. 
  
  This is implied by two facts: the higher order terms in the coupling terms $\C_1[\qf^\F]$ (i.e. $\frac 4 r \lapp_2\qf^\F$) and $\C_2[\qf]$ (i.e. $-\frac{1}{r^3} \qf$) have opposite sign,  and the higher order term is second order derivatives. 
  
  In particular, compare the above cancellations with the terms  \eqref{problematic-term-2}, \eqref{problematic-term-4}, \eqref{problematic-term-6}. Even if we still had the freedom of choosing the coefficients $A$, $C$ and $D$, we would not be able to normalize positive constants at the photon sphere to cancel those terms, because of a sign issue.  Below, we treat them differently.
 \end{remark}
 
 Creating this second order degeneracy at the photon sphere, we can distribute the degeneracy $(r-r_P)$ to both $T\qf$ and $\lapp_2\qf^\F$, and bound it by Cauchy-Schwarz with the Morawetz bulks. This finally gives 
\beaa
\eqref{problematic-term-1}+\eqref{problematic-term-3}+\eqref{problematic-term-5}&\les&- C\int_{\MM_{trap}}\Lambda ( -\Up Q^2 \Lbb_2[\qf^\F])\c T\left(-\frac{1}{r^3} \qf \right)\\
&&+\MM_p[\qf](0,\tau) +\MM^{1, T, \nabb}_p[\qf^\F](0,\tau)\\
&&+E_p[\qf](\tau) +E_p[\qf^\F](\tau)+E_p[\qf](0) +E_p[\qf^\F](0)
\eeaa
We consider now terms \eqref{problematic-term-2}, \eqref{problematic-term-4}, \eqref{problematic-term-6}. They all end up being bounded by a term of the form 
\bea\label{general-form-problematic-form}
\int_{\MM_{trap}}\frac 1 2  w \nabb_A\qf^\F\c \nabb_A\qf
\eea
up to multiplication by functions of $r$ which are continuous at the photon sphere.
Indeed, integrating by parts \eqref{problematic-term-2} in $\nabb$, we obtain a term of this form, and \eqref{problematic-term-6} is already of that form. Consider \eqref{problematic-term-4}. Performing integration by parts in $T$ we obtain
\beaa
- C\int_{\MM_{trap}}\frac 1 2  w T\qf^\F \c T\left(-\frac{1}{r^3} \qf \right)&\les&  C\int_{\MM_{trap}}\frac 1 2  w TT\qf^\F \c \left(-\frac{1}{r^3} \qf \right)\\
&&+E_p[\qf](\tau) +E_p[\qf^\F](\tau)+E_p[\qf](0) +E_p[\qf^\F](0)
\eeaa
Substituting the expression for $TT\qf^\F$ using the wave equation for $\qf^\F$ as done before we obtain
\beaa
C\int_{\MM_{trap}}\frac 1 2  w TT\qf^\F \c \left(-\frac{1}{r^3} \qf \right)&=& C\int_{\MM_{trap}}\frac 1 2  w (RR\qf^\F  +\Up \lapp_2\qf^\F) \c \left(-\frac{1}{r^3} \qf \right)\\
&&+ C\int_{\MM_{trap}}\frac 1 2  w (\frac 2 r \Up R \qf^\F -V_2 \Up\qf^\F  +\Up\frac{Q}{r^3} \qf ) \c \left(-\frac{1}{r^3} \qf \right)\\
&&+ C\int_{\MM_{trap}}\frac 1 2  w ( -Q^2\Up \Lbb_2[\qf^\F]) \c \left(-\frac{1}{r^3} \qf \right)
\eeaa
The second line can be bounded by the Morawetz bulks and the third line will be bounded in Section \ref{section-lower-order}, since it involves lower order terms. For the first term in the first line we perform integration by parts in $R$ as done before, using a cut-off function $\chi(r)$. The remaining term with the laplacian can be brought into one of the form \eqref{general-form-problematic-form} by integration by parts in $\nabb$. Finally, term \eqref{problematic-term-4} is bounded by:
\beaa
\eqref{problematic-term-4} &\les& -C\int_{\MM_{trap}}\frac 1 2  w (\Up \nabb_A\qf^\F) \c \left(-\frac{1}{r^3} \nabb_A\qf \right)+ C\int_{\MM_{trap}}\frac 1 2  w ( -Q^2\Up \Lbb_2[\qf^\F]) \c \left(-\frac{1}{r^3} \qf \right)\\
&&+\MM_p[\qf](0,\tau) +\MM^{1, T, \nabb}_p[\qf^\F](0,\tau)+E_p[\qf](\tau) +E_p[\qf^\F](\tau)+E_p[\qf](0) +E_p[\qf^\F](0)
\eeaa
It remains therefore to estimate \eqref{general-form-problematic-form}. Recall that $R(r)=\Up$, therefore we write in the integral $1=\frac 1 \Up R(r-r_P)$ in order to  introduce the degeneracy at the photon sphere, allowing for taking an extra derivative in $R$:
\beaa
\int_{\Sigma_{trap}(\tau)} \chi(r)\nabb\qf^\F \c \nabb\qf&=&\int_{\Sigma_{trap}(\tau)}\frac{1}{\Up} R(r-r_P) \chi(r)\nabb\qf^\F \c \nabb\qf
\eeaa
We integrate by parts in $R$:
\beaa
&=&-\int_{\Sigma_{trap}(\tau)}R(\frac{1}{\Up}) (r-r_P) \chi(r)\nabb\qf^\F \c \nabb\qf-\int_{\Sigma_{trap}(\tau)} (r-r_P) \chi'(r)\nabb\qf^\F \c \nabb\qf\\
&&-\int_{\Sigma_{trap}(\tau)}\frac{1}{\Up} (r-r_P) \chi(r) R\nabb\qf^\F \c \nabb\qf-\int_{\Sigma_{trap}(\tau)}\frac{1}{\Up} (r-r_P) \chi(r)\nabb\qf^\F \c R\nabb\qf
\eeaa
Upon integration in $\tau$, each term can be bounded in the following way. Since $\Up'$ is bounded at the photon sphere, the first term is easily bounded by throwing the degeneracy into $\nabb \qf$. The second term vanishes at the photon sphere. The third term can be bounded by throwing the degeneracy into $\nabb \qf$ as above, since $R\nabb \qf^\F$ appears with no degeneracy on the Morawetz bulk. The last term can be bounded by performing an integration by parts in $\nabb$, and throwing the degeneracy into $\nabb^2 \qf^\F$, since $R\qf$ appears with no degeneracy in the Morawetz bulks. This finally gives 
\beaa
\eqref{problematic-term-2}+\eqref{problematic-term-4}+\eqref{problematic-term-6}&\les& C\int_{\MM_{trap}}\frac 1 2  w ( -Q^2\Up \L_2[\qf^\F]) \c \left(-\frac{1}{r^3} \qf \right)\\
&&+\MM_p[\qf](0,\tau) +\MM^{1, T, \nabb}_p[\qf^\F](0,\tau)\\
&&+E_p[\qf](\tau) +E_p[\qf^\F](\tau)+E_p[\qf](0) +E_p[\qf^\F](0)
\eeaa
which bounds all the coupling terms at the photon sphere.

Putting all the above estimates together, we proved estimate \eqref{global-estimate-coupling}.

Observe that the procedure of the absorption of the coupling terms in the case of spin $-2$ quantities is completely similar. Conditions \eqref{condition-cancellation} can still be verified because the right hand sides of the generalized Regge-Wheeler system of spin $-2$ still have opposite signs.

\subsection{Estimates for the lower order terms}\label{section-lower-order}
We will absorb the lower order terms by the Morawetz bulks of $\qf$ and $\qf^\F$, and by allowing a necessary bound by the initial energy of the lower order terms quantities $\psi_0$, $\psi_1$ and $\psi_3$. 

The goal of this section is to prove the following estimate for the lower order terms:
\bea
\begin{split}\label{main-estimate-lower-order-terms}
&\II_{p\,; \,R}[ \Lbb_1[\qf^\F]](0,\tau)+\II_{p\,; \,R}[ \Lbb_1[\qf]](0,\tau)+  \II^{1, T, \nabb}_{p\,; \,R}[\Lbb_2[\qf^\F]]( 0,\tau)\\
& - A\int_{\MM(0,\tau)} \left((r-r_P)R(\qf)+ \Lambda T(\qf )+\frac 1 2  w\qf\right)\c \Lbb_1[\qf^\F]\\
& - A\int_{\MM(0,\tau)} \left((r-r_P)R(\qf)+ \Lambda T(\qf )+\frac 1 2  w\qf\right)\c \Lbb_1[\qf]\\
          &- B\int_{\MM(0,\tau)} \left((r-r_P)R(\qf^\F)+ \Lambda T(\qf^\F )+\frac 1 2  w\qf^\F\right)\c \Lbb_2[\qf^\F]\\
        &- C\int_{\MM(0,\tau)} \left((r-r_P)R(T\qf^\F)+ \Lambda T(T\qf^\F )+\frac 1 2  w T\qf^\F\right)\c T(\Lbb_2[\qf^\F])\\
        &- D\int_{\MM(0,\tau)} \left((r-r_P)R(r\nabb_A\qf^\F)+ \Lambda T(r\nabb_A\qf^\F )+\frac 1 2  w r\nabb_A\qf^\F\right)\c r\nabb_A(\Lbb_2[\qf^\F])\\
        &- C\int_{\MM_{trap}}\Lambda ( -\Up Q^2 \Lbb_2[\qf^\F])\c T\left(-\frac{1}{r^3} \qf \right)+ C\int_{\MM_{trap}}\frac 1 2  w ( -Q^2\Up \Lbb_2[\qf^\F]) \c \left(-\frac{1}{r^3} \qf \right)\\
        &\les \MM_p[\qf](0,\tau)+\MM^{1, T, \nabb}_p[\qf^\F](0,\tau)+E^{1, T, \nabb}_p[\qf^\F](\tau)+E_p[\qf](\tau)\\
&+E_p[\qf](0)+E^{1, T, \nabb}_p[\qf^\F](0)+E_p[\ff](0)+E_p[\psi_1](0)+E_p[\a](0)
\end{split}
\eea
Notice that in the seventh line of \eqref{main-estimate-lower-order-terms}, we inserted the additional terms involving the lower order terms obtained in the coupling estimates in the previous section.

We first derive transport estimates for the lower order terms, and then we estimate the lower order terms of the main estimates above.

\subsubsection{Transport estimates for lower order terms}
We derive transport estimates for $\psi_0$, $\psi_1$ and $\psi_3$ using their differential relations. In order to simplify the derivation of the transport estimates, applied few times, we summarize in the following lemma the main computation.

\begin{lemma}\label{general-lemma-transport} Let $\Phi_1$ and $\Phi_2$ be two symmetric traceless $2$-tensor which verify the differential relation
\bea\label{generic-differential-relation}
\nabb_3(r^n \kab^m \Phi_1)&=& \kab \Phi_2
\eea
for $n$, $m$ integers. Then, in the far away region, for all $l >0$
\beaa
\int_{\Sigma_{far}(\tau)} r^{2n-2m+l-2} |\Phi_1|^2+\int_{\MM_{far}(0, \tau)}r^{2n-2m+l-3}|\Phi_1|^2 &\les& \int_{\Sigma_{far}(0)} r^{2n-2m+l-2} |\Phi_1|^2\\
&&+ \int_{\MM_{far}(0, \tau)} r^{l-3} |\Phi_2|^2
\eeaa
\end{lemma}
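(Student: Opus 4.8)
The plan is to derive a weighted $L^2$ transport estimate directly from the differential relation \eqref{generic-differential-relation}, using the vector field $e_3$ (equivalently $\nabb_3$) as a multiplier in the region $\MM_{far}$. The key observation is that $\nabb_3$ is, up to lower-order terms, a rescaled version of the outgoing-incoming derivative, and that in the far-away region $r \geq R_0$ the weight $r$ essentially behaves like $v$; more precisely, since $\kab = -\frac{2\Up}{r}$ with $\Up \to 1$ as $r \to \infty$, we have $e_3(r) = -\Up \approx -1$ there, and $\kab \approx -\frac 2 r$.

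First I would recast \eqref{generic-differential-relation} as a scalar transport equation for $|\Phi_1|^2$. Writing $\Psi_1 := r^n \kab^m \Phi_1$, the relation reads $\nabb_3 \Psi_1 = \kab \Phi_2$, hence
\[
\nabb_3 |\Psi_1|^2 = 2 \Psi_1 \cdot \nabb_3 \Psi_1 = 2\kab\, \Psi_1 \cdot \Phi_2.
\]
Next I would multiply by an $r$-weight $r^{q}$ (the exponent $q$ chosen so that the final integrand matches the desired powers — one verifies $q = l - 2m - 1$ works, since $|\Psi_1|^2 = r^{2n}\kab^{2m}|\Phi_1|^2 \sim r^{2n-2m}|\Phi_1|^2$) and integrate over $\MM_{far}(0,\tau)$. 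Using the divergence identity for $e_3$ in Reissner-Nordström (recall $\D^\mu(f e_3)_\mu$ involves $\tr\pi^{(3)} = 2\kab - \Up'$, which is $\approx -\frac{4}{r}$ in the far region) together with $e_3(r^q) = -\Up q\, r^{q-1} \approx -q r^{q-1}$, the bulk term produced on the left has a definite sign for $q > 0$: integration by parts converts $\int \nabb_3(r^q |\Psi_1|^2)$ into a boundary term on $\Sigma_{far}(\tau)$ minus one on $\Sigma_{far}(0)$, plus a spacetime bulk $\gtrsim \int_{\MM_{far}} r^{q-1}|\Psi_1|^2$ (here I use that $e_3$ is transversal to $\Sigma_{far}$, which is outgoing null with generator $e_4$, so the boundary flux of the $e_3$-current controls $\int_{\Sigma_{far}} r^q |\Psi_1|^2$). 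On the right I would estimate $2\kab r^q \Psi_1 \cdot \Phi_2$ by Cauchy–Schwarz: $\left| 2\kab r^q \Psi_1\cdot \Phi_2\right| \leq \epsilon r^{q-1}|\Psi_1|^2 + C\epsilon^{-1} r^{q+1}\kab^2 |\Phi_2|^2 \lesssim \epsilon r^{q-1}|\Psi_1|^2 + C r^{q-1}|\Phi_2|^2$, and absorb the first piece into the positive bulk by taking $\epsilon$ small. Translating back via $|\Psi_1|^2 \sim r^{2n-2m}|\Phi_1|^2$ and $q-1 = l - 2m - 2$, so $q + 2n - 2m - 1 = 2n - 2m + l - 2$ on the boundary and $2n-2m+l-3$ in the bulk, while the $\Phi_2$ term carries weight $r^{q-1} = r^{l-2m-2}$ times... — here I would double check the bookkeeping so that the $\Phi_2$ integrand comes out as $r^{l-3}|\Phi_2|^2$ as claimed (this forces the relation between $q$ and the stated exponents, and is the only place the precise powers $n,m$ enter).

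The main obstacle, and the step requiring the most care, is the sign and size control of the boundary and bulk terms near the inner edge $r = R_0$ of $\MM_{far}$: the integration by parts in $r$ produces a boundary contribution on $\{r = R_0\}$ which is not of the desired sign and must be absorbed. I expect to handle this exactly as the cutoff arguments elsewhere in the paper — introduce a smooth cutoff $\th(r)$ supported in $r \geq R_0/2$, equal to $1$ for $r \geq R_0$, so that the $r = R_0$ boundary term is replaced by a spacetime integral supported in $R_0/2 \leq r \leq R_0$ where all weights are comparable to constants; this error is then controlled by the (already-established) degenerate and non-degenerate bulk norms of $\Phi_1$ on the compact-$r$ region, or simply folded into the statement's right-hand side. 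One should also check that the relation \eqref{generic-differential-relation} is exactly of transport type with no extra zeroth-order term in $\Phi_1$ — if such a term were present it could be absorbed by Grönwall, but as stated the relation is clean. Finally, the restriction $l > 0$ is precisely what guarantees $q > 0$ and hence the favorable sign of the bulk term $\int r^{q-1}|\Psi_1|^2$; this is why the hypothesis appears.
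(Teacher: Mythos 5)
Your strategy is exactly the paper's: square the transport relation to get $\nabb_3\big(r^{2n}\kab^{2m}|\Phi_1|^2\big)=2r^n\kab^{m+1}\Phi_1\cdot\Phi_2$, multiply by an $r$-weight, use $\nabb_3 r=\tfrac12\kab r$ and the sign $\kab\le 0$ to produce a positive bulk, absorb the right-hand side by Cauchy--Schwarz, and apply the divergence theorem for $e_3$ over $\MM_{far}$, with the inner boundary at $r=R_0$ folded into an $R_0$-dependent constant. Two bookkeeping points need correcting, both of which you flagged yourself. First, the weight is $r^{l-2}$, i.e.\ $q=l-2$ independently of $m$: the factor $\kab^{2m}\sim r^{-2m}$ is already carried inside $|\Psi_1|^2=r^{2n}\kab^{2m}|\Phi_1|^2$, so $r^{l-2}|\Psi_1|^2\sim r^{2n-2m+l-2}|\Phi_1|^2$ and the Cauchy--Schwarz error $r^{l-2}(-\kab)|\Phi_2|^2\sim r^{l-3}|\Phi_2|^2$ come out exactly as stated; your $q=l-2m-1$ is inconsistent with your own later identity. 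Second, the positivity of the bulk is \emph{not} that "$l>0$ guarantees $q>0$'' — with $q=l-2$ one has $q<0$ for the relevant range $l<2$. The favorable sign comes from the combination $q+2=l>0$: the weight derivative contributes $\tfrac{q}{2}\kab$, the divergence $\div e_3\approx\kab$ contributes another full $\kab$, and Cauchy--Schwarz eats half of the total, leaving a bulk coefficient $\tfrac{l}{4}(-\kab)^{2m+1}r^{2n+l-2}|\Phi_1|^2>0$; this is precisely where the hypothesis $l>0$ enters in the paper's proof. With those two corrections your argument is the paper's argument.
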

\begin{proof} From \eqref{generic-differential-relation}, we obtain
\beaa
\nabb_3(r^{2n}\kab^{2m} |\Phi_1|^2)&=& 2r^n \kab^{m+1} \Phi_1  \Phi_2
\eeaa
Multiplying by $r^{l-2}$ and recalling that $\nabb_3r=\frac 1 2 r \kab$ and $\kab \leq 0$, we obtain
\beaa
&&\nabb_3(r^{2n+l-2}\kab^{2m} |\Phi_1|^2)+\left(\frac{l}{2}-1\right)r^{2n+l-2}(-\kab)^{2m+1} |\Phi_1|^2\\
&=& 2r^{n+l-2} \kab^{m+1} \Phi_1  \Phi_2\leq \frac{l}{4} r^{2n+l-2}(-\kab)^{2m+1}|\Phi_1|^2+\frac{4}{l} r^{l-2} (-\kab) |\Phi_2|^2
\eeaa
giving 
\beaa
\nabb_3(r^{2n+l-2}\kab^{2m} |\Phi_1|^2)+\left(\frac{l}{4}-1\right)r^{2n+l-2}(-\kab)^{2m+1} |\Phi_1|^2&\leq& \frac{4}{l} r^{l-2} (-\kab) |\Phi_2|^2
\eeaa
Since $\div e_3=\frac 1 2 \tr \pi^{(3)}=\kab -\frac{2M}{r^2}+\frac{2Q^2}{r^3}$, we write
\beaa
&&\div(r^{2n+l-2}\kab^{2m} |\Phi_1|^2 e_3)+\left(\frac{l}{4}-1\right)r^{2n+l-2}(-\kab)^{2m+1} |\Phi_1|^2\\
&=&\nabb_3(r^{2n+l-2}\kab^{2m} |\Phi_1|^2)+\left(\frac{l}{4}-1\right)r^{2n+l-2}(-\kab)^{2m+1} |\Phi_1|^2+r^{2n+l}\kab^{2m} |\Phi_1|^2 \div e_3\\
&\leq& \frac{4}{l} r^{l-2} (-\kab) |\Phi_2|^2+r^{2n+l}\kab^{2m+1} |\Phi_1|^2 
\eeaa
which gives
\beaa
&&\div(r^{2n+l-2}\kab^{2m} |\Phi_1|^2 e_3)+\frac{l}{4} r^{2n+l-2}(-\kab)^{2m+1} |\Phi_1|^2 \leq \frac{4}{l} r^{l-2} (-\kab) |\Phi_2|^2 
\eeaa
Integrating the above inequality on $\MM_{far}(0, \tau)$ and using divergence theorem on the left hand side, we obtain the desired estimate, where we allow for a constant on the right hand side depending on $R_0$.
\end{proof}

We will make use of Lemma \ref{general-lemma-transport} to obtain the control, and the optimal decay, of the lower order terms through transport equations. The differential relations given by \eqref{quantities} are not sufficient to gain control of all derivatives of the lower order terms. In order to do so, we will make use of Bianchi identities and commutation formulas.

\begin{proposition}\label{transport-estimates}[Transport estimates for $\a$, $\psi_1$ and $\ff$] Let $\a$, $\psi_1$, $\qf$, $\ff$, $\qf^\F$ be defined as in \eqref{quantities}. Then, for all $\de \le p \le 2-\de$, we have
\beaa
&&E_p[\a](\tau)+E_p[\psi_1](\tau)+E^{1, T, \nabb}_p[\ff](\tau)+ E_{\mathcal{H}^+}[\a](0, \tau)+E_{\mathcal{H}^+}[ \psi_1](0, \tau)+E_{\mathcal{H}^+}^{1, T, \nabb}[\ff](0, \tau)\\
&&+\widehat{\MM}_p[\a](0, \tau)+ \widehat{\MM}_p[\psi_1](0, \tau)+ \widehat{\MM}^{1, T, \nabb}_p[\ff](0, \tau)\\
&\les& \MM_p[\qf](0,\tau)+\MM^{1, T, \nabb}_p[\qf^\F](0,\tau)+E_p[\a](0)+ E_p[\psi_1](0)+ E^{1, T, \nabb}_p[\ff](0)
\eeaa
\end{proposition}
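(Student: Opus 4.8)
The plan is to control the derived quantities $\a$, $\psi_1$, $\ff$ by running a chain of transport (ODE) estimates along the incoming null direction $e_3$, using the differential relations in \eqref{quantities} read ``backwards'': each of $\psi_0,\psi_1,\ff,\psi_3$ is obtained from the next quantity up in the hierarchy (or from $\qf$, $\qf^\F$) by applying $\underline{P}=\frac{1}{\kab}\nabb_3(r\,\cdot\,)$, so inverting these relations expresses $\nabb_3$ of a lower-order quantity in terms of a higher-order one. Concretely, $\nabb_3(r^3\kab^2\a)=\kab\,\psi_1$, $\nabb_3(r\,\psi_1)=\kab\,\qf$ (up to rescaling by powers of $r$ and $\kab$), and $\nabb_3(r^3\kab\,\ff)=\kab\,\qf^\F$. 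The workhorse is Lemma \ref{general-lemma-transport}: applied with the appropriate $(n,m,l)$ it converts an $L^2$ bound with $r$-weight $r^{l-3}$ on the ``source'' $\Phi_2$ (which will be $\qf$, $\qf^\F$, or an already-controlled intermediate quantity) into an $L^2$ bound on $\Phi_1$ with the weight dictated by the definitions of $\widehat{\MM}_p$ in Section \ref{definition-norms}, plus the corresponding data term on $\Sigma_0$. I would first estimate $\psi_3$ from $\qf^\F$, then $\ff$ from $\psi_3$; independently $\psi_1$ from $\qf$, then $\psi_0$ (hence $\a$) from $\psi_1$. Matching the powers of $r$ in Lemma \ref{general-lemma-transport} against the weights in $\widehat{\MM}_p[\ff]$, $\widehat{\MM}_p[\psi_1]$, $\widehat{\MM}_p[\a]$ and in $E_p[\ff],E_p[\psi_1],E_p[\a]$ is the bookkeeping that produces exactly the right-hand side of the claimed inequality; the horizon fluxes $E_{\mathcal{H}^+}$ come out of the same divergence-theorem identity used to prove Lemma \ref{general-lemma-transport}, since on the horizon $e_3=e_3^*$ is regular.

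The transport relations coming directly from \eqref{quantities} only give control of the $\nabb_3$-derivative of each quantity; to close the full energies $E_p$ and bulks $\widehat{\MM}_p$ one also needs the $\nabb_4$-derivative, the angular derivative $\nabb$, and (near the horizon) the regular frame derivatives. This is where the Bianchi and Maxwell equations of Section \ref{linearized-equations} enter. For instance, the equation \eqref{nabb-3-a-ff} for $\nabb_3\a$ already expresses $\nabb_3\a$ in terms of $\DDs_2\b$, $\rho\chih$ and $\ff$; conversely, $\a$ appears as a source in transport equations for $\b$, $\chih$ etc. More to the point, $\nabb_4\a$ and $\nabb\a$ can be recovered by commuting the transport relation $\nabb_3(r^3\kab^2\a)=\kab\psi_1$ with $\nabb_4$ and with $r\nabb$: using the commutation formulas \eqref{commutation-formulas}, \eqref{commutator-rnabla} (which say $[\nabb_3, r\nabb]=0$ and $[\nabb_3,\nabb_4]$ is lower order), the $\nabb_4$- and $\nabb$-derivatives of $\a$ satisfy transport equations whose sources are $\nabb_4\psi_1$, $r\nabb\psi_1$, which in turn are controlled by the same hierarchy one level up (and ultimately by $\MM^{1,T,\nabb}_p[\qf^\F]$, $\MM_p[\qf]$, which contain one extra angular/$T$ derivative precisely for this reason — note $\ff$ and $\psi_1$ are allowed one fewer derivative than $\a$ in the definitions, matching the single vs. double application of $\underline{P}$). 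The same commutation argument handles $\ff$ and $\psi_1$. I would organize this as: (i) a base transport estimate for the undifferentiated quantities via Lemma \ref{general-lemma-transport}; (ii) commute with $T$ and $r\nabb$ (Killing/angular directions, which commute cleanly) to upgrade; (iii) use the wave/transport structure to trade a $\nabb_4$-derivative for an angular one where needed in the far region, absorbing the $r$-weighted remainder.

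The main obstacle I anticipate is the far-away region and the precise $r$-weight matching: the quantities $\a$, $\psi_1$, $\ff$ decay at different rates ($r^{-1}$ vs.\ $r^{-5/2}$-type weights appear in the corollary after the Main Theorem), and the weights in $\widehat{\MM}_{p;R}[\a]$, $\widehat{\MM}_{p;R}[\psi_1]$, $\widehat{\MM}_{p;R}[\ff]$ have been chosen so that Lemma \ref{general-lemma-transport} with $l$ ranging over $[\de, 2-\de]$ (and $l$ shifted appropriately between the three levels, since each application of $\underline{P}$ changes the natural weight by $r^2$) exactly produces the bulk on the left from the bulk on the right plus the $\MM_p[\qf],\MM^{1,T,\nabb}_p[\qf^\F]$ on the right. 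One must check that the loss in the $e_3$-direction from $\div e_3 = \kab + O(M/r^2)$ in Lemma \ref{general-lemma-transport} is of the right sign (it is, because $\kab<0$ gives a coercive bulk term for $l>0$), and that near the horizon the factor $\kab = -2/r$ is bounded above and below, so no degeneracy is introduced there. A secondary subtlety is that the transport estimates genuinely require the \emph{non-degenerate} bulks $\MM_p[\qf]$, $\MM^{1,T,\nabb}_p[\qf^\F]$ rather than the photon-sphere-degenerate ones; this is fine since those bulks (coming from the red-shift and $r^p$ machinery of Section \ref{separate-estimates}) are degenerate only in the $T$ and angular directions at $r=r_P$, while the transport equation propagates in the $e_3$-direction and only the $R(\qf)$, $|\qf|^2$ parts of $\Mor[\qf]$ — which are non-degenerate — are used as sources for the undifferentiated estimate, with the degenerate directions recovered only after the $T$-/angular-commutation in step (ii), where the commuted bulks supply the needed control.

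Finally I would assemble the four chains — $\qf^\F \rightsquigarrow \psi_3 \rightsquigarrow \ff$ and $\qf \rightsquigarrow \psi_1 \rightsquigarrow (\psi_0,\a)$ — together with their $T$- and $r\nabb$-commuted versions, sum the resulting inequalities, and absorb the intermediate quantities $\psi_3$ and $\psi_0$ (which do not appear in the statement) by noting that $E_p[\psi_0]\simeq E_p[\a]$ and $E_p[\psi_3]\simeq E_p[\ff]$ up to bounded factors of $r$ and $\kab$. This yields precisely the claimed estimate for $E_p[\a]+E_p[\psi_1]+E^{1,T,\nabb}_p[\ff]$ and the associated horizon fluxes and non-degenerate bulks in terms of $\MM_p[\qf]+\MM^{1,T,\nabb}_p[\qf^\F]$ plus data.
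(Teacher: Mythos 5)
Your overall strategy coincides with the paper's: the two chains $\qf^\F\rightsquigarrow\psi_3\rightsquigarrow\ff$ and $\qf\rightsquigarrow\psi_1\rightsquigarrow\psi_0$, driven by Lemma \ref{general-lemma-transport} with the $(n,m,l)$ bookkeeping you describe, the $\check{\nabb}_4$- and $r\nabb$-commutations of \eqref{quantities} in the far region, and the region-by-region assembly are all exactly what the paper does. However, there is one genuine gap in your treatment of the trapping region, and it concerns precisely the quantities you propose to recover "after the $T$-/angular-commutation in step (ii), where the commuted bulks supply the needed control."

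For the $\qf$-chain there are no commuted bulks available: the right-hand side of the proposition contains only the uncommuted $\MM_p[\qf]$, in which $|T\qf|^2$ and $|\nabb\qf|^2$ carry the $(r^2-3Mr+2Q^2)^2$ degeneracy at $r=r_P$. Commuting $\nabb_3(r\psi_1)=\kab\,\qf$ with $T$ or $r\nabb$ therefore produces sources $T\qf$, $r\nabb\qf$ that you cannot integrate non-degenerately through the photon sphere, so your step (ii) cannot close the non-degenerate norms $|T\psi_1|^2$, $|T\a|^2$, $|\nabb\a|^2$ appearing in $E_p[\a]$, $\widehat{\MM}_p[\a]$, $\widehat{\MM}_p[\psi_1]$ there. (For the $\qf^\F$-chain your argument is fine, since $\MM^{1,T,\nabb}_p[\qf^\F]$ is on the right-hand side.) The paper circumvents this with two additional ideas you would need: first, it commutes the transport relations with $R$ rather than $T$ — using $[R,e_3]=-2\omb\,\Up e_4$ and $\Up e_4=2R+e_3$ to keep the commuted equation in transport form — because $|R\qf|^2$ sits in $\Mor[\qf]$ with no degeneracy, and then recovers $T\psi_1$ algebraically from $T=\nabb_3+R$ together with $\nabb_3\psi_1=\frac1r\kab\qf-\frac12\kab\psi_1$; second, for the angular derivatives of $\a$ (and the second angular derivatives of $\ff$) in the trapping region it does not use transport at all but instead invokes the rewritten Teukolsky equations of Corollaries \ref{laplacian-a} and \ref{laplacian-psi-3}, which express $\lapp_2\psi_0$ and $\lapp_2\psi_3$ in terms of already-controlled quantities, and closes via elliptic estimates on the spheres. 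Note also that $E[\psi_1]$ is deliberately defined without the angular derivative away from the far region for exactly this reason. Without these two devices your outline does not yield the stated estimate.
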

\begin{proof} We separate the transport estimates into the three regions: the redshift region, the trapping region and the far-away region. 

\bigskip

{\bf{In the far-away region:}} In this region, we want to get the best decay in $r$. We will apply Lemma \ref{general-lemma-transport}, being careful to exploit the best decay on the left hand side.

By \eqref{quantities}, $\ff$ and $\qf^\F$ verify relation \eqref{generic-differential-relation} with $n=3$, $m=1$. Since  $\int_{\MM_{far}(0, \tau)} r^{-3+p}|\qf^\F|^2$ is controlled by  the Morawetz bulk in the far away region by definition \eqref{Morawetz-far-away}, by Lemma \ref{general-lemma-transport}, for $l=p$,
\beaa
\int_{\Sigma_{far}(\tau)} r^{2+p} |\ff|^2+\int_{\MM_{far}(0, \tau)}r^{1+p}|\ff|^2 \les \int_{\Sigma_{far}(0)} r^{2+p} |\ff|^2+ \MM^{1, T, \nabb}_p[\qf^\F](0,\tau)
\eeaa

By \eqref{quantities}, $\psi_1$ and $\qf$ verify relation \eqref{generic-differential-relation} with $n=1$, $m=0$. As before, applying Lemma \ref{general-lemma-transport}, for $l=p$, we obtain
\beaa
\int_{\Sigma_{far}(\tau)} r^{p} |\psi_1|^2+\int_{\MM_{far}(0, \tau)}r^{-1+p}|\psi_1|^2 \les \int_{\Sigma_{far}(0)} r^{p} |\psi_1|^2+ \MM_p[\qf](0,\tau)
\eeaa

By \eqref{quantities}, $\a$ and $\psi_1$ verify relation \eqref{generic-differential-relation} with $n=3$, $m=2$. As before, applying Lemma \ref{general-lemma-transport}, for $l=2+p$, we obtain
\beaa
\int_{\Sigma_{far}(\tau)} r^{2+p} |\a|^2+\int_{\MM_{far}(0, \tau)}r^{1+p}|\a|^2 \les \int_{\Sigma_{far}(0)} r^{2+p} |\a|^2+ \int_{\MM_{far}(0, \tau)} r^{-1+p} |\psi_1|^2
\eeaa
We estimated the zero-th order terms of the non-degenerate Morawetz bulks for $\a$, $\psi_1$ and $\ff$. We can estimate the $\nabb_3$ derivative, by making use of the differential relations \eqref{quantities}. Indeed, we immediately obtain bounds on $\int_{\MM_{far}(0, \tau)} r^{p+3}|\nabb_3\ff|^2$, $\int_{\MM_{far}(0, \tau)} r^{p+1}|\nabb_3\psi_1|^2$ and $\int_{\MM_{far}(0, \tau)} r^{p+3}|\nabb_3\a|^2$. 

To estimate the angular derivative, we can commute relations \eqref{quantities} with $r\nabb$. We obtain $\nabb_3(r^4 \kab \nabb \ff)=r\kab \nabb \qf^\F$, and therefore we can apply Lemma \ref{general-lemma-transport} to $\nabb\ff$ and $\nabb\qf^\F$ with $n=4$, $m=1$ and $l=p$. 
We can similarly commute the relations for $\psi_1$ and $\a$.
To estimate the $\nabb_4$ derivative, we take $\check{\nabb}_4$ of relations \eqref{quantities}. For example we compute:
\beaa
\nabb_3(\Up \nabb_4(r\psi_1))= \kab \Up \check{\nabb}_4\qf-\Up(\frac 2 r \kab -2\rho) \qf
\eeaa
Writing $\Up=\frac{r}{2}(-\kab)$, we can apply Lemma \ref{general-lemma-transport} to $\nabb_4\psi_1$ and $\check{\nabb}_4\qf$ and obtain
\beaa
\int_{\Sigma_{far}(\tau)} r^{2+p} |\check{\nabb}_4 \psi_1|^2+\int_{\MM_{far}(0, \tau)}r^{1+p}|\check{\nabb}_4 \psi_1|^2 &\les& \int_{\Sigma_{far}(0)} r^{2+p} |\check{\nabb}_4 \psi_1|^2+ \MM_p[\qf](0,\tau)
\eeaa
Similarly for $\a$, and for $\ff$ we can in addition commute with $T$ and $r\nabb$, and still be able to bound by $\MM^{1, T, \nabb}_p[\qf^\F](0,\tau)$.

Putting all the above together we obtain in the far-away region:
\beaa
 E^{1, T, \nabb}_{p\,; \,R}[\ff](\tau)+\widehat{\MM}^{1, T, \nabb}_{p\,; \,R}[\ff](0, \tau)&\les&  E^{1, T, \nabb}_{p\,; \,R}[\ff](0)+\MM^{1, T, \nabb}_p[\qf^\F](0,\tau), \\
 E_{p\,; \,R}[\psi_1](\tau)+\widehat{\MM}_{p\,; \,R}[\psi_1](0, \tau)&\les&  E_{p\,; \,R}[\psi_1](0)+\MM_p[\qf](0,\tau), \\
 E_{p\,; \,R}[\a](\tau)+\widehat{\MM}_{p\,; \,R}[\a](0, \tau)&\les&  E_{p\,; \,R}[\a](0)+\widehat{\MM}_{p\,; \,R}[\psi_1](0, \tau)
\eeaa

\bigskip

{\bf{In the trapping region:}} In this bounded region, we don't keep track of the powers of $r$, but we have to be careful about the degeneracy of the $T$ and $\nabb$ derivatives. The zero-th order terms of $\a$ and $\psi_1$ are straightforwardly bounded by the Morawetz bulks using Lemma \ref{general-lemma-transport} applied to \eqref{quantities}. For $\ff$, commuting with $T$ and $r\nabb$,  we can use Lemma \ref{general-lemma-transport} to also bound those derivatives in terms of the Morawetz bulk.

Since the Morawetz bulk $\MM_p[\qf](0,\tau)$ contains derivatives $T$ and $\nabb$ which are degenerate at the photon sphere, we don't commute with those Killing vector fields. On the contrary, since $R$ appears with no degeneracy on the bulk, we commute with $R$ instead. Recalling that $\omb=\frac 1 2  \Up'$, we obtain
 \beaa
 [R, e_3 ]&=& [\frac 1 2 (-e_3+\Up e_4), e_3]=\frac 1 2 [\Up e_4, e_3]= \frac 1 2 (\Up [e_4, e_3] - e_3(\Up) e_4)= \frac 1 2 (-2\omb \Up e_4 +\Up' \Up  e_4) =0
 \eeaa
Commuting \eqref{quantities} with $R$ and applying Proposition \ref{transport-estimates} we obtain 
 \beaa
 \int_{\Si_{trap}(\tau)}|R\psi_1|^2 +\int_{\MM_{trap}(0,\tau)} |R\psi_1|^2 \les \int_{\Si_{trap}(0)}|R\psi_1|^2+\MM_p[\qf](0,\tau)
 \eeaa
Since $T\psi_1=\frac 1 r \kab \qf-\frac 1 2 \kab \psi_1+R\psi_1$, we can bound $|T\psi_1|^2$ by the bulk in $\qf$ and initial data. Recalling the definition of $E[\psi_1]$ and $\widehat{\MM}[\psi_1]$ we have obtained
\beaa
E[\psi_1](\tau)+\widehat{\MM}[\psi_1](0, \tau) \les E[\psi_1](0)+\MM_p[\qf](0,\tau)
\eeaa
 A similar computation done for $\a$ gives
 \beaa
 \int_{\Si_{trap}(\tau)}|R\a|^2+|T\a|^2 +\int_{\MM_{trap}(0,\tau)} |R\a|^2+|T\a|^2 &\les& \int_{\MM_{trap}(0,\tau)}|R\psi_1|^2  +\int_{\Si_{trap}(0)}|R\a|^2+|T\a|^2
  \eeaa
It only remains to control the angular derivative of $\a$. Using Corollary \ref{laplacian-a}, we can write the laplacian of $\a$ in terms of first and zero-th order derivatives of $\psi_1$, in terms of $\a$ and in terms of first null derivatives of $\ff$, which are controlled by the Morawetz bulks. Indeed, using elliptic estimates, we obtain
 \beaa
 \sum_{i=0}^2\int_{\Si_{trap}(\tau)}|\nabb^i\a|^2+\int_{\MM_{trap}(0,\tau)} |\nabb^i\a|^2 \les  \MM_p[\qf](0,\tau)+\MM^{1, T, \nabb}_p[\qf^\F](0,\tau)+\sum_{i=0}^2\int_{\Si_{trap}(0)}|\nabb^i\a|^2
 \eeaa

 Commuting $\kab \qf^\F= \nabb_3(r^3 \kab \ff)$ with $R$, $T$ and $r \nabb$ and applying Proposition \ref{transport-estimates}  we obtain bounds for $R\ff$, $RT\ff$ and $R \nabb\ff$). Since $RT\qf^\F$ and $R\nabb\qf^\F$ appear with no degeneracy on $\MM^{1, T, \nabb}_p[\qf^\F](0,\tau)$, we can use it to bound the desired norms of $\ff$. Observe that we can write 
 \bea\label{Tpsi-3-as-R}
T\psi_3&=&\nabb_3\psi_3+R\psi_3=\frac 1 r \kab \qf^\F-\frac 1 2 \kab \psi_3+R\psi_3
\eea
 Commuting \eqref{Tpsi-3-as-R} with $T$, we obtain $TT\psi_3=\frac 1 r \kab T\qf^\F-\frac 1 2 \kab T\psi_3+RT\psi_3$ which therefore gives control on $TT\ff$ on the trapping region. Similarly,  commuting \eqref{Tpsi-3-as-R} with $r\nabb$, we obtain $r\nabb T\psi_3=\frac 1 r \kab r\nabb\qf^\F-\frac 1 2 \kab r\nabb\psi_3+r\nabb R\psi_3$ which therefore gives control on $T\nabb\ff$ on the trapping region.  Using Corollary \ref{laplacian-psi-3}, we can write the laplacian of $\ff$ in terms of first derivative of $\qf^\F$ and zero-th order terms for $\ff$, $\psi_1$, $\a$, which are controlled by the bulk norms. Therefore,  
 \beaa
 \sum_{i=0}^2\int_{\Si_{trap}(\tau)}|\nabb^i\ff|^2+\int_{\MM_{trap}(0,\tau)} |\nabb^i\ff|^2 \les  \MM_p[\qf](0,\tau)+\MM^{1, T, \nabb}_p[\qf^\F](0,\tau)+\sum_{i=0}^2\int_{\Si_{trap}(0)}|\nabb^i\ff|^2
 \eeaa
 This proves the desired bound in the trapping region:
 
 \bigskip

{\bf{In the red-shift region:}} In red-shift region,  we can get estimates for the $\nabb_3$ derivatives of $\a$, $\psi_1$, $\ff$ again using the definitions \eqref{quantities}. At the horizon $T$ becomes parallel to $\nabb_4$, therefore in order to control the $\nabb_4$ derivatives it suffices to commute \eqref{quantities} by $T$. By applying the divergence theorem, we get the desired bound near the horizon.

Putting together the estimates above, we prove the Proposition. 
\end{proof}

We shall use Proposition \ref{transport-estimates} to obtain estimates for the lower order terms in the main combined estimate \eqref{estimate3}. 

\subsubsection{Absorption in the far-away region}

The goal of this subsection is to prove the following estimate
\bea\label{absorption-lower-order-far-away}
\begin{split}
&\II_{p\,; \,R}[ \Lbb_1[\qf^\F]](0,\tau)+\II_{p\,; \,R}[ \Lbb_1[\qf]](0,\tau)+  \II^{1, T, \nabb}_{p\,; \,R}[\Lbb_2[\qf^\F]]( 0,\tau) \\
&\les \MM_p[\qf](0,\tau)+\MM^{1, T, \nabb}_p[\qf^\F](0,\tau)+ E_p[\ff](0)+ E_p[\psi_1](0)+ E_p[\a](0)
\end{split}
\eea
In the far-away region, we will need to keep track of the powers of $r$. In particular:
\bea\label{schematic-lower-order-terms}
 \Lbb_1[\qf]&\simeq& \Big[\frac{1}{r^3}\psi_1, \frac{1}{r^2} \psi_0\Big], \\
 \Lbb_1[\qf^\F], \Lbb_2[\qf^\F]&\simeq & \Big[\frac{1}{r^3} \psi_3 \Big]
\eea
Therefore, we have
\beaa
\II_{p\,; \,R}[ \Lbb_1[\qf^\F]](0,\tau)&=& \int_{\MM_{far}(0,\tau)} r^{1+p}  |\frac{1}{r^3} \psi_3|^2=  \int_{\MM_{far}(0,\tau)} r^{-5+p}|\psi_3|^2\\
&=&  \int_{\MM_{far}(0,\tau)} r^{-3+p}|\ff|^2
\eeaa
which decays faster than the non-degenerate bulk norm, therefore 
\beaa
\II_{p\,; \,R}[ \Lbb_1[\qf^\F]](0,\tau)+\II_{p\,; \,R}[ \Lbb_2[\qf^\F]](0,\tau)&\les & \widehat{\MM}_p[\ff](0, \tau)
\eeaa
Similarly, 
\beaa
\II^{1, T, \nabb}_{p\,; \,R}[\Lbb_2[\qf^\F]]( 0,\tau)&=& \int_{\MM_{far}(0,\tau)} r^{1+p}  |\frac{1}{r^3} \psi_3, \frac{1}{r^3} T\psi_3, \frac{1}{r^3}(r\nabb) \psi_3 |^2\\
&=& \int_{\MM_{far}(0,\tau)}r^{-3+p}|\ff|^2+ r^{-3+p}|T\ff|^2+r^{-1+p}|\nabb\ff|^2 \les  \widehat{\MM}_p[\ff](0, \tau)
\eeaa
Finally 
\beaa
\II_{p\,; \,R}[ \Lbb_1[\qf]](0,\tau)&=& \int_{\MM_{far}(0,\tau)} r^{1+p}  |\frac{1}{r^3}\psi_1, \frac{1}{r^2} \psi_0|^2 = \int_{\MM_{far}(\tau_1,\tau_2)} r^{-5+p}|\psi_1|^2+r^{-3+p} |\a|^2\\
&\les& \widehat{\MM}_p[\psi_1](0, \tau) +\widehat{\MM}_p[\a](0, \tau)
\eeaa
The above estimates give 
\beaa
\II_{p\,; \,R}[ \Lbb_1[\qf^\F]](0,\tau)+\II_{p\,; \,R}[ \Lbb_1[\qf]](0,\tau)+  \II^{1, T, \nabb}_{p\,; \,R}[\Lbb_2[\qf^\F]]( 0,\tau) &\les& \widehat{\MM}_p[\ff](0, \tau)+\widehat{\MM}_p[\psi_1](0, \tau) +\widehat{\MM}_p[\a](0, \tau)
\eeaa
and using Proposition \ref{transport-estimates}, we finally get \eqref{absorption-lower-order-far-away}.

\subsubsection{Absorption outside the photon sphere}\label{outside-trapping-lower-subsub}
The goal of this subsection is to prove the following estimate
\bea\label{absorption-lower-outside-trapping}
\begin{split}
&\int_{\MM(0,\tau)\setminus r=r_p} \operatorname{integrals \ in \ LHS \ of \eqref{main-estimate-lower-order-terms}} \les \MM_p[\qf](0,\tau)+\MM^{1, T, \nabb}_p[\qf^\F](0,\tau)\\
&+E_p[\qf](0)+E^{1, T, \nabb}_p[\qf^\F](0)+E_p[\ff](0)+E_p[\psi_1](0)+E_p[\a](0)
\end{split}
\eea
As for the coupling terms, the integrals outside the photon sphere can be easily bounded by $\MM_p[\qf](0,\tau) +\MM^{1, T, \nabb}_p[\qf^\F](0,\tau)$ and $\widehat{\MM}_p[\ff](0, \tau)+\widehat{\MM}_p[\psi_1](0, \tau) +\widehat{\MM}_p[\a](0, \tau)$ using Cauchy-Schwarz. This can be done for all the terms in the integrals in the left hand side of \eqref{main-estimate-lower-order-terms}, and using again Proposition \ref{transport-estimates}, we obtain \eqref{absorption-lower-outside-trapping}.

\subsubsection{Absorption at the photon sphere}
The goal of this subsection is to prove the following estimate
\bea\label{absorption-lower-trapping}
\begin{split}
&\int_{\MM_{trap}(0,\tau)} \operatorname{integrals \ in \ LHS \ of \eqref{main-estimate-lower-order-terms}} \\
&\les \MM_p[\qf](0,\tau)+\MM^{1, T, \nabb}_p[\qf^\F](0,\tau)+E_p[\qf](\tau)+E^{1, T, \nabb}_p[\qf^\F](\tau)\\
&+E_p[\qf](0)+E^{1, T, \nabb}_p[\qf^\F](0)+ E_p[\ff](0)+ E_p[\psi_1](0)+ E_p[\a](0)
\end{split}
\eea
We analyze each line of the six lines of integrals on the left hand side of \eqref{main-estimate-lower-order-terms}. Recall that the Morawetz bulks for the lower order terms $\a$, $\psi_1$ and $\ff$ have no degeneracy. This simplifies the analysis with respect to the coupling terms. 

\bigskip

{\bf{First line}} Consider the first line of integrals in \eqref{main-estimate-lower-order-terms}. This gives
\beaa
- A\int_{\MM_{trap}} \left((r-r_P)R(\qf)+ \Lambda T(\qf )+\frac 1 2  w\qf\right)\c (-12\rho-40\rhoF^2)\psi_3
\eeaa
The terms involving $R\qf$ and $\qf$ can be absorbed by the Morawetz bulk in $\qf$ by Cauchy-Schwarz.
The term involving $T$ can be absorbed by integrating by parts in $T$, and $T\psi_3$ is controlled by the non-degenerate bulk norm in $\ff$.

\bigskip

{\bf{Second line}} Consider the second line of integrals in \eqref{main-estimate-lower-order-terms}. This gives
\beaa
 - A\int_{\MM_{trap}} \left((r-r_P)R(\qf)+ \Lambda T(\qf )+\frac 1 2  w\qf\right)\c \left(-\frac{2}{r^2} \psi_0-\frac{4}{r^3}\psi_1 \right)
\eeaa
The terms involving $R\qf$ and $\qf$ can be again absorbed by the Morawetz bulks by Cauchy-Schwarz. For the term involving $T\qf$, we integrate by parts in $T$, and obtain:
\beaa
&&- A\int_{\MM_{trap}} \left( \Lambda T(\qf )\right)\c \left(\psi_0+\psi_1 \right)\les A\int_{\MM_{trap}} \left( \Lambda \qf \right)\c \left(T\psi_0+T\psi_1 \right)\\
&&+E_p[\a](\tau)+E_p[\a](0)+E_p[\psi_1](\tau)+E_p[\psi_1](0)+E_p[\qf](\tau)+E_p[\qf](0)
\eeaa
We now make use of the fact that $\psi_0$ and $\psi_1$ are lower order terms with respect to $\qf$. Therefore we can write
\beaa
T\psi_0&=&\nabb_3\psi_0+R\psi_0=\frac 1 r \kab \psi_1-\frac 1 2 \kab \psi_0+R\psi_0, \\
T\psi_1&=&\nabb_3\psi_1+R\psi_1=\frac 1 r \kab \qf-\frac 1 2 \kab \psi_1+R\psi_1
\eeaa
We therefore obtain, ignoring the powers of $r$:
\beaa
A\int_{\MM_{trap}} \left( \Lambda \qf \right)\c \left(T\psi_0+T\psi_1 \right)&=& A\int_{\MM_{trap}} \left( \Lambda \qf \right)\c \left( \psi_1+\psi_0+R\psi_0+ \qf+R\psi_1\right)
\eeaa
The term given by $\psi_1+\psi_0+\qf$ can be bounded by the Morawetz bulks directly. The terms of the form $\qf \c (R\psi_0+R\psi_1)$ can be bounded by integrating per parts in $R$, using a cut-off function $\chi(r)$, and therefore using the Morawetz bulk in $\qf$ which is non-degenerate for $R\qf$. 

\bigskip

{\bf{Third line}} Consider the third line of integrals in \eqref{main-estimate-lower-order-terms}. This gives
\beaa
- B\int_{\MM_{trap}} \left((r-r_P)R(\qf^\F)+ \Lambda T(\qf^\F )+\frac 1 2  w\qf^\F\right)\c \left(\frac{4}{r^3} \psi_3\right)
\eeaa
Each term can directly be bounded by the Morawetz bulks and the non-degenerate bulk norm. 

\bigskip

{\bf{Fourth line}} Consider the fourth line of integrals in \eqref{main-estimate-lower-order-terms}. This gives
\beaa
- C\int_{\MM_{trap}}\left((r-r_P)R(T\qf^\F)+ \Lambda T(T\qf^\F )+\frac 1 2  w T\qf^\F\right)\c T \left(\frac{4}{r^3} \psi_3\right)
\eeaa
The terms involving $R(T\qf^\F)$ and $T\qf^\F$ can be absorbed by the Morawetz bulk in $\qf^\F$ by Cauchy-Schwarz.
For the terms involving $TT\qf^\F$, we can use the fact that $\psi_3$ is a lower order term with respect to $\qf^\F$. Using \eqref {Tpsi-3-as-R}, we obtain a term of this form
\beaa
- C\int_{\MM_{trap}}\left( \Lambda T(T\qf^\F )\right)\c ( \qf^\F+ \psi_3+R\psi_3)
\eeaa
Integrating by parts in $T$ the first product, we obtain $\int |T\qf^\F|^2$, which is bounded by the Morawetz bulk. Integrating by parts in $T$ the second product, we obtain $\int T\qf^\F \c T\psi_3$, which can be bounded by Morawetz bulks. Finally, integrating by parts in $R$ and in $T$ the last product, we obtain $\int RT\qf^\F \c T\psi_3$, which can be bounded in the same way. In the integrating by parts process we pick up some energies as boundary terms.

\bigskip 

{\bf{Fifth line}} Consider the fifth line of integrals in \eqref{main-estimate-lower-order-terms}. This gives
\beaa
- D\int_{\MM_{trap}} \left((r-r_P)R(r\nabb_A\qf^\F)+ \Lambda T(r\nabb_A\qf^\F )+\frac 1 2  w r\nabb_A\qf^\F\right)\c r\nabb_A\left(\frac{4}{r^3} \psi_3\right)
\eeaa
As above, the terms involving $R\nabb \qf^\F$ and $\nabb\qf^\F$ can be bounded by the Morawetz bulk using Cauchy-Schwarz. For the term involving $T\nabb \qf^\F$ we perform integration by parts in $T$ and $\nabb$:
\beaa
- D\int_{\MM_{trap}}\left(\Lambda T(r\nabb_A\qf^\F )\right)\c r\nabb_A\left(\frac{4}{r^3} \psi_3\right)&\les & D\int_{\MM_{trap}} \left(\Lambda (r\nabb_A\qf^\F )\right)\c rT\nabb_A\left(\frac{4}{r^3} \psi_3\right)\\
&&+E^{1, T, \nabb}_p[\qf^\F](\tau)+E^{1, T, \nabb}_p[\qf^\F](0)+ E_p[\ff](\tau)+E_p[\ff](0)\\
&\les & -D\int_{\MM_{trap}}\left(\Lambda (r^2\nabb^2_A\qf^\F )\right)\c T\left(\frac{4}{r^3} \psi_3\right)\\
&&+E^{1, T, \nabb}_p[\qf^\F](\tau)+E^{1, T, \nabb}_p[\qf^\F](0)+ E_p[\ff](\tau)+E_p[\ff](0)
\eeaa
We write $T\psi_3=\frac 1 r \kab \qf^\F-\frac 1 2 \kab \psi_3+R\psi_3$ and therefore obtain 
\beaa
-D\int_{\MM_{trap}}\left(\Lambda (r^2\nabb^2_A\qf^\F )\right)\c T\left( \psi_3\right)&=& -D\int_{\MM_{trap}}\left(\Lambda (r^2\nabb^2_A\qf^\F )\right)\c ( \qf^\F+ \psi_3+R\psi_3)
\eeaa
Integrating by parts in $r\nabb$ the first product, we obtain $\int |(r\nabb)\qf^\F|^2$, which is bounded by the Morawetz bulk. Integrating by parts in $r\nabb$ the second product, we obtain $\int r\nabb\qf^\F \c r\nabb\psi_3$, which can be bounded by Morawetz bulks. Finally, integrating by parts in $\nabb$ and in $R$ the last product, we obtain $\int R\nabb\qf^\F \c \nabb\psi_3$, which can be bounded in the same way.

\bigskip

{\bf{Sixth line}} Consider the sixth line of integrals in \eqref{main-estimate-lower-order-terms}. This gives
\beaa
- C\int_{\MM_{trap}}\Lambda ( -\Up Q^2 \left(\frac{4}{r^3} \psi_3\right))\c T\left(-\frac{1}{r^3} \qf \right)+ C\int_{\MM_{trap}}\frac 1 2  w ( -Q^2\Up \left(\frac{4}{r^3} \psi_3\right)) \c \left(-\frac{1}{r^3} \qf \right)
\eeaa
Integrating the first term by parts in $T$, we can write it as $\int T\psi_3 \c \qf$ which can be bounded by the Morawetz bulks and non-degenerate bulk. The second term can be bounded using Cauchy-Schwarz by the Morawetz bulks. 

To summarize all the cases above we obtained
\beaa
\int_{\MM_{trap}(0,\tau)} \operatorname{integrals \ in \ LHS \ of \eqref{main-estimate-lower-order-terms}} &\les& \MM_p[\qf](0,\tau)+\MM^{1, T, \nabb}_p[\qf^\F](0,\tau)+E^{1, T, \nabb}_p[\qf^\F](\tau)+E_p[\qf](\tau)\\
&&+E_p[\qf](0)+E^{1, T, \nabb}_p[\qf^\F](0)+E_p[\ff](0)+E_p[\psi_1](0)+E_p[\a](0)\\
&&+\widehat{\MM}_p[\ff](0, \tau) +\widehat{\MM}_p[\a](0, \tau)+\widehat{\MM}_p[\psi_1](0, \tau)\\
&&+E_p[\a](\tau)+ E_p[\ff](\tau)+E_p[\psi_1](\tau)
\eeaa
Using Proposition \ref{transport-estimates}, we finally proved \eqref{absorption-lower-outside-trapping}.

Putting together the estimates for the lower order terms in the far away region, outside the photon sphere and at the photon sphere, we obtain the proof of estimate \eqref{main-estimate-lower-order-terms}.

\section{Proof of the Main Theorem}\label{proof-of-theorem-this-is-the-end}
We prove here the Main Theorem for spin $+2$.

\subsection{Proof of estimate \eqref{first-estimate-main-theorem-1}}

We derive here estimate \eqref{first-estimate-main-theorem-1} for the system. 

Recall estimate \eqref{estimate3} which depends on positive constants $A, B, C, D$. Pick $A$, $C$ and $D$ verifying conditions \eqref{condition-cancellation}. Recalling the definition of $\M_1[\qf, \qf^\F]$ and $\M_2[\qf, \qf^\F]$ in \eqref{schematic-system}, then estimate \eqref{estimate3} can be written as 
 \beaa
       && E_p[\qf](\tau)  + E^{1, T, \nabb}_p[\qf^\F](\tau)+ E_{\mathcal{H}^+}[\qf](0, \tau)  + E_{\mathcal{H}^+}^{1, T, \nabb}[\qf^\F](0,\tau)  +E_{\mathcal{I}^+, p}[\qf](0, \tau) +E^{1, T, \nabb}_{\mathcal{I}^+, p}[\qf^\F](0, \tau)   \\
       &&    +\MM_p[\qf](0,\tau) +\MM^{1, T, \nabb}_p[\qf^\F](0,\tau)\\
       &\les &E_{p}[\qf](0)+E^{1, T, \nabb}_p[\qf^\F](0)+Q\left(\operatorname{ LHS \ of \ \eqref{global-estimate-coupling}}+ \operatorname{ LHS \ of \ \eqref{main-estimate-lower-order-terms}}\right)
  \eeaa

     By estimate \eqref{global-estimate-coupling} and \eqref{main-estimate-lower-order-terms}, we obtain
  \beaa
       && E_p[\qf](\tau)  + E^{1, T, \nabb}_p[\qf^\F](\tau)  + E_{\mathcal{H}^+}[\qf](0, \tau)  + E_{\mathcal{H}^+}^{1, T, \nabb}[\qf^\F](0,\tau)  +E_{\mathcal{I}^+, p}[\qf](0, \tau) +E^{1, T, \nabb}_{\mathcal{I}^+, p}[\qf^\F](0, \tau)            \\
        &&+\MM_p[\qf](0,\tau) +\MM^{1, T, \nabb}_p[\qf^\F](0,\tau)\\
        &\les& E_{p}[\qf](0)+E^{1, T, \nabb}_p[\qf^\F](0)\\
       &&+Q\Big(\MM_p[\qf](0,\tau)+\MM^{1, T, \nabb}_p[\qf^\F](0,\tau)+E^{1, T, \nabb}_p[\qf^\F](\tau)+E_p[\qf](\tau)\\
&&+E_p[\qf](0)+E^{1, T, \nabb}_p[\qf^\F](0)+E_p[\ff](0)+E_p[\psi_1](0)+E_p[\a](0)\Big)
  \eeaa
  For $Q \ll M$ we can absorb the second line of the right hand side of the above estimate on the left hand side, and finally obtain \eqref{first-estimate-main-theorem-1}.

 \subsection{Proof of estimate \eqref{first-estimate-a-f-main-theorem-1}}
 Combining Proposition \ref{transport-estimates} and estimate \eqref{first-estimate-main-theorem-1}, we immediately obtain \eqref{first-estimate-a-f-main-theorem-1}.

\subsection{Proof of estimates \eqref{higher-estimate-main-theorem} and \eqref{higher-estimate-main-theorem-a-f}}\label{higher-order-combined}
Here we extend the above weighted estimates to higher order.

We first derive higher order transport estimates for the lower order terms. We can revisit the proof of Proposition \ref{transport-estimates}, commuting first \eqref{quantities} $k$ times with $T$ and $r\nabb$. We will then obtain the equivalent higher order estimate:
\bea\label{higher-order-estimate-subsection}
\begin{split}
&E^{k, T, \nabb}_p[\a](\tau)+E^{k, T, \nabb}_p[\psi_1](\tau)+E^{k+1, T, \nabb}_p[\ff](\tau)\\
&+\widehat{\MM}^{k, T, \nabb}_p[\a](0, \tau)+ \widehat{\MM}^{k, T, \nabb}_p[\psi_1](0, \tau)+ \widehat{\MM}^{k+1, T, \nabb}_p[\ff](0, \tau)\\
&\les \MM^{k, T, \nabb}_p[\qf](0,\tau)+\MM^{k+1, T, \nabb}_p[\qf^\F](0,\tau)+E^{k, T, \nabb}_p[\a](0)+ E^{k, T, \nabb}_p[\psi_1](0)+ E^{k+1, T, \nabb}_p[\ff](0)
\end{split}
\eea

We now derive the higher order estimates for $\qf$ and $\qf^\F$. We apply Corollary \ref{higher-order-single} to 
\begin{enumerate}
\item $\Psi=\qf$, $n=k$, $\M=\M_1[\qf, \qf^\F]$, 
\item $\Psi=\qf^\F$, $n=k+1$, $\M=\M_2[\qf, \qf^\F]$,
\end{enumerate}
We sum the resulting estimates, allowing multiplication by positive constants. We therefore obtain 
\small
\beaa
&&  E_p^{k, T, \nabb}[\qf](\tau) + E_p^{k+1, T, \nabb}[\qf^\F](\tau)       +  \MM_{p}^{k, T, \nabb}[\qf](0, \tau) +  \MM_{p}^{k+1, T, \nabb}[\qf^\F](0, \tau)   \\
&&\les E_p^{k, T, \nabb}[\qf](0)+E_p^{k+1, T, \nabb}[\qf^\F](0)\\
&&+\II^{k, T, \nabb}_{p\,; \,R}[ \M_1[\qf, \qf^\F]](0, \tau)+\II^{k+1, T, \nabb}_{p\,; \,R}[ \M_2[\qf, \qf^\F]](0, \tau)\\
                 &&- \sum_{i+j \leq k}\int_{\MM(0,\tau)} \left((r-r_P)R((T^i)(r\nabb)^j\qf)+ \Lambda T((T^i)(r\nabb)^j\qf )+\frac 1 2  w (T^i)(r\nabb)^j\qf\right)\c (T^i)(r\nabb)^j\M_1[\qf, \qf^\F] \\
                      &&- \sum_{i+j \leq k+1}\int_{\MM(0,\tau)} \left((r-r_P)R((T^i)(r\nabb)^j\qf^\F)+ \Lambda T((T^i)(r\nabb)^j\qf^\F )+\frac 1 2  w (T^i)(r\nabb)^j\qf^\F\right)\c (T^i)(r\nabb)^j\M_2[\qf, \qf^\F] 
\eeaa
\normalsize
The absorption follows a very similar pattern than the estimates for the coupling terms and the lower order terms analyzed in detail in Sections \ref{coupling-subsection} and \ref{section-lower-order}, making use of the higher order terms transport estimates \eqref{higher-order-estimate-subsection}. This proves \eqref{higher-estimate-main-theorem}.

Combining \eqref{higher-order-estimate-subsection} and \eqref{higher-estimate-main-theorem} we obtain \eqref{higher-estimate-main-theorem-a-f}.

 \subsection{Proof of estimate \eqref{polynomial-decay-theorem}}\label{polynomial-combined}
 We combine the $r^p$-weighted hierarchy estimates derived above to obtain polynomial decay of the energy.

 We sum \eqref{higher-estimate-main-theorem} and \eqref{higher-estimate-main-theorem-a-f} and obtain 
 \small
 \bea\label{higher-derivative-degenerate}
\begin{split}
&E_p^{n, T, \nabb}[\a](\tau)+E_p^{n, T, \nabb}[\psi_1](\tau)+E_p^{n+1, T}[\ff](\tau)+ E_p^{n, T, \nabb}[\qf](\tau) + E_p^{n+1, T, \nabb}[\qf^\F](\tau)    \\
&  +  \widehat{\MM}_{p}^{n, T, \nabb}[\a](0, \tau) +  \widehat{\MM}_{p}^{n, T, \nabb}[\psi_1](0, \tau) +  \widehat{\MM}_{p}^{n+1, T}[\ff](0, \tau)  +  \MM_{p}^{n, T, \nabb}[\qf](0, \tau)  +  \MM_{p}^{n+1, T, \nabb}[\qf^\F](0, \tau)   \\
&\les E_p^{n, T, \nabb}[\a](0)+E_p^{n, T, \nabb}[\psi_1](0)+ E_p^{n, T, \nabb}[\qf](0)+E_p^{n+1, T, \nabb}[\ff](0)+E_p^{n+1, T, \nabb}[\qf^\F](0)
\end{split}
\eea
\normalsize

Estimating the degenerate Morawetz bulks by the higher order non-degenerate bulks we obtain
\bea\label{higher-derivative-non-degenerate}
\begin{split}
&E_p^{n, T, \nabb}[\a](\tau)+E_p^{n, T, \nabb}[\psi_1](\tau)+E_p^{n+1, T}[\ff](\tau)+ E_p^{n, T, \nabb}[\qf](\tau) + E_p^{n+1, T, \nabb}[\qf^\F](\tau)    \\
&\les E_p^{n+1, T, \nabb}[\a](0)+E_p^{n+1, T, \nabb}[\psi_1](0)+ E_p^{n+1, T, \nabb}[\qf](0)+E_p^{n+2, T, \nabb}[\ff](0)+E_p^{n+2, T, \nabb}[\qf^\F](0)
\end{split}
\eea
We denote the right hand side of estimate \eqref{higher-derivative-non-degenerate} by $\mathbb{D}_{n+1, p}[\a, \psi_1, \qf, \ff, \qf^\F](0)$. 

Applying \eqref{higher-derivative-non-degenerate} for $n=1$ and $p=2-\de$ implies along a dyadic sequence $\tau_n$, of the form $\tau_n=2 \tau_{n-1}$ the estimate 
\beaa
&&E^{1, T, \nabb}_{1-\de}[\a](\tau_n)+E^{1, T, \nabb}_{1-\de}[\psi_1](\tau_n)+ E^{1, T, \nabb}_{1-\de}[\qf](\tau_n)+E^{2, T, \nabb}_{1-\de}[\ff](\tau_n) + E^{2, T, \nabb}_{1-\de}[\qf^\F](\tau_n) \\
&\les& \frac{1}{\tau_n}\mathbb{D}_{2, 2-\de}[\a, \psi_1, \qf, \ff, \qf^\F](0)
\eeaa 
Using the above and applying \eqref{higher-derivative-degenerate} for $p=1-\de$ and $n=1$ between the time $\tau_n$ and any $\tau \in (\tau_n, \tau_{n+1}]$ yields the previous estimate for any $\tau$: 
\beaa
&&E^{1, T, \nabb}_{1-\de}[\a](\tau)+E^{1, T, \nabb}_{1-\de}[\psi_1](\tau)+ E^{1, T, \nabb}_{1-\de}[\qf](\tau)+E^{2, T, \nabb}_{1-\de}[\ff](\tau) + E^{2, T, \nabb}_{1-\de}[\qf^\F](\tau) \\
&\les& \frac{1}{\tau}\mathbb{D}_{2, 2-\de}[\a, \psi_1, \qf, \ff, \qf^\F](0)
\eeaa 
Turning back to \eqref{higher-derivative-non-degenerate} with $n=0$ and $p=1-\de$, along a dyadic sequence and using the previous estimate, we can produce the estimate
\beaa
&&E_{\de}[\a](\tau_n)+E_{\de}[\psi_1](\tau_n)+E^{1, T, \nabb}_{\de}[\ff](\tau_n) + E_{\de}[\qf](\tau_n)+ E_{\de}^{1, T, \nabb}[\qf^\F](\tau_n)\\
&\les& \frac{1}{\tau_n^{2-\de}}\mathbb{D}_{2, 2-\de}[\a, \psi_1, \qf, \ff, \qf^\F](0)
\eeaa
Using the above and applying \eqref{higher-derivative-degenerate} with $p=\de$ and $n=0$ between the time $\tau_n$ and any $\tau \in (\tau_n, \tau_{n+1}]$, yields the previous estimate for any $\tau$, therefore proving \eqref{polynomial-decay-theorem}.

\subsection{The case spin $-2$}\label{spin-2-all}
We will outline here the proof of the case spin $-2$, remarking the differences with the proof of Main Theorem. 
As remarked in Section \ref{separate-estimates} and Section \ref{coupling-subsection}, the separated estimates and the estimates for the coupling terms are identical in the case of spin $-2$. 
The first main difference is in the derivation of Proposition \ref{transport-estimates}, for transport estimates. Indeed, in the case of spin $-2$, the transport estimates have to be integrated along the $e_4$ direction, with a consequent loss of decay in $r$ in the equivalent of Lemma \ref{general-lemma-transport}. We show the version of the Lemma in spin $-2$.

\begin{lemma}\label{general-lemma-transport-2} Let $\Phi_1$ and $\Phi_2$ be two symmetric traceless $2$-tensor which verify the differential relation
\bea\label{generic-differential-relation-2}
\nabb_4(r^n \ka^m \Phi_1)&=& \ka \Phi_2
\eea
for $n$, $m$ integers. Then, in the far away region, for all $l >0$
\beaa
\int_{\MM_{far}(0, \tau)}r^{2n-2m-l-3}|\Phi_1|^2 \les \int_{\MM_{far}(0, \tau)} r^{-l-3} |\Phi_2|^2
\eeaa
\end{lemma}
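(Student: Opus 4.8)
\textbf{Proof proposal for Lemma \ref{general-lemma-transport-2}.}

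The plan is to mirror the proof of Lemma \ref{general-lemma-transport}, replacing $\nabb_3$ by $\nabb_4$ and $\kab$ by $\ka$ throughout, and keeping careful track of the fact that now $\ka = \frac 2 r > 0$ (rather than $\kab < 0$), which changes the sign structure of the absorption in the Gr\"onwall-type step and forces a loss of $r$-decay. First I would square the differential relation \eqref{generic-differential-relation-2} to obtain
\beaa
\nabb_4(r^{2n}\ka^{2m} |\Phi_1|^2) = 2 r^n \ka^{m+1} \Phi_1 \c \Phi_2 .
\eeaa
Then, using $\nabb_4 r = \frac 1 2 r\ka$ (valid in any spherically symmetric spacetime), I would multiply through by a weight $r^{-l-2}$ and commute it inside the $\nabb_4$-derivative, picking up a term proportional to $-\left(\frac l 2 + 1\right) r^{2n-l-2}\ka^{2m+1}|\Phi_1|^2$ on the left. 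Here, since $\ka > 0$, this term has a \emph{good} sign only if it stays on the left; this is why the estimate goes the ``wrong'' direction in $r$ compared to the spin $+2$ case and why we only get a bulk bound on $\Phi_1$ with weight $r^{2n-2m-l-3}$ rather than also a boundary term with a slower decay rate. Applying Cauchy-Schwarz to the right-hand side, $2 r^{n-l-2}\ka^{m+1}\Phi_1\c\Phi_2 \le \frac l 4 r^{2n-l-2}\ka^{2m+1}|\Phi_1|^2 + \frac 4 l r^{-l-2}\ka |\Phi_2|^2$, I would absorb the first term into the good term on the left, leaving
\beaa
\nabb_4\big(r^{2n-l-2}\ka^{2m}|\Phi_1|^2\big) + \left(\frac l 4 + 1\right) r^{2n-l-2}\ka^{2m+1}|\Phi_1|^2 \le \frac 4 l r^{-l-2}\ka|\Phi_2|^2 .
\eeaa

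Next I would convert this transport inequality into a divergence identity by adding $r^{2n-l}\ka^{2m}|\Phi_1|^2 \div e_4$, using that $\div e_4 = \frac 1 2 \tr\pi^{(4)} = \ka$ (from Lemma \ref{general-deformation} applied to $e_4$, or directly from the commutator formulas in Reissner-Nordstr\"om), so that $\div\big(r^{2n-l-2}\ka^{2m}|\Phi_1|^2 e_4\big)$ appears on the left together with a coefficient $\big(\frac l 4 + 2\big) r^{2n-l-2}\ka^{2m+1}|\Phi_1|^2 \ges r^{2n-2m-l-3}|\Phi_1|^2$ (using $\ka = 2/r$). Integrating over $\MM_{far}(0,\tau)$ and applying the divergence theorem, the boundary contributions along $\Sigma_0$, $\Sigma_\tau$ and $\mathcal{I}^+$ all carry the favorable sign (they come with the good-signed weight, since $e_4$ is outgoing), so they can be dropped from the left-hand side; the $\Sigma_0$ contribution can alternatively be retained and estimated by data if needed. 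Since $\ka|\Phi_2|^2 \simeq r^{-1}|\Phi_2|^2$, the right-hand side is bounded by a constant (depending on $R_0$ and $l$) times $\int_{\MM_{far}(0,\tau)} r^{-l-3}|\Phi_2|^2$. This yields exactly $\int_{\MM_{far}(0,\tau)} r^{2n-2m-l-3}|\Phi_1|^2 \les \int_{\MM_{far}(0,\tau)} r^{-l-3}|\Phi_2|^2$, as claimed.

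The main obstacle — really a structural feature rather than a technical difficulty — is that integrating in the $e_4$ direction toward null infinity does not produce a boundary term controlling $\Phi_1$ on a fixed $\Sigma_\tau$ with a near-sharp weight, which is why the conclusion is weaker (bulk-only, and with $r^{2n-2m}$ improved only by $r^{-l-3}$ instead of $r^{l-2}$) than in Lemma \ref{general-lemma-transport}. One must also be slightly careful that the sign of the commutator/transport term coming from $\nabb_4 r = \frac 1 2 r\ka > 0$ works \emph{in our favor} here precisely because we are propagating \emph{outward}; if one tried to propagate inward the sign would be adverse. This loss of $r$-weight is then propagated through the rest of the spin $-2$ argument, and is the reason the spin $-2$ Main Theorem is stated with the $\de$-weighted (rather than the full $r^p$-hierarchy) energies for $\aa$, $\underline{\psi}_1$, $\underline{\ff}$; no cancellation or smallness-of-charge input is needed for this particular lemma.
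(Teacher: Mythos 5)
Your proposal is correct and follows essentially the same route as the paper's proof: square the transport relation, weight by $r^{-l-2}$, absorb the cross term by Cauchy--Schwarz, convert $\nabb_4$ into a divergence using $\div e_4=\ka$, and integrate over $\MM_{far}(0,\tau)$, dropping the (vanishing or favorable) boundary fluxes. One harmless slip: since $\div(F e_4)=\nabb_4 F+\ka F$, passing from the transport inequality to the divergence identity \emph{subtracts} one unit of the $\ka$-weighted term, so the final bulk coefficient is $\tfrac{l}{4}$ rather than $\tfrac{l}{4}+2$ --- still positive for $l>0$, so the estimate is unaffected.
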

\begin{proof} From \eqref{generic-differential-relation-2}, we obtain
\beaa
\nabb_4(r^{2n}\ka^{2m} |\Phi_1|^2)&=& 2r^n \ka^{m+1} \Phi_1  \Phi_2
\eeaa
Multiplying by $r^{-l-2}$ for $l>0$ and recalling that $\nabb_4r=\frac 1 2 r \ka$ and $\ka\geq 0$, we obtain
\beaa
\nabb_4(r^{2n}\ka^{2m} |\Phi_1|^2 \c r^{-l-2})+\left(1+\frac{l}{2}\right)r^{2n-l-2}\ka^{2m+1} |\Phi_1|^2&=& 2r^{n-l-2} \ka^{m+1} \Phi_1  \Phi_2\\
&\leq& \frac{l}{4} r^{2n-l-2}\ka^{2m+1}|\Phi_1|^2+\frac{4}{l} r^{-l-2}\ka |\Phi_2|^2
\eeaa
giving 
\beaa
\nabb_4(r^{2n}\ka^{2m} |\Phi_1|^2 \c r^{-l-2})+\left(1+\frac{l}{4}\right)r^{2n-l-2}\ka^{2m+1} |\Phi_1|^2&\leq& \frac{4}{l} r^{-l-2}\ka |\Phi_2|^2
\eeaa
Snce $\div e_4=\frac 1 2 \tr \pi^{(4)}=\ka$, we write
\beaa
&&\div(r^{2n-l-2}\ka^{2m} |\Phi_1|^2  e_4)+\left(1+\frac{l}{4}\right)r^{2n-l-2}\ka^{2m+1} |\Phi_1|^2\\
&=&\nabb_4(r^{2n}\ka^{2m} |\Phi_1|^2 \c r^{-l-2})+\left(1+\frac{l}{4}\right)r^{2n-l-2}\ka^{2m+1} |\Phi_1|^2+r^{2n-l-2}\ka^{2m} |\Phi_1|^2 \div e_4\\
&\leq& \frac{4}{l} r^{-l-2}\ka |\Phi_2|^2+r^{2n-l-2}\ka^{2m+1} |\Phi_1|^2 
\eeaa
which therefore gives
\beaa
&&\div(r^{2n-l-2}\ka^{2m} |\Phi_1|^2  e_4)+\frac{l}{4} r^{2n-l-2}\ka^{2m+1} |\Phi_1|^2 \leq \frac{4}{l} r^{-l-2}\ka |\Phi_2|^2
\eeaa
Integrating the above inequality on $\MM_{far}(0, \tau)$ and using divergence theorem on the left hand side (recalling that $e_4$ is normal to $\Sigma_{far}$, we obtain:
\beaa
\int_{\MM_{far}(0, \tau)}r^{2n-2m-l-3}|\Phi_1|^2 \les \int_{\MM_{far}(0, \tau)} r^{-l-3} |\Phi_2|^2
\eeaa
where we allow for a constant on the right hand side depending on $R_0$.
\end{proof}

Using the above Lemma for the spin $-2$ lower order quantities $\aa$, $\underline{\psi}_1$ and $\underline{\ff}$, we have a loss of decay in $r$ for the spin $-2$ quantities. Indeed, by \eqref{quantities-2}, $\underline{\ff}$ and $\underline{\qf}^\F$ verify relation \eqref{generic-differential-relation-2} with $n=3$, $m=1$. Since for $l>0$, $r^{-3-l}\leq r^{-3+p}$, we get for $l=\de$:
\beaa
\int_{\MM_{far}(0, \tau)}r^{1-\de}|\underline{\ff}|^2 \les  \MM^{1, T, \nabb}_p[\underline{\qf}^\F](0,\tau)
\eeaa

By \eqref{quantities-2}, $\underline{\psi}_1$ and $\underline{\qf}$ verify relation \eqref{generic-differential-relation-2} with $n=1$, $m=0$. As before, for $l=\de$, we obtain
\beaa
\int_{\MM_{far}(0, \tau)}r^{-1-\de}|\underline{\psi}_1|^2 \les \MM_p[\underline{\qf}](0,\tau)
\eeaa

By \eqref{quantities-2}, $\aa$ and $\underline{\psi}_1$ verify relation \eqref{generic-differential-relation-2} with $n=3$, $m=2$. As before,  for $l=\de $, we obtain
\beaa
\int_{\MM_{far}(0, \tau)}r^{-1-\de}|\aa|^2 \les  \int_{\MM_{far}(0, \tau)}r^{-1-\de}|\underline{\psi}_1|^2
\eeaa
The above bounds motivate the definitions of energies for the negative spin case. To bound the higher derivatives, we follow a similar pattern as in proof of Proposition \ref{transport-estimates}.

The absorption for the lower order terms differ from the case of spin $+2$ only in the far-away region. Observe that 
\beaa
\II_{p\,; \,R}[ \Lbb_1[\underline{\qf}^\F]](0,\tau)&=&  \int_{\MM_{far}(0,\tau)} r^{-3+p}|\underline{\ff}|^2
\eeaa
which still decays faster than the non-degenerate bulk norm for $\underline{\ff}$.
Similarly, 
\beaa
\II^{1, T, \nabb}_{p\,; \,R}[\Lbb_2[\underline{\qf}^\F]]( 0,\tau)&=& \int_{\MM_{far}(0,\tau)}r^{-3+p}|\underline{\ff}|^2+ r^{-3+p}|T\underline{\ff}|^2+r^{-1+p}|\nabb\underline{\ff}|^2 \les  \widehat{\MM}_\de[\underline{\ff}](0, \tau)
\eeaa
Finally 
\beaa
\II_{p\,; \,R}[ \Lbb_1[\underline{\qf}]](0,\tau)&=& \int_{\MM_{far}(\tau_1,\tau_2)} r^{-5+p}|\underline{\psi}_1|^2+r^{-3+p} |\aa|^2\\
&\les& \widehat{\MM}_\de[\underline{\psi}_1](0, \tau) +\widehat{\MM}_\de[\aa](0, \tau)
\eeaa
The absorption for the lower order terms in the trapping region and in the bounded redshift region, where powers of $r$ do not matter, is identical to the spin $+2$ case. Putting all this together we prove the Main Theorem for spin $-2$.

\appendix

\section{Derivation of the generalized Teukolsky system}\label{computations-appendix}

The derivation of the equations below were derived in \cite{Giorgi2} in the context of axially symmetric polarized perturbations.
Here, the computations are done in full generality, with no gauge assumption or symmetry. To derive them we will use the equations in Section \ref{linearized-equations}.

\subsection{Preliminaries}
We compute the wave operator $\Box_\g$, where $\g$ is a linear perturbation of the Reissner-Nordstr{\"o}m metric, applied to a $n$-tensor $\Psi$. In particular, for $\g$ the Ricci coefficients and curvature components $\chih, \chibh, \eta,  \etab,  \ze,  \xi, \xib$ and $\a,  \b,  \sigma, \bb,  \aa, \bF,   \sigmaF, \bbF$ are $O(\ep)$. We will keep this in mind while performing the computations at the linear level.
\begin{lemma}\label{square-RN}\label{wave-T-R} The wave operator of a $p$-rank $S$-tensor $\Psi$ is given by 
\bea
\Box_\g\Psi&=& - \nabb_{3}\nabb_{4} \Psi +\lapp_p \Psi+\left(-\frac 12 \kab+2\omb\right) \nabb_4\Psi-\frac 1 2 \ka\nabb_3\Psi+2\eta^C\nabb_C \Psi, \label{formula-1-wave}\\
&=& - \nabb_{4}\nabb_{3} \Psi +\lapp_p \Psi+\left(-\frac 12 \ka+2\om\right) \nabb_3\Psi-\frac 1 2 \kab\nabb_4\Psi+2\etab^C\nabb_C \Psi \label{formula-2-wave} \\
&=& -\frac 1 \Up TT\Psi+\frac 1 \Up RR\Psi +\lapp_p \Psi +\frac 2 r  R \Psi
\eea
where $\lapp_n\Psi=\g^{CD} \nabb_{C}\nabb_{D}\Psi$ is the Laplacian for $p$-tensors.
\end{lemma}

In the derivation of the equations, we will rescale the quantities using powers of $r$ and $\kab$. We collect here an useful general lemma. 
\begin{lemma}\label{wave-rescaled} Consider a rescaled tensor $r^n \kab^m \Psi$, for $n$ and $m$ two numbers. Then 
\bea
r^n \kab^m \nabb_3(\Psi)&=&\nabb_3(r^n \kab^m \Psi)+\left(\frac {m-n}{ 2}  \kab +2m \omb\right)r^n\kab^m \Psi, \label{rescaled-derivatives-r-kab}\\
r^n \kab^m \nabb_4(\Psi)&=&\nabb_4(r^n \kab^m \Psi)+\left(\frac {m-n}{ 2}  \ka -2m \om-2m \rho \kab^{-1}\right)r^n\kab^m  \Psi
\eea
Moreover, it verifies the following wave equation:
\beaa
\Box_\g (r^n \kab^m \Psi)&=& \big( -\frac{n(n+1)+m(m-1)-2nm}{4} \ka\kab +m(m-n-1)\om\kab+(m^2+2m-n-nm) \rho+2m \rhoF^2\\
&&-m(m-n-1) \omb\ka+4m(m+1) \om\omb+4m^2\rho\omb \kab^{-1}-2m \nabb_3\om\big)r^n \kab^m \Psi+r^n\kab^m \Box_\g( \Psi)\\
&&+\left(\frac {m-n} 2 \kab+2m\omb\right)r^n\kab^{m}\nabb_4( \Psi)+\left(\frac {m-n}{ 2} \ka-2m\om-2m\rho\kab^{-1}\right) r^n\kab^m\nabb_3(\Psi)
\eeaa
\end{lemma}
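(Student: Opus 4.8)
The statement has three parts: the two rescaling identities \eqref{rescaled-derivatives-r-kab} for $\nabb_3$ and $\nabb_4$, and the wave equation for $r^n\kab^m\Psi$. The plan is to establish the first-order identities by a direct Leibniz computation, and then feed them — twice — into the formula for $\Box_\g$ from Lemma \ref{square-RN}.

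\emph{Step 1: the transport identities.} I would expand $\nabb_3(r^n\kab^m\Psi)$ by the Leibniz rule as $r^n\kab^m\nabb_3\Psi + (\nabb_3(r^n\kab^m))\Psi$. For the scalar factor, use $\nabb_3 r=\tfrac12\kab r$ (spherical symmetry, as recalled before \eqref{explicit-operators}) to get $\nabb_3(r^n)=\tfrac n2\kab r^n$, and use the linearized structure equation \eqref{nabb-3-kab} — in Reissner-Nordstr\"om the right-hand side $2\divv\xib$ vanishes in the background, leaving $\nabb_3\kab=-\tfrac12\kab^2-2\omb\kab$ — so that $\nabb_3(\kab^m)=m\kab^{m-1}\nabb_3\kab=(-\tfrac m2\kab^2-2m\omb\kab)\kab^{m-1}$. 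Combining, $\nabb_3(r^n\kab^m)=\big(\tfrac{n-m}{2}\kab-2m\omb\big)r^n\kab^m$, which rearranges to the stated identity for $\nabb_3$. The $\nabb_4$ identity is the mirror computation: $\nabb_4 r=\tfrac12\ka r$, and from \eqref{nabb-4-kab} in the background $\nabb_4\kab=-\tfrac12\ka\kab+2\om\kab+2\rho$, so $\nabb_4(\kab^m)=m\kab^{m-1}(-\tfrac12\ka\kab+2\om\kab+2\rho)$; dividing the $\rho$-term by $\kab$ produces the $-2m\rho\kab^{-1}$ appearing in the statement, and collecting terms gives the second identity.

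\emph{Step 2: the wave equation.} I would start from formula \eqref{formula-2-wave}, $\Box_\g\Phi=-\nabb_4\nabb_3\Phi+\lapp_n\Phi+(-\tfrac12\ka+2\om)\nabb_3\Phi-\tfrac12\kab\nabb_4\Phi+2\etab^C\nabb_C\Phi$, applied to $\Phi=r^n\kab^m\Psi$ (the $\etab$ term drops, being quadratic). The angular Laplacian commutes up to lower-order: $\lapp_n(r^n\kab^m\Psi)=r^n\kab^m\lapp_n\Psi$ since $r,\kab$ are spherically symmetric and $[\nabb_A,\text{radial function}]=0$ in the relevant sense (this uses $\nabb r = \nabb\kab=0$ on $S$). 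The main work is $\nabb_4\nabb_3(r^n\kab^m\Psi)$: first apply the Step 1 $\nabb_3$-identity, then differentiate the resulting expression with $\nabb_4$, using the Step 1 $\nabb_4$-identity on the $r^n\kab^m\nabb_3\Psi$ piece and the background transport equations \eqref{nabb-4-ka}, \eqref{nabb-4-kab}, \eqref{nabb-4-omb} for the scalar coefficients $\ka,\kab,\omb$. One also needs $\nabb_4\nabb_3\Psi$ re-expressed: using \eqref{formula-2-wave} backwards, $\nabb_4\nabb_3\Psi=-\Box_\g\Psi+\lapp_n\Psi+(-\tfrac12\ka+2\om)\nabb_3\Psi-\tfrac12\kab\nabb_4\Psi$, so that $r^n\kab^m\Box_\g\Psi$ reappears on the right. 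After substituting the lower-order transport expressions for the Ricci coefficients and combining, the first-order terms assemble into the stated $\nabb_4(\Psi)$ and $\nabb_3(\Psi)$ coefficients, and the remaining scalar multiplying $r^n\kab^m\Psi$ collects all the products of $\ka,\kab,\om,\omb,\rho,\rhoF$; here the relations $\nabb_4\omb+\nabb_3\om=4\om\omb+\rho+\rhoF^2$ from \eqref{nabb-4-omb} and $K=-\tfrac14\ka\kab-\rho+\rhoF^2$ from \eqref{Gauss} are what let one express everything and produce the $2m\rhoF^2$ and $-2m\nabb_3\om$ terms.

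\emph{Main obstacle.} The first-order identities are routine; the genuine labor is the bookkeeping in Step 2 — keeping every product of background quantities straight while commuting two null derivatives through a factor $r^n\kab^m$ with $n,m$ symbolic, and recognizing which combinations simplify via the background constraint equations (especially turning $\nabb_4$ of $\omb$ and of $\kab$ into algebraic expressions). The sign pattern and the precise numerical coefficients $-\tfrac{n(n+1)+m(m-1)-2nm}{4}$, $m(m-n-1)$, $m^2+2m-n-nm$, etc., must drop out exactly, so the calculation has essentially no slack; I would organize it by first recording $\nabb_4\big(r^n\kab^m\big)$ and $\nabb_4\nabb_3\big(r^n\kab^m\big)$ as explicit scalar functions times $r^n\kab^m$, then assembling. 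No conceptual difficulty is expected beyond this.
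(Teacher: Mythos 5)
Your proposal is correct and follows essentially the same route as the paper: the first-order identities come from the Leibniz rule with $\nabb_3 r=\tfrac12\kab r$, $\nabb_4 r=\tfrac12\ka r$ and the background transport equations for $\kab$, and the wave equation is then a direct second computation using the background relation $\nabb_4\omb=4\om\omb+\rho+\rhoF^2-\nabb_3\om$ to produce the $2m\rhoF^2$ and $-2m\nabb_3\om$ terms. The only (immaterial) organizational difference is that the paper precomputes $\Box_\g(r^n)$ and $\Box_\g(\kab^m)$ and assembles the result via the product rule $\Box_\g(fg\Psi)=f\Box_\g(g\Psi)+g\Psi\,\Box_\g f-\nabb_3 f\,\nabb_4(g\Psi)-\nabb_4 f\,\nabb_3(g\Psi)$, whereas you expand $\nabb_4\nabb_3(r^n\kab^m\Psi)$ inline through \eqref{formula-2-wave}.
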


We collect here the commutation formulas with the wave operator $\Box_\g$ and the angular operators. 
\begin{lemma}\label{commute-wave-angular} Let $\Psi$ be a symmetric traceless $2$-tensor, $\Phi$ be a $1$-form and $\phi$ a scalar function. Then 
 \beaa
(-r\DDd_1 \lapp_1+\lapp_0 r\DDd_1) \Phi &=& -K r\DDd_1 \Phi, \qquad ( -r\DDs_1 \lapp_0+\lapp_1 r\DDs_1)\phi=K r\DDs_1 \phi \\
( -r\DDd_2 \lapp_2+\lapp_1 r\DDd_2)\Psi&=&-3K r\DDd_2 \Psi, \qquad ( -r\DDs_2 \lapp_1+\lapp_2 r\DDs_2)\Phi=3K r\DDs_2 \Phi
 \eeaa
 and
\beaa
(-r\DDd_1 \Box_1+\Box_0 r\DDd_1) \Phi &=& -K r\DDd_1 \Phi, \qquad ( -r\DDs_1 \Box_0+\Box_1 r\DDs_1)\phi=K r\DDs_1 \phi \\
( -r\DDd_2 \Box_2+\Box_1 r\DDd_2)\Psi&=&-3K r\DDd_2 \Psi, \qquad ( -r\DDs_2 \Box_1+\Box_2 r\DDs_2)\Phi=3K r\DDs_2 \Phi
 \eeaa
\end{lemma}

\subsection{Spin $\pm 1$ Teukolsky equations}
In what follows, we treat the case of spin $+1$ Teukolsky equations. The equivalent equations for spin $-1$ are obtained in a similar way.

\subsubsection{Spin $+1$ Teukolsky equation for $\bF$}
We derive here the spin $+1$ Teukolsky equation for the electromagnetic component $\bF$. Notice that $\bF$ is a gauge-dependent quantity, so is its wave equation. 
\begin{proposition}\label{Teukolsky-bF}[Spin $+1$ Teukolsky equation for $\bF$] Let $\mathcal{S}$ be a linear gravitational and electromagnetic perturbation around Reissner-Nordstr{\"o}m. Consider the (gauge-dependent) curvature component $\bF$, which is part of the solution $\mathcal{S}$. Then $\bF$ satisfies the following equation:
\beaa
\Box_\g \bF&=& -2\omb \nabb_4\bF+\left( \ka+2\om\right) \nabb_3\bF+\left(\frac 1 4 \ka\kab+\om\kab-3\omb \ka+\rhoF^2-2\nabb_4\omb\right) \bF \\
&& +\rhoF\left(2\divv \chih +4\b-2\nabb_3\xi+(\kab +8\omb)\xi \right) 
\eeaa
\end{proposition}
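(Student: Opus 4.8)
The plan is to derive the wave equation for $\bF$ directly from the linearized Maxwell and null-structure equations of Section \ref{linearized-equations}, using the formula for $\Box_\g$ on a $1$-form from Lemma \ref{square-RN}. Concretely, since $\bF$ is an $S$-tangent $1$-form, I would start from \eqref{formula-2-wave}, i.e.
\[
\Box_\g \bF = -\nabb_4\nabb_3\bF + \lapp_1 \bF + \left(-\tfrac12\ka+2\om\right)\nabb_3\bF - \tfrac12\kab\nabb_4\bF + 2\etab^C\nabb_C\bF,
\]
but note that in the linearized setting $\etab$ is $O(\ep)$ and $\bF$ is $O(\ep)$, so the last term is quadratic and drops. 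The strategy is then to eliminate $\nabb_4\nabb_3\bF$ and $\lapp_1\bF$ in favor of lower-order quantities using the transport and elliptic structure of the linearized equations.

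First I would commute: using the linearized Maxwell equation \eqref{nabb-3-bF}, namely $\nabb_3\bF + \left(\tfrac12\kab-2\omb\right)\bF = -\DDs_1(\rhoF,\sigmaF) + 2\rhoF\eta$, apply $\nabb_4$ to both sides. This produces $\nabb_4\nabb_3\bF$ plus terms involving $\nabb_4\bF$, $\nabb_4\eta$, $\nabb_4(\DDs_1(\rhoF,\sigmaF))$, and derivatives of the background coefficients $\kab,\omb$. The quantity $\nabb_4\eta$ is handled by \eqref{nabb-3-xi} (the $\nabb_3\xi - \nabb_4\eta$ equation), and $\nabb_4\rhoF$, $\nabb_4\sigmaF$ by \eqref{nabb-4-rhoF} and \eqref{nabb-4-sigmaF}; one must also commute $\nabb_4$ past $\DDs_1$ using \eqref{commutator-nabb-4-DDs-1}. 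Next, to deal with $\lapp_1\bF = \DDs_1\DDd_1\bF + \DDd_1\DDs_1\bF - 2K\bF$ — or more efficiently, note that applying $\DDs_1$ to the divergence/curl equations is not the cleanest route; instead I expect the appearance of the angular Laplacian is absorbed by re-expressing $\DDs_1\DDd_1\bF$ via the same Maxwell relations after one more commutation, or directly by recognizing the combination $\DDs_1(\rhoF,\sigmaF)$ reappearing. Crucially the background electromagnetic quantity $\rhoF = Q/r^2$ does not vanish, and all terms proportional to $\rhoF$ with a single $\ep$-order factor ($\chih$, $\b$, $\xi$, $\eta$) survive and assemble into the stated right-hand side $\rhoF\left(2\divv\chih + 4\b - 2\nabb_3\xi + (\kab+8\omb)\xi\right)$ — here one uses the Codazzi equation \eqref{Codazzi-chi} and the Bianchi-type equations \eqref{nabb-3-b}, \eqref{nabb-4-b} to trade $\DDs_1$ of $\rho,\sigma$-type expressions for $\divv\chih$, $\b$, and connection coefficients.

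The bookkeeping of the background coefficients is then pinned down using the explicit values $\ka=2/r$, $\kab=-2\Up/r$, $\om=0$, $\omb=M/r^2-Q^2/r^3$ from \eqref{value-ka-om} (or left schematic, keeping $\nabb_4\omb$ as in the statement), together with the structure equations \eqref{nabb-4-omb}, \eqref{Gauss}, and the identity $\nabb_3\kab = -\tfrac12\kab^2-2\omb\kab$, $\nabb_4\ka = -\tfrac12\ka^2$ from \eqref{nabb-3-kab}, \eqref{nabb-4-ka} specialized to the background. Collecting the coefficient of $\bF$ requires combining $K = -\tfrac14\ka\kab+\rhoF^2-\rho$, $\rho+\rhoF^2 = \nabb_4\omb+\nabb_3\om$ (which reduces to $\nabb_4\omb$ since $\om$ is background-$0$ here, though $\nabb_3\om$ may need care), and produces exactly $\tfrac14\ka\kab+\om\kab-3\omb\ka+\rhoF^2-2\nabb_4\omb$.

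The main obstacle I anticipate is the careful commutation bookkeeping: each time one differentiates a linearized equation, one generates both the desired second-derivative term and a cascade of lower-order terms involving $\nabb$ of background quantities and cross-terms among $\eta,\xi,\b,\chih,\rhoF,\sigmaF$, and ensuring that all the non-$\bF$ contributions collapse precisely into the compact expression $\rhoF(2\divv\chih+4\b-2\nabb_3\xi+(\kab+8\omb)\xi)$ — with no leftover $\sigmaF$, $\eta$, or $\DDs_1$ terms — demands exact cancellations that rely on the specific numerical coefficients in the Maxwell and Bianchi systems. A secondary subtlety is that since $\bF$ is \emph{not} gauge-invariant, one should not expect the right-hand side to vanish when $Q=0$ only if $\bF$ itself were pure gauge; rather, when $\rhoF=0$ the equation degenerates to the spin $+1$ Teukolsky equation in Schwarzschild for $\bF$, which serves as a consistency check on the computation. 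I would carry out the full derivation in the appendix alongside the analogous derivations for $\a$, $\ff$ referenced in Propositions \ref{squarea} and \ref{squareff}, since the manipulations are structurally parallel.
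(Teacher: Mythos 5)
Your proposal follows essentially the same route as the paper: apply $\nabb_4$ to the linearized Maxwell equation \eqref{nabb-3-bF}, commute $\nabb_4$ past $\DDs_1$, substitute \eqref{nabb-4-rhoF} and \eqref{nabb-4-sigmaF} so that $\DDs_1(\divv\bF,\curll\bF)=\DDs_1\DDd_1\bF=(-\lapp_1+K)\bF$ produces the Laplacian, resubstitute $\DDs_1(\rhoF,\sigmaF)$ via \eqref{nabb-3-bF}, and finish with Codazzi \eqref{Codazzi-chi}, Gauss \eqref{Gauss}, \eqref{nabb-3-xi}, and Lemma \ref{square-RN}. Two minor slips that do not affect the strategy: your identity $\lapp_1\bF=\DDs_1\DDd_1\bF+\DDd_1\DDs_1\bF-2K\bF$ is ill-typed (the correct relation is $\DDs_1\DDd_1=-\lapp_1+K$ from \eqref{angular-operators}), and the Bianchi identities \eqref{nabb-3-b}, \eqref{nabb-4-b} are not actually needed — the $\b$ and $\divv\chih$ terms arise only from \eqref{nabb-3-xi} and Codazzi.
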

\begin{proof} Applying $\nabb_4$ to \eqref{nabb-3-bF}, and using \eqref{nabb-4-kab}, \eqref{nabb-3-xi} and \eqref{nabb-4-rhoF},  we obtain
\beaa
\nabb_4\nabb_3 \bF&=&-\left(\frac 1 2 \nabb_4\kab-2\nabb_4\omb\right) \bF-\left(\frac 1 2 \kab-2\omb\right) \nabb_4\bF -\nabb_4\DDs_1(\rhoF, \sigmaF) +2\nabb_4\eta\rhoF+2\eta\nabb_4\rhoF\\
&=&\left(\frac 1 4 \ka\kab-\om\kab-\rho-2\rhoF^2+2\nabb_4\omb\right) \bF+\left(-\frac 1 2 \kab+2\omb\right) \nabb_4\bF -\nabb_4\DDs_1(\rhoF, \sigmaF)\\
&& +\rhoF\left(-2\b-3\ka\eta+\ka\etab+2\nabb_3\xi-8\omb\xi\right)
\eeaa
We simplify the term $\nabb_4\DDs_1(\rhoF, \sigmaF)$ using the commutation formulas \eqref{commutators}. Using \eqref{nabb-4-rhoF}, \eqref{nabb-3-rhoF} and \eqref{nabb-4-sigmaF} and using that $\DDs_1(\divv,\slashed{\curl})=-\lapp_1+K$, we obtain:
\beaa
\nabb_4\DDs_1(\rhoF, \sigmaF)&=& \DDs_1(-\ka \rhoF +\divv \bF, -\ka \sigmaF+\curll \bF)-\frac 1 2 \ka\DDs_1(\rhoF,\sigmaF)\\
&&-(\etab+\ze) (-\ka \rhoF)-\xi (-\kab \rhoF)\\
&=&\nabb(\ka)\rhoF+ \DDs_1(\divv\bF, \slashed{\curl}\bF)-\frac 32 \ka\DDs_1(\rhoF,\sigmaF)+\ka(\etab+\ze) \rhoF+\kab \xi \rhoF\\
&=&\nabb(\ka)\rhoF+ (-\lapp_1\bF+K\bF)-\frac 3 2 \ka\DDs_1(\rhoF,\sigmaF)+\ka(\etab+\ze) \rhoF+\kab \xi \rhoF
\eeaa
Substituting $\DDs_1(\rhoF,\sigmaF)$ using again \eqref{nabb-3-bF} and using \eqref{Codazzi-chi} and \eqref{Gauss}, we obtain
\beaa
\nabb_4\nabb_3 \bF&=&\left(-\frac 1 4 \ka\kab-\om\kab+3\omb \ka-\rhoF^2+2\nabb_4\omb\right) \bF+\left(-\frac 1 2 \kab+2\omb\right) \nabb_4\bF-\frac 3 2 \ka \nabb_3 \bF  + \lapp_1\bF\\
&& +\rhoF\left(-2\divv \chih -4\b+2\nabb_3\xi-\kab \xi-8\omb\xi \right)
\eeaa
Therefore, using Lemma \ref{square-RN}, we obtain
\beaa
\Box_\g\bF&=& - \Bigg(\left(-\frac 1 4 \ka\kab-\om\kab+3\omb \ka-\rhoF^2+2\nabb_4\omb\right) \bF+\left(-\frac 1 2 \kab+2\omb\right) \nabb_4\bF-\frac 3 2 \ka \nabb_3 \bF  + \lapp_1\bF\\
&& +\rhoF\left(-2\divv \chih -4\b+2\nabb_3\xi-\kab \xi-8\omb\xi \right)\Bigg) +\lapp_1 \bF+\left(-\frac 12 \ka+2\om\right) \nabb_3\bF-\frac 1 2 \kab\nabb_4\bF \\
&=& \left(\frac 1 4 \ka\kab+\om\kab-3\omb \ka+\rhoF^2-2\nabb_4\omb\right) \bF-2\omb \nabb_4\bF+\left( \ka+2\om\right) \nabb_3\bF \\
&& +\rhoF\left(2\divv \chih +4\b-2\nabb_3\xi+\kab \xi+8\omb\xi \right)  
\eeaa
as desired. 
\end{proof}

\subsection{Spin $\pm2$ Teukolsky equations}
In this section, we derive the case of spin $+2$ Teukolsky equations. The equivalent equations for spin $-2$ are obtained in a similar way. 

\subsubsection{Spin $+2$ Teukolsky equation for $\a$}
We derive here the spin $+2$ Teukolsky equation for the gauge independent curvature component $\a$. This generalizes the celebrated Teukolsky equation in Schwarzschild. 

\begin{proposition}\label{squarea}[Generalized spin $+2$ Teukolsky equation for $\a$] Let $\mathcal{S}$ be a linear gravitational and electromagnetic perturbation around Reissner-Nordstr{\"o}m. Consider the gauge-invariant curvature component $\a$, which is part of the solution $\mathcal{S}$. Then $\a$ satisfies the following equation:
\beaa
\Box_\g \a&=& -4\omb \nabb_4\a+\left(2 \ka+4\om\right) \nabb_3\a+\left( \frac 1 2 \ka\kab+2\om\,\kab-10\omb\ka-8\om\omb -4\nabb_4\omb   -4\rho+4\rhoF^2\right)\,\a\\
&& +4\rhoF\Big(\nabb_4\ff+ \left( \ka  +2\om \right) \ff\Big)
\eeaa
\end{proposition}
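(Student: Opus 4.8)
The plan is to mimic the derivation of the spin $+1$ Teukolsky equation for $\bF$ given in Proposition~\ref{Teukolsky-bF}, but now applied to the curvature component $\a$, using the linearized Bianchi identities of Section~\ref{linearized-equations} in the rewritten form \eqref{nabb-3-a-ff} involving $\ff$. The key point is that \eqref{nabb-3-a-ff} expresses $\nabb_3\a$ in terms of $\DDs_2\b$, $\rho\chih$ and $\rhoF\,\ff$, while the Bianchi identity \eqref{nabb-4-b} expresses $\nabb_4\b$ in terms of $\divv\a$ (equivalently $\DDd_2\a$), $\rho\xi$ and $\rhoF\,\nabb_4\bF$; composing a $\nabb_4$ derivative of the first with the second should close the system up to the coupling term in $\ff$.

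\textbf{Key steps, in order.} First I would apply $\nabb_4$ to equation \eqref{nabb-3-a-ff}. This produces $\nabb_4\nabb_3\a$ on the left, and on the right derivatives of $\kab$, $\omb$ (handled by \eqref{nabb-4-kab}, \eqref{nabb-4-omb}), the term $\nabb_4\DDs_2\b$, the term $\nabb_4(\rho\chih)$ (using \eqref{nabb-4-rho} and \eqref{nabb-4-chih}), and $\nabb_4(\rhoF\,\ff)$ (using \eqref{nabb-4-rhoF} for $\nabb_4\rhoF$). Second, I would commute $\nabb_4$ past $\DDs_2$ via the commutation formula \eqref{commutator-nabb-4-DDs}, so that $\nabb_4\DDs_2\b$ becomes $\DDs_2\nabb_4\b$ plus a $-\frac12\ka\DDs_2\b$ term; then substitute the Bianchi identity \eqref{nabb-4-b} for $\nabb_4\b$. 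This introduces $\DDs_2\divv\a$, which I would convert to a Laplacian using the operator identities \eqref{angular-operators} (namely $\DDs_2\DDd_2 = -\frac12\lapp_2 + K$) together with $K=\frac1{r^2}$ from \eqref{GCdef}. Third, the remaining $\DDs_2\b$ terms (with no derivative on $\b$) I would re-express back in terms of $\nabb_3\a$, $\rho\chih$ and $\rhoF\,\ff$ using \eqref{nabb-3-a-ff} again, and the $\chih$ and $\xi$ terms I would eliminate using \eqref{nabb-4-chih} and \eqref{nabb-4-xib}/\eqref{nabb-3-xi} as needed; the quantities $\DDs_2\bF$ that appear should recombine with $\rhoF\chih$ into $\ff$ by its very definition $\ff=\DDs_2\bF+\rhoF\chih$. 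Fourth, I would collect all terms: the second-order angular part gives $\lapp_2\a$, the transport part gives the $\nabb_3\a$, $\nabb_4\a$ terms, the potential collects into the stated coefficient of $\a$ using the background identities for $\ka,\kab,\om,\omb,\rho,\rhoF$ and \eqref{nabb-4-omb}, and the inhomogeneity collects into $4\rhoF(\nabb_4\ff + (\ka+2\om)\ff)$. Finally I would invoke Lemma~\ref{square-RN}, specifically the identity $\Box_\g\a = -\nabb_4\nabb_3\a + \lapp_2\a + (-\frac12\ka+2\om)\nabb_3\a - \frac12\kab\nabb_4\a + 2\etab\cdot\nabb\a$ (formula \eqref{formula-2-wave}; the $\etab$ term vanishes at the linear level since $\etab$ is $O(\ep)$ and $\a$ is $O(\ep)$), to trade $\nabb_4\nabb_3\a$ for $\Box_\g\a$ and read off the result.

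\textbf{Main obstacle.} The routine but genuinely delicate part is the bookkeeping of the zeroth-order (potential) coefficient: many terms of the schematic form $\ka\kab$, $\om\kab$, $\omb\ka$, $\om\omb$, $\rho$, $\rhoF^2$, $\nabb_4\omb$ arise from several independent sources (differentiating $\kab$, differentiating $\omb$, the $K$-term from the angular commutation, the $\rho\chih$ substitution, the re-substitution of $\DDs_2\b$), and they must conspire to give exactly $\frac12\ka\kab + 2\om\kab - 10\omb\ka - 8\om\omb - 4\nabb_4\omb - 4\rho + 4\rhoF^2$. The cleanest way to control this is to perform the computation purely formally at the linear level (dropping all products of $O(\ep)$ quantities immediately), keep $\nabb_4\omb$ symbolic rather than expanding it via \eqref{nabb-4-omb}, and cross-check the $Q=0$ specialization against the known Schwarzschild Teukolsky equation of \cite{DHR}. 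A secondary subtlety is making sure the $\ff$-coupling terms assemble correctly: one must verify that every stray $\DDs_2\bF$ is accompanied by the matching $\rhoF\chih$ so that it closes into $\ff$ and no bare $\bF$ or $\chih$ survives, which is exactly what the gauge-invariance computation for $\ff$ in Section~\ref{section-gauge-invariant} guarantees is possible. The full details are carried out in Appendix~\ref{computations-appendix}.
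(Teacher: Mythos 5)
Your proposal follows essentially the same route as the paper's proof in Appendix A: apply $\nabb_4$ to \eqref{nabb-3-a-ff}, commute through $\DDs_2$ via \eqref{commutator-nabb-4-DDs}, substitute the Bianchi identity \eqref{nabb-4-b} and the identity $\DDs_2\divv = -\tfrac12\lapp_2+K$, re-substitute $\DDs_2\b$ from \eqref{nabb-3-a-ff}, recombine $\DDs_2\bF+\rhoF\chih$ into $\ff$, and convert $\nabb_4\nabb_3\a$ to $\Box_\g\a$ via Lemma~\ref{square-RN}. The only minor inaccuracy is that the $\DDs_2\xi$ terms cancel among themselves (between the contributions of $\nabb_4\chih$, $\nabb_4\b$ and the $\ff$-recombination) rather than being eliminated through \eqref{nabb-4-xib} or \eqref{nabb-3-xi}, but this does not affect the validity of the strategy.
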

\begin{proof}
Applying $\nabb_4$ to \eqref{nabb-3-a}, we obtain
\beaa
\nabb_4\nabb_3\a&=&-\frac 1 2 \nabb_4\kab\,\a-\frac 1 2 \kab\,\nabb_4\a+4\nabb_4\omb \a+4\omb \nabb_4\a-2 \nabb_4(\DDs_2\, \b) -3\nabb_4\chih \rho-3\chih \nabb_4\rho-2\nabb_4\rhoF \ff-2\rhoF \nabb_4\ff
\eeaa
 and using \eqref{nabb-3-kab}, \eqref{nabb-3-chibh}, \eqref{nabb-4-rho} and \eqref{nabb-4-rhoF}, we obtain
 \beaa
\nabb_4\nabb_3\a&=&-\frac 1 2 \left(-\frac 1 2 \ka\kab+2\om\kab+2\rho \right)\,\a-\frac 1 2 \kab\,\nabb_4\a+4\nabb_4\omb \a+4\omb \nabb_4\a-2 \nabb_4(\DDs_2\, \b) \\
&&-3\left(-(\ka+2\om)\chih-2\DDs_2 \xi-\a \right) \rho-3\chih \left(-\frac 3 2 \ka \rho-\ka\rhoF^2 \right)-2\left(-\ka\rhoF \right) \ff-2\rhoF \nabb_4\ff\\
&=& \left(\frac 1 4 \ka\kab-\om\kab+2\rho +4\nabb_4\omb\right)\,\a-\left(\frac 1 2 \kab-4\omb\right)\,\nabb_4\a-2 \nabb_4(\DDs_2\, \b) +3\rho\left(\left(\frac 5 2 \ka+2\om\right)\chih+2\DDs_2 \xi \right) \\
&&+\rhoF \left(3 \ka\rhoF\chih -2 \nabb_4\ff+2\ka  \ff\right)
\eeaa
Using the commutation formula \eqref{commutator-nabb-4-DDs} and Bianchi identity \eqref{nabb-4-b}, and the identity $\DDs_2\divv\a=(-\frac 1 2 \lapp+K)\a$, we simplify the term $\nabb_4(\DDs_2\, \b)$:
\beaa
&&\nabb_4(\DDs_2\, \b)= \DDs_2(\nabb_4\b)-\frac 1 2 \ka\DDs_2\b\\
&=&  \left(-\frac 5 2 \ka  -2\om \right)\DDs_2\b +\left(-\frac 1 2 \lapp+K\right)\a+3\rho \DDs_2 \xi+\rhoF\left(\nabb_4(\DDs_2\bF)+\left(\frac 1 2 \ka+2\om\right) \DDs_2\bF-2\rhoF\DDs_2\xi\right)
\eeaa
Using Lemma \ref{square-RN}, and using \eqref{nabb-3-a} to substitute $-2\DDs_2\b=\nabb_3 \a+\left(\frac 1 2 \kab-4\omb \right)\a+3\rho \chih+2\rhoF \ff$  and the definition of $\ff$ to substitute $\DDs_2\bF=\ff-\rhoF \chih$ , we obtain:
\beaa
\Box_\g \a&=& \left(-\frac 3 4 \ka\kab+\om\kab-4\rho+2\rhoF^2 -4\nabb_4\omb\right)\,\a-4\omb\,\nabb_4\a+\left(-\frac 12 \ka+2\om\right) \nabb_3\a\\
&&+\left(\frac 5 2 \ka  +2\om \right)\left(\nabb_3 \a+\left(\frac 1 2 \kab-4\omb \right)\a+3\rho \chih+2\rhoF \ff\right) -3\rho\left(\frac 5 2 \ka+2\om\right)\chih \\
&&+\rhoF\left(2\nabb_4\left(\ff-\rhoF \chih\right)+\left( \ka+4\om\right) \left(\ff-\rhoF \chih \right)-4\rhoF\DDs_2\xi-3 \ka\rhoF\chih +2 \nabb_4\ff-2\ka \ff\right)\\
&=& \left(\frac 1 2  \ka\kab+2\om\kab-10\omb\ka-8\om\omb-4\rho+4\rhoF^2 -4\nabb_4\omb\right)\,\a-4\omb\,\nabb_4\a+\left(2 \ka+4\om\right) \nabb_3\a\\
&&+4\rhoF\left(\nabb_4\ff+\left( \ka+2\om\right) \ff\right)
\eeaa
where we used \eqref{Gauss-general}. We get the desired equation.
\end{proof}

\begin{corollary}\label{wave-psi_0} The derived quantity $\psi_0=r^2\kab^2\a$ verifies the following wave equation:
\beaa
\Box_\g \psi_0&=& \left(-\frac 1 2  \ka\kab -4 \rho+4 \rhoF^2\right)\psi_0+\frac 1 r\left(2\ka\kab-4\rho\right) \psi_1+4\rhoF  \Big(\kab\nabb_4(\psi_3)+\left(\frac {1}{ 2}  \ka\kab -2 \rho \right)\psi_3\Big)
\eeaa
\end{corollary}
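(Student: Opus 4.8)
The plan is to apply the general rescaling machinery of Lemma~\ref{wave-rescaled} to the tensor $\psi_0 = r^2\kab^2\a$, with $n=2$, $m=2$, and then substitute the Teukolsky equation for $\a$ from Proposition~\ref{squarea} into the resulting expression. Concretely, Lemma~\ref{wave-rescaled} with $(n,m)=(2,2)$ gives
\beaa
\Box_\g(r^2\kab^2\a) &=& \big( -\tfrac{n(n+1)+m(m-1)-2nm}{4}\ka\kab + m(m-n-1)\om\kab + (m^2+2m-n-nm)\rho + 2m\rhoF^2 \\
&& -m(m-n-1)\omb\ka + 4m(m+1)\om\omb + 4m^2\rho\omb\kab^{-1} - 2m\nabb_3\om \big) r^2\kab^2\a + r^2\kab^2\Box_\g\a \\
&& + \left(\tfrac{m-n}{2}\kab + 2m\omb\right) r^2\kab^2\nabb_4\a + \left(\tfrac{m-n}{2}\ka - 2m\om - 2m\rho\kab^{-1}\right) r^2\kab^2\nabb_3\a,
\eeaa
so that the $\nabb_4\a$ multiplier is $4\omb$ and the $\nabb_3\a$ multiplier is $-4\om - 4\rho\kab^{-1}$ (since $m=n$ kills the $\tfrac{m-n}{2}$ terms). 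The first step is to plug in $n=m=2$ and collect the zeroth-order coefficient in front of $r^2\kab^2\a$.

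Second, I substitute $\Box_\g\a$ from Proposition~\ref{squarea}, which contributes $-4\omb\nabb_4\a + (2\ka+4\om)\nabb_3\a$, a zeroth-order term, and the coupling term $4\rhoF(\nabb_4\ff + (\ka+2\om)\ff)$ all multiplied by $r^2\kab^2$. The $\nabb_4\a$ terms should cancel: the $+4\omb\,r^2\kab^2\nabb_4\a$ from the rescaling formula against the $-4\omb\,r^2\kab^2\nabb_4\a$ coming from $r^2\kab^2\Box_\g\a$. The $\nabb_3\a$ terms must combine to zero as well — $(-4\om-4\rho\kab^{-1})$ from the rescaling plus $(2\ka+4\om)$ from $\Box_\g\a$ — which forces me to use the background relation $2\ka = -4\rho\kab^{-1}$, i.e. $\ka\kab = -2\rho$; wait, that is not an identity in Reissner--Nordstr\"om (there $\ka\kab = -\tfrac{4\Up}{r^2}$ and $\rho = -\tfrac{2M}{r^3}+\tfrac{2Q^2}{r^4}$). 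The resolution is that the $\nabb_3\a$ term does \emph{not} vanish but is itself rewritten using the Bianchi/transport relation $\nabb_3(r^2\kab^2\a) = \nabb_3(\psi_0) = \kab\psi_1/r$ (that is, $\psi_1 = \underline P(\psi_0) = \tfrac12 r\psi_0 + \tfrac{r^2}{\kab}\nabb_3\psi_0$, so $\nabb_3\psi_0 = \tfrac{\kab}{r}(\psi_1 - \tfrac12 r\psi_0)$). Rewriting $r^2\kab^2\nabb_3\a$ in terms of $\nabb_3\psi_0$ via formula \eqref{rescaled-derivatives-r-kab} and then in terms of $\psi_1$ and $\psi_0$ is what produces the $\tfrac1r(2\ka\kab - 4\rho)\psi_1$ term and feeds an extra $\psi_0$-contribution into the zeroth-order coefficient. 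Similarly the coupling term: $r^2\kab^2 \cdot 4\rhoF(\nabb_4\ff + (\ka+2\om)\ff)$ must be re-expressed in terms of $\psi_3 = r^2\kab\ff$ and $\nabb_4\psi_3$ using the $\nabb_4$ half of Lemma~\ref{wave-rescaled} (applied with the appropriate exponents for $\ff$), which converts $r^2\kab^2\nabb_4\ff$ into $\kab\nabb_4\psi_3 + (\text{Ricci})\cdot\psi_3$ and likewise for the $\ff$ term, yielding the stated $4\rhoF(\kab\nabb_4\psi_3 + (\tfrac12\ka\kab - 2\rho)\psi_3)$.

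Third and last, I assemble the zeroth-order coefficient of $\psi_0$ by adding three contributions: the Ricci-coefficient polynomial from the $(n,m)=(2,2)$ specialization of Lemma~\ref{wave-rescaled}, the $r^2\kab^2$-times-(zeroth order term of $\Box_\g\a$) piece from Proposition~\ref{squarea}, and the leftover $\psi_0$-piece generated when converting the $\nabb_3\a$ term to $\psi_1$. Using the explicit Reissner--Nordstr\"om values $\ka\kab = -\tfrac{4\Up}{r^2}$, $\om = 0$, $\omb = \tfrac{M}{r^2}-\tfrac{Q^2}{r^3}$, $\rho = -\tfrac{2M}{r^3}+\tfrac{2Q^2}{r^4}$, $\rhoF^2 = \tfrac{Q^2}{r^4}$ (from Section~\ref{null-frames-ricci}), together with $\nabb_3\om$ and products like $\om\omb$, $\rho\omb\kab^{-1}$, everything should collapse to the claimed $-\tfrac12\ka\kab - 4\rho + 4\rhoF^2$. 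The main obstacle I anticipate is precisely this bookkeeping of the zeroth-order coefficient: there are many Ricci-coefficient monomials ($\om\kab$, $\omb\ka$, $\om\omb$, $\rho\omb\kab^{-1}$, $\nabb_3\om$, etc.) entering from three different places, and getting the cancellations exactly right — especially tracking the $\rhoF^2$ terms and the contribution that migrates from the first-order $\nabb_3\a$ term — will require careful use of the background identities and of the auxiliary computation \eqref{general-computation-useful}. The first-order and coupling terms, by contrast, I expect to fall out cleanly once the rescaling lemmas are applied in the right order.
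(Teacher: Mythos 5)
Your proposal follows essentially the same route as the paper's proof: apply Lemma \ref{wave-rescaled} with $n=m=2$, substitute Proposition \ref{squarea} (the $\nabb_4\a$ terms cancel, the surviving $\nabb_3\a$ term with coefficient $2\ka-4\rho\kab^{-1}$ is converted to $\psi_1$ and $\psi_0$ via $\nabb_3\psi_0=\frac{1}{r}\kab\psi_1-\frac{1}{2}\kab\psi_0$, and the coupling term is rewritten in terms of $\psi_3$ and $\nabb_4\psi_3$ using \eqref{rescaled-derivatives-r-kab}). The only cosmetic deviations are a harmless slip ($\psi_1=\frac{1}{2}r\psi_0+\frac{r}{\kab}\nabb_3\psi_0$, not $\frac{r^2}{\kab}\nabb_3\psi_0$) and your plan to verify the zeroth-order coefficient by inserting explicit background values, whereas the paper cancels the $\omb$-monomials symbolically against the $4\omb\psi_0$ generated by $r^2\kab^2\nabb_3\a=\nabb_3\psi_0+4\omb\,\psi_0$; both yield the stated coefficient.
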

\begin{proof} Using Lemma \ref{wave-rescaled} and Proposition \ref{squarea}, we obtain the desired equation.
\end{proof}

\begin{corollary}\label{laplacian-a} The Teukolsky equation for $\a$ can be rewritten as 
\beaa
\lapp_2 \psi_0&=&\frac 1 r \kab \nabb_{4}\psi_1+\frac 1 r\left(\frac 3 2 \ka\kab-2\rho\right) \psi_1+ \left(-\frac 1 2  \ka\kab -5 \rho+4 \rhoF^2\right)\psi_0+4\rhoF  \Big(\kab\nabb_4(\psi_3)+\left(\frac {1}{ 2}  \ka\kab -2 \rho \right)\psi_3\Big)
\eeaa
\end{corollary}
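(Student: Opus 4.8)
The plan is to derive Corollary \ref{laplacian-a} directly from Corollary \ref{wave-psi_0} by substituting the expression for $\Box_\g \psi_0$ in terms of the null derivatives. Recall from Lemma \ref{square-RN}, formula \eqref{formula-1-wave}, that for any $S$-tensor $\Psi$ one has
\[
\Box_\g \Psi = -\nabb_3\nabb_4\Psi + \lapp_n\Psi + \left(-\tfrac12\kab + 2\omb\right)\nabb_4\Psi - \tfrac12\ka\nabb_3\Psi + 2\eta^C\nabb_C\Psi,
\]
and in the Reissner-Nordstr{\"o}m background the term $2\eta^C\nabb_C\Psi$ vanishes (as $\eta = 0$) and $\omb = 0$ in the frame $\mathscr{N}$ with $e_4$ geodesic. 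So applied to $\psi_0$ this reads $\Box_\g\psi_0 = -\nabb_3\nabb_4\psi_0 + \lapp_2\psi_0 - \tfrac12\kab\nabb_4\psi_0 - \tfrac12\ka\nabb_3\psi_0$. The idea is then to solve this for $\lapp_2\psi_0$, and replace $\Box_\g\psi_0$ by the right-hand side from Corollary \ref{wave-psi_0}.

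First I would rewrite the transport term $\nabb_3\nabb_4\psi_0$. Using Remark \ref{explicit-operators}, $\nabb_3\psi_0 = \tfrac1r\kab\psi_1 - \tfrac12\kab\psi_0$ (since $\underline{P}(\psi_0) = \psi_1$ means $\tfrac{1}{\kab}\nabb_3(r\psi_0\cdot\text{something})$ — more precisely from the definition $\psi_1 = \underline{P}(\psi_0)$ and $\underline{P}(\Psi) = \tfrac12 r\Psi + \tfrac{1}{\kab}r\nabb_3\Psi$ one gets $\nabb_3\psi_0 = \tfrac{\kab}{r}\psi_1 - \tfrac12\kab\psi_0$ after rearranging and using $\nabb_3 r = \tfrac12\kab r$). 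Then I would compute $\nabb_4\nabb_3\psi_0$ by commuting, or alternatively express everything in terms of $\nabb_4\psi_1$ directly. The cleanest route: note that $\psi_1$ also satisfies a transport relation in $\nabb_4$ which, combined with the definition, lets me trade $\nabb_3\nabb_4\psi_0$ for a combination of $\nabb_4\psi_1$, $\psi_1$, $\psi_0$ with coefficients built from $\ka,\kab,\rho,\om$. Since the background quantities are explicit functions of $r$ (namely $\ka = 2/r$, $\kab = -2\Up/r$, $\rho = -2M/r^3 + 2Q^2/r^4$, $\om = 0$, $\omb = M/r^2 - Q^2/r^3$), all the algebra reduces to manipulating rational functions of $r$, and the stated coefficients $\tfrac1r\kab$, $\tfrac1r(\tfrac32\ka\kab - 2\rho)$, $(-\tfrac12\ka\kab - 5\rho + 4\rhoF^2)$ should drop out after collecting terms. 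The coupling term $4\rhoF(\kab\nabb_4\psi_3 + (\tfrac12\ka\kab - 2\rho)\psi_3)$ is inherited verbatim from Corollary \ref{wave-psi_0} since it does not interact with the $\lapp_2$ extraction.

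The main obstacle I anticipate is bookkeeping the transport commutator correctly: the step where $\nabb_3\nabb_4\psi_0$ (appearing in $\Box_\g$) must be converted into $\nabb_4\psi_1$ requires either commuting $\nabb_3$ and $\nabb_4$ — picking up the $-2\om\nabb_3 + 2\omb\nabb_4$ terms from \eqref{commutators} — or going through $\nabb_3(r\psi_0)$ carefully so that the $r$-weights and $\kab$-factors in the definition of $\underline{P}$ are tracked exactly. A sign error or a dropped $\nabb_3 r = \tfrac12\kab r$ factor here is the likely failure mode. I would therefore organize the calculation as: (i) express $\nabb_3\psi_0$ via the definition of $\psi_1$; (ii) apply $\nabb_4$ to that identity, using \eqref{nabb-4-kab} for $\nabb_4\kab$ and the product rule, to get $\nabb_4\nabb_3\psi_0$ in terms of $\nabb_4\psi_1$, $\nabb_4\psi_0$, $\psi_1$, $\psi_0$; (iii) convert $\nabb_4\nabb_3\psi_0$ to $\nabb_3\nabb_4\psi_0$ via the commutator \eqref{commutators} with $\om = \omb = 0$ in the relevant frame (so they actually commute on these tensors up to the explicit terms); (iv) substitute into \eqref{formula-1-wave} to isolate $\lapp_2\psi_0 = \Box_\g\psi_0 + \nabb_3\nabb_4\psi_0 + \tfrac12\kab\nabb_4\psi_0 + \tfrac12\ka\nabb_3\psi_0$; (v) plug in Corollary \ref{wave-psi_0} for $\Box_\g\psi_0$ and collect. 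The final collection of coefficients is purely mechanical once steps (i)--(iv) are done, and matching against the claimed formula is the consistency check.
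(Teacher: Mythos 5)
Your overall strategy is the paper's: write $\Box_\g\psi_0$ in its null decomposition, substitute the transport relation $\nabb_3\psi_0=\frac{1}{r}\kab\psi_1-\frac{1}{2}\kab\psi_0$, isolate $\lapp_2\psi_0$, and plug in Corollary \ref{wave-psi_0} (whose coupling term indeed passes through untouched). However, there is a concrete error in your setup: in the frame $\mathscr{N}$ with $e_4$ geodesic, see \eqref{outgoing-null-pair} and \eqref{value-ka-om}, it is $\om$ that vanishes, while $\omb=\frac{M}{r^2}-\frac{Q^2}{r^3}\neq 0$. Consequently you may not drop the $2\omb\nabb_4\psi_0$ term from \eqref{formula-1-wave}, and $\nabb_3$ and $\nabb_4$ do \emph{not} commute: by \eqref{commutators}, $[\nabb_3,\nabb_4]=-2\om\nabb_3+2\omb\nabb_4=2\omb\nabb_4$ on $S$-tensors here. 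Your step (iii) therefore rests on a false premise. As it happens, the two mistakes cancel exactly — the $+2\omb\nabb_4$ coefficient in \eqref{formula-1-wave} is precisely the commutator correction that converts $-\nabb_3\nabb_4$ into $-\nabb_4\nabb_3$ when $\om=0$ — so you would land on the correct identity $\lapp_2\psi_0=\Box_\g\psi_0+\nabb_4\nabb_3\psi_0+\frac{1}{2}\ka\nabb_3\psi_0+\frac{1}{2}\kab\nabb_4\psi_0$, but by accident rather than by a valid argument.

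The clean fix, which is what the paper does, is to start from \eqref{formula-2-wave} instead of \eqref{formula-1-wave}: that form already contains $-\nabb_4\nabb_3\psi_0$ with the derivatives in the convenient order, so no commutation is needed at all. One substitutes $\nabb_3\psi_0=\frac{1}{r}\kab\psi_1-\frac{1}{2}\kab\psi_0$, applies $\nabb_4$ using $\nabb_4 r=\frac{1}{2}\ka r$ and $\nabb_4\kab=-\frac{1}{2}\ka\kab+2\rho$ (with $\om=0$), and finds that the $\nabb_4\psi_0$ terms cancel, yielding $\Box_\g\psi_0=\left(\frac{1}{2}\ka\kab-2\rho\right)\frac{1}{r}\psi_1-\frac{1}{r}\kab\nabb_4\psi_1+\rho\psi_0+\lapp_2\psi_0$; combining with Corollary \ref{wave-psi_0} then gives the stated coefficients. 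You should either switch to \eqref{formula-2-wave} or, if you insist on \eqref{formula-1-wave}, retain the $2\omb\nabb_4\psi_0$ term and carry the genuine commutator $2\omb\nabb_4$ through step (iii).
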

\begin{proof} We have
\bea\label{box-in-terms-of-lapp}
\Box_\g \psi_0&=& \left(\frac 1 2 \ka\kab-2\rho \right)\frac 1 r  \psi_1- \frac 1 r \kab \nabb_{4}\psi_1+\rho \psi_0+\lapp_2 \psi_0
\eea
 Using Corollary \ref{wave-psi_0}, we obtain the desired formula. 
\end{proof}

\subsubsection{Spin $+2$ Teukolsky equation for $\ff$}
We derive here the spin $+2$ Teukolsky equation for the gauge independent curvature component $\ff$, which is part of the generalized system.

\begin{proposition}\label{squareff}[Generalized spin $+2$ Teukolsky equation for $\ff$] Let $\mathcal{S}$ be a linear gravitational and electromagnetic perturbation around Reissner-Nordstr{\"o}m. Consider the gauge-invariant curvature component $\ff=\DDs_2 \bF+\rhoF \chih$, which is part of the solution $\mathcal{S}$. Then $\ff$ satisfies the following equation:
\beaa
\Box_\g (r\ff)&=&-2\omb \nabb_4(r\ff)+\left( \ka+2\om\right) \nabb_3(r\ff)+  \left(-\frac 1 2 \ka\kab-3\rho-3\omb\ka+\om\kab-2\nabb_4\omb\right)(r\ff)\\
&&+r\rhoF \left(-\nabb_3\a-( \kab-4\omb)\a\right)
\eeaa
\end{proposition}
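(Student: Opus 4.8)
\textbf{Proof proposal for Proposition \ref{squareff}.}

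The plan is to mimic the structure of the proof of Proposition \ref{squarea} for $\a$, but starting from the definition $\ff = \DDs_2 \bF + \rhoF\,\chih$ and the linearized equations of Section \ref{linearized-equations}. The natural approach is to compute $\nabb_4\nabb_3$ applied to a suitable rescaled version of $\ff$ (or of $\bF$ and $\chih$ separately), then convert to $\Box_\g$ via Lemma \ref{square-RN}. First I would compute $\nabb_3\ff$ by applying $\nabb_3$ to the defining relation: $\nabb_3(\DDs_2\bF) = \DDs_2(\nabb_3\bF) - \frac12\kab\DDs_2\bF$ by the commutator \eqref{commutator-nabb-4-DDs}, and substitute the Maxwell equation \eqref{nabb-3-bF} for $\nabb_3\bF$; similarly $\nabb_3(\rhoF\chih)$ is handled using \eqref{nabb-3-rhoF} and \eqref{nabb-3-chih}. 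The $\DDs_2\DDs_1(\rhoF,\sigmaF)$-type terms will combine with the $\chibh$-terms coming from $\nabb_3\chih$ — here one should anticipate cancellations analogous to those in the $\a$ computation, and it is cleanest to re-express everything back in terms of $\ff$ and $\a$ using the definitions of $\ff$ and $\ffb$ and the Codazzi/Gauss relations \eqref{Codazzi-chi}, \eqref{Gauss}. One expects an intermediate relation of the schematic form $\nabb_3\ff + (\text{coeff})\ff = (\text{operator in }\a, \bb, \dots)$, which plays the role that \eqref{nabb-3-a-ff} played for $\a$.

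Next I would apply $\nabb_4$ to that relation. The terms generated are controlled by: $\nabb_4\kab$ from \eqref{nabb-4-kab}, $\nabb_4\rhoF$ from \eqref{nabb-4-rhoF}, $\nabb_4\omb$ from \eqref{nabb-4-omb}, $\nabb_4\a$ (which one does \emph{not} want to appear in the final equation, so it must be eliminated), $\nabb_4\bb$ from \eqref{nabb-4-bb}, and $\nabb_4\chibh$ from \eqref{nabb-4-chibh}. The commutator $[\nabb_4, \DDs_2] = -\frac12\ka\DDs_2$ from \eqref{commutator-nabb-4-DDs} and the analogous ones \eqref{commutator-nabb-4-DDs-1} will be used repeatedly to move $\nabb_4$ past the angular operators. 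The key algebraic move, exactly as in the $\a$-proof, is to feed the relation $\nabb_3\ff + (\dots)\ff = (\dots)$ back in to replace the top-order angular term (here a $\DDs_2\DDd_2$-type term, i.e. a Laplacian) by $\nabb_3\ff$ plus lower order, so that $\lapp_2$ cancels against the $+\lapp_2$ produced when converting $-\nabb_4\nabb_3$ to $\Box_\g$ via \eqref{formula-2-wave}. One then collects the $\nabb_4\ff$, $\nabb_3\ff$, $\ff$ and $\a$ coefficients, uses the explicit Reissner-Nordstr\"om values $\ka = 2/r$, $\kab = -2\Up/r$, $\om=0$, $\omb = M/r^2 - Q^2/r^3$, $\rho = -2M/r^3 + 2Q^2/r^4$, $\rhoF = Q/r^2$ (Section \ref{null-frames-ricci}) to verify the coefficients match the claimed ones, and finally rescales by $r$ using Lemma \ref{wave-rescaled} with $n=1$, $m=0$ to pass from an equation for $\ff$ to the stated equation for $r\ff$.

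The main obstacle I anticipate is the bookkeeping of the many first-order and zeroth-order terms: unlike the $\a$ equation, the $\ff$ computation mixes a genuine curvature/Maxwell component ($\bF$) with a Ricci coefficient ($\chih$), so one must track contributions from \emph{both} the Maxwell system and the null structure equations, and the cancellation of the $\chibh$-, $\bb$-, and $\sigmaF$-dependent terms (which must disappear, since the final equation involves only $\ff$ and $\a$) relies on the Codazzi equation \eqref{Codazzi-chi}, the Bianchi identity \eqref{nabb-4-bb}, and the definition $\ffb = \DDs_2\bbF - \rhoF\chibh$. Getting the signs and the $\rhoF$-weights right in these cancellations is where care is needed; everything else is a matter of systematically substituting the linearized equations and using the commutation lemmas, with the $r$-rescaling at the end being routine via Lemma \ref{wave-rescaled}. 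As a consistency check, setting $Q=0$ (hence $\rhoF=0$) should make $\ff$ decouple entirely and reduce the right-hand side to zero, which it does in the claimed formula since the $r\rhoF(\dots)$ term vanishes.
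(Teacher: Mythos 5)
Your plan is essentially the paper's proof: one first derives the transport relation $\nabb_3\ff+(\kab-2\omb)\ff=-\DDs_2\DDs_1(\rhoF,\sigmaF)-\frac12\rhoF(\kab\chih+\ka\chibh)$ from the definition of $\ff$ together with \eqref{nabb-3-bF} and \eqref{nabb-3-chih}, then applies $\nabb_4$, reuses the $\nabb_4\DDs_1(\rhoF,\sigmaF)$ computation (which brings in Codazzi \eqref{Codazzi-chi} and hence $\DDs_2\b$), feeds the transport relation back in to produce $\nabb_3\ff$ and the Laplacian, and finally rescales by $r$ via Lemma \ref{wave-rescaled}. The only small discrepancy is in your anticipated bookkeeping: the $\a$-coupling actually enters by substituting $-2\DDs_2\b$ from the Bianchi identity \eqref{nabb-3-a} (not \eqref{nabb-4-bb}), and neither $\bb$ nor $\ffb$ appears in the computation; the $\DDs_2\etab$ and $\DDs_2\xi$ terms cancel directly between the Maxwell-derived and null-structure-derived contributions.
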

\begin{proof} By definition of $\ff$, we compute
\beaa
\nabb_3(\ff)+\left( \kab -2 \omb\right)\ff&=& \nabb_3(\DDs_2 \bF+\rhoF \chih)+\left( \kab -2 \omb\right)(\DDs_2 \bF+\rhoF \chih)\\
&=& \DDs_2\nabb_3 \bF-\frac 1 2 \kab \DDs_2 \bF+\nabb_3\rhoF \chih+\rhoF \nabb_3\chih+\left( \kab -2 \omb\right)(\DDs_2 \bF+\rhoF \chih)
\eeaa
and using \eqref{nabb-3-bF} and \eqref{nabb-3-chih}, we obtain
\bea\label{nabb-3-ff}
\nabb_3(\ff)+\left( \kab -2 \omb\right)\ff&=&  -\DDs_2\DDs_1(\rhoF, \sigmaF) -\frac 12\rhoF \left(  \kab \chih + \ka \chibh \right)
\eea
Applying $\nabb_4$ to \eqref{nabb-3-ff}, and using \eqref{nabb-4-kab}, \eqref{nabb-4-rhoF}, \eqref{nabb-4-sigmaF}, \eqref{nabb-4-chih}, \eqref{nabb-4-chibh}, \eqref{nabb-4-ka}, we obtain
\beaa
\nabb_4\nabb_3(\ff)&=& \left( \frac 1 2 \ka\kab-2\om\kab-2\rho +2 \nabb_4\omb\right)\ff -\left( \kab -2 \omb\right)\nabb_4\ff -\nabb_4\DDs_2\DDs_1(\rhoF, \sigmaF) \\
&&-\frac 12\rhoF \left( (-3 \ka\kab+2\rho) \chih- 2\ka^2 \chibh+ \kab (-2\DDs_2 \xi-\a) -2\ka \DDs_2\etab \right)
\eeaa
We simplify $\nabb_4\DDs_2\DDs_1(\rhoF, \sigmaF)$ recalling the formulas in Proposition \ref{Teukolsky-bF}, and using Lemma \ref{commute-wave-angular} and $\DDs_2\divv\chih=(-\frac 1 2 \lapp+K)\chih$. We obtain 
\beaa
\nabb_4\DDs_2\DDs_1(\rhoF, \sigmaF)&=&(2K\chih+2\DDs_2\b-2\rhoF \DDs_2\bF+\ka\DDs_2\etab+\kab\DDs_2\xi)\rhoF\\
&&-\lapp_2(\DDs_2\bF+\rhoF \chih)+4K\DDs_2\bF-2 \ka\DDs_2\DDs_1(\rhoF,\sigmaF)
\eeaa
This gives, using \eqref{nabb-3-ff}
\beaa
\nabb_4\nabb_3(\ff)-\lapp_2(\ff)&=& \left( -\frac 3 2 \ka\kab+4\omb\ka-2\om\kab-2\rho +2 \nabb_4\omb\right)\ff -\left( \kab -2 \omb\right)\nabb_4\ff-2\ka\nabb_3(\ff)\\
&& +( \ka\kab+\rho-2\rhoF^2)\rhoF\chih+(\ka\kab+4\rho-2\rhoF^2)\DDs_2\bF\\
&& +\rhoF(-2 \DDs_2\b+\frac 1 2 \kab \a)
\eeaa
and using \eqref{nabb-3-a} to substitute $-2\DDs_2\b=\nabb_3 \a+\left(\frac 1 2 \kab-4\omb \right)\a+3\rho \chih+2\rhoF \ff$, we have
\beaa
\Box_\g \ff&=& \left( \frac 1 2 \ka\kab-4\omb\ka+2\om\kab-2\rho -2 \nabb_4\omb\right)\ff +\left(\frac 1 2 \kab -2 \omb\right)\nabb_4\ff+\left(\frac 32 \ka+2\om\right) \nabb_3\ff\\
&& -\rhoF(\nabb_3 \a+\left( \kab-4\omb \right)\a)
\eeaa
Using Lemma \ref{wave-rescaled}, we finally have 
\beaa
\Box_\g (r \ff)&=&\left(-\frac 1 2 \ka\kab-3\omb\ka+\om\kab-3\rho -2 \nabb_4\omb\right)\ff  -2 \omb \nabb_4(r\ff)+\left( \ka+2\om\right) \nabb_3(r\ff)\\
&& -r\rhoF(\nabb_3 \a+\left( \kab-4\omb \right)\a)
\eeaa
as desired.
\end{proof}

\begin{corollary}\label{wave-psi_3} The derived quantity $\psi_3=r^2\kab \ff$ verifies the following wave equation:
\beaa
\Box_\g \psi_3&=& \left(- \ka\kab-3 \rho\right)\psi_3+\frac 1 r\left( \ka\kab-2\rho\right) \qf^\F+\rhoF \left(-\frac 1 r \psi_1-\frac 1 2  \psi_0\right) 
\eeaa
\end{corollary}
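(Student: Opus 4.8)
The plan is to mimic the derivation of Corollary \ref{wave-psi_0} from Proposition \ref{squarea}, but now starting from the Teukolsky equation for $r\ff$ in Proposition \ref{squareff}. Recall $\psi_3 = r^2\kab\,\ff = r\cdot(r\kab\,\ff)$, so I would first apply Lemma \ref{wave-rescaled} to the tensor $r\ff$ with rescaling factor $r\kab$, i.e. with $n=1$, $m=1$. This converts $\Box_\g(r\kab\cdot(r\ff))$ into $r\kab\,\Box_\g(r\ff)$ plus zeroth-order and first-order correction terms involving $\ka$, $\kab$, $\rho$, $\rhoF^2$, $\om$, $\omb$, $\nabb_3\om$ and the null derivatives $\nabb_4(r\ff)$, $\nabb_3(r\ff)$. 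Substituting the expression for $\Box_\g(r\ff)$ from Proposition \ref{squareff} then produces an equation where the $\nabb_4(r\ff)$ terms should cancel (as in Corollary \ref{wave-psi_0}, where the analogous $\nabb_4$ cancellation occurs), leaving an equation of the form $\Box_\g\psi_3 = (\text{potential})\psi_3 + (\text{coefficient})\,\nabb_3\psi_3 + r\rhoF\,(\text{derivative terms in }\a)$.

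Second, I would convert the remaining $\nabb_3$ terms into the derived quantities. Using the transport relation $\psi_4 = \qf^\F = \underline{P}(\psi_3) = \frac{1}{\kab}\nabb_3(r\psi_3)$, or equivalently the explicit form \eqref{explicit-operators} giving $\nabb_3(\psi_3) = \frac{1}{r}\kab\,\qf^\F - \frac12\kab\,\psi_3$, I would eliminate $\nabb_3\psi_3$ in favor of $\qf^\F$ and $\psi_3$. This is exactly parallel to the step in Corollary \ref{wave-psi_0} where $\nabb_3(\psi_0) = \frac1r\kab\psi_1 - \frac12\kab\psi_0$ was used. For the right-hand side, the term $r\rhoF(\nabb_3\a + (\kab - 4\omb)\a)$ must be rewritten: since $\psi_0 = r^2\kab^2\a$ and $\psi_1 = \underline{P}(\psi_0) = \frac{1}{\kab}\nabb_3(r\psi_0)$, one has $\nabb_3(r^3\kab^2\a) = \kab\,\psi_1$ (up to the precise normalization in \eqref{quantities}), so $\kab\,\psi_1$ captures exactly the combination $r^3\kab^2\nabb_3\a$ plus lower-order $\a$ terms. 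Matching powers of $r$ and $\kab$, the combination $r\rhoF(\nabb_3\a + (\kab-4\omb)\a)$ should collapse, after multiplication by $r\kab$, into a multiple of $\rhoF\,\psi_1$ plus a multiple of $\rhoF\,\psi_0$; tracking the coefficients carefully should give exactly $\rhoF(-\frac1r\psi_1 - \frac12\psi_0)$ as claimed. I would verify $\nabb_4r = \frac12\ka r$, $\nabb_3r = \frac12\kab r$ and the Reissner-Nordstr{\"o}m identities $\nabb_4\omb = -\frac12\ka\kab\cdot(\text{stuff})$-type relations, and also use $\nabb_3(\rhoF) = -\kab\rhoF$ from \eqref{nabb-3-rhoF}, to convert all auxiliary derivatives.

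The main obstacle I anticipate is bookkeeping: keeping the coefficients of $\ka\kab$, $\rho$, $\rhoF^2$, $\om\kab$, $\omb\ka$, $\om\omb$, $\nabb_4\omb$ and $\nabb_3\om$ straight through the substitution of Lemma \ref{wave-rescaled}, and making sure the many $\om$- and $\omb$-weighted first-order terms conspire to cancel so that the final answer has only the clean coefficients $-\ka\kab - 3\rho$ on $\psi_3$, $\frac1r(\ka\kab - 2\rho)$ on $\qf^\F$, and $\rhoF(-\frac1r\psi_1 - \frac12\psi_0)$ on the coupling. One also must be attentive that $\qf^\F = \psi_4$ appears (not just $\psi_3$), which requires using the $\underline{P}$-relation once; this is the analogue of the single application of the Chandrasekhar transformation noted in the Remark after \eqref{quantities-2}. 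A useful sanity check is the limit $\rhoF \to 0$: there $\ff$ decouples and $\psi_3 = r^2\kab\ff$ should satisfy a Regge-Wheeler-type equation with potential $-\ka\kab - 3\rho = -V_2$, which is exactly consistent with $\Psi_2 = \qf^\F$ having potential $V_2 = -\ka\kab - 3\rho$ in \eqref{potentials}; this confirms the structure of the leading terms and the derivation should then be pushed through Appendix B as stated, with the full computation referenced in Proposition \ref{wave-qfF}.
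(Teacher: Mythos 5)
Your proposal is correct and follows essentially the same route as the paper's proof: Lemma \ref{wave-rescaled} with $n=m=1$ applied to $\psi_3=r\kab\cdot(r\ff)$, substitution of Proposition \ref{squareff} (with the $\nabb_4(r\ff)$ terms cancelling against the $2\omb$ rescaling term), and conversion of $r\kab\nabb_3(r\ff)$ and $r^2\kab^2\nabb_3\a$ into $\qf^\F$, $\psi_1$, $\psi_0$ via the $\underline{P}$-relations. The $\rhoF\to 0$ consistency check against the potential $V_2$ is a sound additional verification but not part of the paper's argument.
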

\begin{proof} Using Lemma \ref{wave-rescaled} for $\psi_3=r\kab (r \ff)$ with $n=m=1$, we obtain the desired equation.
\end{proof}

\begin{corollary}\label{laplacian-psi-3}  The Teukolsky equation for $\ff$ can be rewritten as 
 \beaa
\lapp_2 \psi_3&=&\frac 1 r \kab \nabb_{4}\qf^\F+\frac 1 r\left( \frac 1 2 \ka\kab\right) \qf^\F+\left(- \ka\kab-4 \rho\right)\psi_3+\rhoF \left(-\frac 1 r \psi_1-\frac 1 2  \psi_0\right)
 \eeaa
\end{corollary}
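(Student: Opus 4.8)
The plan is to derive Corollary \ref{laplacian-psi-3} as an immediate consequence of Corollary \ref{wave-psi_3}, exactly mirroring the way Corollary \ref{laplacian-a} is obtained from Corollary \ref{wave-psi_0}. The key observation is that the wave operator $\Box_\g$ applied to an $S$-tensor can be split, via Lemma \ref{square-RN} (formula \eqref{formula-2-wave}), into $-\nabb_4\nabb_3 + \lapp_2 + (\text{first-order terms})$, and in Reissner-Nordstr{\"o}m the transport structure of $\psi_3$ under $\nabb_3$ is explicitly known from the definition \eqref{quantities}: since $\qf^\F = \underline{P}(\psi_3) = \frac{1}{\kab}\nabb_3(r\psi_3)$, and using the spherically symmetric identity $\nabb_3 r = \frac12 r\kab$, one gets $\nabb_3(\psi_3) = \frac1r \kab\,\qf^\F - \frac12\kab\,\psi_3$ (compare the analogous relation $\nabb_3(\psi_0) = \frac1r\kab\psi_1 - \frac12\kab\psi_0$ used in Corollary \ref{laplacian-a}).

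First I would write, using Lemma \ref{square-RN} formula \eqref{formula-2-wave} with $n=2$ and the background values $\om=0$, and substitute $\nabb_3\psi_3 = \frac1r\kab\qf^\F - \frac12\kab\psi_3$:
\beaa
\Box_\g \psi_3 &=& -\nabb_4\nabb_3\psi_3 + \lapp_2\psi_3 + \left(-\tfrac12\ka\right)\nabb_3\psi_3 - \tfrac12\kab\nabb_4\psi_3 \\
&=& -\nabb_4\!\left(\tfrac1r\kab\qf^\F - \tfrac12\kab\psi_3\right) + \lapp_2\psi_3 + \left(-\tfrac12\ka\right)\!\left(\tfrac1r\kab\qf^\F - \tfrac12\kab\psi_3\right) - \tfrac12\kab\nabb_4\psi_3.
\eeaa
Expanding the $\nabb_4$ derivative using $\nabb_4(r^{-1}\kab) = r^{-1}\nabb_4\kab - r^{-2}\kab\,\nabb_4 r$ and the structure equations \eqref{nabb-4-kab}, \eqref{nabb-4-ka} (which give $\nabb_4\kab = -\frac12\ka\kab + 2\rho$, $\nabb_4 r = \frac12 r\ka$ in the background), the $\frac12\kab\nabb_4\psi_3$ terms cancel pairwise just as in \eqref{box-in-terms-of-lapp}, leaving a clean identity of the form $\Box_\g\psi_3 = (\text{coeff})\tfrac1r\qf^\F - \tfrac1r\kab\nabb_4\qf^\F + (\text{coeff})\psi_3 + \lapp_2\psi_3$. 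Then I would equate this with the right-hand side of Corollary \ref{wave-psi_3}, solve for $\lapp_2\psi_3$, and simplify the coefficients of $\qf^\F$ and $\psi_3$ using the background relations among $\ka$, $\kab$, $\rho$, $\rhoF$ (in particular $\Up = \frac{r}{2}(-\kab)$ and the value of $\nabb_4\kab$) to match the stated form $\frac1r\kab\nabb_4\qf^\F + \frac1r(\frac12\ka\kab)\qf^\F + (-\ka\kab - 4\rho)\psi_3 + \rhoF(-\frac1r\psi_1 - \frac12\psi_0)$.

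The whole argument is a short, essentially mechanical computation; there is no serious conceptual obstacle. The only place requiring care is bookkeeping the $\nabb_4$-derivative of the $r$- and $\kab$-dependent coefficients and checking that the $\nabb_4\psi_3$ terms genuinely cancel (so that no stray $\nabb_4\psi_3$ survives on the left), and that the leftover zeroth-order coefficient of $\psi_3$ collapses from $(-\ka\kab - 3\rho) + \rho + (\text{correction from }\nabb_4\kab)$ down to exactly $-\ka\kab - 4\rho$. This is the analogue of the step in Corollary \ref{laplacian-a} where $\Box_\g\psi_0 = (\frac12\ka\kab - 2\rho)\frac1r\psi_1 - \frac1r\kab\nabb_4\psi_1 + \rho\psi_0 + \lapp_2\psi_0$, and I expect the signs and numerical coefficients here to be the only thing worth double-checking. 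Since Corollary \ref{wave-psi_3} and Lemma \ref{square-RN} are already established in the excerpt, invoking them plus the definition \eqref{quantities} suffices to close the proof.
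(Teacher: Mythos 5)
Your proposal is correct and follows exactly the paper's own route: the paper simply applies the identity \eqref{box-in-terms-of-lapp} (derived in Corollary \ref{laplacian-a} by the very computation you describe, with $\psi_0,\psi_1$ replaced by $\psi_3,\qf^\F$) and then substitutes Corollary \ref{wave-psi_3} to solve for $\lapp_2\psi_3$. The coefficient bookkeeping you flag does close as expected, giving $\frac1r(\ka\kab-2\rho)-\frac1r(\frac12\ka\kab-2\rho)=\frac{1}{2r}\ka\kab$ for $\qf^\F$ and $-\ka\kab-3\rho-\rho=-\ka\kab-4\rho$ for $\psi_3$.
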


\section{Derivation of the generalized Regge-Wheeler system}\label{derivation-Regge}

Since the derived quantities are defined in terms of the operator $\underline{P}$, we derive a general lemma to compute the wave equation for a derived quantity.

\begin{lemma}\label{squareP} Let $\Psi$ be a $p$-tensor. Then $\underline{P}\Psi$ is a $p$-tensor which verifies the following wave equation:
\beaa
\Box_\g(\underline{P} \Psi)&=& \frac{1}{r}\left( -\ka\kab+2\rho\right) \underline{P}(\underline{P}(\Psi))+\left(\frac 1 2 \ka\kab-4\rho-2\rhoF^2\right)\underline{P}(\Psi)+\left(\frac 1 2 \rho+\rhoF^2 \right)r\Psi\\
&&+ \frac 3 2r \Box_\g(\Psi)+\kab^{-1}r \nabb_3(\Box_\g(\Psi))
\eeaa
\end{lemma}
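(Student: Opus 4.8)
\textbf{Proof plan for Lemma \ref{squareP}.}

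The plan is to compute $\Box_\g(\underline P\Psi)$ directly from the definition $\underline P(\Psi)=\frac{1}{\kab}\nabb_3(r\Psi)=\frac12 r\Psi+\frac{r}{\kab}\nabb_3(\Psi)$ (cf. \eqref{explicit-operators}), expressing everything in terms of the $\nabb_3$-rescaled derivatives that reappear as $\underline P$ and $\underline P\circ\underline P$. First I would invoke Lemma \ref{wave-rescaled} (or redo its short computation) to handle the factor $r$: write $\underline P(\Psi)=\frac{r}{\kab}\nabb_3(r\Psi)\cdot r^{-1}\cdot$\,(appropriate bookkeeping), or more cleanly set $\Phi:=r\Psi$, so $\underline P(\Psi)=\frac1\kab\nabb_3\Phi$, and use $\Box_\g(r\Psi)$ in terms of $\Box_\g\Psi$, $\nabb_3\Psi$, $\nabb_4\Psi$ via Lemma \ref{wave-rescaled} with $n=1$, $m=0$. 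Then the key object is $\Box_\g\big(\tfrac1\kab\nabb_3\Phi\big)$, which I would attack with Lemma \ref{wave-rescaled} applied with $m=-1$, $n=0$ to pull out the $\kab^{-1}$, reducing matters to computing $\Box_\g(\nabb_3\Phi)$.

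The heart of the computation is therefore the commutator $[\Box_\g,\nabb_3]$ acting on an $S$-tensor. Using formula \eqref{formula-1-wave} of Lemma \ref{square-RN}, $\Box_\g=-\nabb_3\nabb_4+\lapp_n-\big(\tfrac12\kab-2\omb\big)\nabb_4-\tfrac12\ka\nabb_3$ (with the $2\eta^C\nabb_C$ term vanishing in the linearized setting), I would commute $\nabb_3$ through each piece: through $\nabb_3\nabb_4$ one picks up the $[\nabb_3,\nabb_4]$ commutator \eqref{commutators}/\eqref{commutation-formulas}, through $\lapp_n$ one uses the $[\nabb_3,\nabb_B]$ commutation formula \eqref{commutation-formulas} (which produces only a $\kab$-multiple of $\nabb_B$, hence a $\kab$-multiple of $\lapp_n$), and through the first-order coefficients one differentiates the scalar functions $\ka,\kab,\omb$ in the $e_3$ direction using their known transport equations (the linearized structure equations \eqref{nabb-3-kab}, \eqref{nabb-4-omb}, and $\nabb_3\rho$ from \eqref{nabb-3-rho}), together with $\nabb_3 r=\tfrac12\kab r$. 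The curvature terms $\rho,\rhoF^2$ enter precisely through these transport equations; the Gauss equation \eqref{Gauss-general} and the identities $K=r^{-2}$, $\rhoF=Q/r^2$ let me replace Gauss curvature and clean up constants. After collecting, $\Box_\g(\nabb_3\Phi)$ is expressed as $\nabb_3(\Box_\g\Phi)$ plus explicit $r$-function multiples of $\nabb_3\Phi$, $\nabb_3\nabb_3\Phi$ (or equivalently $\lapp_n\Phi$ via the wave equation again), and $\Phi$.

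Finally I would reassemble: substitute back $\Phi=r\Psi$, re-express the iterated $\nabb_3$-derivatives in terms of $\underline P(\Psi)$ and $\underline P(\underline P(\Psi))$ using \eqref{explicit-operators}, and simplify the scalar coefficients using $\ka=2/r$, $\kab=-2\Up/r$, $\omb=\frac M{r^2}-\frac{Q^2}{r^3}$, $\rho=-\frac{2M}{r^3}+\frac{2Q^2}{r^4}$, $\rhoF^2=Q^2/r^4$ — though since the statement is written in ``geometric'' form with $\ka,\kab,\rho,\rhoF$, I would keep those symbols and just verify the coefficient algebra matches the claimed expression
$$
\Box_\g(\underline P\Psi)=\tfrac1r(-\ka\kab+2\rho)\underline P(\underline P(\Psi))+\big(\tfrac12\ka\kab-4\rho-2\rhoF^2\big)\underline P(\Psi)+\big(\tfrac12\rho+\rhoF^2\big)r\Psi+\tfrac32 r\Box_\g\Psi+\kab^{-1}r\nabb_3(\Box_\g\Psi).
$$
The main obstacle I anticipate is purely bookkeeping: correctly tracking all the first-order $\nabb_3$ and $\nabb_4$ terms generated by the three commutations (with $\nabb_3\nabb_4$, with $\lapp_n$, and with the variable coefficients), and verifying that the $\nabb_4\Phi$-type terms cancel — they must, since the final answer has no $\nabb_4$ on the right — which is the natural consistency check that the computation has been done correctly. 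There is no conceptual difficulty beyond this; the structure of $\underline P$ as a rescaled $\nabb_3$ and the spherical symmetry (all background Ricci/curvature scalars are functions of $r$ alone, most off-diagonal quantities vanish) make every commutator collapse to a scalar multiple of a lower-order operator.
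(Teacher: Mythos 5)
Your proposal is correct and follows essentially the same route as the paper: decompose $\underline{P}\Psi=r\kab^{-1}\nabb_3\Psi+\tfrac12 r\Psi$ via \eqref{explicit-operators}, apply Lemma \ref{wave-rescaled} (with $n=1$, $m=0$ and $n=1$, $m=-1$) to the two pieces, reduce to the commutator $[\Box_\g,\nabb_3]$ computed from \eqref{formula-1-wave} and the background transport equations, and then rewrite $\nabb_3\Psi$ and $\nabb_3\nabb_3\Psi$ in terms of $\underline{P}(\Psi)$ and $\underline{P}(\underline{P}(\Psi))$. Your anticipated consistency check is also borne out: in the paper's computation the $+\tfrac14\kab r\nabb_4\Psi$ term from $\Box_\g(r\kab^{-1}\nabb_3\Psi)$ cancels exactly against the $-\tfrac14\kab r\nabb_4\Psi$ contribution from $\tfrac12\Box_\g(r\Psi)$.
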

\begin{proof} Writing $\underline{P}\Psi= r \kab^{-1} \nabb_3\Psi +\frac 12 r \Psi$, we have 
\beaa
\Box_\g(\underline{P} \Psi)&=&\Box_\g(r \kab^{-1} \nabb_3\Psi )+\frac 12 \Box_\g(r \Psi)
\eeaa
We apply Lemma \ref{wave-rescaled} for both terms. For $n=1$, $m=0$ we have
\beaa
\Box_\g (r \Psi)&=& \big( -\frac{1}{2} \ka\kab -\rho\big)r \Psi+r \Box_\g( \Psi)+\left(-\frac 12 \kab\right)r\nabb_4( \Psi)+\left(-\frac {1}{ 2} \ka\right) r\nabb_3(\Psi)
\eeaa
and for $n=1$, $m=-1$ we have
\beaa
\Box_\g (r \kab^{-1} \nabb_3\Psi)&=& \left( -\frac{3}{2} \ka\kab +3\om\kab- \rho-2 \rhoF^2-3\omb\ka+4\rho\omb \kab^{-1}+2 \nabb_3\om\right)r \kab^{-1} \nabb_3\Psi+r\kab^{-1} \Box_\g( \nabb_3\Psi)\\
&&+\left(- \kab-2\omb\right)r\kab^{-1}\nabb_4( \nabb_3\Psi)+\left(-\ka+2\om+2\rho\kab^{-1}\right) r\kab^{-1}\nabb_3(\nabb_3\Psi)
\eeaa
Using that 
\beaa
\Box_\g( \nabb_3(\Psi))&=&\nabb_3(\square_\g(\Psi))+[\Box_\g, \nabb_3]\Psi=\\
&=&\nabb_3(\Box_\g(\Psi))+\kab\Box_\g \Psi - 2\om \nabb_3(\nabb_3( \Psi)) +(\kab+2\omb)\nabb_4(\nabb_3( \Psi))+\\
&&+\left(\frac{1}{4}\ka\kab  - 3\om\kab  +\omb\ka -8\om\omb -\rho-2\rhoF^2  +2\nabb_4(\omb)\right)\nabb_3(\Psi)  +\frac{1}{4}\kab^2 \nabb_4(\Psi)
\eeaa
 we obtain
 \beaa
\Box_\g (r \kab^{-1} \nabb_3\Psi)&=& \left( -\frac{5}{4} \ka\kab -2 \rhoF^2-2\omb\ka+4\rho\omb \kab^{-1}\right)r \kab^{-1} \nabb_3\Psi+r\kab^{-1}\nabb_3(\Box_\g(\Psi))+r\Box_\g \Psi  +\frac{1}{4}\kab r\nabb_4(\Psi)\\
&&+\left(-\ka+2\rho\kab^{-1}\right) r\kab^{-1}\nabb_3(\nabb_3\Psi)
\eeaa
 Putting these two together we obtain 
 \beaa
 \Box_\g(\underline{P} \Psi)&=&\left(-\ka+2\rho\kab^{-1}\right) r\kab^{-1}\nabb_3\nabb_3\Psi+\left( -\frac{3}{2} \ka\kab -2\omb\ka+4\rho\omb \kab^{-1}-2 \rhoF^2\right)r \kab^{-1} \nabb_3\Psi+ \left( -\frac{1}{4} \ka\kab -\frac 1 2 \rho\right)r \Psi\\
 &&+r\kab^{-1}\nabb_3(\Box_\g(\Psi))+\frac 3 2 r\Box_\g \Psi
 \eeaa
Writing 
\beaa
 \nabb_3(\Psi)&=&\frac{1}{r}\kab (\underline{P}\Psi)-\frac 1 2 \kab \Psi, \\
\nabb_3\nabb_3\Psi&=&\frac{1}{r^2}\kab^2 \underline{P}(\underline{P}(\Psi))-\frac 2 r(\kab^2+\omb \kab) (\underline{P}\Psi)+(\frac 1 2 \kab^2+\omb\kab) \Psi
 \eeaa
we obtain the desired formula.
\end{proof}

 We derive the following useful lemma to compute the wave equations of the derived quantities. 

\begin{lemma}\label{generalwavePhi} 
Let $\Psi$ be a $p$-tensor which verifies a wave equation of the form: 
\bea\label{wave-eq-ABC}
\Box_\g \Psi &=& A\underline{P}^{-1}(\Psi)+B\Psi+C \underline{P}(\Psi) +M,
\eea
where $\underline{P}^{-1}(\Psi)$ is a $p$-tensor such that $P\underline{P}^{-1}(\Psi)=\Psi$,  $A$, $B$, $C$ are coefficients of the wave equation, and $M$ is the right hand side. Then the $p$-tensor $\underline{P}(\Psi)$ verifies the wave equation:
\beaa
\Box_\g(\underline{P} \Psi)&=& \left( \kab^{-1}r\nabb_3(A)+ r A\right)\underline{P}^{-1}(\Psi)+\left(\kab^{-1}r\nabb_3(B)+ r B+ A+\frac 1 2 r\rho+r\rhoF^2  \right)\ \Psi \\
&&+\left( B +\kab^{-1}r\nabb_3(C)+ r C+\frac 1 2 \ka\kab-4\rho-2\rhoF^2\right)\ \underline{P}(\Psi)+ \left( C+\frac{1}{r}\left( -\ka\kab+2\rho\right) \right)\ \underline{P}(\underline{P}(\Psi))\\
&&+\kab^{-1}r\nabb_3 M  +\frac 3 2r M
\eeaa
\end{lemma}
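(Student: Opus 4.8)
The plan is to apply the general computational result of Lemma \ref{squareP} to the tensor $\Psi$ that satisfies the wave equation \eqref{wave-eq-ABC}, and to substitute the expression for $\Box_\g\Psi$ that this equation provides. Concretely, Lemma \ref{squareP} expresses $\Box_\g(\underline{P}\Psi)$ in terms of $\underline{P}(\underline{P}(\Psi))$, $\underline{P}(\Psi)$, $r\Psi$, and the two quantities $r\Box_\g(\Psi)$ and $\kab^{-1}r\nabb_3(\Box_\g(\Psi))$. The only work remaining is to expand these last two terms using \eqref{wave-eq-ABC}.

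First I would treat $\frac 3 2 r\Box_\g(\Psi)$: this is immediate, simply multiply \eqref{wave-eq-ABC} by $\frac 3 2 r$ to obtain $\frac 3 2 rA\,\underline{P}^{-1}(\Psi)+\frac 3 2 rB\,\Psi+\frac 3 2 rC\,\underline{P}(\Psi)+\frac 3 2 rM$. Second I would compute $\kab^{-1}r\nabb_3(\Box_\g(\Psi))$ by applying $\nabb_3$ to the right-hand side of \eqref{wave-eq-ABC}. Here one uses the Leibniz rule: $\nabb_3(A\,\underline{P}^{-1}(\Psi)) = \nabb_3(A)\,\underline{P}^{-1}(\Psi)+A\,\nabb_3(\underline{P}^{-1}(\Psi))$, and similarly for the $B$ and $C$ terms. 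The key auxiliary facts are the commutation/transport identities relating $\nabb_3$ applied to $\underline{P}^{k}(\Psi)$ back to the $\underline{P}$-tower: recalling $\underline{P}(\Psi)=\frac 1 2 r\Psi+\kab^{-1}r\nabb_3(\Psi)$ from \eqref{explicit-operators}, one has $\kab^{-1}r\nabb_3(\Psi)=\underline{P}(\Psi)-\frac 1 2 r\Psi$, i.e. $\nabb_3(\Psi)=\frac 1 r\kab\,\underline{P}(\Psi)-\frac 1 2\kab\,\Psi$, and likewise $\nabb_3(\underline{P}(\Psi))=\frac 1 r\kab\,\underline{P}(\underline{P}(\Psi))-\frac 1 2\kab\,\underline{P}(\Psi)$ and $\nabb_3(\underline{P}^{-1}(\Psi))=\frac 1 r\kab\,\Psi-\frac 1 2\kab\,\underline{P}^{-1}(\Psi)$. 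Multiplying through by $\kab^{-1}r$ turns each $\nabb_3(\underline{P}^k(\Psi))$ into $\underline{P}^{k+1}(\Psi)-\frac 1 2 r\,\underline{P}^k(\Psi)$, which is exactly the shape needed to collect everything into the stated form.

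After these substitutions I would collect coefficients term by term: the $\underline{P}^{-1}(\Psi)$ coefficient gathers $\kab^{-1}r\nabb_3(A)$ (from differentiating $A$) together with $rA$ (half from $\frac 3 2 rA$, part from the $-\frac 1 2 r A$ shift, the rest bookkept carefully) — the expected answer $\kab^{-1}r\nabb_3(A)+rA$ emerges; the $\Psi$ coefficient picks up $\kab^{-1}r\nabb_3(B)$, $rB$, the $A$ coming from the shift in $\nabb_3(\underline{P}^{-1}(\Psi))$ producing a $\Psi$-term, and the geometric contribution $\frac 1 2 r\rho+r\rhoF^2$ already present in Lemma \ref{squareP} as $(\frac 1 2\rho+\rhoF^2)r\Psi$; the $\underline{P}(\Psi)$ coefficient combines $B$ (from the $-\frac 1 2 r$ shift on the $B\Psi$ term... rather from the $\nabb_3$-shift reorganization), $\kab^{-1}r\nabb_3(C)$, $rC$, and the $\frac 1 2\ka\kab-4\rho-2\rhoF^2$ from Lemma \ref{squareP}; and the $\underline{P}(\underline{P}(\Psi))$ coefficient combines $C$ with the $\frac 1 r(-\ka\kab+2\rho)$ already in Lemma \ref{squareP}. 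The inhomogeneous terms combine to $\kab^{-1}r\nabb_3 M+\frac 3 2 rM$.

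The main obstacle is purely one of careful bookkeeping: one must keep straight which $\Psi$-proportional and $\underline{P}(\Psi)$-proportional contributions arise from the $-\frac 1 2 r$ shifts in the transport identities versus from the intrinsic geometric terms in Lemma \ref{squareP}, and make sure no terms are double-counted or dropped. There is a mild subtlety in that $A$, $B$, $C$ are functions of $r$ only (being coefficients of the background Regge--Wheeler-type equation), so $\nabb_3(A)=\nabb_3(r)A'=\frac 1 2 r\kab A'$ is itself proportional to $\kab$, and $\kab^{-1}r\nabb_3(A)=\frac 1 2 r^2 A'$; this is what makes the coefficient $\kab^{-1}r\nabb_3(A)$ a legitimate scalar function rather than something involving $\kab^{-1}$. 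Once these are tracked, the identity follows by direct collection; no genuinely new idea beyond Lemma \ref{squareP} and the $\underline{P}$-commutation relations is required.
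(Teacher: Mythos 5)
Your proposal is correct and follows essentially the same route as the paper: apply Lemma \ref{squareP}, expand $\frac{3}{2}r\Box_\g\Psi$ and $\kab^{-1}r\nabb_3(\Box_\g\Psi)$ via \eqref{wave-eq-ABC} and the Leibniz rule, and convert each $\nabb_3(\underline{P}^k\Psi)$ back into the $\underline{P}$-tower using $\nabb_3 f=\frac{1}{r}\kab\,\underline{P}f-\frac{1}{2}\kab f$ before collecting coefficients. The bookkeeping you sketch (e.g. $\frac{3}{2}rA-\frac{1}{2}rA=rA$ for the $\underline{P}^{-1}(\Psi)$ coefficient, and the $A$-term migrating into the $\Psi$ coefficient) checks out against the stated identity.
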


\begin{proof} 
Substituting the expression \eqref{wave-eq-ABC} for $\Box_\g \Psi$ in $\nabb_3(\Box_\g(\Psi))$, we compute:
\beaa
\nabb_3(\Box_\g(\Psi))&=& \nabb_3(A)\underline{P}^{-1}(\Psi)+A\nabb_3\underline{P}^{-1}(\Psi)+\nabb_3(B)\Psi+B\nabb_3\Psi+\nabb_3(C) \underline{P}(\Psi) +C \nabb_3\underline{P}(\Psi)+\nabb_3 M
\eeaa
Writing  $\nabb_3 f=\frac 1 r \kab \ \underline{P}f -\frac 12 \kab f$, we have
\beaa
\nabb_3(\Box_\g(\Psi))&=&\left( \nabb_3(A)-\frac 12 \kab A\right)\underline{P}^{-1}(\Psi)+\left(\nabb_3(B)-\frac 1 2 \kab B+\frac 1 r \kab A \right)\ \Psi +\left(\frac 1 r \kab B +\nabb_3(C)-\frac 1 2 \kab C\right)\ \underline{P}\Psi  \\
&&+ \frac 1 r \kab  C\ \underline{P}\underline{P}\Psi+\nabb_3 M
\eeaa
Substituting the above in Lemma \ref{squareP}, we obtain the desired equation.
\end{proof}

We will use the spin $+2$ Teukolsky equations for $\psi_0$ and $\psi_3$ and Lemma \ref{generalwavePhi} to compute the wave equations of $\qf$ and $\qf^\F$.

\subsection{Wave equation for $\qf^\F$}
We derive here the wave equation for $\qf^\F$.

\begin{proposition}\label{wave-qfF} Let $\mathcal{S}$ be a linear gravitational and electromagnetic perturbation around Reissner-Nordstr{\"o}m. Consider the gauge-invariant derived quantity $\qf^\F$, which is part of the solution $\mathcal{S}$. Then $\qf^\F$ satisfies the following equation:
\beaa
\Box_\g \qf^\F+\left( \ka\kab+3 \rho \right)\ \qf^\F &=&\rhoF \left(-\frac 1 r  \qf+4r\rhoF \ \psi_3 \right)
\eeaa
\end{proposition}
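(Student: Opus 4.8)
The plan is to obtain the wave equation for $\qf^\F$ as a direct consequence of Lemma \ref{generalwavePhi} applied to the Teukolsky-type equation already derived for $\psi_3$. Recall from Corollary \ref{wave-psi_3} that $\psi_3 = r^2\kab\,\ff$ satisfies
\[
\Box_\g \psi_3 = \left(-\ka\kab - 3\rho\right)\psi_3 + \frac{1}{r}\left(\ka\kab - 2\rho\right)\qf^\F + \rhoF\left(-\frac{1}{r}\psi_1 - \frac{1}{2}\psi_0\right),
\]
and that, by \eqref{quantities}, $\qf^\F = \underline{P}(\psi_3)$. So the first step is to match this against the template \eqref{wave-eq-ABC}: reading $\Psi = \psi_3$, we must identify $\underline{P}^{-1}(\psi_3)$, the $0$th-order term, and $\underline{P}(\psi_3) = \qf^\F$ inside the right-hand side. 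The term $\frac{1}{r}(\ka\kab - 2\rho)\qf^\F$ is precisely a $C\,\underline{P}(\Psi)$ contribution with $C = \frac{1}{r}(\ka\kab - 2\rho)$; the term $(-\ka\kab-3\rho)\psi_3$ is the $B\Psi$ term with $B = -\ka\kab - 3\rho$; there is no $\underline{P}^{-1}(\psi_3)$ contribution a priori, so $A = 0$; and the remaining piece $\rhoF(-\frac1r\psi_1 - \frac12\psi_0)$ is the inhomogeneous term $M$.

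Next I would feed $A = 0$, $B = -\ka\kab - 3\rho$, $C = \frac1r(\ka\kab - 2\rho)$, and $M = \rhoF(-\frac1r\psi_1 - \frac12\psi_0)$ into the conclusion of Lemma \ref{generalwavePhi}. This produces a wave equation for $\underline{P}(\psi_3) = \qf^\F$ whose coefficients involve $\kab^{-1}r\nabb_3(B)$, $\kab^{-1}r\nabb_3(C)$, and $\kab^{-1}r\nabb_3 M$. The $\nabb_3$ derivatives of $B$ and $C$ are computed using the general computation \eqref{general-computation-useful} (which handles $\nabb_3$ of expressions of the form $n\ka\kab + m\rho + l\rhoF^2$) together with $\nabb_3 r = \frac12 r\kab$; these are routine. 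The coefficient of the $\underline{P}(\underline{P}(\psi_3))$ term, namely $C + \frac1r(-\ka\kab + 2\rho)$, must vanish identically — and indeed with $C = \frac1r(\ka\kab - 2\rho)$ it does, which is the mechanism by which no second-order-in-$\underline{P}$ term (i.e.\ no $\nabb_3^2$ of $\ff$) survives; this is exactly why $\ff$ only needs one application of the Chandrasekhar transformation. The coefficient of $\underline{P}^{-1}(\psi_3) = \nabb_3^{-1}$-type term also vanishes since $A = 0$ and $rA = 0$, so the final equation is genuinely a Regge-Wheeler-type equation with no transport-inverse terms.

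The remaining work is to simplify the $0$th-order coefficient of $\qf^\F$ down to $-(\ka\kab + 3\rho)$ and to simplify the inhomogeneous side. For the potential: the coefficient of $\underline{P}(\psi_3)$ in Lemma \ref{generalwavePhi} is $B + \kab^{-1}r\nabb_3(C) + rC + \frac12\ka\kab - 4\rho - 2\rhoF^2$; substituting $B$, $C$, using \eqref{general-computation-useful} and \eqref{nabb-3-rho}/\eqref{Omegab'}-type relations together with the background values $\ka = 2/r$, $\kab = -2\Up/r$, $\rho = -2M/r^3 + 2Q^2/r^4$, $\rhoF = Q/r^2$, I expect everything to collapse to $-\ka\kab - 3\rho$, matching $V_2 = -\ka\kab - 3\rho$ from \eqref{potentials}. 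For the right-hand side: $\kab^{-1}r\nabb_3 M + \frac32 r M$ with $M = \rhoF(-\frac1r\psi_1 - \frac12\psi_0)$ must be rewritten using $\nabb_3\psi_1 = \frac1r\kab\qf - \frac12\kab\psi_1$ (from $\qf = \underline{P}(\psi_1)$, i.e.\ \eqref{explicit-operators}), $\nabb_3\psi_0 = \frac1r\kab\psi_1 - \frac12\kab\psi_0$, $\nabb_3\rhoF = -\kab\rhoF$, and $\psi_3 = r^2\kab\,\ff$; the $\psi_1$ and $\psi_0$ contributions should cancel against those generated by $\nabb_3 M$, leaving exactly $\rhoF(-\frac1r\qf + 4r\rhoF\,\psi_3)$. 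The main obstacle — though it is bookkeeping rather than conceptual — is tracking the cancellations in this last simplification precisely, especially ensuring the coefficient of $\psi_1$ (which appears both from $M$ directly and from $\nabb_3$ hitting $\psi_0$) vanishes and that the surviving $\psi_3$ term has the stated coefficient $4r\rhoF^2$; getting the powers of $r$ and $\kab$ right here, using $\rho + \rhoF^2$ identities from \eqref{Gauss} and the background relations, is where an error would most easily creep in.
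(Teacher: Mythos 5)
Your proposal is correct and follows exactly the paper's own proof: apply Lemma \ref{generalwavePhi} to the $\psi_3$ equation of Corollary \ref{wave-psi_3} with $A=0$, $B=-\ka\kab-3\rho$, $C=\frac1r(\ka\kab-2\rho)$, $M=\rhoF(-\frac1r\psi_1-\frac12\psi_0)$, observe the cancellation of the $\underline{P}(\underline{P}(\cdot))$ coefficient, and simplify. One small bookkeeping correction: the $4r\rhoF^2\,\psi_3$ term on the right-hand side comes from the zeroth-order coefficient of $\Psi=\psi_3$ in the lemma, namely $\kab^{-1}r\nabb_3(B)+rB+\tfrac12 r\rho+r\rhoF^2=4r\rhoF^2$, whereas the inhomogeneous piece $\kab^{-1}r\nabb_3 M+\tfrac32 rM$ collapses entirely to $-\tfrac{\rhoF}{r}\qf$ with no surviving $\psi_3$ contribution.
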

\begin{proof} Recall that $\qf^\F=\psi_4=\underline{P}(\psi_3)$. By Corollary \ref{wave-psi_3},  we have that $\psi_3$ verifies a wave equation of the form \eqref{wave-eq-ABC} , so we can make use of Lemma \ref{generalwavePhi}, with $A=0$, $B=- \ka\kab-3 \rho$, $C=\frac 1 r\left( \ka\kab-2\rho\right)$ and $M=\rhoF \left(-\frac 1 r \psi_1-\frac 1 2  \psi_0\right)$, and obtain 
\beaa
\Box_\g(\qf^\F)&=&\left(\kab^{-1}r(\ka\kab^2+\frac 5 2\kab \rho+3\kab \rhoF^2)+ r (- \ka\kab-3 \rho)+\frac 1 2 r\rho+r\rhoF^2  \right)\ \psi_3 \\
&&+\left( - \ka\kab-3 \rho -\frac 1 2 (\ka\kab-2\rho)+\kab^{-1}(-\ka\kab^2+5\kab\rho+2\kab\rhoF^2)+ r \frac 1 r\left( \ka\kab-2\rho\right)+\frac 1 2 \ka\kab-4\rho-2\rhoF^2\right)\ \qf^\F\\
&&+\kab^{-1}r\nabb_3 M  +\frac 3 2r M\\
&=&\left( - \ka\kab-3 \rho \right)\ \qf^\F+4r\rhoF^2 \ \psi_3 +\kab^{-1}r\nabb_3 M  +\frac 3 2r M
\eeaa
We compute $\kab^{-1}r\nabb_3 M  +\frac 3 2r M$:
\beaa
\kab^{-1}r\nabb_3 M  +\frac 3 2r M&=& \kab^{-1}r\nabb_3(\rhoF \left(-\frac 1 r \psi_1-\frac 1 2  \psi_0\right))  +\frac 3 2r \rhoF \left(-\frac 1 r \psi_1-\frac 1 2  \psi_0\right)\\
&=& -r\rhoF \left(-\frac 1 r \psi_1-\frac 1 2  \psi_0\right)+\kab^{-1}r\rhoF \nabb_3\left(-\frac 1 r \psi_1-\frac 1 2  \psi_0\right)  +\frac 3 2r \rhoF \left(-\frac 1 r \psi_1-\frac 1 2  \psi_0\right)\\
&=& \kab^{-1}r\rhoF \left(r^{-1}\frac 1 2 \kab \psi_1-\frac 1 r \nabb_3\psi_1-\frac 1 2  \nabb_3\psi_0\right)  +\frac 1 2r \rhoF \left(-\frac 1 r \psi_1-\frac 1 2  \psi_0\right)
\eeaa
and writing $\nabb_3(\psi_0)=\frac 1 r \kab \psi_1-\frac 1 2  \kab \psi_0$ and  $\nabb_3(\psi_1)=\frac 1 r \kab \qf-\frac 1 2  \kab \psi_1$, we obtain
\beaa
\kab^{-1}r\nabb_3 M  +\frac 3 2r M&=& \kab^{-1}r\rhoF \left(-\frac 1 r (\frac 1 r \kab \qf-\frac 1 2  \kab \psi_1)-\frac 1 2  (\frac 1 r \kab \psi_1-\frac 1 2  \kab \psi_0)\right)  +\frac 1 2r \rhoF \left(-\frac 1 2  \psi_0\right)\\
&=& -\rhoF \left(\frac 1 r  \qf\right)
\eeaa
giving the desired equation.
\end{proof}

\subsection{Wave equation for $\qf$}
We first derive the wave equation for $\psi_1$. 
\begin{proposition}\label{wave-psi_1} The derived quantity $\psi_1$ verifies the following wave equation 
\beaa
\Box_\g \psi_1&=&\left(\frac 3 2 r\rho+r\rhoF^2  \right)\ \psi_0 +\left( - \ka\kab+6\rhoF^2\right)\ \psi_1+  \frac 1 r\left( \ka\kab-2\rho\right)\ \psi_2\\
&&+2\rhoF \Big(2 \kab \nabb_4(\qf^\F)- \ka\kab \qf^\F-  r\kab\nabb_4\psi_3+\left(-\frac {1}{ 2}  \ka\kab +6 \rho +4\rhoF^2\right)r\psi_3\Big)
\eeaa
\end{proposition}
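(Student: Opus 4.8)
The plan is to derive the wave equation for $\psi_1 = \underline{P}(\psi_0)$ by invoking Lemma~\ref{generalwavePhi} with the Teukolsky equation for $\psi_0$ from Corollary~\ref{wave-psi_0} as input, then simplifying the resulting inhomogeneous term using the differential relations \eqref{quantities} and the wave equation for $\qf^\F$ from Proposition~\ref{wave-qfF}. First I would recast Corollary~\ref{wave-psi_0} in the form \eqref{wave-eq-ABC}: reading off $\Box_\g \psi_0 = A\,\underline{P}^{-1}(\psi_0) + B\,\psi_0 + C\,\underline{P}(\psi_0) + M$ gives $A = 0$, $B = -\tfrac12 \ka\kab - 4\rho + 4\rhoF^2$, $C = \tfrac1r(2\ka\kab - 4\rho)$, and $M = 4\rhoF\big(\kab\nabb_4(\psi_3) + (\tfrac12\ka\kab - 2\rho)\psi_3\big)$. (Here one must use $\underline{P}^{-1}(\psi_0)$ only formally — since $A=0$, that term never actually appears, which is why the choice $A=0$ is legitimate even though $\psi_0$ is not itself a $\underline{P}$ of anything relevant to us.)

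Next I would apply Lemma~\ref{generalwavePhi} verbatim. The $\underline{P}(\underline{P}(\psi_0)) = \psi_2$ coefficient becomes $C + \tfrac1r(-\ka\kab + 2\rho) = \tfrac1r(2\ka\kab - 4\rho) + \tfrac1r(-\ka\kab+2\rho) = \tfrac1r(\ka\kab - 2\rho)$, matching the claimed coefficient of $\psi_2$. The $\psi_1 = \underline{P}(\psi_0)$ coefficient is $B + \kab^{-1}r\nabb_3(C) + rC + \tfrac12\ka\kab - 4\rho - 2\rhoF^2$; using $\nabb_3 r = \tfrac12 r\kab$ and the Ricci/Bianchi identities of Section~\ref{linearized-equations} (in particular $\nabb_3\ka$, $\nabb_3\kab$, $\nabb_3\rho$ from \eqref{nabb-3-ka}, \eqref{nabb-3-kab}, \eqref{nabb-3-rho}, conveniently packaged in \eqref{general-computation-useful}), I would compute $\kab^{-1}r\nabb_3(C)$ and collect terms; the $\ka\kab$ contributions should cancel down to leave $-\ka\kab$ and the $\rhoF^2$ contributions should sum to $+6\rhoF^2$. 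The $\psi_0$ coefficient is $\kab^{-1}r\nabb_3(B) + rB + A + \tfrac12 r\rho + r\rhoF^2$; with $A=0$ and again using \eqref{general-computation-useful} for $\nabb_3 B$, this should collapse to $\tfrac32 r\rho + r\rhoF^2$ after the $\ka\kab$-type terms cancel.

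The remaining and most delicate piece is the inhomogeneous term $\kab^{-1}r\nabb_3 M + \tfrac32 r M$ with $M = 4\rhoF\big(\kab\nabb_4(\psi_3) + (\tfrac12\ka\kab - 2\rho)\psi_3\big)$. This requires: (i) differentiating $\rhoF = Q/r^2$ along $e_3$ via $\nabb_3\rhoF = -\kab\rhoF$; (ii) commuting $\nabb_3\nabb_4$ on $\psi_3$ using Lemma~\ref{square-RN} or the commutation formula \eqref{commutators}, which introduces $\Box_\g\psi_3$ and hence — via Corollary~\ref{wave-psi_3} — the term $\tfrac1r(\ka\kab - 2\rho)\qf^\F$ and the lower-order $\rhoF(-\tfrac1r\psi_1 - \tfrac12\psi_0)$ pieces; (iii) using $\nabb_3(r^3\kab\ff) = \kab\qf^\F$ i.e. $\nabb_3(\psi_3)$ expressed via \eqref{explicit-operators}, and the analogous relation for $\nabb_3\psi_4 = \nabb_3\qf^\F$. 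Carefully combining these is where the $\big(2\kab\nabb_4(\qf^\F) - \ka\kab\qf^\F - r\kab\nabb_4\psi_3 + (-\tfrac12\ka\kab + 6\rho + 4\rhoF^2)r\psi_3\big)$ structure and the $\tfrac1r(\ka\kab-2\rho)\psi_2$ correction must emerge, and the residual $\psi_0$, $\psi_1$ terms from Corollary~\ref{wave-psi_3}'s $M$ must get folded into the already-computed $\psi_0$ and $\psi_1$ coefficients. I expect the bookkeeping of these $\rhoF$-proportional terms — tracking which $\nabb_4\psi_3$ versus $\nabb_4\qf^\F$ versus bare $\psi_3$ terms survive after the $\nabb_3\nabb_4$ commutation and the substitution of the Teukolsky equation for $\psi_3$ — to be the main obstacle; everything else is a mechanical (if lengthy) application of the two lemmas plus the linearized structure equations.
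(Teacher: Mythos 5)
Your overall strategy coincides with the paper's: read off $A=0$, $B=-\tfrac12\ka\kab-4\rho+4\rhoF^2$, $C=\tfrac1r(2\ka\kab-4\rho)$, $M=4\rhoF\big(\kab\nabb_4\psi_3+(\tfrac12\ka\kab-2\rho)\psi_3\big)$ from Corollary \ref{wave-psi_0} and apply Lemma \ref{generalwavePhi}. Your treatment of the homogeneous part — the coefficients of $\psi_0$, $\psi_1$ and $\psi_2$ via \eqref{general-computation-useful} — is exactly what the paper does and is correct.

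The misstep is in your step (ii) for the inhomogeneous term $\kab^{-1}r\nabb_3 M+\tfrac32 rM$. Handling $\nabb_3\nabb_4\psi_3$ does not, and should not, bring in $\Box_\g\psi_3$: the paper uses only the first-order commutator $\nabb_3\nabb_4\psi_3=\nabb_4\nabb_3\psi_3-2\om\nabb_3\psi_3+2\omb\nabb_4\psi_3$ together with the transport relation $\nabb_3\psi_3=\tfrac1r\kab\qf^\F-\tfrac12\kab\psi_3$, so that $\nabb_4\nabb_3\psi_3=\nabb_4(\tfrac1r\kab\qf^\F-\tfrac12\kab\psi_3)$ directly produces the $\nabb_4\qf^\F$ and $\nabb_4\psi_3$ terms of the statement. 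Neither Corollary \ref{wave-psi_3} nor Proposition \ref{wave-qfF} is needed here; the latter enters only at the next stage, in the proof of Proposition \ref{wave-qf} for $\qf=\psi_2$, which is why a Laplacian survives in that equation but not in this one. If you instead write $\nabb_3\nabb_4\psi_3=-\Box_\g\psi_3+\lapp_2\psi_3+\dots$ and substitute Corollary \ref{wave-psi_3}, you generate a term $4r\rhoF\,\lapp_2\psi_3$ together with $4\rhoF^2\psi_1+2r\rhoF^2\psi_0$; none of these appear in the target, and your proposal to fold the residual $\psi_0,\psi_1$ pieces into the already-computed coefficients would change $(\tfrac32 r\rho+r\rhoF^2)\psi_0$ and $(-\ka\kab+6\rhoF^2)\psi_1$ to incorrect values. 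That detour is consistent only if you then convert $\lapp_2\psi_3$ back into $\tfrac1r\kab\nabb_4\qf^\F$ via Corollary \ref{laplacian-psi-3}, whose own $\rhoF(-\tfrac1r\psi_1-\tfrac12\psi_0)$ contribution cancels the spurious terms exactly — i.e. the wave-equation substitution must be completely undone, which is why the direct first-order route is the one to take.
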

\begin{proof} Recall that $\psi_1=\underline{P}(\psi_0)$. By Corollary \ref{wave-psi_0},  we have that $\psi_0$ verifies a wave equation of the form \eqref{wave-eq-ABC} , so we can make use of Lemma \ref{generalwavePhi}, with $A=0$, $B=-\frac 1 2  \ka\kab -4 \rho+4 \rhoF^2$, $C=\frac 1 r\left( 2\ka\kab-4\rho\right)$ and $M=4\rhoF  \Big(\kab\nabb_4(\psi_3)+\left(\frac {1}{ 2}  \ka\kab -2 \rho \right)\psi_3\Big)$, and obtain 
\beaa
\Box_\g(\psi_1)&=& \left(\kab^{-1}r(\frac 1 2 \ka\kab^2+5\kab\rho-4\kab\rhoF^2)+ r (-\frac 1 2  \ka\kab -4 \rho+4 \rhoF^2)+\frac 1 2 r\rho+r\rhoF^2  \right)\ \psi_0 \\
&&+\left( -\frac 1 2  \ka\kab -4 \rho+4 \rhoF^2 + (\ka\kab-2\rho)+2\kab^{-1}(-\ka\kab^2+5\kab\rho+2\kab\rhoF^2)+\frac 1 2 \ka\kab-4\rho-2\rhoF^2\right)\ \psi_1\\
&&+  \frac 1 r\left( \ka\kab-2\rho\right)\ \psi_2+\kab^{-1}r\nabb_3 M  +\frac 3 2r M\\
&=& \left(\frac 3 2 r\rho+r\rhoF^2  \right)\ \psi_0 +\left( - \ka\kab+6\rhoF^2\right)\ \psi_1+  \frac 1 r\left( \ka\kab-2\rho\right)\ \psi_2+\kab^{-1}r\nabb_3 M  +\frac 3 2r M
\eeaa
We compute $\kab^{-1}r\nabb_3 M  +\frac 3 2r M$:
\beaa
&&\kab^{-1}r\nabb_3 M  +\frac 3 2r M=4\kab^{-1}r\rhoF \Big(\nabb_3\kab\nabb_4(\psi_3)+\kab\nabb_3\nabb_4(\psi_3)+\nabb_3\left(\frac {1}{ 2}  \ka\kab -2 \rho \right)\psi_3+\left(\frac {1}{ 2}  \ka\kab -2 \rho \right)\nabb_3\psi_3\Big) \\
&& +2r \rhoF  \Big(\kab\nabb_4(\psi_3)+\left(\frac {1}{ 2}  \ka\kab -2 \rho \right)\psi_3\Big)\\
&=&4\kab^{-1}r\rhoF \Big(\kab\nabb_3\nabb_4(\psi_3)+\left(\frac {1}{ 2}  \ka\kab -2 \rho \right)\nabb_3\psi_3\Big) +2r \rhoF  \Big(-4\omb \nabb_4(\psi_3)+\left(-\frac {1}{ 2}  \ka\kab +6 \rho +4\rhoF^2\right)\psi_3\Big)
\eeaa
Writing $\nabb_3(\psi_3)=\frac 1 r \kab \qf^\F-\frac 1 2  \kab \psi_3$, and 
\beaa
\nabb_3\nabb_4(\psi_3)&=&\nabb_4\nabb_3(\psi_3)-2\om\nabb_3\psi_3+2\omb \nabb_4\psi_3=\nabb_4(\frac 1 r \kab \qf^\F-\frac 1 2  \kab \psi_3)-2\om(\frac 1 r \kab \qf^\F-\frac 1 2  \kab \psi_3)+2\omb \nabb_4\psi_3\\
&=&\nabb_4(\frac 1 r \kab) \qf^\F+\frac 1 r \kab \nabb_4(\qf^\F)-\frac 1 2  \nabb_4\kab \psi_3-\frac 1 2  \kab \nabb_4\psi_3-2\om(\frac 1 r \kab \qf^\F-\frac 1 2  \kab \psi_3)+2\omb \nabb_4\psi_3\\
&=&\frac 1 r(-\ka\kab+2\rho) \qf^\F+\frac 1 r \kab \nabb_4(\qf^\F)+ (\frac 1 4 \ka\kab-\rho) \psi_3+(-\frac 1 2  \kab+2\omb )\nabb_4\psi_3
\eeaa
which therefore gives
\beaa
\kab^{-1}r\nabb_3 M  +\frac 3 2r M&=&2\rhoF \Big(2 \kab \nabb_4(\qf^\F)- \ka\kab \qf^\F-  r\kab\nabb_4\psi_3+\left(-\frac {1}{ 2}  \ka\kab +6 \rho +4\rhoF^2\right)r\psi_3\Big)
\eeaa
We get the desired equation. 
\end{proof}

We derive here the wave equation for $\qf$.

\begin{proposition}\label{wave-qf} Let $\mathcal{S}$ be a linear gravitational and electromagnetic perturbation around Reissner-Nordstr{\"o}m. Consider the gauge-invariant derived quantity $\qf^\F$, which is part of the solution $\mathcal{S}$. Then $\qf^\F$ satisfies the following equation:
\beaa
\Box_\g\qf+\left( \ka\kab-10\rhoF^2\right)\ \qf&=&r \rhoF \Big(4 \lapp_2 \qf^\F-4 \kab \nabb_4\qf^\F-4 \ka\nabb_3\qf^\F+\left( 6\ka\kab+16 \rho +8\rhoF^2\right)\ \qf^\F\Big) \\
&&+r\rhoF\Big(-2r\rhoF\psi_0-4\rhoF  \psi_1+(-12\rho-40\rhoF^2) \ r\psi_3\Big)
\eeaa
\end{proposition}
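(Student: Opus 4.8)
The plan is to follow exactly the pattern used for $\qf^\F$ in Proposition \ref{wave-qfF}, now applied to $\qf = \psi_2 = \underline{P}(\psi_1)$. The starting point is Proposition \ref{wave-psi_1}, which expresses $\Box_\g \psi_1$ in the form \eqref{wave-eq-ABC}, namely $\Box_\g \psi_1 = A\,\underline{P}^{-1}(\psi_1) + B\,\psi_1 + C\,\underline{P}(\psi_1) + M$ with
\[
A = 0, \qquad B = -\ka\kab + 6\rhoF^2, \qquad C = \tfrac{1}{r}\big(\ka\kab - 2\rho\big),
\]
and
\[
M = 2\rhoF\Big(2\kab\nabb_4(\qf^\F) - \ka\kab\,\qf^\F - r\kab\nabb_4\psi_3 + \big(-\tfrac12\ka\kab + 6\rho + 4\rhoF^2\big) r\psi_3\Big).
\]
First I would invoke Lemma \ref{generalwavePhi} with these coefficients to obtain $\Box_\g(\underline{P}\psi_1) = \Box_\g \qf$ as a linear combination of $\underline{P}^{-1}(\psi_1) = \psi_0$ (wait — more precisely, since $A = 0$ the $\underline{P}^{-1}$ term drops, leaving terms in $\psi_1$, $\qf = \underline{P}(\psi_1)$, and $\underline{P}(\underline{P}(\psi_1))$), together with the inhomogeneous contribution $\kab^{-1} r\nabb_3 M + \tfrac32 r M$.

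The key computational steps, in order, are: (1) Compute the coefficient of $\psi_1$, which is $\kab^{-1} r\nabb_3(B) + rB + A + \tfrac12 r\rho + r\rhoF^2$; here one uses \eqref{general-computation-useful}-type identities to evaluate $\nabb_3(-\ka\kab + 6\rhoF^2)$, and I expect the $\psi_1$-coefficient to collapse — by analogy with the $\qf^\F$ case — to something proportional to $\rho$ and $\rhoF^2$ only, which is then absorbed into the $\rhoF(\text{l.o.t.})_1$ right-hand side since $\psi_1$ is lower order relative to $\qf$. (2) Compute the coefficient of $\qf = \underline{P}(\psi_1)$, which is $B + \kab^{-1} r\nabb_3(C) + rC + \tfrac12\ka\kab - 4\rho - 2\rhoF^2$; using $\nabb_3\big(\tfrac1r(\ka\kab - 2\rho)\big)$ and the Schwarzschild-type cancellations, this should reduce to $-\ka\kab + 10\rhoF^2$, matching the stated $V_1 = -\ka\kab + 10\rhoF^2$ from \eqref{potentials}. (3) Compute the coefficient of $\underline{P}(\underline{P}(\psi_1))$, which is $C + \tfrac1r(-\ka\kab + 2\rho) = \tfrac1r(\ka\kab - 2\rho) + \tfrac1r(-\ka\kab + 2\rho) = 0$ — this is the crucial cancellation (identical to the $\qf^\F$ case) that prevents a third-order term from appearing and makes the whole transformation theory work.

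The main work is then step (4): evaluating $\kab^{-1} r\nabb_3 M + \tfrac32 r M$. This is where the bulk of the calculation lives and where I expect the main obstacle. One must apply $\nabb_3$ to each of the four pieces of $M$, commute $\nabb_3$ through $\nabb_4$ using the commutation formula from Lemma \ref{square-RN} (or \eqref{commutators}), and then repeatedly substitute $\nabb_3\psi_3 = \tfrac1r\kab\qf^\F - \tfrac12\kab\psi_3$, $\nabb_3\psi_0 = \tfrac1r\kab\psi_1 - \tfrac12\kab\psi_0$, $\nabb_3\psi_1 = \tfrac1r\kab\qf - \tfrac12\kab\psi_1$, together with the background transport equations \eqref{nabb-3-kab}, \eqref{nabb-4-ka}, \eqref{nabb-3-rho}, \eqref{nabb-4-rho}, \eqref{nabb-3-rhoF}, \eqref{nabb-4-rhoF}, exactly as in the proof of Proposition \ref{wave-psi_1}. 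In particular the term $\nabb_3(r\kab\nabb_4\psi_3)$ produces, after commuting and substituting, a $\lapp_2\qf^\F$ contribution via $\nabb_4\nabb_3\psi_3$ combined with the wave equation for $\psi_3$ from Corollary \ref{laplacian-psi-3} (this is how the $4r\rhoF\lapp_2\qf^\F$ term on the right-hand side is generated), while the remaining $\nabb_4$-derivative terms assemble into $-4r\rhoF\kab\nabb_4\qf^\F - 4r\rhoF\ka\nabb_3\qf^\F$ after using the definition $\qf^\F = \underline{P}(\psi_3)$ to trade $r\kab\nabb_4\qf^\F$-type expressions, and all the zeroth-order pieces collect into the stated coefficient $r\rhoF(6\ka\kab + 16\rho + 8\rhoF^2)\qf^\F$ plus the lower-order tail $r\rhoF(-2r\rhoF\psi_0 - 4\rhoF\psi_1 + (-12\rho - 40\rhoF^2)r\psi_3)$. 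The delicate part is bookkeeping the numerous $\rho$, $\rhoF^2$, $\ka\kab$ coefficients through the commutations without sign errors; the structural cancellations (the vanishing $\underline{P}\underline{P}$ coefficient and the reduction of the $\qf$-coefficient to $V_1$) are forced by the Schwarzschild limit and serve as consistency checks at each stage. Finally, collecting (1)–(4) and moving the $\qf$-coefficient to the left gives precisely the asserted identity, which completes the proof of Proposition \ref{wave-qf} and, together with Proposition \ref{wave-qfF}, establishes Proposition \ref{prop:rwt1}.
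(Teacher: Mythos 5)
Your overall strategy is exactly the paper's: feed Proposition \ref{wave-psi_1} into Lemma \ref{generalwavePhi}, verify that the $\underline{P}(\underline{P}(\psi_1))$ coefficient vanishes and that the $\qf$ coefficient reduces to $-\ka\kab+10\rhoF^2$, and then grind out $\kab^{-1}r\nabb_3 M+\tfrac32 rM$, converting $\nabb_3\nabb_4\qf^\F$ into $\lapp_2\qf^\F$ plus lower order by means of the already-established wave equation for $\qf^\F$. However, there is one concrete error that would derail the computation: you set $A=0$. Proposition \ref{wave-psi_1} reads
\[
\Box_\g\psi_1=\Big(\tfrac32 r\rho+r\rhoF^2\Big)\psi_0+\big(-\ka\kab+6\rhoF^2\big)\psi_1+\tfrac1r\big(\ka\kab-2\rho\big)\qf+M,
\]
so $A=\tfrac32 r\rho+r\rhoF^2\neq 0$ (you have carried over the structure of Corollary \ref{wave-psi_3}, where the $\underline{P}^{-1}$ term genuinely is absent, to the $\psi_1$ equation, where it is not). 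This is not a cosmetic slip. With the correct $A$, the Lemma \ref{generalwavePhi} output produces the $\psi_0$ contribution $\big(\kab^{-1}r\nabb_3 A+rA\big)\psi_0=-2r^2\rhoF^2\psi_0$ and contributes $+A=\tfrac32 r\rho+r\rhoF^2$ to the $\psi_1$ coefficient, which is precisely what cancels the $\rho$-part and leaves $-4r\rhoF^2\psi_1$. With $A=0$ you would lose the $-2r^2\rhoF^2\psi_0$ term entirely and be left with a residual $-\tfrac32 r\rho\,\psi_1$ in the $\psi_1$ coefficient; such a term carries no factor of $\rhoF$, so it cannot be absorbed into $\rhoF(\mathrm{l.o.t.})_1$ as you propose, and it would survive in the Schwarzschild limit $Q=0$, contradicting the known decoupled Regge--Wheeler equation there. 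Your own consistency check (the Schwarzschild limit) would have caught this.

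A secondary, smaller inaccuracy: the $4r\rhoF\lapp_2\qf^\F$ term does not arise from $\nabb_3(r\kab\nabb_4\psi_3)$ via Corollary \ref{laplacian-psi-3} (that route would produce $\lapp_2\psi_3$, not $\lapp_2\qf^\F$). It arises from the term $2\rhoF\cdot 2\kab\nabb_4(\qf^\F)$ inside $M$: applying $\kab^{-1}r\nabb_3$ yields $\nabb_3\nabb_4\qf^\F$, which is then rewritten via \eqref{formula-1-wave} as $-\Box_\g\qf^\F+\lapp_2\qf^\F+\ldots$ and closed using Proposition \ref{wave-qfF}. The $\nabb_3(r\kab\nabb_4\psi_3)$ piece instead produces only first-order and zeroth-order terms after substituting $\nabb_3\psi_3=\tfrac1r\kab\qf^\F-\tfrac12\kab\psi_3$.
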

\begin{proof} Recall that $\qf=\psi_2=\underline{P}(\psi_1)$. By Proposition \ref{wave-psi_1},  we have that $\psi_1$ verifies a wave equation of the form \eqref{wave-eq-ABC} , so we can make use of Lemma \ref{generalwavePhi}, with $A=\frac 3 2 r\rho+r\rhoF^2$, $B=- \ka\kab+6\rhoF^2$, $C=\frac 1 r\left( \ka\kab-2\rho\right)$ and $M=2\rhoF \Big(2 \kab \nabb_4(\qf^\F)- \ka\kab \qf^\F-  r\kab\nabb_4\psi_3+\left(-\frac {1}{ 2}  \ka\kab +6 \rho +4\rhoF^2\right)r\psi_3\Big)$, and obtain 
\beaa
\Box_\g(\qf)&=& \left( r^2\frac 1 2 (\frac 3 2 \rho+\rhoF^2)+\kab^{-1}r^2(-\frac 9 4 \kab \rho-\frac 7 2 \kab\rhoF^2)+ r (\frac 3 2 r\rho+r\rhoF^2)\right)\psi_0\\
&&+\left(\kab^{-1}r(\ka\kab^2-2\kab\rho-12\kab\rhoF^2)+ r (- \ka\kab+6\rhoF^2)+ \frac 3 2 r\rho+r\rhoF^2+\frac 1 2 r\rho+r\rhoF^2  \right)\ \psi_1 \\
&&+\left( - \ka\kab+6\rhoF^2 -\frac 1 2 (\ka\kab-2\rho)+\kab^{-1}(-\ka\kab^2+5\kab\rho+2\kab\rhoF^2)+ r \frac 1 r\left( \ka\kab-2\rho\right)+\frac 1 2 \ka\kab-4\rho-2\rhoF^2\right)\ \qf\\
&&+\kab^{-1}r\nabb_3 M  +\frac 3 2r M\\
&=& -2r^2\rhoF^2\psi_0-4r\rhoF^2  \psi_1 +\left( - \ka\kab+6\rhoF^2\right)\ \qf+\kab^{-1}r\nabb_3 M  +\frac 3 2r M
\eeaa
We compute $\kab^{-1}r\nabb_3 M  +\frac 3 2r M$:
\beaa
\kab^{-1}r\nabb_3 M  +\frac 3 2r M&=&2 \kab^{-1}r \rhoF \Big(2 \nabb_3\kab \nabb_4(\qf^\F)+2 \kab \nabb_3\nabb_4(\qf^\F)- \nabb_3(\ka\kab) \qf^\F- \ka\kab \nabb_3\qf^\F-  \nabb_3(r\kab)\nabb_4\psi_3-  r\kab\nabb_3\nabb_4\psi_3\Big) \\
&&+2 \kab^{-1}r \rhoF \Big(\nabb_3\left(-\frac {1}{ 2}  \ka\kab +6 \rho +4\rhoF^2\right)r\psi_3+\left(-\frac {1}{ 2}  \ka\kab +6 \rho +4\rhoF^2\right)\nabb_3(r\psi_3)\Big)  \\
&&+r \rhoF \Big(2 \kab \nabb_4(\qf^\F)- \ka\kab \qf^\F-  r\kab\nabb_4\psi_3+\left(-\frac {1}{ 2}  \ka\kab +6 \rho +4\rhoF^2\right)r\psi_3\Big)
\eeaa
which gives
\beaa
\kab^{-1}r\nabb_3 M  +\frac 3 2r M&=&2 \kab^{-1}r \rhoF \Big(2 (-\frac 12 \kab^2-2\omb\kab)\nabb_4(\qf^\F)+2 \kab \nabb_3\nabb_4(\qf^\F)- (-\ka\kab^2+2\kab\rho) \qf^\F- \ka\kab \nabb_3\qf^\F\\
&&-  (-2\omb r\kab)\nabb_4\psi_3-  r\kab\nabb_3\nabb_4\psi_3\Big) \\
&&+2 \kab^{-1}r \rhoF \Big((\frac 1 2 \ka\kab^2-10\kab\rho-14\kab\rhoF^2)r\psi_3+\left(-\frac {1}{ 2}  \ka\kab +6 \rho +4\rhoF^2\right)(\frac 1 2 \kab r \psi_3+r\nabb_3 \psi_3)\Big)  \\
&&+r \rhoF \Big(2 \kab \nabb_4(\qf^\F)- \ka\kab \qf^\F-  r\kab\nabb_4\psi_3+\left(-\frac {1}{ 2}  \ka\kab +6 \rho +4\rhoF^2\right)r\psi_3\Big)
\eeaa
Substituting the expression for $\nabb_3\nabb_4\psi_3$ and writing $\nabb_3(\psi_3)=\frac 1 r \kab \qf^\F-\frac 1 2  \kab \psi_3$, we obtain 
\beaa
\kab^{-1}r\nabb_3 M  +\frac 3 2r M&=& r \rhoF \Big( -8\omb\nabb_4(\qf^\F)+4 \nabb_3\nabb_4(\qf^\F)+ (\ka\kab-4\rho) \qf^\F- 2\ka \nabb_3\qf^\F+ (-\kab+4\omb) r \nabb_4\psi_3-  2r\nabb_3\nabb_4\psi_3\Big) \\
&&+ \kab^{-1}r \rhoF \left(-  \ka\kab +12 \rho +8\rhoF^2\right)r\nabb_3 \psi_3 +r \rhoF \left(-8 \rho -20\rhoF^2\right)r\psi_3\\
&=& r \rhoF \Big(4 \nabb_3\nabb_4(\qf^\F)+ (-2\kab-8\omb)\nabb_4(\qf^\F)- 2\ka \nabb_3\qf^\F+ (2\ka\kab+4\rho+8\rhoF^2) \qf^\F\Big) \\
&&+r \rhoF \left(-12 \rho -24\rhoF^2\right)r\psi_3
\eeaa
From \eqref{formula-1-wave} and Proposition \ref{wave-qfF}, we write 
\beaa
\nabb_{3}\nabb_{4}\qf^\F &=& -\Box_\g\qf^\F +\lapp_2 \qf^\F+\left(-\frac 12 \kab+2\omb\right) \nabb_4\qf^\F-\frac 1 2 \ka\nabb_3\qf^\F\\
&=& \lapp_2 \qf^\F+\left(-\frac 12 \kab+2\omb\right) \nabb_4\qf^\F-\frac 1 2 \ka\nabb_3\qf^\F+\left( \ka\kab+3 \rho \right)\ \qf^\F +\rhoF \left(\frac 1 r  \qf-4r\rhoF \ \psi_3 \right) 
 \eeaa
 and substituting in the above we obtain
 \beaa
\kab^{-1}r\nabb_3 M  +\frac 3 2r M&=& r \rhoF \Big(4 \lapp_2 \qf^\F+\left(-2 \kab+8\omb\right) \nabb_4\qf^\F-2 \ka\nabb_3\qf^\F+\left( 4\ka\kab+12 \rho \right)\ \qf^\F +4\rhoF \left(\frac 1 r  \qf-4r\rhoF \ \psi_3 \right) \\
&&+ (-2\kab-8\omb)\nabb_4(\qf^\F)- 2\ka \nabb_3\qf^\F+ (2\ka\kab+4\rho+8\rhoF^2) \qf^\F\Big) \\
&&+r \rhoF \left(-12 \rho -24\rhoF^2\right)r\psi_3\\
&=& r \rhoF \Big(4 \lapp_2 \qf^\F-4 \kab \nabb_4\qf^\F-4 \ka\nabb_3\qf^\F+\left( 6\ka\kab+16 \rho +8\rhoF^2\right)\ \qf^\F +(-12\rho-40\rhoF^2) \ r\psi_3\Big) \\
&&+4\rhoF^2 \qf 
\eeaa
This finally gives the desired equation. 
 
\end{proof}



\end{document}